\newcommand\xleftrightarrow[2][]{%
  \ext@arrow 9999{\longleftrightarrowfill@}{#1}{#2}}
\newcommand\longleftrightarrowfill@{%
  \arrowfill@\leftarrow\relbar\rightarrow}
\newcommand{\mb}{\mathbf}
\newcommand{\mc}{\mathcal}
\newtheorem{theorem}{\textsc{Theorem}}
\newtheorem{defn}{\textsc{Definition}}
\newtheorem{lemma}{\textsc{Lemma}}
\newcommand{\ouruma}{\textsc{Uma}}
\newcommand{\ourmli}[0]{\textsc{Mli}}
\newcommand{\ourmcd}[0]{\textsc{Mcd}}
\newcommand{\ourcad}[0]{\textsc{Cad}}
\newcommand{\ourscalable}[0]{\textsc{Spmn}}
\newcommand{\diffusion}[0]{\textsc{Tlt}}
\newcommand{\sic}[0]{\textsc{MFC}}
\newcommand{\muse}{\textbf{M}\textsc{use}}
\newcommand{\ipath}{\textsc{IPath}}
\newcommand{\dime}{\textsc{DIME}}
\newcommand{\dimesh}{\textsc{DIME-SH}}
\begin{document}

\title{Social Network Fusion and Mining: A Survey}

\numberofauthors{1}
\author{
\alignauthor Jiawei Zhang \\
       \affaddr{IFM Lab}\\
       \affaddr{Florida State University, Tallahassee, FL 32311, USA}\\
       \email{jiawei@ifmlab.org}
       }

\maketitle

\begin{abstract}

Looking from a global perspective, the landscape of online social networks is highly fragmented. A large number of online social networks have appeared, which can provide users with various types of services. Generally, the information available in these online social networks is of diverse categories, which can be represented as heterogeneous social networks (HSN) formally. Meanwhile, in such an age of online social media, users usually participate in multiple online social networks simultaneously to enjoy more social networks services, who can act as bridges connecting different networks together. So multiple HSNs not only represent information in single network, but also fuse information from multiple networks. 

Formally, the online social networks sharing common users are named as the aligned social networks, and these shared users who act like anchors aligning the networks are called the anchor users. The heterogeneous information generated by users' social activities in the multiple aligned social networks provides social network practitioners and researchers with the opportunities to study individual user's social behaviors across multiple social platforms simultaneously. This paper presents a comprehensive survey about the latest research works on multiple aligned HSNs studies based on the broad learning setting, which covers $5$ major research tasks, i.e., \textit{network alignment}, \textit{link prediction}, \textit{community detection}, \textit{information diffusion} and \textit{network embedding} respectively. 

\end{abstract}

\keywords{Broad Learning; Heterogeneous Social Networks; Network Alignment; Link Prediction; Community Detection; Information Diffusion; Network Embedding; Data Mining}

\section{Introduction}\label{sec:introduction}

In the real world, on the same information entities, e.g., products, movies, POIs (points-of-interest) and even human beings, a large amount of information can actually be collected from various sources. These sources are usually of different varieties, like Walmart vs Amazon for commercial products; IMDB vs Rotten Tomatoes for movies; Yelp vs Foursquare for POIs; and various online social medium websites vs diverse offline shopping, traveling, living service providers for human beings. Each information source provides a specific signature of the same entity from a unique underlying aspect. However, in many cases, these information sources are usually separated in difference places, and an effective fusion of these different information sources provides an opportunity for researchers and practitioners to understand the entities more comprehensively, which renders \textit{broad learning} \cite{icdm17, cikm17, cikm17_2} an extremely important learning task.

Broad learning introduced in \cite{icdm17, cikm17, cikm17_2} is a new type of learning task, which focuses on fusing multiple large-scale information sources of diverse varieties together and carrying out synergistic data mining tasks across these fused sources in one unified analytic. Fusing and mining multiple information sources of large volumes and diverse varieties are also the fundamental problems in big data studies. Broad learning investigates the principles, methodologies and algorithms for synergistic knowledge discovery across multiple aligned information sources, and evaluates the corresponding benefits. Great challenges exist in broad learning for the effective fusion of relevant knowledge across different aligned information sources depends upon not only the relatedness of these information sources, but also the target application problems. Broad learning aims at developing general methodologies, which will be shown to work for a diverse set of applications, while the specific parameter settings can be learned for each application from the training data.

Broad learning is a challenging problem. We categorize its main challenges into two main groups as follows:

\begin{itemize}

\item \textit{How to Fuse}: The data fusion strategy is highly dependent on the data types, and different data categories of data may required different fusion methods. For instance, for the fusion of image sources about the same entities, a necessary entity recognition step is required; to combine multiple online social networks, inference of the potential anchor link mappings the shared users across networks will be key task; meanwhile, to fuse diverse textual data, concept entity extraction or topic modeling can both be the potential options. In many cases, the fusion strategy is also correlated with the specific applications to be studied, which may pose extract constraints or requirements on the fusion results. More information about related data fusion strategies of online social networks will be introduced later in Section~\ref{sec:alignment}.

\item \textit{How to Mine}: To mine the fused data sources, there also exist many great challenges. In many of the cases, not all the data sources will be helpful for certain application tasks. For instance, in social community detection, the fused information about the users' credit card transaction will have less correlation with the social communities formed by the users. On the other hand, the information diffusion among users is regarded as irrelevant with the information sources depicting the daily commute routes of people in the real world. Among all these fused data sources, picking the useful ones is not an easy task. Several strategies, like feature selection \cite{kdd14}, meta path weighting \cite{cikm16, bigdata15}, network sampling \cite{icdm13} and information source embedding \cite{icde17_2, icdm17}, will be described in the application tasks to be introduced in Sections~\ref{sec:link_prediction}-\ref{sec:embedding} respectively.

\end{itemize}

In this paper, we will focus on introducing the broad learning research works done based on online social media data. Nowadays, to enjoy more social network services, people are usually involved in multiple online social networks simultaneously, such as Facebook, Twitter and Foursquare \cite{kdd14, cikm13}. Individuals usually have multiple separate accounts in different social networks, and discovering the correspondence between accounts of the same user (i.e., network alignment or user anchoring) \cite{icdm15, www16, cikm13, iri15, ijcai15, wsdm17_2} will be an interesting problem. What's more, network alignment is also the crucial prerequisite step for many interesting inter-network synergistic knowledge discovery applications, like (1) inter-network link prediction/recommendation \cite{ZY14, kdd14, icdm13, wsdm14, ijcai15, wsdm17_2, www16_hu, kdd15, cikm15}, (2) mutual community detection \cite{sdm15, bigdata14, bigdata15, ijcnn16, cikm17, icde17}, (3) cross-platform information diffusion \cite{pakdd15, iri16, cikm16}, and (4) multiple networks synergistic embedding \cite{icde17_2, icdm17}. These application tasks are fundamental problems in social network studies, which together with the network alignment problem will form the backbone of the multiple social network broad learning ecosystem.


This paper will cover five strongly correlated research directions in the study of broad learning on multiple online social networks:
\begin{itemize}
\item \textbf{Network Alignment}: users nowadays are usually involved in multiple online social networks simultaneously. Identifying the common users shared by different online social networks can effectively combine these networks together, which will also provide the opportunity to study users' social behaviors from a more comprehensive perspective. Many research works have proposed to align the online social networks together by inferring the mappings of the shared users between different networks, which will be introduced in great detail in this paper.

\item \textbf{Link Prediction}: users' friendship connections in different networks have strong correlations. With the social activity data across multiple aligned social networks, we can acquire more comprehensive knowledge about users and their personal social preferences and habbits. We will introduce the existing research works on the socail link prediction problem across multiple aligned social sites simultaneously.

\item \textbf{Community Detection}: information available across multiple aligned social networks provides more complete signals revealing the social community structures formed by people in the real world. We will introduce the existing research works on community detection with knowledge fused from multiple aligned heterogeneous social networks as the third task.

\item \textbf{Information Diffusion}: the formulation of multiple aligned heterogeneous social network provides researchers with the opportunity to study the information diffusion process across different social sites. The latest research papers on information diffusion problem across multiple aligned networks will be illustrated as well.

\item \textbf{Network Embedding}, information from other aligned networks can provide complimentary information for refining the feature representations of users effectively. In recent years, some research papers introduce the synergistic network embedding across aligned social networks, where knowledge from other external networks can effectively be utilized in their representation learning process mutually.
\end{itemize}

The remainder parts of this paper will be organized as follows. We will first provide the basic terminology definitions in Section~\ref{sec:terminology}. Via the anchor links, we will introduce the inter-network meta path concept in Section~\ref{sec:meta_path}, which will be extensively used in the following sections. The network alignment research papers will be introduced in Section~\ref{sec:alignment}. Inter-network link prediction and friend recommendation will be talked about in Section~\ref{sec:link_prediction}. A detailed review about cross-network community detection will be provided in Section~\ref{sec:clustering}. Broad learning based information diffusion is introduced in Section~\ref{sec:diffusion} and network embedding works are available in Section~\ref{sec:embedding}. Finally, we will illustrate several potential future development directions about broad learning and conclude this paper in Section~\ref{sec:future_works}.

\section{Terminology Definition} \label{sec:terminology}

Online social networks (OSNs) denote the online platforms which allow people to build social connections with other people, who share similar personal or career interests, backgrounds, and real-life connections. Online social networking sites vary greatly and each category of online social networks can provide a specific type of featured services. For instance, Facebook\footnote{https://www.facebook.com} allows users to socialize with each other via making friends, posting text, sharing photos/videos; Twitter\footnote{https://twitter.com} focuses on providing micro-blogging services for users to write/read the latest news and messages; Foursquare\footnote{https://foursquare.com} is a location-based social network offering location-oriented services; and Instagram\footnote{http://instagram.com} is a photo and video sharing social site among friends or to the public. To enjoy different kinds of social networks services simultaneously, users nowadays are usually involved in many of these online social networks aforementioned at the same time, in each of which they will form separate social connections and generate a large amount of social information.

Generally, the online social networks can be represented as graphs in mathematics. Besides the users, there usually exist many other types of information entities, like posts, photos, videos and comments, generated by users' online social activities. Information entities in online social networks are extensively connected, and the connections among different types of nodes usually have different physical meanings. The diverse nodes and connections render the online social networks a very complex graph structure. Meanwhile, depending on categories of information entities and connections involved, the online social networks can be divided into different types, like homogeneous network, bipartite network and heterogeneous network. To model the phenomenon that users are involved multiple networks, a new concept called ``multiple aligned heterogeneous social networks'' \cite{kdd14, cikm13} has been proposed in recent years.

For the networks with simple structures, like the homogeneous networks merely involving users and friendship links, the social patterns in them are usually easy to study. However, for the networks with complex structures, like the heterogeneous networks, the nodes can be connected by different types of link, which will have totally different physical meanings. One general technique for heterogeneous network studies is ``meta path'' \cite{SHYYW11, kdd14}, which specifically depicts certain link-sequence structures connecting node defined based on the network schema. The meta path concept can also been extended to the multiple aligned social network scenario as well, which can connect the node across different social networks.

Given a network $G = (\mathcal{V}, \mathcal{E})$, we can represent the set of node and link types involved in the network as sets $\mathcal{N}$ and $\mathcal{R}$ respectively. Based on such information, the \textit{social network} concept can be formally defined based on the graph concept by adding the mappings indicating the node and link type information.
\begin{defn}
(\textit{Social Networks}): Formally, a heterogeneous social network can be represented as $G = (\mathcal{V}, \mathcal{E}, \phi, \psi)$, where $\mathcal{V}$, $\mathcal{E}$ are the sets of nodes and links in the network, and mappings $\phi: \mathcal{V} \to \mathcal{N}$, $\psi: \mathcal{E} \to \mathcal{R}$ project the nodes and links to their specific types respectively. In many cases, the mappings $\phi$, $\psi$ are omitted assuming that the node and link types are known by default.
\end{defn}

In the following parts of this paper, depending on the categories of information involved in the online social networks, we propose to categorize the online social networks into three groups: \textit{homogeneous social networks}, \textit{heterogeneous social networks} and \textit{aligned heterogeneous social networks}. Several important concepts about social networks that will be used throughout this paper will be introduced as follows.

\subsection{Homogeneous Social Network}

\begin{defn}
(\textit{Homogeneous Social Network}): For a online social network $G$, if there exists one single type of nodes and links in the network (i.e., $|\mathcal{N}| = |\mathcal{R}| = 1$), then the network is called a \textit{homogeneous social network}.
\end{defn}

Besides the online social networks involving users and friendship links only, many different types of network structures can also be represented as the \textit{homogeneous networks} actually. Several representative examples include company internal organizational network involving employees and management relationships, and computer networks involving PCs and their networking connections. \textit{Homogeneous networks} are one of the simplest network structure, analysis of which can provide many basic knowledge for studying networks with more complex structures.

Given a \textit{homogeneous social network} $G = (\mathcal{V}, \mathcal{E})$ with user set $\mathcal{V}$ and social relationship set $\mathcal{E}$, depending on whether the links in $G$ are directed or undirected, the social link can denote either the \textit{follow} links or \textit{friendship} links among individuals. Given an individual user $u \in \mathcal{V}$ in a undirected friendship social network, the set of users connected to $u$ can be represented as the friends of user $u$ in the network $G$, denoted as $\Gamma(u) \subset \mathcal{V} = \{v | v \in \mathcal{V} \land (u, v) \in \mathcal{E}\}$. The number of friends that user $u$ has in the network is also called the degree of node $u$, i.e., $|\Gamma(u)|$.

Meanwhile, in a directed network $G$, the set individuals followed by $u$ (i.e., $\Gamma_{out}(u) = \{v | v \in \mathcal{V} \land (u, v) \in \mathcal{E}\}$) are called the set of followees of $u$; and the set of individuals that follow $u$ (i.e., $\Gamma_{out}(u) = \{v | v \in \mathcal{V} \land (v, u) \in \mathcal{E}\}$) are called the set of followers of $u$. The number of users who follow $u$ is called the in-degree of $u$, and the number of users followed by $u$ is called the out-degree of $u$ in the network. For the users with large out-degrees, they are called the \textit{hubs} \cite{K99} in the network; while those with large in-degrees, they are called the \textit{authorities} \cite{K99} in the network.

\subsection{Heterogeneous Social Network}\label{chap1_subsec:sec3_subsec2_heterogeneous_network}

\begin{defn}
(\textit{Heterogeneous Social Network}): For a online social network $G$, if there exists multiple types of nodes or links in the network (i.e., $|\mathcal{N}| > 1$, or $|\mathcal{R}| > 1$), then the network is called a \textit{heterogeneous social network}.
\end{defn}

Most of the graph-structured networks in the real world may contain very complex information involving multiple types of nodes and connections. Representative examples include \textit{heterogeneous social networks} involving users, posts, check-ins, words and timestamps, as well as the friendship links, write links and contain links among these nodes; \textit{bibliographic network} including authors, papers, conferences and the write, cite, and publish-in links among them; and \textit{movie knowledge libraries} containing movies, casts, reviewers, reviews and ratings, as well as the complex links among these nodes. The \textit{neighbor}, \textit{degree}, \textit{hub} and \textit{authority} concepts introduced before for the \textit{homogeneous networks} can be applied to the \textit{heterogeneous networks} as well.

Formally, the online social network mentioned above can be defined as $G = (\mathcal{V}, \mathcal{E})$, where $\mathcal{V}$ denotes the set of nodes and $\mathcal{E}$ represent the set of links in $G$. The node set $\mathcal{V}$ can be divided into several subsets $\mathcal{V} = \mathcal{U} \cup \mathcal{P} \cup \mathcal{L} \cup \mathcal{T} \cup \mathcal{W}$ involving the user nodes, post nodes, location nodes, word nodes and timestamp nodes respectively. The link set $\mathcal{E}$ can be divided into several subsets as well, $\mathcal{E} = \mathcal{E}_{u,u} \cup \mathcal{E}_{u,p} \cup \mathcal{E}_{p,l} \cup \mathcal{E}_{p,w} \cup \mathcal{E}_{p,t}$, containing the links among users, the links between users and posts, and those between posts with location checkins, words, and timestamps.

In the \textit{heterogeneous social networks}, each node can be connected with a set of nodes belonging to different categories via various type of connections. For example, given a user $u \in \mathcal{U}$, the set of user node incident to $u$ via the friend links can be represented as the online friends of $u$, denoted as set $\{v | v \in \mathcal{U}, (u, v) \in \mathcal{E}_{u,u}\}$; the set of post node incident to $u$ via the write links can be represented as the posts written by $u$, denoted as set $\{w | w \in \mathcal{P}, (u, w) \in \mathcal{E}_{u,p}\}$. The location check-in nodes, word nodes and timestamp nodes are not directly connected to the user node, while via the post nodes, we can also obtain the set of locations/words/timestamps that are visited/used/active-at by user $u$ in the network. Such a indirect connection can be described more clearly by the \textit{meta path} concept more clearly in Section~\ref{sec:meta_path}.

\subsection{Aligned Heterogeneous Social Networks}

\begin{defn}
(\textit{Multiple Aligned Heterogeneous Networks}): Formally, the \textit{multiple aligned heterogeneous networks} involving $n$ networks can be defined as $\mathcal{G} = ((G^{(1)}, G^{(2)},  \cdots, G^{(n)}), \\(\mathcal{A}^{(1, 2)}, \mathcal{A}^{(1, 3)}, \cdots, \mathcal{A}^{(n-1, n)}))$, where $G^{(1)}, G^{(2)},  \cdots, G^{(n)}$ denote these $n$ heterogeneous social networks and the sets $\mathcal{A}^{(1, 2)}, \\\mathcal{A}^{(1, 3)}, \cdots, \mathcal{A}^{(n-1, n)}$ represent the undirected \textit{anchor links} aligning these networks respectively.
\end{defn}

\textit{Anchor links} actually refer to the mappings of information entities across different sources, which correspond to the the same information entity in the real world, e.g., users in online social networks, authors in different bibliographic networks, and movies in the movie knowledge libraries.

\begin{defn}
(\textit{Anchor Link}): Given two heterogeneous networks $G^{(i)}$ and $G^{(j)}$ which share some common information entities, the set of \textit{anchor links} connecting $G^{(i)}$ and $G^{(j)}$ can be represented as set $\mathcal{A}^{(i,j)} = \{(u^{(i)}_m, u^{(j)}_n) | u^{(i)}_m \in \mathcal{V}^{(i)} \land u^{(j)}_n \in \mathcal{V}^{(j)} \land u^{(i)}_m, u^{(j)}_n$ {denote the same information entity}$\}$.
\end{defn}

The \textit{anchor links} depict a transitive relationship among the information entities across different networks. Given $3$ information entities $u^{(i)}_m$, $u^{(j)}_n$, $u^{(k)}_o$ from networks $G^{(i)}$, $G^{(j)}$ and $G^{(k)}$ respectively, if $u^{(i)}_m$, $u^{(j)}_n$ are connected by an anchor link and $u^{(j)}_n$, $u^{(k)}_o$ are connected by an anchor link, then the user pair $u^{(i)}_m$, $u^{(k)}_o$ will be connected by an anchor link by default. For more detailed definitions about other related terms, like \textit{anchor users}, \textit{non-anchor users}, \textit{full alignment}, \textit{partial alignment} and \textit{non-alignment}, please refer to \cite{kdd14}.

\section{Meta Path}\label{sec:meta_path}

To deal with the social networks, especially the heterogeneous social networks, a very useful tool is \textit{meta paths} \cite{SHYYW11, kdd14}. \textit{Meta path} is a concept defined based on the network schema, outlining the connections among nodes belonging to different categories. For the nodes which are not directly connected, their relationships can be depicted with the meta path concept. In this part, we will define the meta path concept, and introduce a set of meta paths within and across real-world heterogeneous social networks respectively.

\subsection{Network Schema}

Given a network $G = (\mathcal{V}, \mathcal{E})$, we can define its corresponding \textit{network schema} to describe the categories of nodes and links involved in $G$. 

\begin{defn}
(\textit{Network Schama}): Formally, the network schema of network $G$ can be represented as $S_G = (\mathcal{N}, \mathcal{R})$, where $\mathcal{N}$ and $\mathcal{R}$ denote the node type set and link type set of network $G$ respectively. 
\end{defn}

Network schema provides a meta level description of networks. Meanwhile, if a network $G$ can be outlined by the network schema $S_G$, $G$ is also called a \textit{network instance} of the network schema. For a given node $u \in \mathcal{V}$, we can represent its corresponding node type as $\phi(u) = N \in \mathcal{N}$, and call $u$ is an instance of node type $N$, which can also be denoted as $u \in N$ for simplicity. Similarly, for a link $(u, v)$, we can denotes its link type as $\psi((u, v)) = R \in \mathcal{R}$, or $(u, v) \in R$ for short. The inverse relation $R^{-1}$ denotes a new link type with reversed direction. Generally, $R$ is not equal to $R^{-1}$, unless $R$ is symmetric.

\subsection{Meta Path in Heterogeneous Social Networks}

Meta path is a concept defined based on the network schema denoting the correlation of nodes based on the heterogeneous information (i.e., different types of nodes and links) in the networks.

\begin{defn}
(\textit{Meta Path}): A meta path $P$ defined based on the network schema $S_G = (\mathcal{N}, \mathcal{R})$ can be represented as $P = N_1 \xrightarrow{R_1} N_2 \xrightarrow{R_2} \cdots N_{k-1} \xrightarrow{R_{k-1}} N_{k}$, where $N_i \in \mathcal{N}, i \in \{1, 2, \cdots, k\}$ and $R_i \in \mathcal{R}, i \in \{1, 2, \cdots, k-1\}$.
\end{defn}

Furthermore, depending on the categories of node and link types involved in the meta path, we can specify the meta path concept into several more refined groups, like \textit{homogeneous meta path} and \textit{heterogeneous meta path}, or \textit{social meta path} and other \textit{meta paths}.

\begin{defn}
(\textit{Homogeneous/Heterogeneous Meta Path}): Let $P = N_1 \xrightarrow{R_1} N_2 \xrightarrow{R_2} \cdots N_{k-1} \xrightarrow{R_{k-1}} N_{k}$ denote a meta path defined based on the network schema $S_G = (\mathcal{N}, \mathcal{R})$. If all the node types and link types involved in $P$ are of the same category, $P$ is called a \textit{homogeneous meta path}; otherwise, $P$ is called a \textit{heterogeneous meta path}.
\end{defn}

The meta paths can connect any kinds of node type pairs, and specifically, for the meta paths starting and ending with the user node types, those meta paths are called the \textit{social meta paths}.
\begin{defn}
(\textit{Social Meta Path}): Let $P = N_1 \xrightarrow{R_1} N_2 \xrightarrow{R_2} \cdots N_{k-1} \xrightarrow{R_{k-1}} N_{k}$ denote a meta path defined based on the network schema $S_G = (\mathcal{N}, \mathcal{R})$. If the starting and ending node types $N_1$ and $N_k$ are both the user node type, $P$ is called a \textit{social meta path}. 
\end{defn}

Users are usually the focus in social network studies, and the \textit{social meta paths} are frequently used in both research and real-world applications and services. If all the node types in the meta paths are all user node type and the link types are also of a common category, then the meta path is called the \textit{homogeneous social meta path}. The number of path segments in the meta path is called the meta path length. For instance, the length of meta path $P = N_1 \xrightarrow{R_1} N_2 \xrightarrow{R_2} \cdots N_{k-1} \xrightarrow{R_{k-1}} N_{k}$ is $k-1$. Meta paths can also been concatenated together with the \textit{meta path composition operator}.

\begin{defn}
(\textit{Meta Path Composition}): Meta paths $P^1 = N^1_1 \xrightarrow{R^1_1} N^1_2 \xrightarrow{R^1_2} \cdots N^1_{k-1} \xrightarrow{R^1_{k-1}} N^1_{k}$, and $P^2 = N^2_1 \xrightarrow{R^2_1} N^2_2 \xrightarrow{R^2_2} \cdots N^2_{l-1} \xrightarrow{R^2_{l-1}} N^1_{l}$ can be concatenated together to form a longer meta path $P = P^1 \circ P^2 = N^1_1 \xrightarrow{R^1_1} \cdots \xrightarrow{R^1_{k-1}} N^1_{k} \xrightarrow{R^2_1} N^2_2 \xrightarrow{R^2_2} \cdots N^2_{l-1} \xrightarrow{R^2_{l-1}} N^1_{l}$, if the ending node type of $P^1$ is the same as the starting node type of $P^2$, i.e., $N^1_{k} = N^2_1$. The new composed meta path is of length $k+l -2$.
\end{defn}

Meta path $P = N_1 \xrightarrow{R_1} N_2 \xrightarrow{R_2} \cdots N_{k-1} \xrightarrow{R_{k-1}} N_{k}$ can also been treated as the concatenation of simple meta paths $N_1 \xrightarrow{R_1} N_2$, $N_2 \xrightarrow{R_2} N_3$, $\cdots$, $N_{k-1} \xrightarrow{R_{k-1}} N_{k}$, which can be represented as $P  = R_1 \circ R_2 \circ \cdots \circ R_{k-1} \circ R_k$. 

\subsection{Meta Path across Aligned Heterogeneous Social Networks}

Besides the meta paths within one single heterogeneous network, the meta paths can also be defined across multiple aligned heterogeneous networks via the \textit{anchor meta paths}.

\begin{defn}
(\textit{Anchor Meta Path}): Let $G^{(1)}$ and $G^{(2)}$ be two aligned heterogeneous networks sharing the common anchor information entity of types $N^{(1)} \in \mathcal{N}^{(1)}$ and $N^{(2)} \in \mathcal{N}^{(2)}$ respectively. The anchor meta path between the schemas of networks $G^{(1)}$ and $G^{(2)}$ can be represented as meta path $\Phi = N^{(1)} \xleftrightarrow{Anchor} N^{(2)}$ of length $1$. 
\end{defn}

The \textit{anchor meta path} is the simplest meta path across aligned networks, and a set of inter-network meta paths can be defined based on the intra-network meta paths and the anchor meta path.

\begin{defn}
(\textit{Inter-Network Meta Path}): A meta path $\Psi = N_1 \xrightarrow{R_1} N_2 \xrightarrow{R_2} \cdots N_{k-1} \xrightarrow{R_{k-1}} N_{k}$ is called an \textit{inter-network meta path} between networks $G^{(1)}$ and $G^{(2)}$ iff $\exists m \in \{1, 2, \cdots, k-1\}, R_m = Anchor$.
\end{defn}

The \textit{inter-network meta paths} can be viewed as a composition of \textit{intra-network meta paths} and the \textit{anchor meta path} via the user node types. An \textit{inter-network meta path} can be a meta path starting with an \textit{anchor meta path} followed by the \textit{intra-network meta paths}, or those with \textit{anchor meta paths} in the middle. Here, we would like to introduce several categories \textit{inter-network meta paths} involving the anchor meta paths at different positions as defined in \cite{kdd14}:
\begin{itemize}
\item $\Psi(G^{(1)}, G^{(2)}) = \Phi(G^{(1)}, G^{(2)})$, which denotes the simplest \textit{inter-network meta path} composed of the anchor meta path only between networks $G^{(1)}$ and $G^{(2)}$.

\item $\Psi(G^{(1)}, G^{(2)}) = \Phi(G^{(1)}, G^{(2)}) \circ P(G^{(1)})$, which denotes the \textit{inter-network meta path} starting with an \textit{anchor meta path} and followed by the \textit{intra-network social meta path} in network $G^{(2)}$.

\item $\Psi(G^{(1)}, G^{(2)}) = P(G^{(1)}) \circ \Phi(G^{(1)}, G^{(2)})$, which denotes the \textit{inter-network meta path} starting with the \textit{intra-network social meta path} in network $G^{(1)}$ followed by an \textit{anchor meta path} between networks $G^{(1)}$ and $G^{(2)}$.

\item $\Psi(G^{(1)}, G^{(2)}) = P(G^{(1)}) \circ \Phi(G^{(1)}, G^{(2)}) \circ P(G^{(2)})$, which denotes the \textit{inter-network meta path} starting and ending with the \textit{intra-network social meta path} in networks $G^{(1)}$ and $G^{(2)}$ respectively connected by an \textit{anchor meta path} between networks $G^{(1)}$ and $G^{(2)}$.

\item $\Psi(G^{(1)}, G^{(2)}) = P(G^{(1)}) \circ \Phi(G^{(1)}, G^{(2)}) \circ P(G^{(2)}) \circ \Phi(G^{(2)}, G^{(1)})$, which denotes the \textit{inter-network meta path} starting and ending with node types in network $G^{(1)}$ and traverse across the networks twice via the \textit{anchor meta path}.

\item $\Psi(G^{(1)}, G^{(2)}) = P(G^{(1)}) \circ \Phi(G^{(1)}, G^{(2)}) \circ P(G^{(2)}) \circ \Phi(G^{(2)}, G^{(1)}) \circ P(G^{(1)})$, which denotes the \textit{inter-network meta path} starting and ending with the \textit{intra-network social meta paths} in network $G^{(1)}$ and traverse across the networks twice via the \textit{anchor meta path} between them.
\end{itemize} 

These meta path concepts introduced in this section will be widely used in various social network broad learning tasks to be introduced later. 

\section{Network Alignment}\label{sec:alignment}

Network alignment is an important research problem and dozens of papers have been published on this topic in the past decades. Depending on specific disciplines, the studied networks can be social networks in data mining \cite{icdm15, www16, cikm13, iri15, ijcai15, wsdm17_2} protein-protein interaction (PPI) networks and gene regulatory networks in bioinformatics \cite{KBS08, SP12, LLBSB09, SXB07}, chemical compound in chemistry \cite{SHL08}, data schemas in data warehouse \cite{MGR02}, ontology in web semantics \cite{DMDH04}, graph matching in combinatorial mathematics \cite{MH14}, as well as graphs in computer vision \cite{CFSV04, BGGSW09}.

In bioinformatics, the network alignment problem aims at predicting the best mapping between two biological networks based on the similarity of the molecules and their interaction patterns. By studying the cross-species variations of biological networks, network alignment problem can be applied to predict conserved functional modules \cite{SSKKMUSKI05} and infer the functions of proteins \cite{PSBLB11}. Graemlin \cite{FNSMB06} conducts pairwise network alignment by maximizing an objective function based on a set of learned parameters. Some works have been done on aligning multiple network in bioinformatics. IsoRank proposed in \cite{SXB08} can align multiple networks greedily based on the pairwise node similarity scores calculated with spectral graph theory. IsoRankN \cite{LLBSB09} further extends IsoRank by exploiting a spectral clustering scheme in the alignment model.

In recent years, with rapid development of online social networks, researchers' attention starts to shift to the alignment of social networks. Enlightened by the homogeneous network alignment method in \cite{U88}, Koutra et al. \cite{KTL13} propose to align two bipartite graphs with a fast alignment algorithm. Zafarani et al. \cite{ZL13} propose to match users across social networks based on various node attributes, e.g., username, typing patterns and language patterns etc. Kong et al. formulate the heterogeneous social network alignment problem as an anchor link prediction problem. A two-step supervised method MNA is proposed in \cite{cikm13} to infer potential anchor links across networks with heterogeneous information in the networks. However, social networks in the real world are mostly partially aligned actually and lots of users are not anchor users. Zhang et al. have proposed a partial network alignment method specifically in \cite{iri15}. 

In the social network alignment model building, the anchor links are very expensive to label manually, and achieving a large-sized anchor link training set can be extremely challenging. In \cite{ijcai15}, Zhang et al. propose to study the network alignment problem based on the PU (Positive and Unlabeled) learning setting instead, where the model is built based on a small amount of positive set and a large unlabeled set. Furthermore, in the case when no training data is available, via inferring the potential anchor user mappings across networks, Zhang et al. have introduced an unsupervised network alignment models for multiple (more than $2$) social networks in \cite{icdm15} and an unsupervised network concurrent alignment model via multiple shared information entities simultaneously in \cite{www16}.

In this section, we will introduce the social network alignment methods based on the \textit{supervised learning}, \textit{unsupervised learning} and \textit{semi-supervised learning} settings respectively.

\subsection{Supervised Network Alignment}

Formally, let $G^{(1)} = (\mathcal{V}^{(1)}, \mathcal{E}^{(1)})$ and $G^{(2)} = (\mathcal{V}^{(2)}, \mathcal{E}^{(2)})$ denote two online social networks, where $\mathcal{V}^{(1)}$/$\mathcal{V}^{(2)}$ and $\mathcal{E}^{(1)}$/$\mathcal{E}^{(2)}$ denote the sets of nodes and links involved in these two networks respectively. Let set $\mathcal{A}_{train}$ denotes the set of labeled anchor links connecting networks $G^{(1)}$ and $G^{(2)}$, we can represent the set of anchor links without known labels as the test set $\mathcal{A}_{test} \subseteq \mathcal{U}^{(1)} \times \mathcal{U}^{(2)} \setminus \mathcal{A}_{train}$. 

In the supervised network alignment problem, a set of features will be extracted for the anchor links with the heterogeneous information available across the social networks. Meanwhile, the existing and non-existing anchor links will be labeled as positive and negative instances respectively. Based on the training set $\mathcal{A}_{train}$, we can represent the feature vectors and labels of links in the set as a group of tuples $\{(\mb{x}_l, y_l)\}_{l \in \mathcal{A}_{train}}$, where $\mb{x}_l$ represents the feature vector extracted for anchor link $l$ and $y_l \in \{-1, +1\}$ denotes its label. Based on the training set, we aim at building a mapping $f: \mathcal{A}_{test} \to \{-1, +1\}$ to determine the labels of the anchor links in the test set. To address the problem, we will take the supervised network alignment model proposed in \cite{cikm13} as an example to illustrate the problem setting and potential solutions.

\subsubsection{Anchor Link Feature Extraction}

The supervised network alignment model proposed in \cite{cikm13} involves three main phases: (1) feature extraction, (2) classification model building, and (3) network matching. One of the main goal in supervised network alignment is to extract discriminative social features for a pair of user accounts between two disjoint social networks. Intuitively, the social neighbors of each user account can only involve users from the same social network, which will have no common neighbors actually. For example, the neighbors for a Facebook user will only involve the other users in Facebook, which has no overlap with his neighbors in Twitter (which contains the Twitter users only). However, in anchor link prediction problem, we need to extract a set of features for the anchor links between two different networks, which can be a challenging problem. In the following, we will introduce several social features proposed in \cite{cikm13} for the multi-network settings specifically.

Let $(u_i^{(1)}, u_j^{(2)})$ be a potential anchor link between these two networks, and $\mathcal{A}_{train}^+ \subset \mathcal{A}_{train}$ be the set of positively labeled anchor links in the training set. \cite{cikm13} proposes to extend the definition of some commonly used social features in link prediction, i.e., ``common neighbors'', ``Jaccard's coefficient'' and ``Adamic/Adar measure'', to extract effective features for these anchor links based on the known anchor links in set $\mathcal{A}_{train}^+$.

\noindent \textbf{Extended Common Neighbor}

The {extended common neighbor} (ECN) $CN(u^{(1)}_i, u^{(2)}_j)$ represents the number of `common' neighbors between $u^{(1)}_i$ in network $G^{(1)}$ and $u^{(2)}_j$ in network $G^{(2)}$. We denote the neighbors of $u^{(1)}_i$ in network $G^{(1)}$ as $\Gamma(u^{(1)}_i)$, and the neighbors of $u^{(2)}_j$ in network $G^{(2)}$ as $\Gamma(u^{(2)}_j)$. It is easy to identify that the sets $\Gamma(u^{(1)}_i)$ and $\Gamma(u^{(2)}_j)$ contain the users from two different networks respectively, which are isolated without any common entries. 

Meanwhile, based on the existing anchor links $\mathcal{A}_{train}^+$, some of the users in $\Gamma(u^{(1)}_i)$ and $\Gamma(u^{(2)}_j)$ can correspond to the accounts of the same users in these two networks, who are actually connected by the anchor links in $\mathcal{A}_{train}^+$. Based on such an intuition, \cite{cikm13} defines the {extended common neighbor} measure between these two users as the number of shared anchor users in their neighbor sets respectively.

\begin{defn}
(\textit{Extended Common Neighbor}): The measure of \textit{extended common neighbor} is defined as the number of known anchor links between  $\Gamma(u^{(1)}_i)$ and $\Gamma(u^{(2)}_j)$. \begingroup\makeatletter\def\f@size{8}\check@mathfonts
\begin{align}
ECN(u^{(1)}_i, u^{(2)}_j) = & \Big|\{ (u^{(1)}_p , u^{(2)}_q) | (u^{(1)}_p , u^{(2)}_q) \in \mathcal{A}_{train}^+, \\ 
&\ \ u^{(1)}_p\in \Gamma(u^{(1)}_i), u^{(2)}_q \in \Gamma(u^{(2)}_j)  \}\Big| \\
		= & \left| \Gamma(u^{(1)}_i) \bigcap_{\mathcal{A}_{train}^+}  \Gamma(u^{(2)}_j) \right|.
\end{align}\endgroup
\end{defn}

\noindent \textbf{Extended Jaccard's Coefficient}

\cite{cikm13} also extends the measure of Jaccard's coefficient to multi-network setting using similar method of extending common neighbor. $EJC(u^{(1)}_i, u^{(2)}_j)$ is a normalized version of common neighbors, i.e., $ECN(u^{(1)}_i, u^{(2)}_j)$  divided by the total number of distinct users in $\Gamma(u^{(1)}_i) \cup \Gamma(u^{(2)}_j)$

\begin{defn}
(\textit{Extended Jaccard's Coefficient}): Given the neighborhood set of users $u^{(1)}_i$ and $u^{(2)}_j$ in networks $G^{(1)}$ and $G^{(2)}$ respectively, the {Extended Jaccard's Coefficient} of user pair $u^{(1)}_i$ and $u^{(2)}_j$ can be represented as \begingroup\makeatletter\def\f@size{7}\check@mathfonts
\begin{equation}
EJC(u^{(1)}_i, u^{(2)}_j) =  \frac{ \left| \Gamma(u^{(1)}_i) \bigcap_{\mc{A}_{train}^+}  \Gamma(u^{(2)}_j) \right| } { \left| \Gamma(u^{(1)}_i) \bigcup_{\mc{A}_{train}^+}  \Gamma(u^{(2)}_j) \right|},
\end{equation}
where
\begin{align}
&\left| \Gamma(u^{(1)}_i) \bigcup_{\mc{A}_{train}^+}  \Gamma(u^{(2)}_j) \right| \\
&= | \Gamma(u^{(1)}_i) | + | \Gamma(u^{(2)}_j) | -   \left| \Gamma(u^{(1)}_i) \bigcap_{\mc{A}_{train}^+}  \Gamma(u^{(2)}_j) \right|.
\end{align}\endgroup
\end{defn}

\noindent \textbf{Extended Adamic/Adar Index}

Similarly, \cite{cikm13} also extends the Adamic/Adar Measure into multi-network settings, where the common neighbors are weighted by their average degrees in both social networks.

\begin{defn}
(\textit{Extended Adamic/Adar Index}): The {Extended Adamic/Adar Index} of the user pairs $u^{(1)}_i$ and $u^{(2)}_j$ across networks can be represented as \begingroup\makeatletter\def\f@size{6}\check@mathfonts 
\begin{align}
\hspace{-5pt}&EAA(u^{(1)}_i, u^{(2)}_j) \\
\hspace{-5pt}&= \sum_{ (u^{(1)}_p, u^{(2)}_q) \in  \Gamma(u^{(1)}_i) \bigcap_{\mathcal{A}_{train}^+}   \Gamma(u^{(2)}_j)} \hspace{-15pt} \log^{-1} \left(\frac{  | \Gamma(u^{(1)}_p) |  +| \Gamma(u^{(2)}_q) | }{2}\right).
\end{align}\endgroup
\end{defn}

In the EAA definition, for the common neighbor shared by $u^{(1)}_i$ and $u^{(2)}_j$, their degrees are defined as the average of their degrees in networks $G^{(1)}$ and $G^{(2)}$. Considering that different networks are of different scales, like Twitter if far larger than Twitter, the node degree measure can be dominated by the degree of the larger networks. Some other weighted form of the degree measure, like $\alpha \cdot  | \Gamma(u^{(1)}_p) |  + (1 - \alpha) \cdot | \Gamma(u^{(2)}_q)$ ($\alpha \in [0, 1]$), can be applied to replace $\frac{  | \Gamma(u^{(1)}_p) |  +| \Gamma(u^{(2)}_q) | }{2}$ in the definition.

In addition to the social features mentioned above, heterogeneous social networks also involve abundant information about: where, when and what. A number of features extracted by exploiting the spatial, temporal and text content information can also be extracted to facilitate anchor link prediction, which have been introduced in detail in \cite{cikm13}.

\subsubsection{Anchor Link Model Building}

Given the multiple aligned social networks, via manual labeling, the sets of identified existing and non-existing anchor links can be denoted as $\mathcal{A}_{train}^+$ and $\mathcal{A}_{train}^-$ respectively. The anchor links in sets $\mathcal{A}_{train}^+$ and $\mathcal{A}_{train}^-$ are assigned with the positive and negative labels respectively, i.e., $\{-1, +1\}$, depending on whether they exist or not. For instance, given a link $l \in \mathcal{A}_{train}^+$, it will be associated with a positive label, i.e., $y_l = +1$; while if link $l \in \mathcal{A}_{train}^-$, it will be associated with a negative label, $y_l = -1$. With the information in these aligned heterogeneous social networks, a set of features introduced in the previous subsection can be extracted for the links in sets $\mathcal{A}_{train}^+$ and $\mathcal{A}_{train}^-$. For instance, for a link $l$ in the training set $\mathcal{A}_{train}^+$ (or $\mathcal{A}_{train}^-$), we can represent its feature vector as $\mb{x}_l$, which will be called an {anchor link instance} and each feature is an {attribute} of the anchor link. With these anchor link instances and their labels, a classification model, like SVM (support vector machine), Decision Tree, or neural networks, can be trained. Meanwhile, in its test procedure, for each link $l$ in the test set $\mathcal{A}_{test}$, a similar set of features (or attributes) can be extracted, which can be denoted as its feature vector as $\mb{x}_l$. However, without knowledge about its label, the main objective of Step (2) is to determine whether the potential anchor links in set $\mathcal{A}_{test}$ exists or not (its label is positive or negative). By applying the trained to the feature vector of the anchor link, we will obtain a prediction label, which will be returned as the result of Step (2).

\subsubsection{Network Matching}

\begin{figure}[t]
\centering
\subfigure[input/ranking scores]{ \label{fig_chap4_sec3_network_matching_1}
    \begin{minipage}[l]{0.4\columnwidth}
      \centering
      \includegraphics[width=.8\textwidth]{./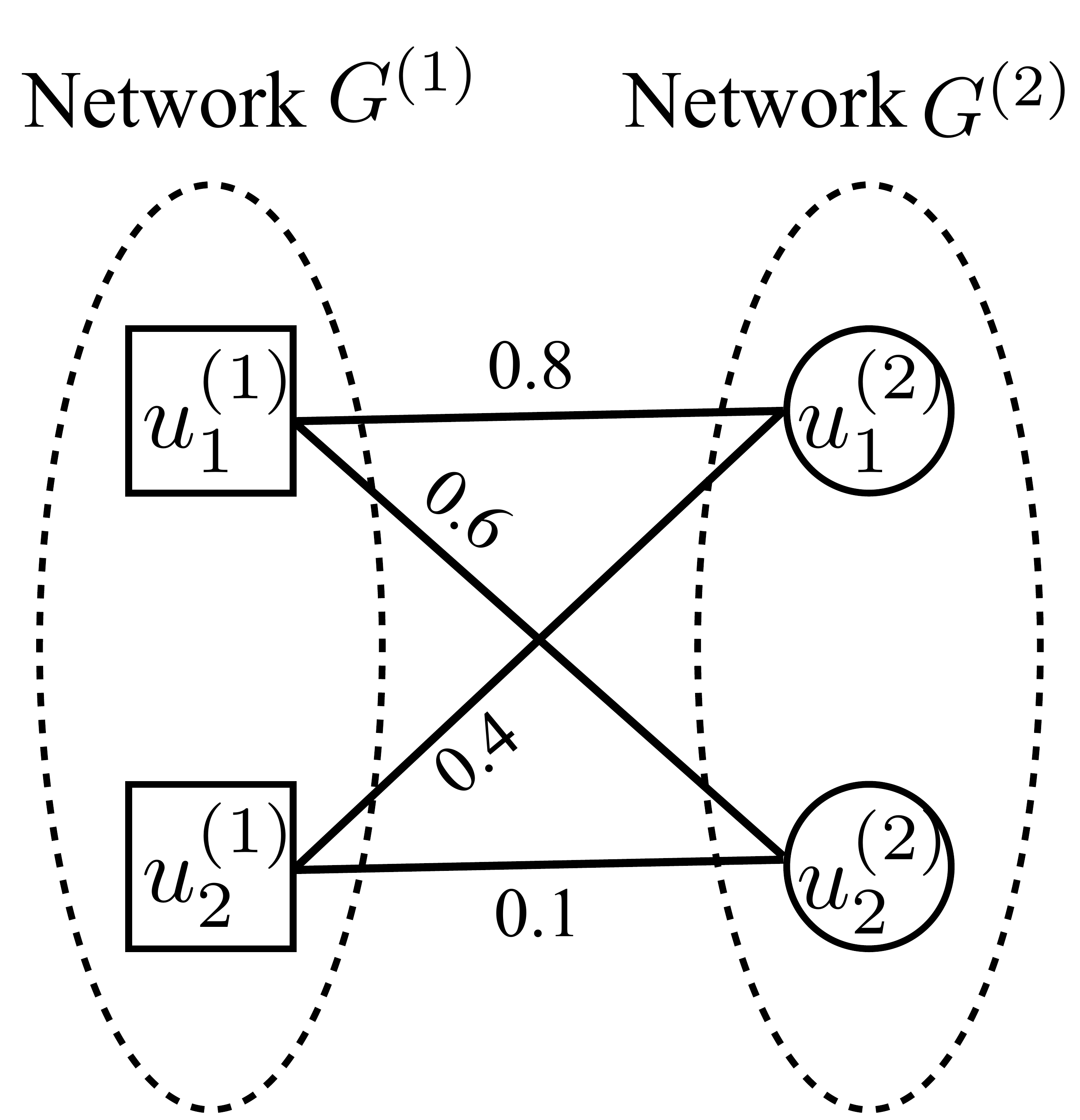}
    \end{minipage}
}
\subfigure[link prediction]{ \label{fig_chap4_sec3_network_matching_2}
    \begin{minipage}[l]{0.4\columnwidth}
      \centering
      \includegraphics[width=.8\textwidth]{./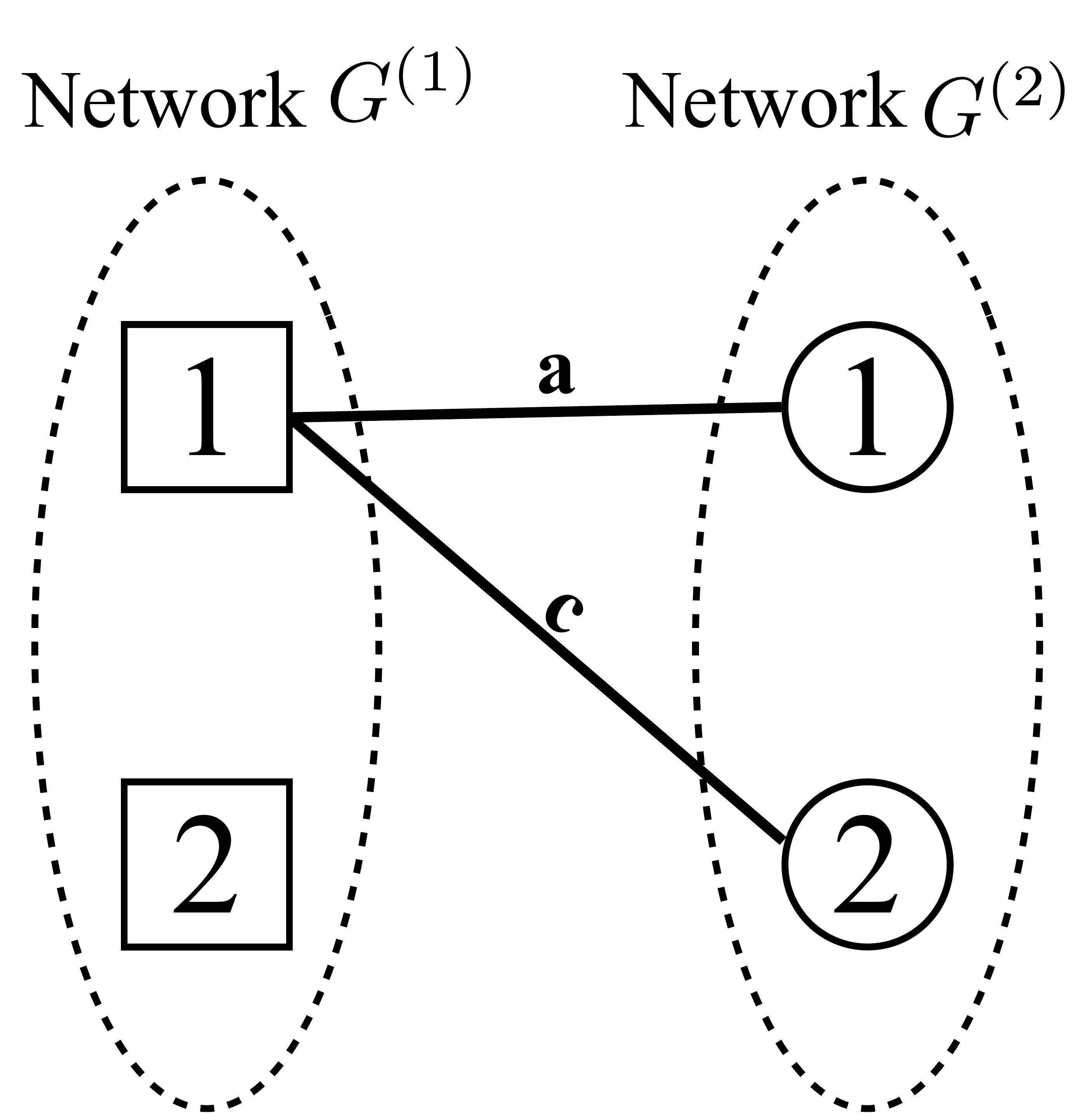}
    \end{minipage}
}
\subfigure[maximize sum of weights (1:1 constrained)]{ \label{fig_chap4_sec3_network_matching_3}
    \begin{minipage}[l]{0.4\columnwidth}
      \centering
      \includegraphics[width=.8\textwidth]{./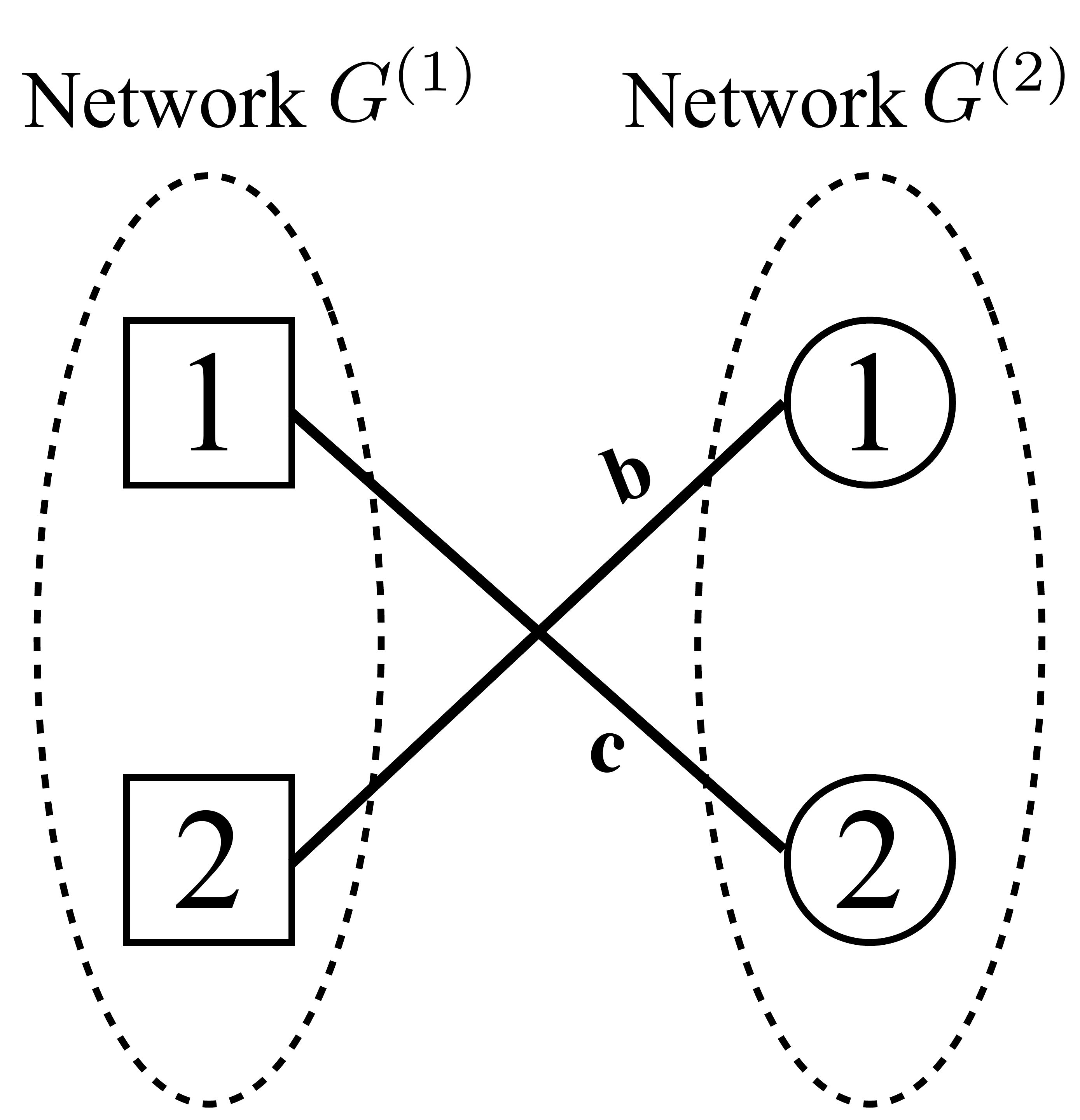}
    \end{minipage}
}
\subfigure[MNA method ]{ \label{fig_chap4_sec3_network_matching_4}
    \begin{minipage}[l]{0.4\columnwidth}
      \centering
      \includegraphics[width=.8\textwidth]{./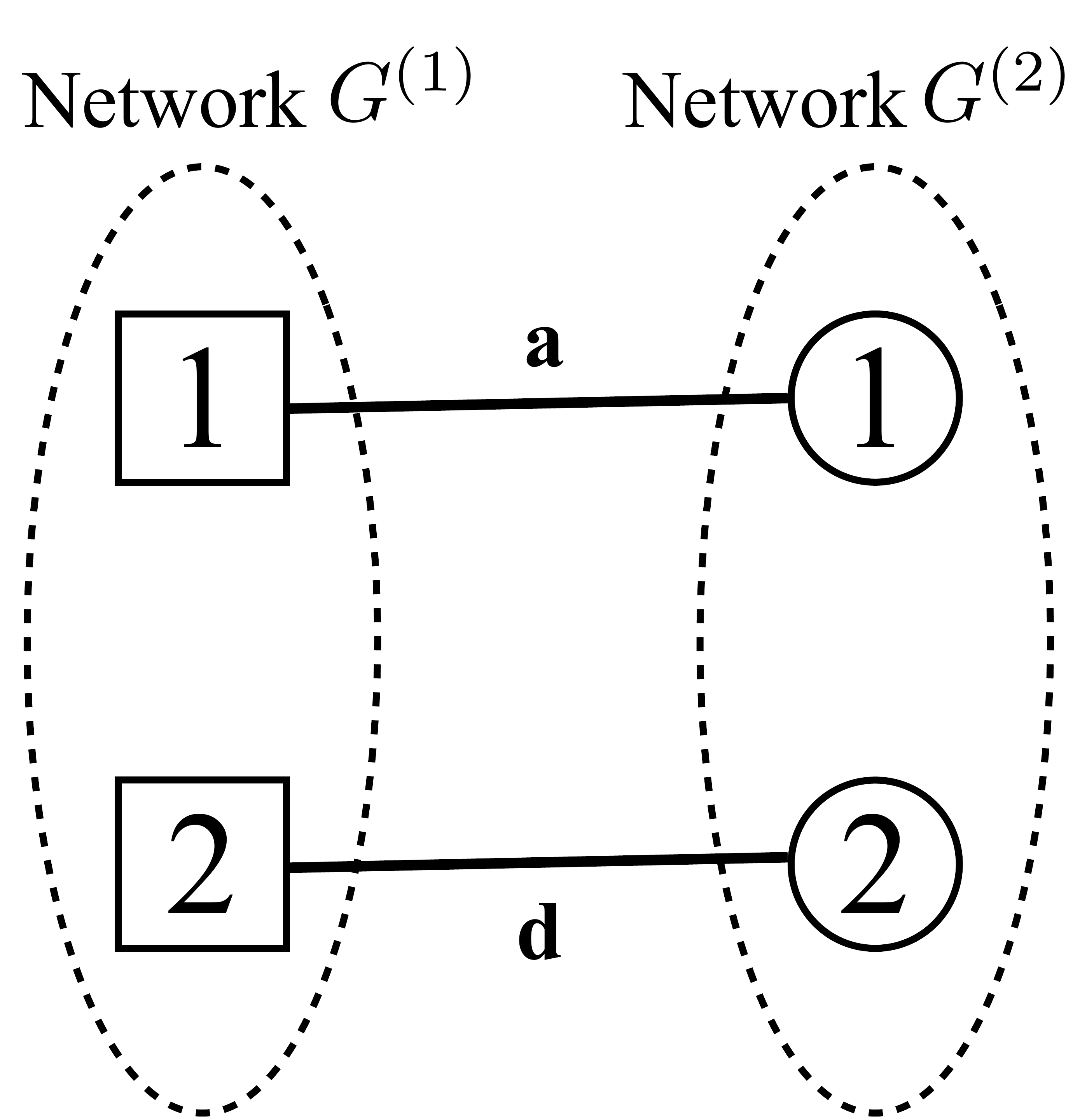}
    \end{minipage}
}
\caption{An example of anchor link inference by different methods. (a) is the input, ranking scores. (b)-(d) are the results of different methods for anchor link inference.}\label{fig_chap4_sec3_network_matching}
\end{figure}

However, in the inference process, the predictions of the binary classifier cannot be directly used as anchor links due to the following issues:
\begin{itemize}
	\item The inference of conventional classifiers are designed for constraint-free settings, and the one-to-one constraint \cite{cikm13, wsdm17_2} on anchor links may not necessarily hold in the label prediction of the classifier (SVM).
	\item Most classifiers also produce output scores, which can be used to rank the data points in the test set. 
	However, these ranking scores are uncalibrated in scale to anchor link prediction task. 
	Previous classifier calibration methods \cite{ZE02} apply only to classification problems without any constraint.
\end{itemize}

In order to tackle the above issues, \cite{cikm13} introduces an inference process, called {MNA} (Multi-Network Anchoring), to infer anchor links based upon the ranking scores of the classifier. This model is motivated by the \textit{stable marriage problem} \cite{DF81} in mathematics. 

We first use a toy example in Figure~\ref{fig_chap4_sec3_network_matching} to illustrate the main idea of {MNA}.
Suppose in Figure~\ref{fig_chap4_sec3_network_matching_1}, we are given the ranking scores from the classifiers, between the $4$ user pairs two networks (i.e., network $G^{(1)}$ and network $G^{(2)}$).
We can see in Figure~\ref{fig_chap4_sec3_network_matching_2} that link prediction  methods with a fixed threshold may not be able to predict well, because the predicted links do not satisfy the constraint of one-to-one relationship. Thus one user account in network $G^{(1)}$ can be linked with multiple accounts in network $G^{(2)}$.
In Figure~\ref{fig_chap4_sec3_network_matching_3}, \textit{weighted maximum matching} methods can find a set of links with maximum sum of weights.
However, it is worth noting that the input scores are uncalibrated, so the maximum weight matching may not be a good solution for anchor link prediction problems. 
The input scores only indicate the ranking of different user pairs, i.e., the preference relationship among different user pairs.

Here we say `node $x$ prefers node $y$ over node $z$', if the score of pair $(x,y)$ is larger than the score of pair $(x,z)$.
For example, in Figure~\ref{fig_chap4_sec3_network_matching_3}, the weight of pair $a$, {i.e.,} Score$(a) = 0.8$, is larger than Score$(c) = 0.6$.
It shows that user $u^{(1)}_1$ (the first user in network $G^{(1)}$) \textit{prefers} $u^{(2)}_1$ over $u^{(2)}_2$.
The problem with the prediction result in Figure~\ref{fig_chap4_sec3_network_matching_3} is that, the pair $(u^{(1)}_1, u^{(2)}_1)$ should be more likely to be an anchor link due to the following reasons: (1) $u^{(1)}_1$ prefers $u^{(2)}_1$ over $u^{(2)}_2$; (2)  $u^{(2)}_1$ also prefers $u^{(1)}_1$ over $u^{(1)}_2$.

By following such an intuition, we can obtain the final stable matching result in Figure~\ref{fig_chap4_sec3_network_matching_4}, where anchor links $(u^{(1)}_1, u^{(2)}_1)$ and $(u^{(1)}_2, u^{(2)}_2)$ are selected in the matching process.

\begin{algorithm}[t]
\caption{Multi-Network Stable Matching}
\label{alg:matching_framwork}
\begin{algorithmic}[1]
	\REQUIRE two heterogeneous social networks, $\mc{G}^s$ and $\mc{G}^t$. \\
\qquad  a set of known anchor links $\mc{A}$\\
\ENSURE a set of inferred anchor links $\mc{A}'$ \\

\STATE Construct a training set of user account pairs with known labels using $\mc{A}$.
\STATE	For each pair $(u^s_i, u^t_j)$, extract four types of features.
\STATE	Training classification model $C$ on the training set.
\STATE	Perform classification using model $C$ on the test set.
\STATE	For each unlabeled user account, sort the ranking scores into a preference list of the matching accounts.
\STATE	Initialize all unlabeled $u^s_i$ in $\mc{G}^s$ and $u^t_j$ in $\mc{G}^t$ as free
\STATE	$\mc{A}' = \emptyset$
\WHILE{$\exists$ free $u^s_i$ in $\mc{G}^s$ and  $u^s_i$'s preference list is non-empty}
\STATE	Remove the top-ranked account $u^t_j$ from  $u^s_i$'s preference list
\IF{ $u^t_j$ is free} 
\STATE	$\mc{A}' = \mc{A}' \cup \{ (u^s_i, u^t_j)\}$
\STATE	Set $u^s_i$ and $u^t_j$ as occupied
\ELSE
\STATE	$\exists u^s_p$ that $u^t_j$ is occupied with.
\IF{ $u^t_j$ prefers $u^s_i$ to $u^s_p$}
\STATE $\mc{A}' = (\mc{A}' - \{ (u^s_p,u^t_j) \}) \cup \{ (u^s_i,u^t_j) \}$
\STATE Set $u^s_p$ as free and $u^s_i$ as occupied
\ENDIF
\ENDIF
\ENDWHILE
\end{algorithmic}
\end{algorithm}

\begin{defn}
(\textit{Matching}): Mapping $\mu: \mathcal{U}^{(1)} \cup \mathcal{U}^{(2)} \to \mathcal{U}^{(1)} \cup \mathcal{U}^{(2)}$ is defined to be a \textit{matching} iff (1) $|\mu(u_i)| = 1, \forall u_i \in \mathcal{U}^{(1)}$ and $\mu(u_i) \in \mathcal{U}^{(2)}$; (2) $|\mu(v_j)| = 1, \forall v_j \in \mathcal{U}^{(2)}$ and $\mu(v_j) \in \mathcal{U}^{(1)}$; (3) $\mu(u_i) = v_j$ iff $\mu(v_j) = u_i$.
\end{defn}

\begin{defn} 
(\textit{Blocking Pair}): A  pair  $(u^{(1)}_i, u^{(2)}_j)$ is a blocking pair iff $u^{(1)}_i$ and $ u^{(2)}_j$ both prefer each other over their current assignments  respectively in the predicted set of  anchor links $\mc{A}'$.
\end{defn}

\begin{defn} 
(\textit{Stable Matching}): An inferred anchor link set $\mc{A}'$ is stable if there is no blocking pair.
\end{defn}

Based on the result from the previous step, the MNA method introduced in \cite{cikm13} formulates the anchor link pruning problem as a stable matching problem between user accounts in network $G^{(1)}$ and accounts in network $G^{(2)}$.
Assume that we have two sets of unlabeled user accounts, {i.e.,} $ \mc{U}^{(1)}$ in network $G^{(1)}$ and $ \mc{U}^{(2)}$ in network $G^{(2)}$. 
Each user $u^{(1)}_i$ has a ranking list or preference list  $P(u^{(1)}_i)$ over all the user accounts in network $G^{(2)}$ ($u^{(2)}_j \in \mc{U}^{(2)}$) based upon the input scores of different pairs. 
For example, in Figure~\ref{fig_chap4_sec3_network_matching_1}, the preference list of node $u^{(1)}_1$ is $P(u^{(1)}_1)= ( u^{(2)}_1 > u^{(2)}_2)$, indicating that node $ u^{(2)}_1$ is preferred by $u^{(1)}_1$ over  $u^{(2)}_2$.  The preference list of node $u^{(1)}_2$ is also $P(u^{(1)}_2)=( u^{(2)}_1 > u^{(2)}_2)$.
Similarly, we also build a preference list for each user account in network $G^{(2)}$.
In Figure~\ref{fig_chap4_sec3_network_matching_1},  $P(u^{(2)}_1)= P(u^{(2)}_2)=( u^{(1)}_1 > u^{(1)}_2)$.

The proposed {MNA} method for anchor link prediction is shown in Algorithm~\ref{alg:matching_framwork}. 
In each iteration, {MNA} first randomly selects a free user account $u^{(1)}_i$ from network $G^{(1)}$. 
Then {MNA} gets the most preferred user node $u^{(2)}_j$ by $u^{(1)}_i$ in its preference list $P(u^{(1)}_i)$.
The most preferred user $u^{(2)}_j$ will be removed from the preference list, {i.e.,} $P(u^{(1)}_i) = P(u^{(1)}_i) - u^{(2)}_j$.
If  $u^{(2)}_j$ is also a free account, {MNA} will add the pair of accounts $(u^{(1)}_i, u^{(2)}_j)$ into the current solution set $\mc{A}'$.
Otherwise, $u^{(2)}_j$ is already occupied with $u^{(1)}_p$ in $\mc{A}'$.
{MNA} then examines the preference of  $u^{(2)}_j$. 
If  $u^{(2)}_j$ also prefers  $u^{(1)}_i$ over $u^{(1)}_p$, it means that the pair $ (u^{(1)}_i,u^{(2)}_j)$ is a blocking pair. 
{MNA} removes the blocking pair by replacing the pair $ (u^{(1)}_p,u^{(2)}_j)$ in the solution set $\mc{A}'$ with the pair $ (u^{(1)}_i,u^{(2)}_j)$.
Otherwise, if $u^{(2)}_j$ prefers  $u^{(1)}_p$ over $u^{(1)}_i$, {MNA} will start the next iteration to reach out the next free node in network $G^{(1)}$. 
The algorithm stops when all the users in network $G^{(1)}$ are occupied, or all the preference lists of free accounts in network $G^{(1)}$ are empty.

Finally, the selected anchor links in set $\mc{A}'$ will be returned as the final positive instances, while the remaining ones in the test set $\mathcal{A}_{test}$ will be labeled as the negative instances. Another variant of the supervised network alignment model has been proposed in \cite{iri15}, which adds an extra threshold on the user preference list to make the matching algorithm applicable to handle the \textit{non-anchor users} as well.


\subsection{Pairwise Unsupervised Homogeneous Network Alignment}

In this part, we will study the network alignment problem based on unsupervised learning setting, which needs no labeled training data. Given two heterogeneous online social networks, which can be represented as $G^{(1)} = (\mathcal{V}^{(1)}, \mathcal{E}^{(1)})$ and $G^{(2)} = (\mathcal{V}^{(2)}, \mathcal{E}^{(2)})$ respectively, the {unsupervised network alignment} problem aims at inferring the anchor links between networks $G^{(1)}$ and $G^{(2)}$. Let $\mathcal{U}^{(1)} \subset \mathcal{V}^{(1)}$ and $\mathcal{U}^{(2)} \subset \mathcal{V}^{(2)}$ be the user set in these two networks respectively, we can represent the set of potential anchor links between networks $G^{(1)}$ and $G^{(2)}$ as $\mathcal{A} = \mathcal{U}^{(1)} \times \mathcal{U}^{(2)}$. In the {unsupervised network alignment} problem, among all the potential anchor links in set $\mathcal{A} = \mathcal{U}^{(1)} \times \mathcal{U}^{(2)}$, we want to infer which ones in set $\mathcal{A}$ exist in the real world.

Given two homogeneous networks $G^{(1)}$ and $G^{(2)}$, mapping the nodes between them is an extremely challenging task, which is also called the {graph isomorphism} problem \cite{RC06, F96}. The {graph isomorphism} has been shown to be NP, but it is still not known whether it also belongs to P or NP-complete yet. So far, no efficient algorithm exists that can address the problem in polynomial time. In this part, we will introduce several heuristics based methods to solve the {pairwise homogeneous network alignment} problem.

\subsubsection{Heuristic Measure based Network Alignment Model}

The information generated by users' online social activities can indicate their personal characteristics. The features introduced in the previous subsection, like ECN, EJC and EAA based on social connection information, similarity/distance measures based on location checkin information, temporal activity closeness, and text word usage similarity can all be used as the predictors indicating whether the cross-network user pairs are the same user or not. Besides these measures, in this part, we will introduce a category new measures, {Relative Centrality Difference} (RCD), which can also be applied to solve the {unsupervised network alignment} problem. 

The {centrality} concept can denote the importance of users in the online social networks. Here, we assume that important users in one social network (like celebrities, movie stars and politicians) will be important as well in other networks. Based on such an assumption, the {centrality} of users in different networks can be an important signal for inferring the anchor links across networks. 

\begin{defn}
({Relative Centrality Difference}): Given two users $u^{(1)}_i$, $u^{(2)}_j$ from networks $G^{(1)}$ and $G^{(2)}$ respectively, let $C(u^{(1)}_i)$ and $C(u^{(2)}_j)$ denote the {centrality} scores of the users, we can define the {relative centrality difference} (RCD) as \begingroup\makeatletter\def\f@size{8}\check@mathfonts
\begin{equation}
RCD(u^{(1)}_i, u^{(2)}_j) = \left(1 + \frac{|C(u^{(1)}_i) - C(u^{(2)}_j)|}{\left(C(u^{(1)}_i) + C(u^{(2)}_j) \right)/2} \right)^{-1}.
\end{equation}\endgroup
\end{defn}

Depending on the {centrality} measures applied, different types of \textit{relative centrality difference} measures can be defined. For instance, if we use node degree as the {centrality} measure, the \textit{relative degree difference} can be represented as \begingroup\makeatletter\def\f@size{8}\check@mathfonts
\begin{equation}
RDD(u^{(1)}_i, u^{(2)}_j) = \left(1 + \frac{|D(u^{(1)}_i) - D(u^{(2)}_j)|}{\left(D(u^{(1)}_i) + D(u^{(2)}_j) \right)/2} \right)^{-1}.
\end{equation}\endgroup

Meanwhile, if the PageRank scores of the nodes are used to define their {centrality}, we can represent the {relative centrality difference} measure as \begingroup\makeatletter\def\f@size{8}\check@mathfonts
\begin{equation}
RCD(u^{(1)}_i, u^{(2)}_j) = \left(1 + \frac{|S(u^{(1)}_i) - S(u^{(2)}_j)|}{\left(S(u^{(1)}_i) + S(u^{(2)}_j) \right)/2} \right)^{-1}.
\end{equation}\endgroup

In the above equations, $D(u)$ and $S(u)$ denote the \textit{node degree} and \textit{page rank score} of node $u$ within each network respectively.

\subsubsection{IsoRank}

Model IsoRank \cite{SXB08} initially proposed to align the biomedical networks, like protein protein interaction (PPI) networks and gene expression networks, can be used to solve the {unsupervised social network alignment} problem as well. The IsoRank algorithm has two stages. It first associates a score with each possible anchor links between nodes of the two networks. For instance, we can denote $r(u^{(1)}_i, u^{(2)}_j)$ as the reliability score of an potential anchor link $(u^{(1)}_i, u^{(2)}_j)$ between the networks $G^{(1)}$ and $G^{(2)}$, and all such scores can be organized into a vector $\mb{r}$ of length $|\mathcal{U}^{(1)}| \times |\mathcal{U}^{(2)}|$. In the second stage of IsoRank, it constructs the mapping for the networks by extracting from $\mb{r}$.

\begin{defn}
({Reliability Score}): The reliability score $r(u^{(1)}_i, u^{(2)}_j)$ of anchor link $(u^{(1)}_i, u^{(2)}_j)$ is highly correlated with the support provided by the mapping scores of the neighborhoods of users $u^{(1)}_i$ and $u^{(2)}_j$. Therefore, we can define the score $r(u^{(1)}_i, u^{(2)}_j)$ as \begingroup\makeatletter\def\f@size{7}\check@mathfonts
\begin{align}\label{equ:isorank_unweighted}
&r(u^{(1)}_i, u^{(2)}_j) \\
&= \sum_{u^{(1)}_m \in \Gamma(u^{(1)}_i)} \sum_{u^{(2)}_n \in \Gamma(u^{(2)}_i)} \frac{1}{|\Gamma(u^{(1)}_i)| |\Gamma(u^{(2)}_j)|} r(u^{(1)}_m, u^{(2)}_n),
\end{align}\endgroup
where sets $\Gamma(u^{(1)}_i)$ and $\Gamma(u^{(2)}_i)$ represent the neighborhoods of users $u^{(1)}_i$ and $u^{(1)}_i$ respectively in networks $G^{(1)}$ and $G^{(2)}$.
\end{defn}

If the networks are weighted, and all the intra-network connections like $(u^{(1)}_i, u^{(1)}_m)$ will be associated with a weight $w(u^{(1)}_i, u^{(1)}_m)$, we can represented the {reliability} measure of $r(u^{(1)}_i, u^{(2)}_j)$ in the weighted network as \begingroup\makeatletter\def\f@size{6}\check@mathfonts
\begin{align}\label{equ:isorank_weighted}
\hspace{-10pt}&r(u^{(1)}_i, u^{(2)}_j) = \hspace{-5pt} \sum_{u^{(1)}_m \in \Gamma(u^{(1)}_i)} \sum_{u^{(2)}_n \in \Gamma(u^{(2)}_i)}  \hspace{-5pt}  w(u_i^{(1)}, u_j^{(2)}) r(u^{(1)}_m, u^{(2)}_n),
\end{align}
where the weight term 
\begin{align}
&w(u_i^{(1)}, u_j^{(2)}) \\
&= \frac{w(u^{(1)}_i, u^{(1)}_m) w(u^{(2)}_j, u^{(2)}_n)}{\sum_{u^{(1)}_p \in \Gamma(u^{(1)}_i)} w(u^{(1)}_i, u^{(1)}_p) \sum_{u^{(2)}_q \in \Gamma(u^{(2)}_i)} w(u^{(2)}_j, u^{(2)}_q}.
\end{align}\endgroup

As we can see, Equation~\ref{equ:isorank_unweighted} is a special case of Equation~\ref{equ:isorank_weighted} with link weight $w(u^{(1)}_i, u^{(1)}_j) = 1$ for $u^{(1)}_i \in \mathcal{U}^{(1)}$ and $u^{(2)}_j \in \mathcal{U}^{(2)}$. Equation~\ref{equ:isorank_unweighted} can also be rewritten with linear algebra
\begin{equation}
\mb{r} = \mb{A} \mb{r},
\end{equation}
where matrix $\mb{A} \in \mathbb{R}^{|\mathcal{U}^{(1)}||\mathcal{U}^{(2)}| \times |\mathcal{U}^{(1)}||\mathcal{U}^{(2)}|}$ with entry  \begingroup\makeatletter\def\f@size{6}\check@mathfonts
\begin{align}
&A\big((i,j), (p,q) \big) \\
&= \begin{cases}
\frac{1}{|\Gamma(u^{(1)}_i)| |\Gamma(u^{(2)}_j)|}, \hspace{-5pt}& \mbox{if } (u^{(1)}_i, u^{(1)}_p) \in \mathcal{E}^{(1)}, (u^{(2)}_j, u^{(2)}_q) \in \mathcal{E}^{(2)},\\
0, \hspace{-5pt}& \mbox{otherwise}.
\end{cases}
\end{align}\endgroup

The matrix $\mb{A}$ is of dimension $|\mathcal{U}^{(1)}||\mathcal{U}^{(2)}| \times |\mathcal{U}^{(1)}||\mathcal{U}^{(2)}|$, where the row and column indexes correspond to different potential anchor links across the networks. The entry $A\big((i,j), (p,q) \big)$ corresponds the anchor links $(u^{(1)}_i, u^{(2)}_j)$ and $(u^{(1)}_p, u^{(2)}_q)$. As we can see, the above equation denotes a {random walk} across the graphs $G^{(1)}$ and $G^{(2)}$ via the social links and anchor links in them. The solution to the above equation denotes the principal eigenvector of the matrix $\mb{A}$ corresponding to the eigenvalue $1$. For more information about the {random walk} model, please refer to \cite{SXB08}.

\subsubsection{IsoRankN}

IsoRankN \cite{LLBSB09} algorithm is an extension to IsoRank. Based on the learning results of IsoRank, IsoRankN further adopts the spectral clustering method on the induced graph of pairwise alignment scores to achieve the final alignment results. The new approach provides significant advantages not only over the original IsoRank but also over other methods. IsoRankN has $4$ main steps: (1) initial network alignment with IsoRank, (2) star spread, (3) spectral partition, and (4) star merging, where steps (3) and (4) will repeat until all the nodes are assigned to a cluster.

\noindent \textbf{Initial Network Alignment}: Given $k$ isolated networks $G^{(1)}, G^{(2)}, \cdots, G^{(k)}$, IsoRankN computes the local alignment scores of node pairs across networks with IsoRank algorithm. For instance, if the networks are unweighted, the alignment score between nodes $u_l^{(i)}$ and $u_m^{(j)}$ between networks $G^{(i)}$, $G^{(j)}$ can be denoted as. \begingroup\makeatletter\def\f@size{6}\check@mathfonts
\begin{align}
&r(u^{(1)}_i, u^{(2)}_j) \\
&= \sum_{u^{(1)}_m \in \Gamma(u^{(1)}_i)} \sum_{u^{(2)}_n \in \Gamma(u^{(2)}_i)} \frac{1}{|\Gamma(u^{(1)}_i)| |\Gamma(u^{(2)}_j)|} r(u^{(1)}_m, u^{(2)}_n),
\end{align}\endgroup

It will lead to a weighted k-partite graph, where the links denotes the anchor links across networks weighted by the scores calculated above. If the networks $G^{(1)}, \cdots G^{(k)}$ are all complete graphs, the alignment results will be the maximum weighted cliques. However, in the real world, such an assumption can hardly met, and IsoRankN proposes to use ``Star Spread'' technique to select a subgraph with high weights. 

\noindent \textbf{Star Spread}: For each node in a network, e.g., $u_l^{(i)}$ in network $G^{(i)}$, the set of nodes connected with $u_l^{(i)}$ via potential anchor links can be denoted as set $\Gamma(u_l^{(i)})$. The nodes in $\Gamma(u_l^{(i)})$ can be further pruned by removing the nodes connected with \textit{weak} anchor links. Here, the ``weak'' denotes the anchor links with a low score calculated with IsoRank. Formally, among all the nodes in $\Gamma(u_l^{(i)})$, we can denote the node connected to $u_l^{(i)}$ with the strongest link as $v^* =\arg_{v \in \Gamma(u_l^{(i)})} \max r(u^{(i)}_l, v)$. For all the nodes with weights lower than $\beta \cdot r(u^{(i)}_l, v^*)$ will be removed from $\Gamma(u_l^{(i)})$ (where $\beta$ is a threshold parameter), and the remaining nodes together with $u^{(i)}_l$ will form a star structured graph $S_{u^{(i)}_l}$. 

\noindent \textbf{Spectral Partition}: For each node $u_l^{(i)}$, IsoRankN aims at selecting a subgraph $S^*_{u^{(i)}_l}$ from $S_{u^{(i)}_l}$, which contains the highly weighted neighbors of $u_l^{(i)}$. To achieve such a objective, IsoRankN proposes to identify a subgraph with low \textit{conductance} from $S_{u^{(i)}_l}$ instead. Formally, given a network $G = (\mathcal{V}, \mathcal{E})$, let $\mathcal{S} \subset \mathcal{V}$ denote a subset of $G$. The \textit{conductance} of the subgraph involving $\mathcal{S}$ can be represented as
\begin{equation}
\phi(\mathcal{S}) = \frac{\sum_{u \in \mathcal{S}} \sum_{v \in \bar{\mathcal{S}}} w_{u,v}}{ \min (\mbox{vol}(\mathcal{S}), \mbox{vol}(\bar{\mathcal{S}}) )},
\end{equation}
where $\bar{\mathcal{S}} = \mathcal{V} \setminus \mathcal{S}$, and $\mbox{vol}(\mathcal{S}) = \sum_{u \in \mathcal{S}} \sum_{v \in \mathcal{V}} w_{u,v}$.
IsoRankN points out that a node subset $\mathcal{S}$ containing node $u_l^{(i)}$ can be computed effectively and efficiently with the personalized PageRank algorithm starting from node $u_l^{(i)}$.

\noindent \textbf{Star Merging}: Considering that links in the star graph $S^*_{u^{(i)}_l}$ are all the anchor links across networks, there exist no intra-network links at all in $S^*_{u^{(i)}_l}$, e.g., the links in network $G^{(i)}$ only. However, in many cases, there may exist multiple nodes corresponding to the same entity inside the network as well. To solve such a problem, IsoRankN proposes a star merging step to combine several star graphs together, e.g., $S^*_{u^{(i)}_l}$ and $S^*_{u^{(j)}_m}$.

Formally, given two star graphs $S^*_{u^{(i)}_l}$ and $S^*_{u^{(j)}_m}$, if the following conditions both hold, $S^*_{u^{(i)}_l}$ and $S^*_{u^{(j)}_m}$ can be merged into one star graph. \begingroup\makeatletter\def\f@size{8}\check@mathfonts
\begin{align}
&\forall v \in S^*_{u^{(j)}_m} \setminus \{u^{(j)}_m\}, r(v, u^{(i)}_l) \ge \beta \cdot \max_{v' \in \Gamma(u_l^{(i)})} r(v', u^{(i)}_l),\\
&\forall v \in S^*_{u^{(i)}_l} \setminus \{u^{(i)}_l\}, r(v, u^{(j)}_m) \ge \beta \cdot \max_{v' \in \Gamma(u_m^{(j)})} r(v', u^{(j)}_m).
\end{align}\endgroup

\subsubsection{Matrix Inference based Network Alignment}

Formally, given a homogeneous network $G^{(1)}$, its structure can be organized as the adjacency matrix $\mb{A}_{G^{(1)}} \in \mathbb{R}^{|\mathcal{U}^{(1)}| \times |\mathcal{U}^{(1)} |}$. If network $G^{(1)}$ is unweighted, then matrix $\mb{A}_{G^{(1)}}$ will be a binary matrix and entry ${A}_{G^{(1)}}(i, p) = 1$ (or ${A}_{G^{(1)}}(u^{(1)}_i, u^{(1)}_p) = 1$) iff the correspond social link $(u^{(1)}_i, u^{(1)}_p)$ exists. In  the case that the network is weighted, the entries like ${A}_{G^{(1)}}(i, p) = 1$ denotes the weight of link $(u^{(1)}_i, u^{(1)}_p)$ and $0$ if $(u^{(1)}_i, u^{(1)}_p)$ doesn't exist. In a similar way, we can also represent the social adjacency matrix $\mb{A}_{G^{(2)}}$ for network $G^{(2)}$ as well.

The network alignment problem aims at inferring an {one-to-one} node mapping function, that can project nodes from one network to the other networks. For instance, we can denote the mapping between networks $G^{(1)}$ to $G^{(2)}$ as $f: \mathcal{U}^{(1)} \to \mathcal{U}^{(2)}$. Via the mapping $f$, besides the nodes, the network structure can be projected across networks as well. For instance, given a social connection $(u^{(1)}_i, u^{(1)}_p)$ in $G^{(1)}$, we can represent its corresponding connection in $G^{(2)}$ as $(f(u^{(1)}_i), f(u^{(1)}_p))$. 

Via the mapping $f$, we can denote the network structure differences between $G^{(1)}$ and $G^{(2)}$ as the summation of the link projection difference between them \begingroup\makeatletter\def\f@size{6}\check@mathfonts
\begin{align}
&L(G^{(1)}, G^{(2)}, f) =\\
& \sum_{u^{(1)}_i \in \mathcal{U}^{(1)}} \sum_{u^{(1)}_p \in \mathcal{U}^{(1)}} \left({A}_{G^{(1)}}(u^{(1)}_i, u^{(1)}_p) - {A}_{G^{(1)}}(f(u^{(1)}_i), f(u^{(1)}_p))\right)^2.
\end{align}\endgroup

Formally, the {one-to-one} projection can be represented as a matrix $\mb{P}$ as well, where entry $P(i, j) = 1$ iff anchor link $(u^{(1)}_i, u^{(2)}_j)$ exists between networks $G^{(1)}$ and $G^{(2)}$. Via the matrix $\mb{P}$, we can represent the above loss term as \begingroup\makeatletter\def\f@size{8}\check@mathfonts
\begin{equation}
L(\mb{A}_{G^{(1)}}, \mb{A}_{G^{(2)}}, \mb{P}) = \left\| \mb{P}^\top \mb{A}_{G^{(1)}} \mb{P} - \mb{A}_{G^{(2)}} \right\|^2.
\end{equation}\endgroup

If there exists a perfect mapping of users across networks, we can obtain a mapping matrix $\mb{P}$ introducing zero loss in the above function, i.e.,  $L(\mb{A}_{G^{(1)}}, \mb{A}_{G^{(2)}}, \mb{P}) = 0$. Inferring the optimal mapping matrix $\mb{P}$ which can introduce the minimum loss can be represented as the following objective function
\begin{equation}
\mb{P}^* = \arg \min_{\mb{P}} \left\| \mb{P}^\top \mb{A}_{G^{(1)}} \mb{P} - \mb{A}_{G^{(2)}} \right\|^2,
\end{equation}
where the matrix $\mb{P}$ is usually subject to some constraint, like $\mb{P}$ is binary and each row and column should contain at most one entry being filled with value $1$.

In general, it is not easy to find the optimal solution to the above objective function, as it is a purely combinatorial problem. Identifying the optimal solution requires the enumeration of all the potential user mapping across different networks. In \cite{U88}, Umeyama provides an algorithm that can solve the function with a nearly optimal solution.


\subsection{Global Unsupervised Alignment of Multiple Social Networks}

The works introduced in the previous section are all about pairwise network alignment, which focus on the alignment of two networks only. However, in the real-world, people are normally involved in multiple (usually more than two) social networks simultaneously. In this section, we will focus on the simultaneous alignment problem of multiple (more than two) networks, which is called the ``multiple anonymized social networks alignment'' problem formally \cite{icdm15}.

\begin{figure}[t]
\centering
    \begin{minipage}[l]{0.75\columnwidth}
      \centering
      \includegraphics[width=\textwidth]{./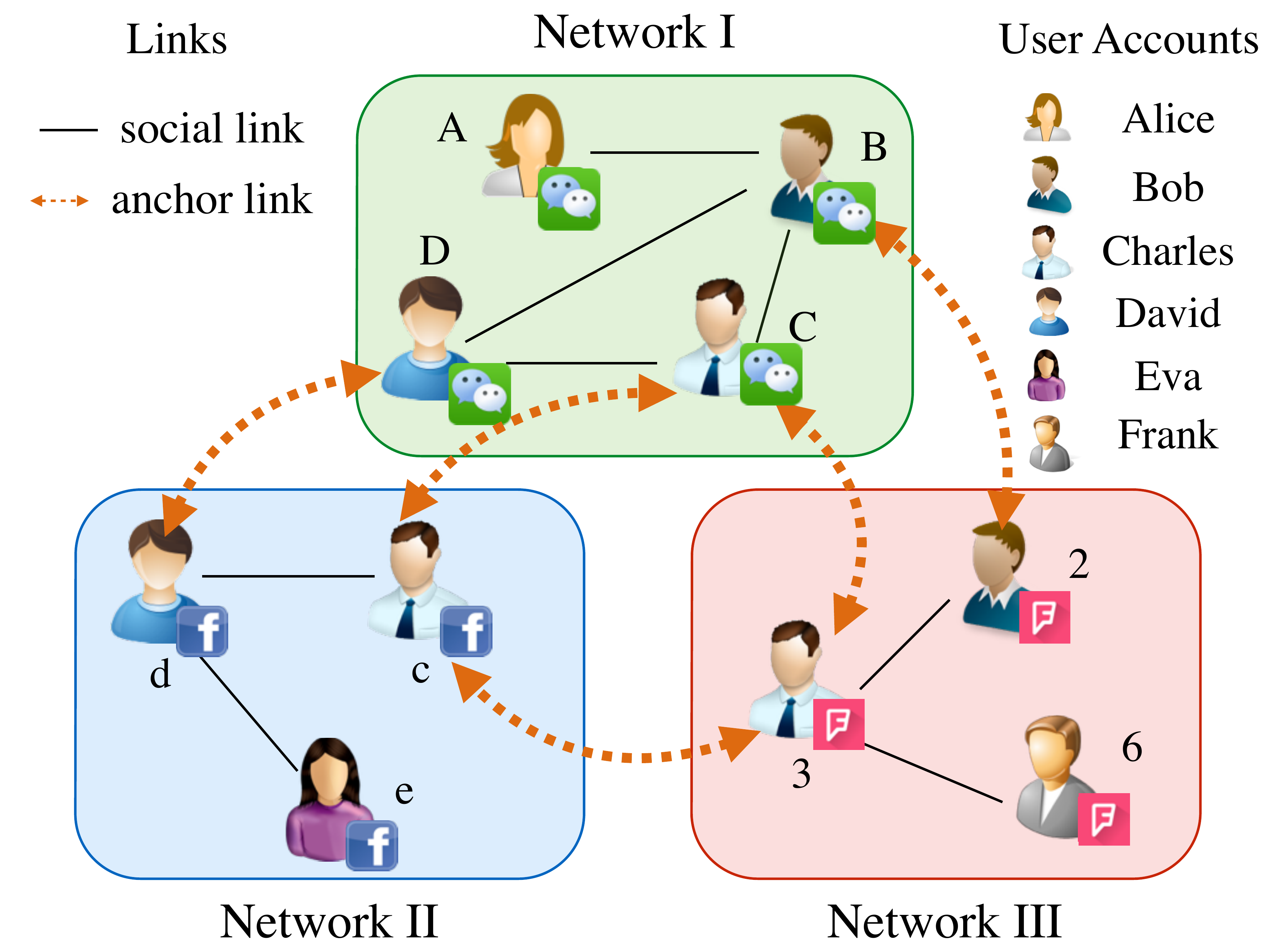}
    \end{minipage}
  \caption{An example of multiple anonymized partially aligned social networks.}\label{fig:chap5_sec3_multiple_alignment_example}
\end{figure}

To help illustrate the multi-network alignment problem more clearly, we also give an example in Figure~\ref{fig:chap5_sec3_multiple_alignment_example}, which involves $3$ different social networks (i.e., networks I, II and III). Users in these $3$ networks are all anonymized and their names are replaced with randomly generated identifiers. Each pair of these $3$ anonymized networks can actually share some common users, e.g., ``David'' participates in both networks I and II simultaneously, ``Bob'' is using networks I and III concurrently, and ``Charles'' is involved in all these $3$ networks at the same time. Besides these shared anchor users, in these $3$ partially aligned networks, some users are involved in one single network only (i.e., the non-anchor users \cite{kdd14}), e.g., ``Alice'' in network I, ``Eva'' in network II and ``Frank'' in network III. The problem studied in this part aims at discovering the anchor links (i.e., the dashed bi-directional red lines) connecting anchor users across these $3$ social networks respectively.

The significant difference of the studied problem from existing \textit{two} network alignment problems is due to the ``\textit{transitivity law}'' that anchor links follow. In traditional set theory, a relation $\mathcal{R}$ is defined to be a \textit{transitive relation} in domain $\mathcal{X}$ iff $\forall a, b, c \in \mathcal{X}, (a, b) \in \mathcal{R} \land (b, c) \in \mathcal{R} \to (a, c) \in \mathcal{R}$. If we treat the union of user account sets of all these social networks as the target domain $\mathcal{X}$ and treat anchor links as the relation $\mathcal{R}$, then anchor links depict a ``\textit{transitive relation}'' among users across networks. We can take the networks shown in Figure~\ref{fig:chap5_sec3_multiple_alignment_example} as an example. Let $u$ be a user involved in networks I, II and III simultaneously, whose accounts in these networks are $u^I$, $u^{II}$ and $u^{III}$ respectively. If anchor links $(u^{I}, u^{II})$ and $(u^{II}, u^{III})$ are identified in aligning networks (I, II) and networks (II, III) respectively (i.e., $u^{I}$, $u^{II}$ and $u^{III}$ are discovered to be the same user), then anchor link $(u^{I}, u^{III})$ should also exist in the alignment result of networks (I, III) as well. In the multi-network alignment problem, we need to guarantee the inferred anchor links can meet the \textit{transitivity law}. Formally, the multi-network alignment problem can be represented as follows. 

Given the $n$ isolated social networks $\{G^{(1)}, G^{(2)}, \cdots, G^{(n)}\}$, the multi-network alignment problem aims at discovering the anchor links among these $n$ networks, i.e., the anchor link sets $\mathcal{A}^{(1,2)}, \mathcal{A}^{(1,3)}, \cdots, \mathcal{A}^{(n-1,n)}$. These $n$ social etworks $G^{(1)}, G^{(2)}, \cdots, G^{(n)}$ are partially aligned and the constraint on anchor links in $\mathcal{A}^{(1,2)}, \mathcal{A}^{(1,3)}, \cdots,\mathcal{A}^{(n-1,n)}$ is \textit{one-to-one}, which also follow the \textit{transitivity law}.

To solve the {multi-network alignment} problem, a novel network alignment framework {\ouruma} (\underline{U}nsupervised \underline{M}ulti-network \underline{A}lignment) is proposed in \cite{icdm15}. {\ouruma} addresses the {multi-network alignment} problem with two steps: (1) unsupervised transitive anchor link inference across multi-networks, and (2) transitive multi-network matching to maintain the \textit{one-to-one constraint}.

\subsubsection{Unsupervised Network Alignment Loss Function}

Anchor links between any two given networks $G^{(i)}$ and $G^{(j)}$ actually define an \textit{one-to-one} mapping (of users and social links) between $G^{(i)}$ and $G^{(j)}$. To evaluate the quality of different inferred mapping (i.e., the inferred anchor links), {\ouruma} introduces the concepts of cross-network \textit{Friendship Consistency/Inconsistency} concept in \cite{icdm15}. The optimal inferred anchor links are those which can maximize the \textit{Friendship Consistency} (or minimize the \textit{Friendship Inconsistency}) across networks. Formally, given two partially aligned social networks $G^{(i)} = (\mathcal{U}^{(i)}, \mathcal{E}^{(i)})$ and $G^{(j)} =  (\mathcal{U}^{(j)}, \mathcal{E}^{(j)})$, we can represent their corresponding \textit{social adjacency} matrices to be $\mb{S}^{(i)} \in \mathbb{R}^{|\mathcal{U}^{(i)}| \times |\mathcal{U}^{(i)}|}$ and $\mb{S}^{(j)} \in \mathbb{R}^{|\mathcal{U}^{(j)}| \times |\mathcal{U}^{(j)}|}$ respectively.

Meanwhile, given anchor link set $\mathcal{A}^{(i,j)} \subset \mathcal{U}^{(i)} \times \mathcal{U}^{(j)}$ between networks $G^{(i)}$ and $G^{(j)}$, the \textit{binary transitional matrix} from $G^{(i)}$ to $G^{(j)}$ can be represented as $\mb{T}^{(i,j)} \in \{0,1\}^{|\mathcal{U}^{(i)}| \times |\mathcal{U}^{(j)}|}$, where $\mb{T}^{(i,j)}(l,m) = 1$ iff link $(u^{(i)}_l, u^{(j)}_m) \in \mathcal{A}^{(i,j)}$, $u^{(i)}_l \in \mathcal{U}^{(i)}$, $u^{(j)}_m \in \mathcal{U}^{(j)}$. The \textit{binary transitional matrix} from $G^{(j)}$ to $G^{(i)}$ can be defined in a similar way, which can be represented as $\mb{T}^{(j,i)} \in \{0,1\}^{|\mathcal{U}^{(j)}| \times |\mathcal{U}^{(i)}|}$, where $(\mb{T}^{(i,j)})^\top = \mb{T}^{(j,i)}$ as the anchor links between $G^{(i)}$ and $G^{(j)}$ are undirected. Considering that anchor links have an inherent \textit{one-to-one} constraint, each row and each column of the \textit{binary transitional matrices} $\mb{T}^{(i,j)}$ and $\mb{T}^{(j,i)}$ should have at most one entry filled with $1$, which will constrain the inference space of potential \textit{binary transitional matrices} $\mb{T}^{(i,j)}$ and $\mb{T}^{(j,i)}$ greatly.

{\ouruma} defines the \textit{friendship inconsistency} as the number of non-shared social links between those mapped from $G^{(i)}$ and those in $G^{(j)}$. Based on the inferred \textit{anchor transitional matrix} $\mb{T}^{(i,j)}$, the introduced \textit{friendship inconsistency} between matrices $(\mb{T}^{(i,j)})^\top\mb{S}^{(i)} \mb{T}^{(i,j)}$ and $\mb{S}^{(j)}$ can be represented as:
\begin{equation}
\left \| (\mb{T}^{(i,j)})^\top\mb{S}^{(i)} \mb{T}^{(i,j)} -  \mb{S}^{(j)}\right \|^2_F,
\end{equation}
where $\left \| \cdot \right \|_F$ denotes the Frobenius norm. And the optimal \textit{binary transitional matrix} $\bar{\mb{T}}^{(i,j)}$, which can lead to the minimum \textit{friendship inconsistency} can be represented as 
\begin{align}
\bar{\mb{T}}^{(i,j)} &={\arg \min}_{\mb{T}^{(i,j)}} \left \| (\mb{T}^{(i,j)})^\top \mb{S}^{(i)} \mb{T}^{(i,j)} - \mb{S}^{(j)} \right \|^2_F\\
s.t.\ \ \ \ &\mb{T}^{(i,j)} \in \{0,1\}^{|\mathcal{U}^{(i)}| \times |\mathcal{U}^{(j)}|},\\
& \mb{T}^{(i,j)} \mb{1}^{|\mathcal{U}^{(j)}| \times 1} \preccurlyeq \mb{1}^{|\mathcal{U}^{(i)}| \times 1},\\
& (\mb{T}^{(i,j)})^\top \mb{1}^{|\mathcal{U}^{(i)}| \times 1} \preccurlyeq \mb{1}^{|\mathcal{U}^{(j)}| \times 1},
\end{align}
where the last two equations are added to maintain the \textit{one-to-one} constraint on anchor links and $\mb{X} \preccurlyeq \mb{Y}$ iff $\mb{X}$ is of the same dimensions as $\mb{Y}$ and every entry in $\mb{X}$ is no greater than the corresponding entry in $\mb{Y}$.

\subsubsection{Transitivity Constraint on Alignment Results}

Isolated network alignment can work well in addressing the alignment problem of two social networks. However, in the {multi-network alignment} problem studied in this part, multiple social networks (more than two) social networks are to be aligned simultaneously. Besides minimizing the \textit{friendship inconsistency} between each pair of networks, the \textit{transitivity} property of anchor links also needs to be preserved in the transitional matrices inference.

The \textit{transitivity} property should holds for the alignment of any $n$ networks, where the minimum of $n$ is $3$. To help illustrate the \textit{transitivity property} more clearly, here we will use $3$ network alignment as an example to introduce the {multi-network alignment} problem and the {\ouruma} model, which can be easily generalized to the case of $n$ networks alignment. Let $G^{(i)}$, $G^{(j)}$ and $G^{(k)}$ be $3$ social networks to be aligned concurrently. To accommodate the alignment results and preserve the \textit{transitivity} property, {\ouruma} introduces the following \textit{alignment transitivity penalty}:

\begin{defn} 
(Alignment Transitivity Penalty): Formally, let $\mb{T}^{(i,j)}$, $\mb{T}^{(j,k)}$ and $\mb{T}^{(i,k)}$ be the inferred binary transitional matrices from $G^{(i)}$ to $G^{(j)}$, from $G^{(j)}$ to $G^{(k)}$ and from $G^{(i)}$ to $G^{(k)}$ respectively among these $3$ networks. The \textit{alignment transitivity penalty} $C(\{G^{(i)}, G^{(j)}, G^{(k)}\})$ introduced by the inferred transitional matrices can be quantified as the number of inconsistent social links being mapped from $G^{(i)}$ to $G^{(k)}$ via two different alignment paths $G^{(i)} \to G^{(j)} \to G^{(k)}$ and $G^{(i)} \to G^{(k)}$, i.e., \begingroup\makeatletter\def\f@size{6}\check@mathfonts
\begin{align}
&C(\{G^{(i)}, G^{(j)}, G^{(k)}\}) = \\
&\left \| (\mb{T}^{(j,k)})^\top (\mb{T}^{(i,j)})^\top \mb{S}^{(i)} \mb{T}^{(i,j)}\mb{T}^{(j,k)} - (\mb{T}^{(i,k)})^\top \mb{S}^{(i)} \mb{T}^{(i,k)} \right \|^2_F.
\end{align}\endgroup
\end{defn}

Alignment transitivity penalty is a general penalty concept and can be applied to $n$ networks $\{G^{(1)}, G^{(2)}, \cdots, G^{(n)}\},\\ n\ge3$ as well, which can be defined as the summation of penalty introduced by any three networks in the set, i.e., \begingroup\makeatletter\def\f@size{6}\check@mathfonts
\begin{align}
&C(\{G^{(1)}, G^{(2)}, \cdots, G^{(n)}\}) \\
&= \sum_{\forall \{G^{(i)}, G^{(j)}, G^{(k)}\} \subset \{G^{(1)}, G^{(2)}, \cdots, G^{(n)}\}} C(\{G^{(i)}, G^{(j)}, G^{(k)}\}).
\end{align}\endgroup

The optimal \textit{binary transitional matrices} $\bar{\mb{T}}^{(i,j)}$, $\bar{\mb{T}}^{(j,k)}$ and $\bar{\mb{T}}^{(k,i)}$ which can minimize friendship inconsistency and the \textit{alignment transitivity penalty} at the same time can be represented to be \begingroup\makeatletter\def\f@size{5}\check@mathfonts
\begin{align}
&\bar{\mb{T}}^{(i,j)},\bar{\mb{T}}^{(j,k)}, \bar{\mb{T}}^{(k,i)} \\
&={\arg \min}_{\mb{T}^{(i,j)}, \mb{T}^{(j,k)}, \mb{T}^{(k,i)}} \left \| (\mb{T}^{(i,j)})^\top \mb{S}^{(i)} \mb{T}^{(i,j)} - \mb{S}^{(j)} \right \|^2_F +\\
&\left \| (\mb{T}^{(j,k)})^\top \mb{S}^{(j)} \mb{T}^{(j,k)} - \mb{S}^{(k)} \right \|^2_F + \left \| (\mb{T}^{(k,i)})^\top \mb{S}^{(k)} \mb{T}^{(k,i)} - \mb{S}^{(i)} \right \|^2_F\\
&+ \alpha \left \| (\mb{T}^{(j,k)})^\top (\mb{T}^{(i,j)})^\top \mb{S}^{(i)} \mb{T}^{(i,j)}\mb{T}^{(j,k)} - \mb{T}^{(k,i)}  \mb{S}^{(i)} (\mb{T}^{(k,i)})^\top \right \|^2_F\\
&s.t. \ \mb{T}^{(i,j)} \in \{0,1\}^{|\mathcal{U}^{(i)}| \times |\mathcal{U}^{(j)}|}, \mb{T}^{(j,k)} \in \{0,1\}^{|\mathcal{U}^{(j)}| \times |\mathcal{U}^{(k)}|}\\
&\ \ \ \ \ \mb{T}^{(k,i)} \in \{0,1\}^{|\mathcal{U}^{(k)}| \times |\mathcal{U}^{(i)}|}\\
&\ \ \ \ \  \mb{T}^{(i,j)} \mb{1}^{|\mathcal{U}^{(j)}| \times 1} \preccurlyeq \mb{1}^{|\mathcal{U}^{(i)}| \times 1},(\mb{T}^{(i,j)})^\top \mb{1}^{|\mathcal{U}^{(i)}| \times 1} \preccurlyeq \mb{1}^{|\mathcal{U}^{(j)}| \times 1},\\
&\ \ \ \ \  \mb{T}^{(j,k)} \mb{1}^{|\mathcal{U}^{(k)}| \times 1} \preccurlyeq \mb{1}^{|\mathcal{U}^{(j)}| \times 1},(\mb{T}^{(j,k)})^\top \mb{1}^{|\mathcal{U}^{(j)}| \times 1} \preccurlyeq \mb{1}^{|\mathcal{U}^{(k)}| \times 1},\\
&\ \ \ \ \  \mb{T}^{(k,i)} \mb{1}^{|\mathcal{U}^{(i)}| \times 1} \preccurlyeq \mb{1}^{|\mathcal{U}^{(k)}| \times 1},(\mb{T}^{(k,i)})^\top \mb{1}^{|\mathcal{U}^{(k)}| \times 1} \preccurlyeq \mb{1}^{|\mathcal{U}^{(i)}| \times 1},
\end{align}\endgroup
where parameter $\alpha$ denotes the weight of the alignment transitivity penalty term, which is set as $1$ by default.

The above objective function aims at obtaining the \textit{hard} mappings among users across different networks and entries in all these \textit{transitional matrices} are binary, which can lead to a fatal drawback: \textit{hard assignment} can be neither possible nor realistic for networks with star structures as proposed in \cite{KTL13} and the hard subgraph isomorphism \cite{LHKL12} is NP-hard. To address the function, {\ouruma} proposes to relax the hard binary constraints on the variables first and solve the function with gradient descent. Furthermore, based on the learning results {\ouruma} keeps the one-to-one constraint on anchor links by selecting those which can maximize the overall existence probabilities while maintaining the \textit{matching transitivity} property at the same time.


\subsection{Semi-Supervised Network Alignment}

As mentioned before, in the real-world online social networks, the anchor links are extremely difficult to label manually. The training set we can obtain are usually of a small size compared with the network scale. For instance, given the Facebook and Twitter networks containing billions and millions of users respectively, identifying a training set with thousands correct anchor links is not an easy task. Meanwhile, between Facebook and Twitter, the total number of potential anchor links could be of the scale $10^{15}$. Therefore, besides the small sized identified anchor links, there usually exist a very large number of unlabeled anchor links, which are extremely hard to predict.

In this part, we will be focused on the network alignment problem based on the semi-supervised learning setting. Besides these identified anchor links, we also try to make utilize of the unlabeled anchor links in the model building. Given two heterogeneous online social networks $G^{(1)}$ and $G^{(2)}$, and a set of labeled anchor link instances $\mathcal{A}_{train}$ as well as a large number of unlabeled anchor link instances $\mathcal{A}_{unlabeled} = \mathcal{U}^{(1)} \times \mathcal{U}^{(2)} \setminus \mathcal{A}_{train}$, we aim at building a model with the labeled and unlabeled anchor link sets $\mathcal{A}_{train}$ and $\mathcal{A}_{unlabeled}$. In our network alignment task, the test set is a subset of or equal to the unlabeled set, i.e., $\mathcal{A}_{test} \subseteq \mathcal{A}_{unlabeled}$. The built model will be further applied to the test set to infer the potential labels of these anchor links.

To address the problem, in this part, we will introduce the semi-supervised network alignment model introduced in \cite{wsdm17_2}, which solves the problem as an optimization problem and models \textit{one-to-one} cardinality constraint on the anchor links as a mathematical constraint.

\subsubsection{Loss Function for Anchor Links}

Let set $\mathcal{L} = \mathcal{U}^{(1)} \times \mathcal{U}^{(2)}$ denote all the potential anchor links between networks $G^{(1)}$ and $G^{(2)}$, where $\mathcal{L} = \mathcal{A}_{train} \cup \mathcal{A}_{unlabeled}$. Based on the whole link set $\mathcal{L}$, as introduced in the previous sections, a set of features can be extracted for these links with the information available in the information network $G$, which can be represented as set $\mathcal{X} = \{\mb{x}_l\}_{l \in \mathcal{L}}$ ($\mb{x}_l \in \mathbb{R}^m, \forall l \in \mathcal{L}$). Given the link existence label set $\mathcal{Y} = \{0, 1\}$, the objective of the problem studied in this part is to achieve a general link inference function $f: \mathcal{X} \to \mathcal{Y}$ to map the link feature vectors to their corresponding labels. Here, $0$ denotes the label of the negative class. Depending on the specific application setting and information available in the networks, the feature vectors extracted for links in $\mathcal{L}$ can be very diverse. 

Formally, the loss introduced in the mapping $f(\cdot)$ can be represented as function $L: \mathcal{X} \times \mathcal{Y} \to \mathbb{R}$ over the link feature vector/label pairs. Meanwhile, for one certain input feature vector $\mb{x}_l$ for link $l \in \mathcal{L}$, we can denote its inferred label introducing the minimum loss as $\hat{y_l}$:
\begin{equation}
\hat{y_l} = \arg \min_{y_l \in \mathcal{Y}, \mb{w}} L(\mb{x}_l, y_l; \mb{w}),
\end{equation} 
where vector $\mb{w}$ denotes the parameters involved in the mapping function $f(\cdot)$.

Therefore, given the pre-defined loss function $L(\cdot)$, the general form of the objective mapping $f: \mathcal{X} \to \mathcal{Y}$ parameterized by vector $\mb{w}$ can be represented as:
\begin{align}
f(\mb{x}; \mb{w}) = \arg \min_{y_l \in \mathcal{Y}} L(\mb{x}, y; \mb{w}).
\end{align}

In many cases (e.g., when the links are not linearly separable), the feature vector $\mb{x}_l$ of link $l$ needs to be transformed as $g(\mb{x}_l) \in \mathbb{R}^k$ ($k$ is the transformed feature number) and the transformation function $g(\cdot)$ can be different \textit{kernel projections} depending on the separability of instances. Here we assume loss function $L(\cdot)$ to be linear in some combined representation of the transformed link feature vector $g(\mb{x}_l)^\top$ and label $y_l$, i.e.,
\begin{equation}
L(\mb{x}_l, y_l; \mb{w}) = (\left \langle \mb{w}, g(\mb{x}_l) \right \rangle - y_l)^2 = (\mb{w}^\top g(\mb{x}_l) - y_l)^2.
\end{equation} 

Furthermore, based on all the links in the network $\mathcal{L}$, we can represent the extracted feature vectors for these links to be matrix $\mb{X} = [g(\mb{x}_{l_1}), g(\mb{x}_{l_2}), \cdots, g(\mb{x}_{l_{|\mathcal{L}|}})]^\top \in \mathbb{R}^{|\mathcal{L}| \times k}$ (for simplicity, linear kernel projection is used here, and $g(\mb{x}_l) = \mb{x}_l$). Meanwhile, their existence labels can be represented as vector $\mb{y} = [y_{l_1}, y_{l_2}, \cdots, y_{l_{|\mathcal{L}|}}]^\top$, where $y_{l} \in \{0, 1\}, \forall l \in \mathcal{L}$. Specifically, for the existing links in $\mathcal{E}$, we know their labels to be positive in advance, i.e., $y_l = 1, \forall l \in \mathcal{E}$. According to the above loss function definition, based on $\mb{X}$ and $\mb{y}$, the loss introduced by all links in $\mathcal{L}$ can be represented to be
\begin{align}
L(\mb{X}, \mb{y}; \mb{w}) = \left\| \mb{X}\mb{w} - \mb{y} \right\|^2_2.
\end{align}

To learn the parameter vector $\mb{w}$ and infer the potential label vector $\mb{y}$, \cite{wsdm17_2} proposes to minimize the loss term introduced by all the links in $\mathcal{L}$. Meanwhile, to avoid overfitting the training set, besides minimizing the loss function $L(\mb{X}, \mb{y}; \mb{w})$, a regularization term $\left\| \mb{w} \right\|^2_2$ about the parameter vector $\mb{w}$ is added to the objective function:
\begin{align}
&\min_{\mb{w}, \mb{y}} \frac{1}{2} \left\| \mb{w} \right\|^2_2 + \frac{c}{2} \left\| \mb{X}\mb{w} - \mb{y} \right\|^2_2,\\
&s.t.\ \  \mb{y} \in \{0, 1\}^{|\mathcal{L}| \times 1}, \mbox{ and } y_l = 1, \forall l \in \mathcal{E},
\end{align}
where constant $c$ denotes the weight of the loss term in the function.

\subsubsection{Cardinality Constraint on Anchor Links}

The \textit{cardinality constraints} define both the limit on link cardinality and the limit on node degrees that those links are incident to. To be general, the links studied here can be either uni-directed or bi-directed, where undirected links are treated as bi-directed. For each node $u \in \mathcal{V}$ in the network, we can represent the potential links going-out from $u$ as set $\Gamma^{out}(u) = \{l | l \in \mathcal{L},\exists v \in \mathcal{V}, l = (u, v)\}$, and those going-into $u$ as set $\Gamma^{in}(u) = \{l | l \in \mathcal{L}, \exists v \in \mathcal{V}, l = (v, u)\}$. Furthermore, with the link label variables $\{y_l\}_{l \in \mathcal{L}}$, we can represent the out-degree and in-degree of node $u \in \mathcal{V}$ as $D^{out}(u) = \sum_{l \in \Gamma^{out}(u)} y_l$ and $D^{in}(u) = \sum_{l \in \Gamma^{in}(u)} y_l$ respectively. Considering that the node degrees cannot be negative, besides the upper bounds introduced by the \textit{cardinality constraints}, a lower bound ``$\ge 0$'' is also added to guarantee validity of node degrees by default.

\noindent \textbf{One-to-One Cardinality Constraint}

For the bi-directed anchor links with $1:1$ \textit{cardinality constraint}, the nodes in the information networks can be attached with at most one such kind of link. In other words, for all the nodes (e.g., $u \in \mathcal{V}$) in the network, its in-degree and out-degree can not exceed $1$, i.e.,
\begin{align}
&0 \le \sum_{l \in \Gamma^{out}(u)} y_l \le 1, \mbox{ and } 0 \le \sum_{l \in \Gamma^{in}(u)} y_l \le 1, \forall u \in \mathcal{V}.
\end{align}

\noindent \textbf{One-to-Many Cardinality Constraint}

Meanwhile, for the uni-directed supervision links with the $N:1$ \textit{cardinality constraint}, the manager nodes can have multiple ($N$) links going out from them while the subordinate nodes should have exactly one link going into them (except the CEO). In other words, for all the nodes (e.g., $u \in \mathcal{V}$) in the network, its \textit{out-degree} cannot exceed $N$ and the \textit{in-degree} should be exactly $1$, i.e.,
\begin{align}
&0 \le \sum_{l \in \Gamma^{out}(u)} y_l \le N, \mbox{ and } 1 \le \sum_{l \in \Gamma^{in}(u)} y_l \le 1, \forall u \in \mathcal{V}.
\end{align}

\noindent \textbf{Many-to-Many Cardinality Constraint}

In many cases, there usually exist no specific \textit{cardinality constraints} on links, and nodes can be connected with each other freely. Simply, we can assume the node \textit{in-degrees} and \textit{out-degrees} to be limited by the maximum degree parameter $N = |\mathcal{V}| - 1$, i.e., \begingroup\makeatletter\def\f@size{8}\check@mathfonts
\begin{align}
&0 \le \sum_{l \in \Gamma^{out}(u)} y_l \le N, \mbox{ and } 0 \le \sum_{l \in \Gamma^{in}(u)} y_l \le N, \forall u \in \mathcal{V}.
\end{align}\endgroup

The \textit{cardinality constraint} on links can be generally represented with the linear algebra equations. The relationship between nodes $\mathcal{V}$ and links $\mathcal{L}$ can actually be represented as matrices $\mb{T}^{out} \in \{0, 1\}^{|\mathcal{V}| \times |\mathcal{L}|}$ and $\mb{T}^{in} \in \{0, 1\}^{|\mathcal{V}| \times |\mathcal{L}|}$, where entry $\mb{T}^{out}(u, l) = 1$ iff $l \in \Gamma^{out}(u)$ and  $\mb{T}^{in}(u, l) = 1$ iff $l \in \Gamma^{in}(u)$. Based on the link label vector $\mb{y}$, the node out-degrees and in-degrees can be formally represented as vectors $\mb{T}^{out} \cdot \mb{y}$ and $\mb{T}^{in} \cdot \mb{y}$ respectively. The general representation of the \textit{cardinality constraints} introduced above can be rewritten as follows:
\begin{align}
&\underline{\mb{b}}^{out} \le \mb{T}^{out} \cdot \mb{y} \le \overline{\mb{b}}^{out}, \mbox{ and } \underline{\mb{b}}^{in} \le \mb{T}^{in} \cdot \mb{y} \le \overline{\mb{b}}^{in},
\end{align}
where vectors $\underline{\mb{b}}^{out}$, $\overline{\mb{b}}^{out}$, $\underline{\mb{b}}^{in}$ and $\overline{\mb{b}}^{in}$ can take different values depending on the cardinality constraint on the links (e.g., for the $1:1$ constraint, we have $\underline{\mb{b}}^{out} = \underline{\mb{b}}^{in} = \mb{0}$ and $\overline{\mb{b}}^{out} = \overline{\mb{b}}^{in} = \mb{1}$).

\subsubsection{Joint Objective Function Solution}

\setlength{\textfloatsep}{0pt}
\begin{algorithm}[t]
\caption{Greedy Link Selection}
\label{alg:chap4_sec5_greedy}
\begin{algorithmic}[1]
	\REQUIRE link estimate result $\hat{\mb{y}}$, parameter $k$\\
\ENSURE link label vector $\mb{y}$ 

\STATE 	{initialize link label vector $\mb{y} = \mb{0}$}
\FOR	{$l \in \mathcal{E}$}
\STATE	{$y_l = 1$}
\ENDFOR
\FOR	{$l \in \mathcal{L} \setminus \mathcal{E}$ and $\hat{y_l} < 0.5$}
\STATE	{$y_l = 0$}
\ENDFOR
\STATE	{Let $\tilde{\mathcal{L}} = \{l | l \in \mathcal{L} \setminus \mathcal{E}, \hat{y}_l \ge 0.5\}$}
\WHILE	{$\tilde{\mathcal{L}} \neq \emptyset$}
\STATE	{select $l \in \tilde{\mathcal{L}}$ with the highest estimation score}
\IF		{add $l$ as positive instance violates the \textit{cardinality constraint} or more than $k$ links have been selected}
\STATE	{$y_l = 0$}
\ELSE	
\STATE	{$y_l = 1$}
\ENDIF
\ENDWHILE
\STATE	{\textbf{return} $\mb{y}$}
\end{algorithmic}
\end{algorithm}

\setlength{\textfloatsep}{0pt}
\begin{algorithm}[t]
\caption{Cardinality Constrained Anchor Link Prediction Framework}
\label{alg:chap4_sec5_framework}
\begin{algorithmic}[1]
\REQUIRE link feature vector $\mb{X}$ \\
	\qquad  weight parameter $c$\\
\ENSURE parameter vector $\mb{w}$, link label vector $\mb{y}$ 

\STATE 	{Initialize label vector $\mb{y} = \frac{1}{2} \cdot \mb{1}$}
\STATE	{For links in $\mathcal{E}$, assign their label as $1$}
\STATE	{Initialize parameter vector $\mb{w} = \mb{0}$}
\STATE	{Initialize convergence-tag = False}
\WHILE	{convergence-tag == False}
\STATE	{Update vector $\mb{w}$ with equation $\mb{w} = c(\mb{I} + c\mb{X}^\top\mb{X})^{-1}\mb{X}^\top \mb{y}$}
\STATE	{Calculate link estimation result $\mb{\hat{y}} = \mb{X}\mb{w}$}
\STATE	{Update vector $\mb{y}$ with Algorithm Greedy($\mb{\hat{y}}$)}
\IF		{$\mb{w}$ and $\mb{y}$ both converge}
\STATE	{convergence-tag = True}
\ENDIF
\ENDWHILE
\end{algorithmic}
\end{algorithm}

For simplicity, we assume the weight scalars $c_1$ and $c_2$ both to be $c$, i.e., all the links in the networks are assumed to be of similar importance in training. And the new loss term of all the links in $\mathcal{E}$, $\mathcal{U}$ can be simplified as
\begin{align}
\frac{c}{2} \left\| \mb{w} \mb{X} - \mb{y} \right\|_2^2,
\end{align}
where matrix $\mb{X} = [\mb{x}_{l_1}^\top, \mb{x}_{l_2}^\top, \cdots, \mb{x}_{l_|\mathcal{L}|}^\top]^T$ denotes the feature matrix of all the links in $\mathcal{L}$.

Based on the above remarks, the constrained optimization objective function of the problem can be represented as
\begin{align}
&\min_{\mb{w}, \mb{y}} \frac{1}{2} \left\| \mb{w} \right\|_2^2 + \frac{c}{2} \left\| \mb{X}\mb{w} - \mb{y}\right\|_2^2,\\
&s.t.\ \  \mb{y} \in \{0, 1\}^{|\mathcal{L}| \times 1}, y_l = 1, \forall l \in \mathcal{E},\\
&\ \ \ \ \ \ \ \ \underline{\mb{b}}^{out} \le \mb{T}^{out} \cdot \mb{y} \le \overline{\mb{b}}^{out}, \underline{\mb{b}}^{in} \le \mb{T}^{in} \cdot \mb{y} \le \overline{\mb{b}}^{in}.
\end{align}

The above objective function involves variables $\mb{w}$ and $\mb{y}$ at the same time, which is actually not jointly convex and can be very challenging to solve. In \cite{wsdm17_2}, the proposed model solves the function with an alternative updating framework by fixing one variable and updating the other one iteratively. The framework involves two steps:

\vspace{5pt}
\noindent \textbf{Step 1}: Fix $\mb{y}$ and Update $\mb{w}$

By fixing $\mb{y}$ (i.e., treating $\mb{y}$ as a constant vector), the objective function about $\mb{w}$ can be simplified as
\begin{align}
\min_{\mb{w}} \frac{1}{2} \left\| \mb{w} \right\|^2_2 + \frac{c}{2} \left\| \mb{X}\mb{w} - \mb{y}\right\|^2_2.
\end{align}

Let $h(\mb{w}) = \frac{1}{2} \left\| \mb{w} \right\|^2_2 + \frac{c}{2} \left\| \mb{X}\mb{w} - \mb{y}\right\|^2_2$. By taking the derivative of the function $h(\mb{w})$ regarding $\mb{w}$ we can have
\begin{equation}
\frac{\mathrm{d} h(\mb{w})}{\mathrm{d} \mb{w}} = \mb{w} + c\mb{X}\mb{w}\mb{X}^\top - c\mb{y}\mb{X}^\top.
\end{equation}
By making the derivation to be zero, the optimal vector $\mb{w}$ can be represented to be
\begin{equation}
\mb{w} = c(\mb{I} + c\mb{X}^\top\mb{X})^{-1}\mb{X}^\top \mb{y},
\end{equation} 
and the minimum value of the function will be $\frac{c}{2} \mb{y}^\top \mb{y} - \frac{c^2}{2} \mb{y}^\top \mb{X} (\mb{I} + c\mb{X}^\top\mb{X})^{-1}\mb{X}^\top \mb{y}$.

\vspace{5pt}
\noindent \textbf{Step 2}: Fix $\mb{w}$ and Update $\mb{y}$

When fixing $\mb{w}$ and treating it as a constant vector, the objective function about $\mb{y}$ can be represented as
\begin{align}
&\min_{\mb{y}} \frac{c}{2} \left\| \mb{\hat{y}}-\mb{y}\right\|^2_2,\\
&s.t.\ \  \mb{y} \in \{0, 1\}^{|\mathcal{L}| \times 1}, y_l = 1, \forall l \in \mathcal{E},\\
&\ \ \ \ \ \ \ \ \underline{\mb{b}}^{out} \le \mb{T}^{out} \cdot \mb{y} \le \overline{\mb{b}}^{out}, \underline{\mb{b}}^{in} \le \mb{T}^{in} \cdot \mb{y} \le \overline{\mb{b}}^{in},
\end{align}
where $\mb{\hat{y}} = \mb{X}\mb{w}$ denotes the inference results of the links in $\mathcal{L}$ with the updated parameter vector $\mb{w}$ from Step 1. The objective function is an constrained non-linear integer programming problem about variable $\mb{y}$. Formally, the above optimization sub-problem is named as the ``Cardinality Constrained Link Selection'' problem. The problem is shown to be NP-hard (we will analyze it in the next subsection), and achieving the optimal solution to it is very time consuming. To preserve the \textit{cardinality constraints} on the variables and minimize the loss term, one brute-force way to achieve the optimal solution $\mb{y}$ is to enumerate all the feasible combination of links candidates to be selected as the positive instances, which will lead to very high time complexity. In \cite{wsdm17_2}, a greedy link selection algorithm is adopted to resolve the problem, and the pseudo-code of the greedy link selection method is available in Algorithm~\ref{alg:chap4_sec5_greedy}. Meanwhile, the framework is illustrated with the pseudo-code available in Algorithm~\ref{alg:chap4_sec5_framework}. The framework updates vectors $\mb{w}$ and $\mb{y}$ alternatively until both of them converge, where vector $\mb{y}$ will be returned as the final prediction results.



\section{Link Prediction}\label{sec:link_prediction}

Given a screenshot of an online social network, the problem of inferring the missing links or the links to be formed in the future is called the \textit{link prediction} problem. Link prediction problem has concrete applications in the real world, and many social network services can be cast to the link prediction problem. For instance, the friend recommendations problem in online social networks can be modeled as the social link prediction problem among users. Users' trajectory prediction problem can be formulated as the prediction task of potential checkin links between users and offline POIs (point of interest) in the location based social networks. The user identifier resolution problem across networks (i.e., the network alignment problem introduced in the previous section) can be modeled as the anchor link prediction problem of user accounts across different online social networks.

In this section, we will introduce the general link prediction problems in the online social networks. Formally, given the training set $\mathcal{T}_{train}$ involving links belong to different classes ($\mathcal{Y} = \{+1, -1\}$ denoting the links that have been/will be formed and those will never be formed) and the test set $\mathcal{T}_{test}$ (with unknown labels for the links), the link prediction problem aims at building a mapping $f: \mathcal{T}_{test} \to \mathcal{Y}$ to infer the potential labels of links in the test set $\mathcal{T}_{test}$.

Depending on the scenarios of the link prediction problems, the existing links prediction works can be divided into several different categories. Traditional link prediction problems are mainly focused on inferring the links in one single homogeneous network, like inferring the friendship links among users in online social networks or co-author links in bibliographic networks. As the network structures are becoming more and more complicated, many of them are modeled as the heterogeneous networks involving different types of nodes and complex connections among them. The heterogeneity of the networks leads to many new link prediction problems, like predicting the links between nodes belonging to different categories and the concurrent inference of multiple types of links in the heterogeneous networks. In recent years, many online social networks have appeared, and lots of new research opportunities exist for researchers and practitioners to study the link prediction problem from the cross-network perspective.

Meanwhile, depending on the learning settings used in the link prediction problem formulation and models, the existing link prediction works can be categorized in another way. For some of the link prediction models, they calculate the user-pair closeness as the prediction result without needing any training data, which are referred to as the \textit{unsupervised link prediction models}. For some other models, they will label the known links into different classes, and use them as the training set to learn a supervised classification models as the base model instead. These models are called the \textit{supervised link prediction models}. Usually, manual labeling of the links is very expensive and tedious. In recent years, many of the works have proposed to apply semi-supervised learning techniques in the link prediction problem to utilize the links without labels. 

In this part, we will introduce the link prediction problems in online social networks, including the \textit{traditional homogeneous link prediction}, \textit{cold start link prediction}, and \textit{cross-network link prediction}, which covers the \textit{PU link prediction} and \textit{sparse and low rank matrix estimation based link prediction}.


\subsection{Traditional Homogeneous Network Link Prediction}~\label{sec:chap7_sec2_homogeneous}

Traditional link prediction problems are mainly studied based on one homogeneous network, involving one single type of nodes and links. In this section, we will first briefly introduce how to use the social closeness measures for link prediction tasks. To integrate different social closeness measures together in the link prediction task, we will talk about the supervised link prediction model. Finally, we will introduce some models which formulate the link prediction task as a recommendation problem, and apply the matrix factorization method to address the problem.

\subsubsection{Unsupervised Link Prediction}\label{subsec:closeness_measures}

Given a screenshot of a homogeneous network $G = (\mathcal{V}, \mathcal{E})$, the unsupervised link prediction methods \cite{LK07} aims at inferring the potential links that will be formed in the future. Usually, the unsupervised link prediction models will calculate some scores for the links, which will be used as the predicted confidence scores of these links. Depending on the specific scenario and the link formation assumptions applied, different measures have been proposed for the link prediction models.

\noindent \textbf{Local Neighbor based Predicators}: Local neighbor based predicators are based on regional social network information, i.e., neighbors of users in the network. Consider, for example, given a social link $(u, v)$ in network $G$, where $u$ and $v$ are both users in $G$, the neighbor sets of $u, v$ can be represented as $\Gamma(u)$ and $\Gamma(v)$ respectively. Based on $\Gamma(u)$ and $\Gamma(v)$, the following predicators measuring the proximity of users $u$ and $v$ in network $G$ can be obtained.

\begin{enumerate}

\item \textit{Preferential Attachment Index} (PA) \cite{BJNRSV02}:
\begin{equation}
PA(u, v) = \left | \Gamma(u) \right | \left | \Gamma(v) \right |.
\end{equation}

$PA(u, v)$ uses the product of the degrees of users $u$ and $v$ in the network as the proximity measure, considering that new links are more likely to appear between users who have large number of social connections.

\item \textit{Common Neighbor} (CN) \cite{HZ11}:
\begin{equation}
CN(u, v) = \left | \Gamma(u) \cap \Gamma(v) \right |.
\end{equation}

$CN(u, v)$ uses the number of shared neighbor as the proximity score of user $u$ and $v$. The larger $CN(u, v)$ is, the closer user $u$ and $v$ are in the network.

\item \textit{Jaccard's Coefficient} (JC) \cite{HZ11}:
\begin{equation}
JC(u, v) = \frac{ \left | \Gamma(u) \cap \Gamma(v) \right |}{ \left | \Gamma(u) \cup \Gamma(v) \right |}.
\end{equation}

$JC(u, v)$ takes the total number of neighbors of $u$ and $v$ into account, considering that $CN(u, v)$ can be very large because each one has a lot of neighbors rather than they are strongly related to each other.

\item \textit{Adamic/Adar Index} (AA) \cite{AA01}:
\begin{equation}
AA(u, v) = \sum_{w \in (\Gamma(u) \cap \Gamma(v))} \frac{1}{\log \left | \Gamma(w) \right |}.
\end{equation}

Different from $JC(u, v)$, $AA(u, v)$ further gives each common neighbor of user $u$ and $v$ a weight, $ \frac{1}{\log \left | \Gamma(w) \right |}$, to denote its importance.

\item \textit{Resource Allocation Index} (RA) \cite{ZLZ09}:
\begin{equation}
RA(u, v) = \sum_{w \in (\Gamma(u) \cap \Gamma(v))} \frac{1}{\left | \Gamma(w) \right |}.
\end{equation}

$RA(u, v)$ gives each common neighbor a weight $\frac{1}{\left | \Gamma(w) \right |}$ to represent its importance, where those with larger degrees will have a less weight number.

\end{enumerate}

All these predicators are called \textit{local neighbor based predicators} as they are all based on users' local social network information.

\noindent \textbf{Global Path based Predicators}: In addition to the local neighbor based predicators, many other predicators based on paths in the network have also been proposed to measure the proximity among users.

\begin{enumerate}

\item \textit{Shortest Path} (SP) \cite{HCSZ06}:
\begin{equation}
SP(u, v) = \min \{\left | p_{u \rightsquigarrow v} \right |\},
\end{equation}
where $p_{u \rightsquigarrow v}$ denotes a path from $u$ to $v$ in the network and $\left | p \right |$ represents the length of path $p$.

\item \textit{Katz} \cite{K53}:
\begin{equation}
Katz(u, v)  = \sum_{l = 1}^{\infty} \beta^l \left | p_{u \rightsquigarrow v}^l \right |,
\end{equation}
where $p_{u \rightsquigarrow v}^l$ is the set of paths of length $l$ from $u$ to $v$ and parameter $\beta \in [0,1]$ is a regularizer of the predicator. Normally, a small $\beta$ favors shorter paths as $\beta^l$ can decay very quickly when $\beta$ is small, in which case $Katz(u, v)$ will be behave like the predicators based on local neighbors.

\end{enumerate}

\noindent \textbf{Random Walk based Link Prediction}: In addition to the unsupervised link predicators which can be obtained from the networks directly, there exists another category link prediction methods which can calculate the proximity scores among users based on \textit{random walk} \cite{GKR98, FPRS07, KSJ09, BL11, TFP06, LZ11, HZ11}. In this part, we will introduce the concept of random walk at first. Next, we will introduce the proximity measures based on random walk, which include the \textit{commute time} \cite{FPRS07, LZ11, HZ11}, \textit{hitting time} \cite{FPRS07, LZ11, HZ11} and \textit{cosine similarity} \cite{FPRS07, LZ11, HZ11}.

Let matrix $\mb{A}$ be the adjacency matrix of network $G$, where $A(i,j) = 1$ iff social link $(u_i, u_j) \in \mathcal{E}$, where $u_i, u_j \in \mathcal{V}$. The normalized matrix of $\mb{A}$ by rows will be $\mb{P} = \mb{D}^{-1} \mb{A}$, where diagonal matrix $\mb{D}$ of $\mb{A}$ has value $D(i,i) = \sum_j A(i,j)$ on its diagonal and $P(i,j)$ stores the probability of stepping on node $u_j \in \mathcal{V}$ from node $u_i \in \mathcal{V}$. Let entries in vector $\mb{x}^{(\tau)}(i)$ denote the probabilities that a random walker is at user node $u_i \in \mathcal{V}$ at time $\tau$. Then we have the updating equation of entry $\mb{x}^{(\tau)}(i)$ via the random walk as follows:
\begin{equation}
\mb{x}^{(\tau + 1)}(i) = \sum_j \mb{x}^{(\tau)}(j) \mb{P}(j,i).
\end{equation}

In other words, the updating equation of vector $\mb{x}$ will be represented as:
\begin{equation}
\mb{x}^{(\tau + 1)} = \mb{P} \mb{x}^{(\tau)}.
\end{equation}

By keeping updating $\mb{x}$ according to the following equation until convergence, we can have the stationary vector $\mb{x}^{(\tau + 1)}$ as
\begin{equation}
\begin{cases}
\mb{x}^{(\tau + 1)} = \mb{P}^T\mb{x}^{(\tau)}, \\
\mb{x}^{(\tau + 1)} = \mb{x}^{(\tau)}.
\end{cases}
\end{equation}
The above equation is equivalent to 
\begin{equation}
\mb{v} = \mb{P}^T\mb{v},
\end{equation}
where vector $\mb{v}$ denotes the stationary random walk probability vector.

The above equation denotes that the final stationary distribution vector $\mb{v}$ is actually a eigenvector of matrix $\mb{P}^T$ corresponding to eigenvalue $1$. Some existing works have pointed out that if a markov chain is \textit{irreducible} \cite{FPRS07} and \textit{aperiodic} \cite{FPRS07} then the largest eigenvalue of the transition matrix will be equal to $1$ and all the other eigenvalues will be strictly less than $1$. In addition, in such a condition, there will exist one single unique stationary distribution which is vector $\mb{v}$ obtained at convergence of the updating equations.

\begin{defn}
(Irreducible): Network $G$ is \textit{irreducible} if there exists a path from every node to every other nodes in $G$ \cite{FPRS07}.
\end{defn}

\begin{defn} 
(Aperiodic): Network $G$ is \textit{aperiodic} if the greatest common divisor of the lengths of its cycles in $G$ is $1$, where the greatest common divisor is also called the \textit{period} of $G$ \cite{FPRS07}.
\end{defn}

\noindent \textbf{Proximity Measures based on Random Walk}

\begin{enumerate}
\item \textit{Hitting Time} (HT): \begingroup\makeatletter\def\f@size{8}\check@mathfonts
\begin{equation}
\hspace{-4pt} HT(u, v) = \mathbb{E}\left( \min \{\tau | \tau \in \mathbb{N}^+, X^{(\tau)} = v \land X^0 = u \} \right),
\end{equation}\endgroup
where variable $X^{(\tau)} = v$ denotes that a random walker is at node $v$ at time $\tau$.

$HT(u, v)$ counts the average steps that a random walker takes to reach node $v$ from node $u$. According to the definition, the hitting time measure is usually asymmetric, $HT(u, v) \ne HT(v, u)$. Based on matrix $\mb{P}$ defined before, the definition of $HT(u, v)$ can be redefined as \cite{FPRS07}:
\begin{equation}
HT(u, v) = 1 + \sum_{w \in \Gamma(u)} P_{u,w} HT(w, v).
\end{equation}

\item \textit{Commute Time} (CT): 
\begin{equation}
CT(u, v) = HT(u, v) + HT(v, u).
\end{equation}

$CT(u, v)$ counts the expectation of steps used to reach node $u$ from $v$ and those needed to reach node $v$ from $u$. According to existing works, the commute time, $CT(u, v)$, can be obtained as follows
\begin{equation}
CT(u, v) = 2m(L^{\dagger}_{u,u} + L^{\dagger}_{v,v} - 2L^{\dagger}_{u,v}),
\end{equation}
where $\mb{L}^{\dagger}$ is the pseudo-inverse of matrix $\mb{L} = \mb{D}_A - \mb{A}$.

\item \textit{Cosine Similarity based on $\mb{L}^\dagger$} (CS):
\begin{equation}
CS(u,v) = \frac{\mb{x}^T_u \mb{x}_v}{\sqrt{(\mb{x}^T_u \mb{x}_u)(\mb{x}^T_v \mb{x}_v)}},
\end{equation}
where, $\mb{x}_u = (\mb{L}^\dagger)^{\frac{1}{2}} \mb{e}_u$ and vector $\mb{e}_u$ is a vector of $0$s except the entries corresponding to node $u$ that is filled with $1$. According to existing works \cite{FPRS07, LZ11}, the cosine similarity based on $\mb{L}^\dagger$ , $CS(u,v)$, can be obtained as follows,
\begin{equation}
CS(u, v) = \frac{L^\dagger_{u,v}}{\sqrt{L^\dagger_{u,u} L^\dagger_{v,v}}}.
\end{equation}

\item \textit{Random Walk with Restart} (RWR):
Based on the definition of random walk, if the walker is allowed to return to the starting point with a probability of $1 - c$, where $c \in [0,1]$, then the new random walk method is formally defined as \textit{random walk with restart}, whose updating equation is shown as follows:
\begin{equation}
\begin{cases}
\mb{x}^{(\tau + 1)}_u = c \mb{P}^T \mb{x}^{(\tau)}_u + (1-c)\mb{e}_u, \\
\mb{x}^{(\tau + 1)}_u = \mb{x}^{(\tau)}_u.
\end{cases}
\end{equation}

Keep updating $\mb{x}$ until convergence, the stationary distribution vector $\mb{x}$ can meet
\begin{equation}
\mb{x}_u = (1 - c)(\mb{I} - c\mb{P}^T)^{-1}\mb{e}_u.
\end{equation}

The proximity measure based on random walk with restart between user $u$ and $v$ will be
\begin{equation}
RWR(u, v) = \mb{x}_u(v),
\end{equation}
where $\mb{x}_u(v)$ denotes the entry corresponding to $v$ in vector $\mb{x}_u$.

\end{enumerate}

\subsubsection{Supervised Link Prediction}

In some cases, links in the networks are explicitly categorized into different groups, like links denoting friends vs those representing enemies, friends (formed connections) vs strangers (no connections). Given a set of labeled links, e.g., set $\mathcal{E}$, containing links belonging to different classes, the \textit{supervised link prediction} \cite{HCSZ06} problem aims at building a supervised learning model with the labeled set. The learnt model will be applied to determine the labels of links in the test set. In this part, we still take the link formation problem as an example to illustrate the supervised link prediction model.

To represent each of the social links, like link $l = (u, v) \in \mathcal{E}$ between nodes $u$ and $v$, a set of features representing the characteristics of the link $l$ or nodes $u$, $v$ will be extracted in the model building. Normally, the features can be extracted for links in the prediction task can be divided into two categories:

\noindent \textbf{Link Feature Extraction}
\begin{itemize}
\item \textit{Features of Nodes}: The characteristics of the nodes can be denoted by various measures, like these various node centrality measures. For instance, for the link $(u, v)$, based on the known links in the training set, the centrality measures can be computed based on degree, normalized degree, eigen-vector, Katz, PageRank, Betweenness of nodes $u$ and $v$ as part of the features for link $(u, v)$.

\item \textit{Features of Links}: The characteristics of the links in the networks can be calculated by computing the closeness between the nodes composing the nodes. For instance, for link $(u, v)$, based on the known links in the training set, the closeness measures can be computed based on reciprocity, common neighbor, Jaccard's coefficient, Adamic/Adar, shortest path, Katz, hitting time, commute time, etc. between nodes $u$ and $v$ as the features for link $(u, v)$.
\end{itemize}

We can append the features for nodes $u$, $v$ and those for link $(u, v)$ together and represent the extracted feature vector for link $l = (u, v)$ as vector $\mb{x}_l \in \mathbb{R}^{k \times 1}$, whose length is $k$ in total.

\noindent \textbf{Link Prediction Model}

With the training set $\mathcal{L}_{train}$, the feature vectors and labels for the links in $\mathcal{L}_{train}$ can be represented as the training data $\{(\mb{x}_l, y_l)\}_{l \in \mathcal{L}_{train}}$. Meanwhile, with the testing set $\mathcal{L}_{test}$, the features extracted for the links in it can be represented as $\{\mb{x}_l\}_{l \in \mathcal{L}_{train}}$. Different classification models can be used as the base model for the link prediction task, like the Decision Tree, Artificial Neural Network and Support Vector Machine (SVM). The model can be trained with the training data, and the labels of links in the test can be determined by applying models to the test set.

Depending on the specific model being applied, the output of the link prediction result can include (1) the predicted labels of the links, and (2) the prediction confidence scores/probability scores of links in the test set.

\subsubsection{Matrix Factorization based Link Prediction}

Besides unsupervised link predicators and the classification based supervised link prediction models, many other methods based on matrix factorization can also be applied to solve the link prediction task in homogeneous networks \cite{MAE11, TGHL13, DKA11}. 

Given a homogeneous social network $G = (\mathcal{V}, \mathcal{E})$ and the existing social links among users in set $\mathcal{E}$, the links can represented with the social adjacency matrix $\mb{A} \in \{0, 1\}^{|\mathcal{V}| \times |\mathcal{V}|}$. Given the adjacency matrix $\mb{A}$ of network G, \cite{icde17_2} proposes to use a low-rank compact representation, $\mb{U} \in \mathbb{R}^{|\mathcal{V}| \times d}, d < |\mathcal{V}|$, to store social information for each user in the network. Matrix $\mb{U}$ can be obtained by solving the following optimization objective function:
\begin{equation}
\min_{\mb{U}, \mb{V}} \left \| \mb{A} - \mb{U}\mb{V}\mb{U}^T \right \|^2_F,
\end{equation}
where $\mb{U}$ is the low rank matrix and matrix $\mb{V}$ saves the correlation among the rows of $\mb{U}$, $\left \| \mb{X} \right\|_F$ is the Frobenius norm of matrix $\mb{X}$.

To avoid overfitting, regularization terms $\left \| \mb{U} \right \|^2_F$ and $\left \| \mb{V} \right \|^2_F$ are added to the object function as follows:
\begin{align}
&\min_{\mb{U}, \mb{V}} \left \| \mb{A} - \mb{U}\mb{V}\mb{U}^T \right \|^2_F + \alpha \left \| \mb{U} \right \|^2_F + \beta \left\| \mb{V} \right \|^2_F,\\
&s.t.,	\mb{U} \ge \mb{0}, \mb{V} \ge \mb{0},
\end{align}
where $\alpha$ and $\beta$ are the weight of terms $\left \| \mb{U} \right \|^2_F$, $\left \| \mb{V} \right \|^2_F$ respectively.

This object function is very hard to achieve the global optimal result for both $\mb{U}$ and $\mb{V}$. A alternative optimization schema can be used here, which can update $\mb{U}$ and $\mb{V}$ alternatively. The Lagrangian function of the object equation should be: \begingroup\makeatletter\def\f@size{8}\check@mathfonts
\begin{align}
\mathcal{F} &= Tr(\mb{A}\mb{A}^T) - Tr(\mb{A} \mb{U} \mb{V}^T \mb{U}^T)\\ 
&- Tr(\mb{U}\mb{V}\mb{U}^T\mb{A}^T) +Tr(\mb{U}\mb{V}\mb{U}^T\mb{U}\mb{V}^T\mb{U}^T) \\
&+ \alpha Tr(\mb{U}\mb{U}^T) + \beta Tr(\mb{V}\mb{V}^T) - Tr({\Theta}\mb{U}) - Tr({\Omega}\mb{V}),
\end{align}\endgroup
where ${\Theta}$ and ${\Omega}$ are the multiplier for the constraint of $\mb{U}$ and $\mb{V}$ respectively.

By taking derivatives of $\mathcal{F}$ with regarding to $\mb{U}$ and $\mb{V}$ respectively, the partial derivatives of $\mathcal{F}$ will be \begingroup\makeatletter\def\f@size{8}\check@mathfonts
\begin{align}
\frac{\partial \mathcal{F}}{\partial \mb{U}} =&-2 \mb{A}^T\mb{U}\mb{V} -2 \mb{A}\mb{U}\mb{V}^T +2\mb{U}\mb{V}^T\mb{U}^T\mb{U}\mb{V}^T \\
&+ 2\mb{U}\mb{V}\mb{U}^T\mb{U}\mb{V}^T + 2\alpha \mb{U} - {\Theta}^T,\\
\frac{\partial \mathcal{F}}{\partial \mb{V}} =& -2\mb{U}^T\mb{A}\mb{U} +2\mb{U}^T\mb{U}\mb{V}\mb{U}^T\mb{U} + 2\beta\mb{V} - {\Omega}^T
\end{align}\endgroup
Let $\frac{\partial \mathcal{F}}{\partial \mb{U}} = \mb{0}$ and $\frac{\partial \mathcal{F}}{\partial \mb{V}} = \mb{0}$ and use the KKT complementary condition, we can get: \begingroup\makeatletter\def\f@size{8}\check@mathfonts
\begin{equation}
\begin{cases}
\mb{U}(i,j) \hspace{-5pt} &\gets \mb{U}(i,j)  \sqrt{\frac{\left(\mb{A}^T\mb{U}\mb{V} + \mb{A}\mb{U}\mb{V}^T \right)(i,j)}{\left(\mb{U}\mb{V}^T\mb{U}^T\mb{U}\mb{V} + \mb{U}\mb{V}\mb{U}^T\mb{U}\mb{V}^T + \alpha\mb{U} \right)(i,j)}},\\
\mb{V}(i,j) \hspace{-5pt} &\gets \mb{V}(i,j) \sqrt{\frac{\left(\mb{U}^T\mb{A}\mb{U} \right)(i,j)}{\left(\mb{U}^T\mb{U}\mb{V}\mb{U}^T\mb{U} + \beta\mb{V} \right)(i,j)}}.
\end{cases}
\end{equation}\endgroup

The low-rank matrix $\mb{U}$ captures the information of each users from the adjacency matrix. The matrix $\mb{U}$ can be used in different ways. For instance, each row of $\mb{U}$ represents the \textit{latent feature vectors} of users in the network, which can be used in many link prediction models, e.g., supervised link prediction models. Meanwhile, based on the matrix $\mb{V}$ learnt from the model, the predicted score of link $(u, v)$ can be represented as $\mb{U}_u \mb{V} \mb{U}_v^\top$, where notations $\mb{U}_u$ and $\mb{U}_v$ represent the rows in matrix $\mb{U}$ corresponding to users $u$ and $v$ respectively.

\subsection{Cold Start Link Prediction for New Users}\label{sec:chap7_sec4_new}

These previous works on link prediction focus on predicting potential links that will appear among all the users, based upon a snapshot of the social network. These works treat all users equally and try to predict social links for all users in the network. However, in real-world social networks, many new users are joining in the service every day. Predicting social links for new users are more important than for those existing active users in the network as it will leave the first impression on the new users. First impression often has lasting impact on a new user and may decide whether he will become an active user. A bad first impression can turn a new user away. So it is important to make meaningful recommendation to  a new user to create a good first impression and attract him to participate more. For simplicity, we refer users that have been actively using the the network for a long time as ``old users''. It has been shown in previous works that there is a negative correlation between the age of nodes in the network and their link attachment rates. The distribution of linkage formation probability follows a power-law decay with the age of nodes \cite{KE02}. So, new users are more likely to accept the recommended links compared with existing old users and predicting links for new users could lead to more social connections. In this part, we will introduce a recent research work on link prediction for new users, which is based on \cite{icdm13}.

A natural challenge inherent in the usage of the historical links in social networks to predict social links for new users is the differences in information distributions of new users and old users as mentioned before. To address this problem, \cite{icdm13} propose a method to accommodate old users' and new users' sub-network by using a within-network personalized sampling method to process old users' information. By sampling the old users' sub-network, we want to meet the following objectives:
\begin{itemize}
	\item \textit{Maximizing Relevance}: We aim at maximizing the relevance of the old users' sub-network and the new users' sub-network to accommodate differences in information distributions of new users and old users in the heterogeneous target network.
	\item \textit{Information Diversity}: Diversity of old users' information after sampling is still of great significance and should be preserved. 
	\item \textit{Structure Maintenance}: Some old users possessing sparse social links should have higher probability to survive after sampling to maintain their links so as to maintain the network structure.
\end{itemize}

Let the target network be $G = (\mathcal{V}, \mathcal{E})$, and $\mathcal{V} = \mathcal{V}_{old} \cup \mathcal{V}_{new}$ is the set of user nodes (i.e., set of old users and new users) in the target network. Personalized sampling is conducted on the old users' part: $G_{old} = (\mathcal{V}_{old}, \mathcal{E}_{old})$, in which each node is sampled independently with the sampling rate distribution vector $\boldsymbol{\delta}$ = ($\delta_1$, $\delta_2$, $\cdots$, $\delta_n$), where $n = \left| \mathcal{V}_{old} \right|$, $\sum_{i=1}^n \delta_i = 1$ and $\delta_i \ge 0$. Old users' sub-network after sampling is denoted as $\bar{G}_{old} = (\bar{\mathcal{V}}_{old}, \bar{\mathcal{V}}_{old})$.

We aim at making the old users' sub-network as relevant to new users' as possible. To measure the similarity score of a user $u_i$ and a heterogeneous network $G$, we define a relevance function as follows:
\begin{equation}
R(u_i, G) = \frac{1}{\left| \mathcal{V} \right|}\sum_{u_j \in \mathcal{V}}S(u_i, u_j)
\end{equation}
where set $\mathcal{V}$ is the user set of network $G$ and $S(u_i, u_j)$ measures the similarity between user $u_i$ and $u_j$ in the network. Each user has social relationships as well as other heterogeneous auxiliary information and $S(u_i, u_j)$ is defined as the average of similarity scores of these two parts:
\begin{equation}
S(u_i, u_j) = \frac{1}{2}(S_{aux}(u_i, u_j) + S_{social}(u_i, u_j))
\end{equation}

There are many different methods measuring the similarities of these auxiliary information in different aspects, e.g. cosine similarity. As to the social similarity, Jaccard's Coefficient can be used to depict how similar two users are in their social relationships. 

The relevance between the sampled old users' network and the new users' network could be defined as the expectation value of function $R(\bar{u}_{old}, G_{new})$: \begingroup\makeatletter\def\f@size{8}\check@mathfonts
\begin{align}
R(\bar{G}_{old}, G_{new}) &= \mathbb{E}(R(\bar{u}_{old}, G_{new})) \\
			&= \frac{1}{\left| \mathcal{V}_{new} \right|} \sum_{j=1}^{\left | \mathcal{V}_{new} \right |} \mathbb{E}(S(\bar{u}_{old}, u_{new,j})) \\
			&= \frac{1}{\left| \mathcal{V}_{new} \right|} \sum_{j=1}^{\left | \mathcal{V}_{new} \right |} \sum_{i=1}^{\left | \mathcal{V}_{old} \right |} \delta_i \cdot S(\bar{u}_{old,i},u_{new,j}) \\
			&= \boldsymbol{\delta}^\top \boldsymbol{s}
\end{align}\endgroup
where vector $\boldsymbol{s}$ equals: \begingroup\makeatletter\def\f@size{7}\check@mathfonts
\begin{equation}
\frac{1}{\left| \mathcal{V}_{new} \right|}\Big[ \hspace{-2pt} \sum_{j=1}^{\left | \mathcal{V}_{new} \right |} \hspace{-5pt} S(\bar{u}_{old,1},u_{new,j}), \cdots,  \hspace{-5pt}  \sum_{j=1}^{\left | \mathcal{V}_{new} \right |} \hspace{-5pt} S(\bar{u}_{old,n},u_{new,j}) \Big]^\top
\end{equation}\endgroup
and ${\left | \mathcal{V}_{old} \right |} = n$. Besides the relevance, we also need to ensure that the diversity of information in the sampled old users' sub-network could be preserved. Similarly, it also includes diversities of the auxiliary information and social relationships. The diversity of auxiliary information is determined by the sampling rate $\delta_i$, which could be define with the averaged \textit{Simpson Index} \cite{S49} over the old users' sub-network.
\begin{equation}
D_{aux}(\bar{G}_{old}) = \frac{1}{\left| \mathcal{V}_{old} \right|} \cdot \sum_{i=1}^{\left| \mathcal{V}_{old} \right|}{\delta_i^2}.
\end{equation}
As to the diversity in the social relationship, we could get the existence probability of a certain social link $(u_i, u_j)$ after sampling to be proportional to $\delta_i \cdot \delta_j$. So, the diversity of social links in the sampled network could be defined as average existence probabilities of all the links in the old users' sub-network.
\begin{equation}
D_{social}(\bar{G}_{old}) =\frac{1}{{\left| S_{old} \right|}} \cdot \sum_{i=1}^{\left| \mathcal{V}_{old} \right|}\sum_{j=1}^{\left| \mathcal{V}_{old} \right|}\delta_i \cdot \delta_j \times I(u_i, u_j)
\end{equation}
where $\left| S_{old} \right|$ is the size of social link set of old users' sub-network and  $I(u_i, u_j)$ is an indicator function $I: (u_i, u_j) \to \{0, 1\}$ to show whether a certain social link exists or not originally before sampling. For example, if link $(u_i, u_j)$ is a social link in the target network originally before sampling, then $I(u_i, u_j) = 1$, otherwise it equals to $0$.

Considering these two terms simultaneously, we could have the diversity of information in the sampled old users' sub-network to be the average diversities of these two parts: \begingroup\makeatletter\def\f@size{8}\check@mathfonts
\begin{align}
D(\bar{G}_{old}) &= \frac{1}{2}(D_{social}(\bar{G}_{old}) + D_{aux}(\bar{G}_{old}))\\
			&= \frac{1}{2}(\sum_{i=1}^{\left| \mathcal{V}_{old} \right|}\sum_{j=1}^{\left| \mathcal{V}_{old} \right|}\frac{1}{{\left| S_{old} \right|}} \cdot \delta_i \cdot \delta_j \times I(u_i, u_j)\\
			&+ \sum_{i=1}^{\left| \mathcal{V}_{old} \right|}\frac{1}{\left| \mathcal{V}_{old} \right|} \cdot {\delta_i^2} )\\
			&= \boldsymbol{\delta}^\top \cdot (\frac{1}{2{\left| S_{old} \right|}} \cdot \mathbf{A_{old}} + \frac{1}{2\left| \mathcal{V}_{old} \right|} \cdot \mathbf{I_{\left| \mathcal{V}_{old} \right|}}) \cdot \boldsymbol{\delta}
\end{align}\endgroup
where matrix $\mathbf{I_{\left| \mathcal{V}_{old} \right|}}$ is the diagonal identity matrix of size ${\left| \mathcal{V}_{old} \right|} \times {\left| \mathcal{V}_{old} \right|}$ and $\mathbf{A_{old}}$ is the adjacency matrix of old users' sub-network. 

To ensure that the structure of the original old users' subnetwork is not destroyed, we need to ensure that users with few links could also preserve their links. So, we could add a regularization term to increase the sampling rate for these users as well as their neighbors by maximizing the following terms:
\begin{align}
	Reg(\bar{G}_{old}) &= \min\{\mathcal{N}_i, \min_{u_j \in \mathcal{N}_i}\{\mathcal{N}_j\}\} \times \delta_i^2 = \boldsymbol{\delta}^\top \cdot \mathbf{M} \cdot \boldsymbol{\delta}
\end{align}
where matrix $\mathbf{M}$ is a diagonal matrix with element $\mathbf{M}_{i,i} = \min\{\mathcal{N}_i, \min_{u_j \in \mathcal{N}_i}\{\mathcal{N}_j\}\} = \min\{\mathcal{N}_i, \{\mathcal{N}_i| u_j \in \mathcal{N}_i\}$ and $\mathcal{N}_j = |\Gamma(u_j)|$ is the size of user $u_j$'s neighbor set. So, if a user or his/her neighbors have few links, then this user as well as his/her neighbors should have higher sampling rate so as to preserve the links between them.

\begin{figure}[!t]
 \centering    
 \begin{minipage}[l]{0.8\columnwidth}
  \centering
    \includegraphics[width=1.0\textwidth]{./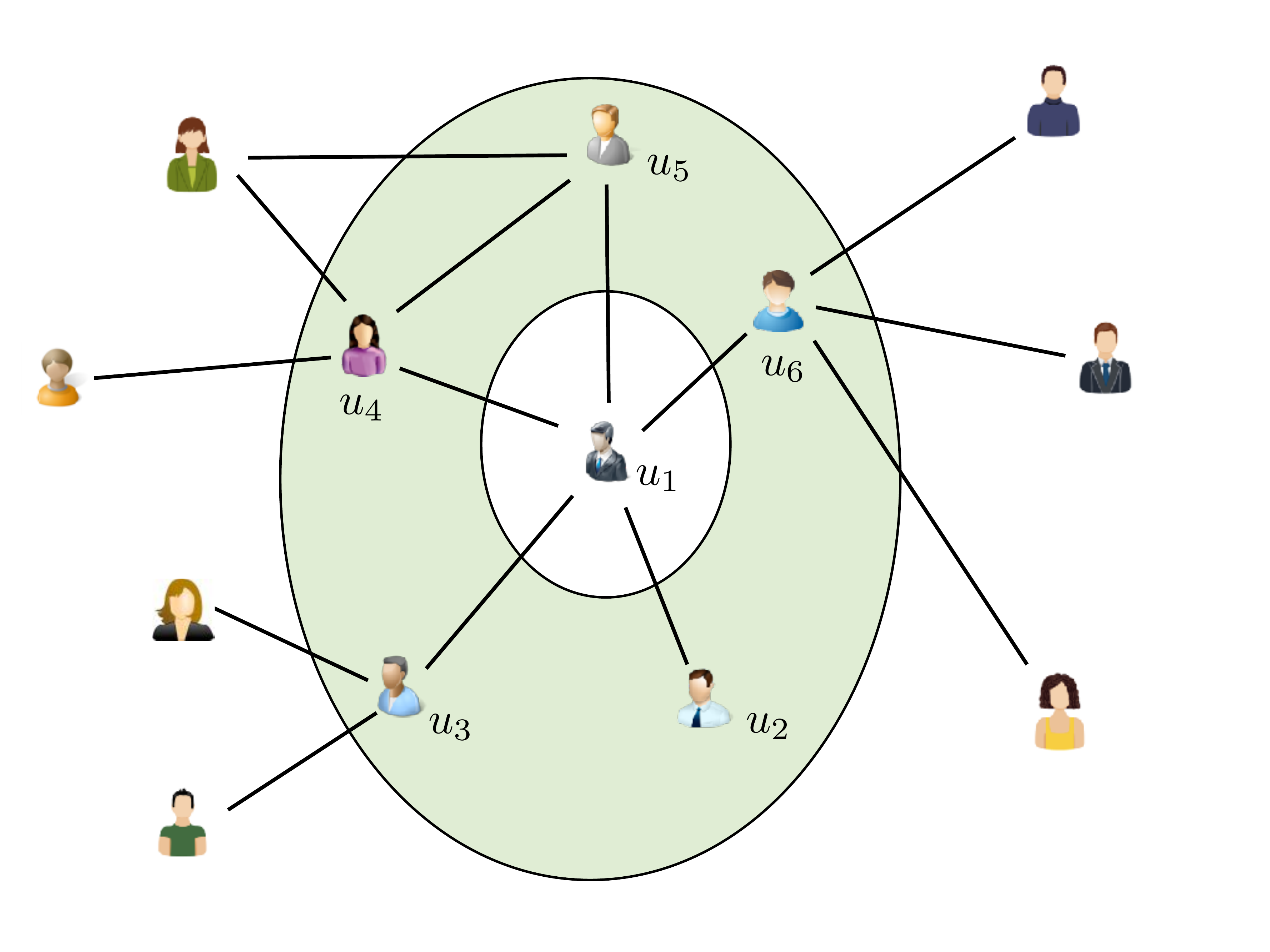}
 \end{minipage}
\caption{Personalized Network Sampling with Preservation of the Network Structure Properties.}\label{fig:chap7_sec4_fig_eg4}
\end{figure}

For example, in Figure~\ref{fig:chap7_sec4_fig_eg4}, we have $6$ users. To decide the sampling rate of user $u_1$, we need to consider his/her social structure. We find that since $u_1$'s neighbor $u_2$ has no other neighbor except $u_1$. To preserve the social link between $u_1$ and $u_2$ we need to increase the sampling rate of $u_2$. However, the existence probability of link $(u_1, u_2)$ is also decided by the sampling rate of user $u_1$, which also needs to be increased too. Combining the diversity term and the structure preservation term, we could define the regularized diversity of information after sampling to be
\begin{align}
D_{Reg}(\bar{G}_{old}) &= D(\bar{G}_{old}) + Reg(\bar{G}_{old}) = \boldsymbol{\delta}' \cdot \mathbf{N} \cdot \boldsymbol{\delta}
\end{align}
where $\mathbf{N} = \mathbf{ \frac{1}{2\left| \mathcal{V}_{old} \right|} \cdot I_{\left| \mathcal{V}_{old} \right|}} + \mathbf{\frac{1}{{2\left| S_{old} \right|}} \cdot \mathbf{A_{old}}}+  \mathbf{M}$.

The optimal value of $\boldsymbol{\delta}$ should be able to maximize the relevance of new users' sub-network and old users' as well as the regularized diversity of old users' information in the target network \begingroup\makeatletter\def\f@size{8}\check@mathfonts
\begin{align}
\boldsymbol{\delta}^* &= \underset{\boldsymbol{\delta}}{\arg \max}\  R(\bar{G}_{old}, G_{new}) + \theta \cdot D_{Reg}(\bar{G}_{old}) \\
	&= \underset{\boldsymbol{\delta}}{\arg \max}\  \boldsymbol{\delta}^\top \boldsymbol{s} + \theta \cdot \boldsymbol{\delta}^\top \cdot \mathbf{N} \cdot \boldsymbol{\delta}\\
	&s.t. \sum_{i=1}^{\left| \mathcal{V}_{old} \right|}\delta_i = 1\ and\ \delta_i \ge 0,
\end{align}\endgroup
where parameter $\theta$ is used to weight the importance of term regularized information diversity. The learned sampling rate can be applied to randomly sampled the old users' historical information, so as to utilize their information for model building in predicting social links for the new users. 

\subsection{Link Prediction across Multiple Aligned Social Networks}\label{sec:chap7_sec4_pu}

Besides the link prediction problems in one single target network, some research works have been done on simultaneous link prediction in multiple aligned online social networks concurrently. In the supervised link prediction model introduced before, among all the non-existing social links, a subset of the links can be identified and labeled as the negative instances. However, in the real world, labeling the links which will never be formed can be extremely hard and almost impossible, since new links are keeping being formed. In this section, we will introduce the cross-network concurrent link prediction problem with PU learning setting.

Let $G^{(i)}, i \in \{1, 2, \cdots, n\}$ be a \textit{heterogeneous online social network} in the multiple \textit{aligned networks}. The user set and existing social link set of $G^{(i)}$ can be represented as $U^{(i)}$ and $E^{(i)}_{u,u}$ respectively. In network $G^{(i)}$, all the existing links are the formed links and, as a result, the formed links of $G^{(i)}$ can be represented as the positive set $\mathcal{P}^{(i)}$, where $\mathcal{P}^{(i)} = E^{(i)}_{u,u}$. Furthermore, a large set of unconnected user pairs are referred to as the unconnected links, $\mathcal{U}^{(i)}$, and can be extracted from network $G^{(i)}$: $\mathcal{U}^{(i)} = U^{(i)} \times U^{(i)} \setminus \mathcal{P}^{(i)}$. However, no information about links that will never be formed can be obtained from the network. With $\mathcal{P}^{(i)}$ and $\mathcal{U}^{(i)}$, we formulate the \textit{link formation prediction} as the {PU (Positive and Unlabeled) link prediction} problem.

Formally, let the notations $\{\mathcal{P}^{(1)}, \cdots, \mathcal{P}^{(n)}\}$, $\{\mathcal{U}^{(1)}, \cdots, \mathcal{U}^{(n)}\}$ and $\{\mathcal{L}^{(1)}, \cdots, \mathcal{L}^{(n)}\}$ be the sets of formed links, unconnected links, and links to be predicted of networks $G^{(1)}, G^{(2)},\\ \cdots, G^{(n)}$ respectively. With the formed and unconnected links of $G^{(1)}, G^{(2)}, \cdots, G^{(n)}$, the \textit{multi-network link prediction} problem can be formulated as a \textit{multi-PU link prediction} problem.

In this part, we will introduce the {\ourmli} model proposed in \cite{kdd14} to solve the \textit{multi-network link prediction} problem. The {\ourmli} model includes 3 parts: (1) social meta path based feature extraction and selection; (2) PU link prediction; (3) multi-network link prediction framework, where the feature extraction is done based on the \textit{inter-network meta paths} defined in Section~\ref{sec:meta_path}. Next, we will mainly focus on introducing the Steps (2) and (3) of the {\ourmli} model respectively.

\subsubsection{PU Link Prediction} \label{subsec:chap7_sec5_pulinkprediction}

In this subsection, we will introduce a method to solve the \textit{PU link prediction} problem in one single network. As introduced in the problem formulation at the beginning, from a given network, e.g., $G$, two disjoint sets of links: connected (i.e., formed) links $\mathcal{P}$ and unconnected links $\mathcal{U}$, can be obtained. To differentiate these links, {\ourmli} uses a new concept ``\textit{connection state}'', $z$, to show whether a link is connected (i.e., formed) or unconnected in network $G$. For a given link $l$, if $l$ is connected in the network, then $z(l) = +1$; otherwise, $z(l) = -1$. As a result, {\ourmli} can have the ``\textit{connection states}'' of links in $\mathcal{P}$ and $\mathcal{U}$ to be: $z(\mathcal{P}) = \mb{+1}$ and $z(\mathcal{U}) = \mb{-1}$.

Besides the ``\textit{connection state}'', links in the network can also have their own ``\textit{labels}'', $y$, which can represent whether a link is to be formed or will never be formed in the network. For a given link $l$, if $l$ has been formed or to be formed, then $y(l) = +1$; otherwise, $y(l) = -1$. Similarly, {\ourmli} can have the ``\textit{labels}'' of links in $\mathcal{P}$ and $\mathcal{U}$ to be: $y(\mathcal{P}) = \mb{+1}$ but $y(\mathcal{U})$ can be either $+1$ or $-1$, as $\mathcal{U}$ can contain both links to be formed and links that will never be formed.

By using $\mathcal{P}$ and $\mathcal{U}$ as the positive and negative training sets, {\ourmli} can build a \textit{link \underline{c}onnection prediction model} $\mathcal{M}_{c}$, which can be applied to predict whether a link exists in the original network, i.e., the \textit{connection state} of a link. Let $l$ be a link to be predicted, by applying $\mathcal{M}_{c}$ to classify $l$, the \textit{connection probability} of $l$ can be represented to be:
\begin{defn}
(Connection Probability): The probability that link $l$'s \textit{connection states} is predicted to be \textit{connected} (i.e., $z(l) = +1$) is formally defined as the \textit{connection probability} of link $l$: $p(z(l) = +1 | \mb{x}(l))$, where $\mb{x}(l)$ denotes the feature vector extracted for link $l$ based on meta path.
\end{defn}

Meanwhile, if we can obtain a set of links that ``will never be formed'', i.e., ``-1'' links, from the network, which together with $\mathcal{P}$ (``+1'' links) can be used to build a \textit{link \underline{f}ormation prediction model}, $\mathcal{M}_{f}$, which can be used to get the \textit{formation probability} of $l$ to be:

\begin{defn}
(Formation Probability): The probability that link $l$'s \textit{label} is predicted to be \textit{formed or will be formed} (i.e., $y(l) = +1$) is formally defined as the \textit{formation probability} of link $l$: $p(y(l) = +1 | \mb{x}(l))$.
\end{defn}


\begin{figure}[t]
\centering
    \begin{minipage}[l]{0.65\columnwidth}
      \centering
      \includegraphics[width=\textwidth]{./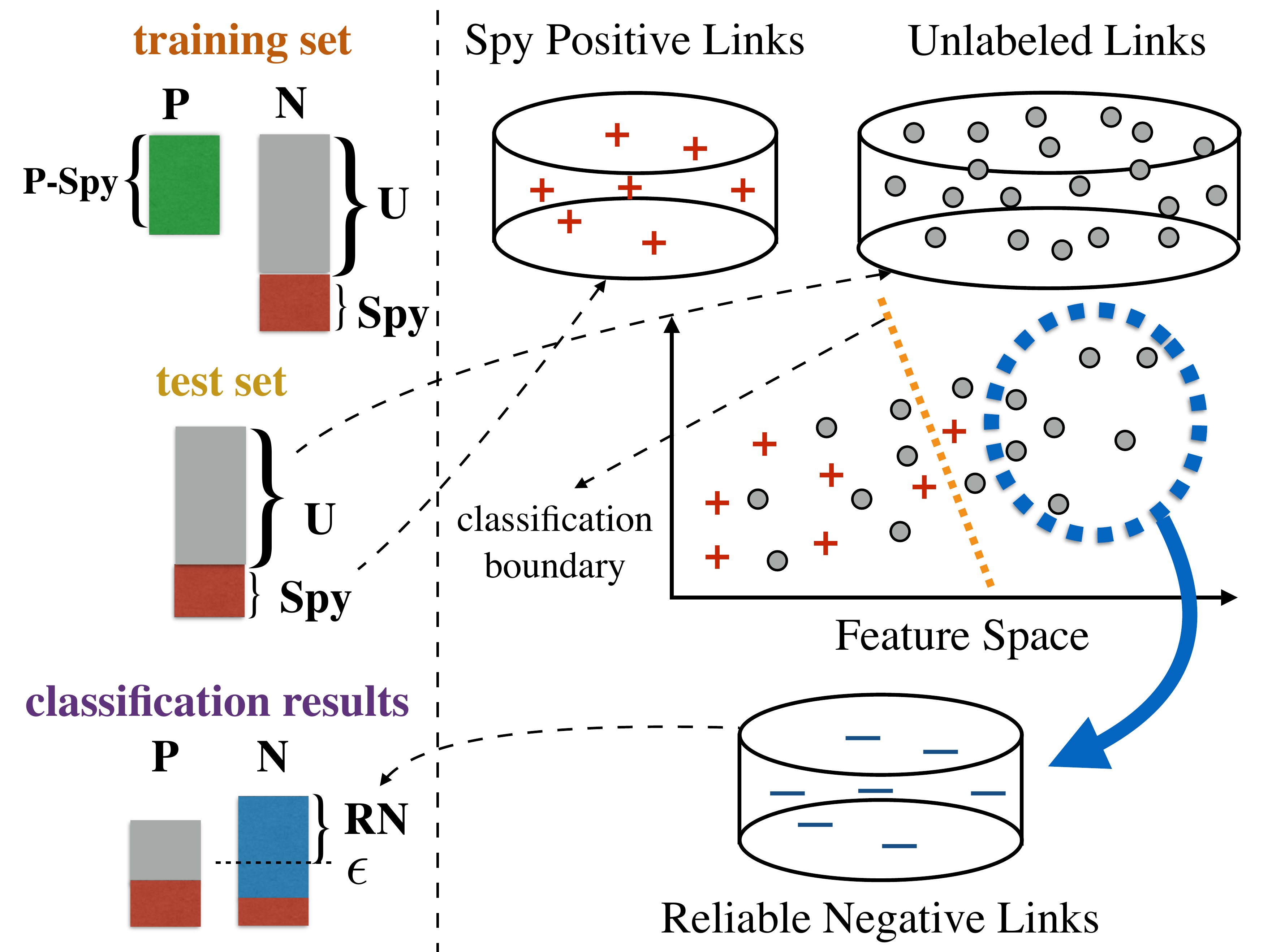}
    \end{minipage}
  \caption{PU Link Prediction.}\label{fig:chap7_sec5_pulinkprediction}
\end{figure}

However, from the network, we have no information about ``links that will never be formed'' (i.e., ``-1'' links). As a result, the \textit{formation probabilities} of potential links that we aim to obtain can be very challenging to calculate. Meanwhile, the correlation between link $l$'s \textit{connection probability} and \textit{formation probability} has been proved in existing works \cite{EN08} to be:
\begin{equation}
p(y(l) = +1 | \mb{x}(l)) \propto p(z(l) = +1 | \mb{x}(l)).
\end{equation}

In other words, for links whose \textit{connection probabilities} are low, their \textit{formation probabilities} will be relatively low as well. This rule can be utilized to extract links which can be more likely to be the reliable ``-1'' links from the network.  {\ourmli} proposes to apply the the \textit{link connection prediction model} $\mathcal{M}_{c}$ built with $\mathcal{P}$ and $\mathcal{U}$ to classify links in $\mathcal{U}$ to extract the \textit{reliable negative link set}.

\begin{defn}
(Reliable Negative Link Set): The \textit{reliable negative links} in the \textit{unconnected link} set $\mathcal{U}$ are those whose \textit{connection probabilities} predicted by the \textit{link connection prediction model}, $\mathcal{M}_c$, are lower than threshold $\epsilon \in [0,1]$:
\begin{equation}
\mathcal{RN} = \{l | l \in \mathcal{U}, p(z(l) = +1 | \mb{x}(l)) < \epsilon \}.
\end{equation}
\end{defn}

Some Heuristic methods have been proposed to set the optimal threshold $\epsilon$, e.g., the \textit{spy technique} proposed in \cite{LDLLY03}. As shown in Figure~\ref{fig:chap7_sec5_pulinkprediction}, {\ourmli} proposes randomly select a subset of links in $\mathcal{P}$ as the spy, $\mathcal{SP}$, whose proportion is controlled by $s\%$. $s\% = 15\%$ is used as the default sample rate in \cite{kdd14}. Sets $(\mathcal{P} - \mathcal{SP})$ and $(\mathcal{U} \cup \mathcal{SP})$ are used as positive and negative training sets to the \textit{\underline{s}py prediction} model, $\mathcal{M}_s$. By applying $\mathcal{M}_s$ to classify links in $(\mathcal{U} \cup \mathcal{SP})$, their \textit{connection probabilities} can be represented to be:
\begin{equation}
p(z(l) = +1 | \mb{x}(l)), l \in (\mathcal{U} \cup \mathcal{SP}),
\end{equation}
and parameter $\epsilon$ is set as the minimal \textit{connection probability} of spy links in $\mathcal{SP}$:
\begin{equation}
\epsilon = \min_{l \in \mathcal{SP}} p(z(l) = +1| \mathbf{x}(l)).
\end{equation}

With the extracted \textit{reliable negative link set} $\mathcal{RN}$, {\ourmli} can solve the \textit{PU link prediction} problem with \textit{classification based link prediction methods}, where $\mathcal{P}$ and $\mathcal{RN}$ are used as the positive and negative training sets respectively. Meanwhile, when applying the built model to predict links in $\mathcal{L}^{(i)}$, the optimal labels, $\hat{\mathcal{Y}}^{(i)}$, of $\mathcal{L}^{(i)}$, should be those which can maximize the following \textit{formation probabilities}: \begingroup\makeatletter\def\f@size{8}\check@mathfonts
\begin{align}
\hspace{-10pt}\hat{\mathcal{Y}}^{(i)} &= \arg \max_{\mathcal{Y}^{(i)}} p(y(\mathcal{L}^{(i)}) = \mathcal{Y}^{(i)} | G^{(1)}, G^{(2)}, \cdots, G^{(n)})\\
\hspace{-10pt}&= \arg \max_{\mathcal{Y}^{(i)}} p(y(\mathcal{L}^{(i)}) = \mathcal{Y}^{(i)} | \left[\mathbf{\bar{x}}_{\Phi}(\mathcal{L}^{(i)})^T, \mathbf{\bar{x}}_{\Psi}(\mathcal{L}^{(i)})^T \right]^T),
\end{align}\endgroup
where $y(\mathcal{L}^{(i)}) = \mathcal{Y}^{(i)}$ represents that links in $\mathcal{L}^{(i)}$ have labels $\mathcal{Y}^{(i)}$.

\begin{figure}[t]
\centering
    \begin{minipage}[l]{0.9\columnwidth}
      \centering
      \includegraphics[width=\textwidth]{./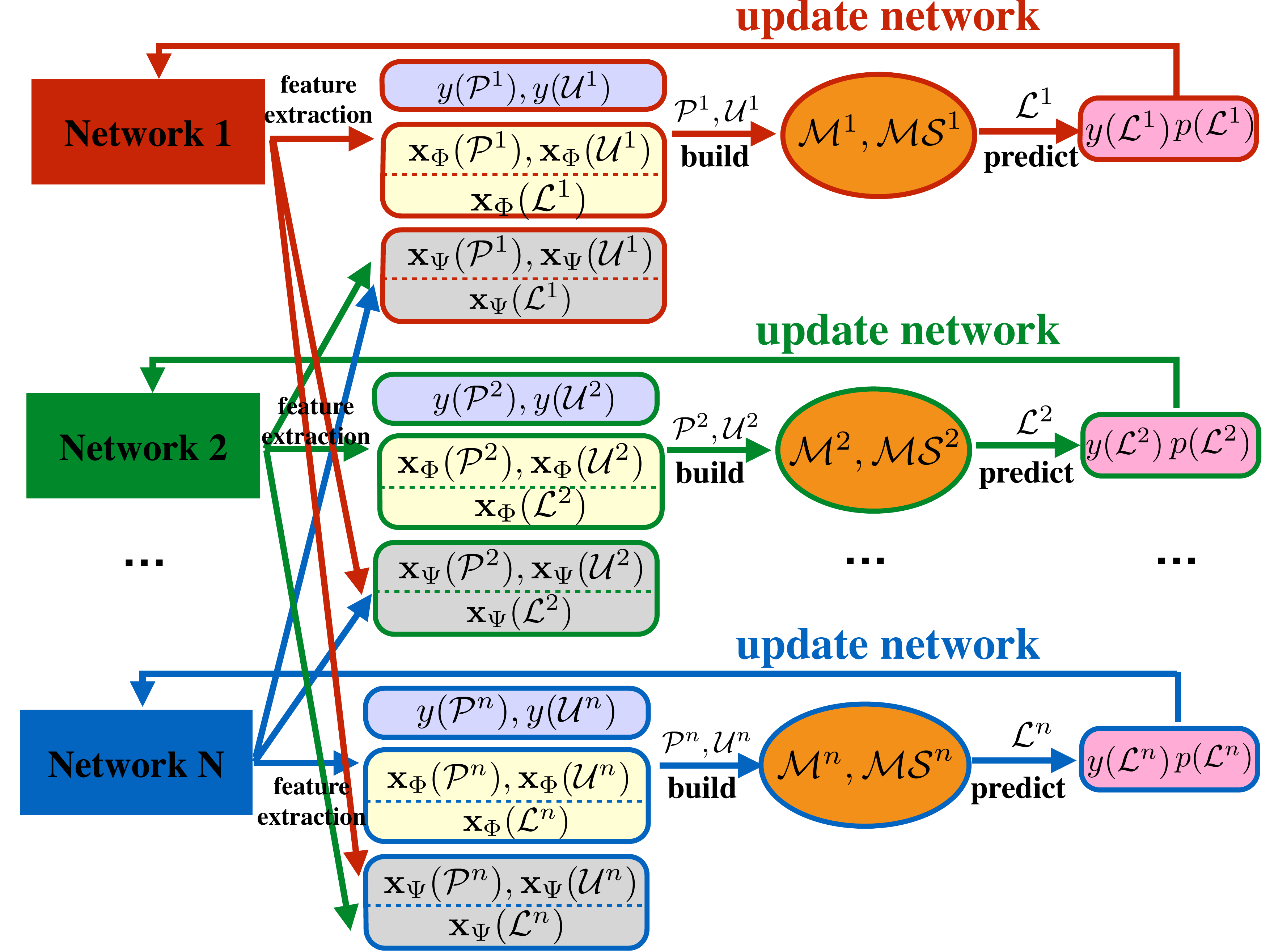}
    \end{minipage}
  \caption{Multi-PU Link Prediction Framework.}\label{fig:chap7_sec5_multipulinkprediction}
\end{figure}

\subsubsection{Multi-Network Link Prediction Framework}

Method {\ourmli} proposed in \cite{kdd14} is a general link prediction framework and can be applied to predict social links in $n$ \textit{partially aligned networks} simultaneously. When it comes to $n$ partially aligned network, the optimal labels of potential links $\{\mathcal{L}^{(1)}, \mathcal{L}^{(2)}, \cdots, \mathcal{L}^{(n)}\}$ of networks $G^{(1)}, \cdots, G^{(n)}$ will be: \begingroup\makeatletter\def\f@size{6}\check@mathfonts
\begin{align}
\hspace{-7pt}&\hat{\mathcal{Y}}^{(1)}, \hat{\mathcal{Y}}^{(2)}, \cdots, \hat{\mathcal{Y}}^{(n)} = \arg \max_{\mathcal{Y}^{(1)}, \cdots, \mathcal{Y}^{(n)}} \\
\hspace{-7pt}&\ \ \ \ \ \ \ \ \ \  p\Big(y(\mathcal{L}^{(1)}) = \mathcal{Y}^{(1)} , \cdots, y(\mathcal{L}^{(n)}) = \mathcal{Y}^n | G^{(1)}, \cdots, G^{(n)} \Big).
\end{align} \endgroup

The above target function is very complex to solve and, in \cite{kdd14}, {\ourmli} proposes to obtain the solution by updating one variable, e.g., $\mathcal{Y}^{(1)}$, and fix other variables, e.g., $\mathcal{Y}^{(2)}, \cdots, \mathcal{Y}^{(n)}$, alternatively with the following equation \cite{wsdm14}: \begingroup\makeatletter\def\f@size{6}\check@mathfonts
\begin{equation}
\begin{cases}
(\hat{\mathcal{Y}}^{(1)})^{(\tau)} \hspace{-10pt} &= \arg \max_{\mathcal{Y}^{(1)}} p\Big(y(\mathcal{L}^{(1)}) = \mathcal{Y}^{(1)} | G^{(1)}, G^{(2)}, \cdots, G^{(n)}, \\
&\ \ \ \ \ \ \ \ \ \ \ \ \ \ \ \ (\hat{\mathcal{Y}}^{2})^{(\tau - 1)}, (\hat{\mathcal{Y}}^{3})^{(\tau - 1)}, \cdots, (\hat{\mathcal{Y}}^{n})^{(\tau - 1)} \Big),\\
(\hat{\mathcal{Y}}^{(2)})^{(\tau)}  \hspace{-10pt} &= \arg \max_{\mathcal{Y}^{(2)}} p \Big(y(\mathcal{L}^{(2)}) = \mathcal{Y}^{(2)} | G^{(1)}, G^{(2)}, \cdots, G^{(n)},\\
&\ \ \ \ \ \ \ \ \ \ \ \ \ \ \ \  (\hat{\mathcal{Y}}^{(1)})^{(\tau)}, (\hat{\mathcal{Y}}^{(3)})^{(\tau - 1)}, \cdots, (\hat{\mathcal{Y}}^{(n)})^{(\tau - 1)} \Big),\\
&\cdots \cdots \\
(\hat{\mathcal{Y}}^{(n)})^{(\tau)} \hspace{-10pt} &= \arg \max_{\mathcal{Y}^{(n)}} p \Big(y(\mathcal{L}^{(n)}) = \mathcal{Y}^{(n)} | G^{(1)}, G^{(2)}, \cdots, G^{(n)}, \\
&\ \ \ \ \ \ \ \ \ \ \ \ \ \ \ \ (\hat{\mathcal{Y}}^{(1)})^{(\tau)}, (\hat{\mathcal{Y}}^{(2)})^{(\tau)}, \cdots, (\hat{\mathcal{Y}}^{(n-1)})^{(\tau)} \Big).
\end{cases} 
\end{equation}\endgroup

The structure of framework {\ourmli} is shown in Figure~\ref{fig:chap7_sec5_multipulinkprediction}. When predicting social links in network $G^{(i)}$, {\ourmli} can extract features based on the \textit{intra-network social meta path} extracted from $G^{(i)}$ and those extracted based on the \textit{inter-network social meta path} across $G^{(1)}$, $G^{(2)}$, $\cdots$, $G^{(i - 1)}$, $G^{(i + 1)}$, $\cdots$, $G^{(n)}$ for links in $\mathcal{P}^{(i)}$, $\mathcal{U}^{(i)}$ and $\mathcal{L}^{(i)}$. Feature vectors $\mathbf{x}(\mathcal{P})$, $\mathbf{x}(\mathcal{U})$ as well as the labels, $y(\mathcal{P})$, $y(\mathcal{U})$, of links in $\mathcal{P}$ and $\mathcal{U}$ are passed to the PU link prediction model $\mathcal{M}^{(i)}$ and the meta path selection model $\mathcal{MS}^{(i)}$. The formation probabilities of links in $\mathcal{L}^{(i)}$ predicted by model $\mathcal{M}^{(i)}$ will be used to update the network by replace the weights of $\mathcal{L}^{(i)}$ with the newly predicted formation probabilities. The initial weights of these potential links in $\mathcal{L}^{(i)}$ are set as $0$. After finishing these steps on $G^{(i)}$, we will move to conduct similar operations on $G^{(i+1)}$. {\ourmli} iteratively predicts links in $G^{(1)}$ to $G^{(n)}$ alternatively in a sequence until the results in all of these networks converge.

\subsection{Sparse and Low Rank Matrix Estimation based Inter-Network Link Prediction}\label{sec:chap7_sec5_matrix}

Different online social networks usually have different functions, and information in them follows totally different distributions. When predicting the links across multiple aligned online social networks, the link prediction models aforementioned didn't address the domain difference problem at all. In this section, we will introduce a new cross-network link prediction model introduced in \cite{icde17_2}, which embeds the feature vectors of links from aligned networks into a shared feature space. Via the shared feature space, knowledge from the source networks will be effectively transferred to the target network.

\subsubsection{Link Prediction Objective Function}

\noindent \textbf{Link Prediction Loss Term}

Give the target network $G^t$ involving users $\mathcal{U}^t$, the observed social connection among the users can be represented with the binary social adjacency matrix $\mb{A}^t \in \{0,1\}^{|\mathcal{U}^t| \times |\mathcal{U}^t|}$, where entry ${A}^t(i,j) = 1$ iff the corresponding social link $(u^t_i, u^t_j)$ exists between users $u^t_i$ and $u^t_j$ in $G^t$. In the studied problem here, our objective is to infer the potential unobserved social links for the target network, which can be achieved by finding a sparse and low-rank predictor matrix $\mb{S} \in \mathcal{S}$ from some convex admissible set $\mathcal{S} \subset \mathbb{R}^{|\mathcal{U}^t| \times |\mathcal{U}^t|}$. Meanwhile, the inconsistency between the inferred matrix $\mb{S}$ and the observed social adjacency matrix $\mb{A}^t$ can be represented as the loss function $l(\mb{S}, \mb{A}^t)$. The optimal social link predictor for the target network can be achieved by minimizing the loss term, i.e.,
\begin{equation}
\arg\min_{\mb{S} \in \mathcal{S}} l(\mb{S}, \mb{A}^t).
\end{equation}

The loss function $l(\mb{S}, \mb{A}^t)$ can be defined in many different ways, and, in \cite{icde17_2}, the \textit{loss function} is approximated by counting the loss introduced by the existing social links in $\mathcal{E}_u^t$, i.e., \begingroup\makeatletter\def\f@size{8}\check@mathfonts
\begin{equation}
l(\mb{S}, \mb{A}^t) = \frac{1}{|\mathcal{E}_u^t|} \sum_{(u^t_i, u^t_j) \in \mathcal{E}_u^t} \mathbbm{1}\Big(\big(A^t(i,j) - \frac{1}{2}\big)\cdot S(i,j) \le 0\Big).
\end{equation}\endgroup

\noindent \textbf{Intra-Network Attribute based Intimacy Term}

Besides the connection information, there also exists a large amount of attribute information available in the target network, e.g., \textit{location checkin records}, \textit{online social activity temporal patterns}, and \textit{text usage patterns}, etc. Based on the attribute information, a set of features can be extracted  for all the potential user pairs to denote their closeness, which are called the \textit{intimacy features} formally. For instance, given user pair $(u^t_i, u^t_j)$ in the target network, its \textit{intimacy features} can be represented as vector $\mb{x}^t_{i,j} \in \mathbb{R}^{d^t}$ ($d^t$ denotes the extracted intimacy feature number).

More generally, the feature vectors extracted for user pairs can be represented as a 3-way tensor $\mb{X}^t \in \mathbb{R}^{d^t \times |\mathcal{U}^t| \times |\mathcal{U}^t|}$, where slice $\mb{X}^t(k,:,:)$ denote all the $k_{th}$ intimacy features among all the user pairs. In online social networks, \textit{homophily} principle \cite{MSC01} has been observed to widely structure the users' online social connections, and users who are close to each other are more likely to be friends. Based on such an intuition, the potential social connection matrix $\mb{S}$ can be inferred by maximizing the overall intimacy scores of the inferred new social connections, i.e.,
\begin{align}
\arg \max_{\mb{S} \in \mathcal{S}} int(\mb{S}, \mb{X}^t).
\end{align}

In \cite{icde17_2}, the introduced model proposes to define the intimacy score term $int(\mb{S}, \mb{X}^t)$ by enumerating and summing the \textit{intimacy scores} of the inferred social connections, i.e., 
\begin{equation}
int(\mb{S}, \mb{X}^t) = \sum_{k=1}^{d^t} \left\| \mb{S} \circ \mb{X}^t(k,:,:)\right\|_1,
\end{equation}
where operator $\circ$ denotes the Hadamard product (i.e., entrywise product) of matrices.

\noindent \textbf{Intra-Network Attribute based Intimacy Term}

Furthermore, with the information from the external source networks, more knowledge can be obtained about the users and their social patterns. By projecting the link instances to a shared feature space as introduced in \cite{icde17_2}, the the adapted features from the target network and external sources can be represented as tensors $\mb{\hat{X}}^t, \mb{\hat{X}}^1, \cdots, \mb{\hat{X}}^K$. Formally, the intimacy scores of the potential social links based on these adapted features from the external source networks can be represented as
\begin{align}
int(\mb{S}, \mb{\hat{X}}^1, \cdots, \mb{\hat{X}}^K) =  \sum_{k=1}^K \alpha^i \cdot int(\mb{S}, \mb{\hat{X}}^k),
\end{align}
where term $int(\mb{S}, \mb{\hat{X}}^k) = \left\| \mb{S} \circ \mb{\hat{X}}^k \right\|_1$, and users in $\mb{\hat{X}}^k$ are organized in the same order as $\mb{{X}}^t$. Parameters $\alpha^i$ denotes the importance of the information transferred from the source network $G^i$. 

\noindent \textbf{Joint Objective Function}

By adding the intimacy terms about the source networks into the objective function, the equation can be rewriten as follows: \begingroup\makeatletter\def\f@size{8}\check@mathfonts
\begin{align}
\arg \min_{\mb{S} \in \mathcal{S}}\ \  & l(\mb{S}, \mb{A}^t) - \alpha^t \cdot int(\mb{S}, \mb{\hat{X}}^t) - \sum_{k=1}^K \alpha^i \cdot int(\mb{S}, \mb{\hat{X}}^k))\\
&+ \gamma \cdot \left\| \mb{S} \right\|_1 + \tau \cdot \left\| \mb{S} \right\|_*,
\end{align}\endgroup
where $\left\| \mb{S} \right\|_1$ and $\left\| \mb{S} \right\|_*$ denote the $L_1$-norm and trace-norm of matrix $\mb{S}$ respectively.

\subsubsection{Proximal Operator based CCCP Algorithm}

By studying the objective function, we observe that the intimacy terms are convex while the empirical loss term $l(\mb{S}, \mb{A}^t)$ is non-convex. In \cite{icde17_2}, the introduced model proposes to approximate it with other classical loss functions (e.g., the hinge loss and the Frobenius norm) instead, and the convex squared Frobenius norm loss function is used in \cite{icde17_2} (i.e., $l(\mb{S}, \mb{A}^t) = \left\| \mb{S} - \mb{A}^t \right\|_F^2$). Therefore, the above objective function can be represented as a convex loss term minus another convex term together with two convex non-differentiable regularizers, which actually renders the objective function non-trivial. According to the existing works \cite{YR03, SL09}, this kind of objective function can be addressed with the concave-convex procedure (CCCP). CCCP is a majorization-minimization algorithm that solves the difference of convex functions problems as a sequence of convex problems. Meanwhile, the regularization terms can be effectively handled with the proximal operators in each iteration of the CCCP process.

\noindent \textbf{CCCP Algorithm}

Formally, the objective function can be decomposed into two convex functions:
\begin{align}
&u(\mb{S}) = l(\mb{S}, \mb{A}^t) + \gamma \cdot \left\| \mb{S} \right\|_1 + \tau \cdot \left\| \mb{S} \right\|_*,\\
&v(\mb{S}) = \alpha^t \cdot int(\mb{S}, \mb{\hat{X}}^t) + \sum_{k=1}^K \alpha^i \cdot int(\mb{S}, \mb{\hat{X}}^k).
\end{align}

With $u(\mb{S})$ and $v(\mb{S})$, the objective function can be rewritten as
\begin{equation}
\arg \min_{\mb{S} \in \mathcal{S}} u(\mb{S}) - v(\mb{S}).
\end{equation}

The CCCP algorithm can address the objective function with an iterative procedure that solves the following sequence of convex problems:
\begin{align}
\mb{S}^{(h + 1)} &= \arg \min_{\mb{S} \in \mathcal{S}} u(\mb{S}) - \mb{S}^\top \nabla v(\mb{S}^{(h)}).
\end{align}

It is easy to show that function $v(\mb{S})$ differentiable, and the derivative of function $v(\mb{S})$ is actually a constant term
\begin{equation}
\nabla v(\mb{S})= \sum_{k =t}^K \alpha^i \sum_{i = 1}^c \mb{\hat{X}}^k(i,:,:).
\end{equation}

By relying on the Zangwill's global convergence theory \cite{Z69} of iterative algorithms, it is theoretically proven in \cite{SL09} that as such a procedure continues, the generated sequence of the variables $\{\mb{S}^{(h)}\}_{h=0}^{\infty}$ will converge to some stationary points $\mb{S}_*$ in the inference space $\mathcal{S}$.

\noindent \textbf{Proximal Operators}

Meanwhile, in each iteration of the CCCP updating process, objective function is not easy to address due to the non-differentiable regularizers. Some works have been done to deal with the objective function involving non-smooth functions. The Forward-Backward splitting method proposed in \cite{CW05} can handle such a kind of optimization function with one single non-smooth regularizer based on the introduced proximal operators. More specifically, as introduced in \cite{CW05}, the proximal operators for the trace norm and $L_1$ norm can be represented as follows
\begin{align}
&\mbox{prox}_{\tau \left\| \cdot \right\|_*}(\mb{S}) = \mb{U} \mbox{diag}((\sigma_i - \tau)_+)_i \mb{V}^\top,\\
&\mbox{prox}_{\gamma \left\| \cdot \right\|_1}(\mb{S}) = \mbox{sgn}(\mb{S}) \circ (|\mb{S}| - \gamma)_+,
\end{align}
where $\mb{S} =\mb{U} \mbox{diag}(\sigma_i)_i \mb{V}^\top$ denotes the singular decomposition of matrix $\mb{S}$, and $\mbox{diag}(\sigma_i)_i$ represents the diagonal matrix with values $\sigma_i$ on the diagonal.

Recently, some works have proposed the generalized Forward-Backward algorithm to tackle the case with $q (q \ge 2)$ non-differentiable convex regularizers \cite{RFP13}. These methods alternate the gradient step and the proximal steps to update the variables. For instance, given the above objective function in iteration $h$ of the CCCP, the alternative updating equations in step $k$ to address the objective function can be represented as follows:
\begin{equation}\hspace{-1pt}
\begin{cases}
\mb{S}^{(k)} &= \mb{S}^{(k-1)} - \theta \cdot \nabla_{\mb{S}}\left(l(\mb{S}, \mb{A}) - \mb{S}^\top \nabla v(\mb{S}^{(h)})\right),\\
\mb{S}^{(k)} &= \mbox{prox}_{\theta \tau \left\| \cdot \right\|_*}(\mb{S}^{(k)}),\\
\mb{S}^{(k)} &= \mbox{prox}_{\theta \gamma \left\| \cdot \right\|_1}(\mb{S}^{(k)}),
\end{cases}
\end{equation}
where the parameter $\theta$ denotes the learning rate and it is assigned with a very small value to ensure the converge of the above functions \cite{SRV12}. The pseudo-code of the Proximal Operators based CCCP algorithm is available in Algorithm~\ref{alg:cccp_proximal}.

\setlength{\textfloatsep}{0pt}
\begin{algorithm}[t]
\small
\caption{Proximal Operator Based CCCP Algorithm}
\label{alg:cccp_proximal}
\begin{algorithmic}[1]
	\REQUIRE social adjacency matrix $\mb{A}$ \\
\qquad  projected feature tensors $\mb{\hat{X}}^t$, $\mb{\hat{X}}^1$, $\cdots$, $\mb{\hat{X}}^K$\\
\ENSURE link predictor matrix $\mb{S}$ 

\STATE 	{Initialize matrix $\mb{S}_{cccp} = \mb{A}$}
\STATE	{Initialize CCCP convergence CCCP-tag = False}
\WHILE	{CCCP-tag == False}
\STATE	{Initialize Proximal convergence Proximal-tag = False}
\STATE	{Solve optimization function $\min_{\mb{S} \in \mathcal{S}} u(\mb{S}) - \mb{S}^\top \nabla v(\mb{S}_{cccp})
$}
\STATE	{Initialize $\mb{S}_{po} = \mb{S}_{cccp}$}
\WHILE	{Proximal-tag == False}
\STATE	{$\mb{S}_{po} = \mb{S}_{po} - \theta \nabla_{\mb{S}}\left(l(\mb{S}_{po}, \mb{A}) - \mb{S}_{po}^\top \nabla v(\mb{S}_{cccp}) \right)$}
\STATE	{$\mb{S}_{po} = \mbox{prox}_{\theta \tau \left\| \cdot \right\|_*}(\mb{S}_{po})$}
\STATE	{$\mb{S}_{po} = \mbox{prox}_{\theta \gamma \left\| \cdot \right\|_1}(\mb{S}_{po})$}
\IF		{$\mb{S}_{po}$ converges}
\STATE	{Proximal-tag = True}
\STATE	{$\mb{S}_{cccp} = \mb{S}_{po}$}
\ENDIF
\ENDWHILE
\IF		{$\mb{S}_{cccp}$ converges}
\STATE	{CCCP-tag = True}
\ENDIF
\ENDWHILE
\STATE	{\mbox{Return} $\mb{S}_{cccp}$}
\end{algorithmic}
\end{algorithm}




\section{Community Detection}\label{sec:clustering}

In the real-world online social networks, users tend to form different social groups \cite{Arenas2004}. Users belonging to the same groups usually have more frequent interactions with each other, while those in different groups will have less interactions on the other hand \cite{ZLZ11}. Formally, such social groups form by users in online social networks are called the online social communities \cite{bigdata15}. Online social communities will partition the network into a number of connected components, where the intra-community social connections are usually far more dense compared with the inter-community social connections \cite{bigdata15}. Meanwhile, from the mathematical representation perspective, due to these online social communities, the social network adjacency matrix tend to be not only sparse but also low-rank \cite{icde17}. 

Identifying the social communities formed by users in online social networks is formally defined as the \textit{community detection} problem \cite{bigdata15, sdm15, bigdata14}. Community detection is a very important problem for online social network studies, as it can be crucial prerequisite for numerous concrete social network services: (1) better organization of users' friends in online social networks (e.g., Facebook and Twitter), which can be achieved by applying community detection techniques to partition users' friends into different categories, e.g., schoolmates, family, celebrities, etc. \cite{EAMK14}; (2) better recommender systems for users with common shopping preference in e-commerce social sites (e.g., Amazon and Epinions), which can be addressed by grouping users with similar purchase records into the same clusters prior to recommender system building \cite{R16}; and (3) better identification of influential users \cite{TBB10} for advertising campaigns in online social networks, which can be attained by selecting the most influential users in each community as the seed users in the viral marketing \cite{RD02}.

In this section, we will focus on introducing the \textit{social community detection} problem in online social networks. Given a heterogeneous network $G$ with node set $\mathcal{V}$, the involved user nodes in network $G$ can be represented as set $\mathcal{U} \subset \mathcal{V}$. Based on both the social structures among users as well as the diverse attribute information from the network $G$, the \textit{social community detection} problem aims at partitioning the user set $\mathcal{U}$ into several subsets $\mathcal{C} = \{\mathcal{U}_1, \mathcal{U}_2, \cdots, \mathcal{U}_k\}$, where each subset $\mathcal{U}_i, i \in \{1, 2, \cdots, k\}$ is called a social community. Term $k$ formally denotes the total number of partitioned communities, which is usually provided as a hyper-parameter in the problem.

Depending on whether the users are allowed to be partitioned into multiple communities simultaneously or not, the \textit{social community detection} problem can actually be categorized into two different types:
\begin{itemize}

\item \textit{Hard Social Community Detection}: In the \textit{hard social community detection} problem, each user will be partitioned into one single community, and all the social communities are disjoint without any overlap. In other words, given the communities $\mathcal{C} = \{\mathcal{U}_1, \mathcal{U}_2, \cdots, \mathcal{U}_k\}$ detected from network $G$, we have $\mathcal{U} = \bigcup_i \mathcal{U}_i$ and $\mathcal{U}_i \cap \mathcal{U}_j = \emptyset, \forall i, j \in \{1, 2, \cdots, k\} \land i \neq j$.

\item \textit{Soft Social Community Detection}: In the \textit{soft social community detection} problem, users can belong to multiple social communities simultaneously. For instance, if we apply the \textit{Mixture-of-Gaussian Soft Clustering} algorithm as the base community detection model \cite{ZY12, YML13}, each user can belong to multiple communities with certain probabilities. In the \textit{soft social community detection} result, the communities are no longer disjoint and will share some common users with other communities. 

\end{itemize}


Meanwhile, depending on the network connection structures, the \textit{community detection} problem can be categorized as \textit{directed network community detection} \cite{MV13} and \textit{undirected network community detection} \cite{ZLZ11}. Based on the heterogeneity of the network information, the \textit{community detection} problem can be divided into the \textit{homogeneous network community detection} \cite{WC03} and \textit{heterogeneous network community detection} \cite{ijcnn16, SYH09, cikm17, icde17}. Furthermore, according to the number of networks involved, the \textit{community detection} problem involves \textit{single network community detection} \cite{LLM10} and \textit{multiple network community detection} \cite{bigdata15, sdm15, bigdata14, cikm17, icde17}. In this section, we will take the \textit{hard community detection problem} as an example to introduce the existing models proposed for conventional (one single) \textit{homogeneous social network}, and especially the recent broad learning based (multiple aligned) \textit{heterogeneous social networks} \cite{cikm13, icdm13, wsdm14, kdd14} respectively.

This section is organized as follows. At the beginning, in Section~\ref{sec:chap8_sec2_single}, we will introduce the community detection problem and the existing methods proposed for traditional one single homogeneous networks. After that, we will talk about the latest research works on social community detection across multiple aligned heterogeneous networks. The cold start community detection \cite{sdm15} is introduced in Section~\ref{sec:chap8_sec3_cold}, in which we will talk about a new information transfer algorithm to propagate information from other developed source networks to the emerging target network. In Section~\ref{sec:chap8_sec4_mutual}, we will be focused on the concurrent mutual community detection \cite{bigdata15} across multiple aligned heterogeneous networks simultaneously, where information from other aligned networks will be applied to refine their community detection results mutually. Finally, in Section~\ref{sec:chap8_sec5_large}, we talk about the synergistic community detection across multiple large-scale networks based on the distributed computing platform \cite{bigdata14}.


\subsection{Traditional Homogeneous Network Community Detection}\label{sec:chap8_sec2_single}

Social community detection problem has been studied for a long time, and many community detection models have been proposed based on different types of techniques. In this section, we will talk about the social community detection problem for one single homogeneous network $G$, whose objective is to partition the user set $\mathcal{U}$ in network $G$ into $k$ disjoint subsets $\mathcal{C} = \{\mathcal{U}_1, \mathcal{U}_2, \cdots, \mathcal{U}_k\}$, where $\mathcal{U} = \bigcup_i \mathcal{U}_i$ and $\mathcal{U}_i \cap \mathcal{U}_j = \emptyset, \forall i, j \in \{1, 2, \cdots, k\}$. Several different community detection methods will be introduced, which include \textit{node proximity based community detection}, \textit{modularity maximization based community detection}, and \textit{spectral clustering based community detection}.


\subsubsection{Node Proximity based Community Detection}\label{sec:chap8_sec2_proximity}

The \textit{node proximity based community detection} method assumes that ``close nodes tend to be in the same communities, while the nodes far away from each other will belong to different communities''. Therefore, the \textit{node proximity based community detection} model partition the nodes into different clusters based on the node proximity measures \cite{LK07}. Various node proximity measures can be used here, including the node \textit{structural equivalence} to be introduced as follows, as well as various node closeness measures as introduced in Section~\ref{subsec:closeness_measures}.

In a homogeneous network $G$, the proximity of nodes, like $u$ and $v$, can be calculated based on their positions and connections in the network structure. 

\begin{defn}
(Structural Equivalence): Given a network $G = (\mathcal{V}, \mathcal{E})$, two nodes $u, v \in \mathcal{V}$ are said to be \textit{structural equivalent} iff \begin{enumerate}
\item Nodes $u$ and $v$ are not connected and $u$ and $v$ share the same set of neighbors (i.e., $(u, v) \notin \mathcal{E} \land \Gamma(u) = \Gamma(v)$), \\ 
\item Or $u$ and $v$ are connected and excluding themselves, $u$ and $v$ share the same set of neighbors (i.e., $(u, v) \in \mathcal{E} \land \Gamma(u) \setminus \{v\} = \Gamma(v) \setminus \{u\}$).
\end{enumerate}
\end{defn}

For the nodes which are \textit{structural equivalent}, they are \textit{substitutable} and switching their positions will not change the overall network structure. The \textit{structural equivalence} concept can be applied to partition the nodes into different communities. For the nodes which are \textit{structural equivalent}, they can be grouped into the same communities, while for the nodes which are not equivalent in their positions, they will be partitioned into different groups. However, the \textit{structural equivalence} can be too restricted for practical application in detecting the communities in real-world social networks. Computing the \textit{structural equivalence} relationships among all the node pairs in the network can lead to very high time cost. What's more, the \textit{structural equivalence} relationship will partition the social network structure into lots of small-sized fragments, since the users will have different social patterns in making friends online and few user will have identical neighbors actually.

To avoid the weakness mentioned above, some other measures are proposed to measure the proximity among nodes in the networks. For instance, as introduced in Section~\ref{subsec:closeness_measures}, the node closeness measures based on the social connections can all be applied here to compute the node proximity, e.g., ``common neighbor'', ``Jaccard's coefficient''. Here, if we use ``common neighbor'' as the proximity measure, by applying the ``common neighbor'' measure to the network $G$, the network $G$ can be transformed into a set of instances $\mathcal{V}$ with mutual closeness scores $\{c(u, v)\}_{u, v \in \mathcal{V}}$. Some existing similarity/distance based clustering algorithms, like k-Medoids, can be applied to partition the users into different communities.


\subsubsection{Modularity Maximization based Community Detection}

Besides the pairwise proximity of nodes in the network, the connection strength of a community is also very important in the community detection process. Different measures have been proposed to compute the strength of a community, like the \textit{modularity} measure \cite{N06} to be introduced in this part. 

The \textit{modularity} measure takes account of the node degree distribution. For instance, given the network $G$, the expected number of links existing between nodes $u$ and $v$ with degrees $D(u)$ and $D(v)$ can be represented as $\frac{D(u) \cdot D(v)} {2 |\mathcal{E}|}$. Meanwhile, in the network, the real number of links existing between $u$ and $v$ can be denoted as entry $A[u, v]$ in the social adjacency matrix $\mb{A}$. For the user pair $(u, v)$ with a low expected connection confidence score, if they are connected in the real world, it indicates that $u$ and $v$ have a relatively strong relationship with each other. Meanwhile, if the community detection algorithm can partition such user pairs into the same group, it will be able to identify very strong social communities from the network.

Based on such an intuition, the strength of a community, e.g., $\mathcal{U}_i \in \mathcal{C}$ can be defined as  \begingroup\makeatletter\def\f@size{8}\check@mathfonts
\begin{equation}
\sum_{u, v \in \mathcal{U}_i} \left(A[u, v] - \frac{D(u) \cdot D(v)} {2 |\mathcal{E}|} \right).
\end{equation}\endgroup

Furthermore, the strength of the overall community detection result $\mathcal{C} = \{\mathcal{U}_1, \mathcal{U}_2, \cdots, \mathcal{U}_k\}$ can be defined as the \textit{modularity} of the communities as follows.

\begin{defn}
(Modularity): Given the community detection result $\mathcal{C} = \{\mathcal{U}_1, \mathcal{U}_2, \cdots, \mathcal{U}_k\}$, the modularity of the community structure is defined as  \begingroup\makeatletter\def\f@size{8}\check@mathfonts
\begin{equation}
Q(\mathcal{C}) = \frac{1}{2|\mathcal{E}|} \sum_{\mathcal{U}_i \in \mathcal{C}} \sum_{u, v \in \mathcal{U}_i} \left(A[u, v] - \frac{D(u) \cdot D(v)} {2 |\mathcal{E}|} \right).
\end{equation}\endgroup
\end{defn}
The \textit{modularity} concept effectively measures the strength of the detected community structure. Generally, for a community structure with a larger \textit{modularity} score, it indicates a good community detection result. 

Another way to explain the \textit{modularity} is from the number of links within and across communities. By rewriting the above \textit{modularity} equation, we can have \begingroup\makeatletter\def\f@size{6}\check@mathfonts
\begin{align}
&Q(\mathcal{C}) \\
&= \frac{1}{2|\mathcal{E}|} \sum_{\mathcal{U}_i \in \mathcal{C}} \sum_{u, v \in \mathcal{U}_i} \left(A[u, v] - \frac{D(u) \cdot D(v)} {2 |\mathcal{E}|} \right)\\
&= \frac{1}{2|\mathcal{E}|} \left( \sum_{\mathcal{U}_i \in \mathcal{C}} \sum_{u, v \in \mathcal{U}_i} A[u, v] - \sum_{\mathcal{U}_i \in \mathcal{C}} \sum_{u, v \in \mathcal{U}_i} \frac{D(u) \cdot D(v)} {2 |\mathcal{E}|} \right)\\
&= \frac{1}{2|\mathcal{E}|} \left( \sum_{\mathcal{U}_i \in \mathcal{C}} \sum_{u, v \in \mathcal{U}_i} A[u, v] - \frac{1}{2 |\mathcal{E}|} \sum_{\mathcal{U}_i \in \mathcal{C}} \sum_{u \in \mathcal{U}_i} D(u) \sum_{u \in \mathcal{U}_i} D(v) \right)\\
&= \frac{1}{2|\mathcal{E}|} \left( \sum_{\mathcal{U}_i \in \mathcal{C}} \sum_{u, v \in \mathcal{U}_i} A[u, v] - \frac{1}{2 |\mathcal{E}|} \sum_{\mathcal{U}_i \in \mathcal{C}} (\sum_{u \in \mathcal{U}_i} D(u) )^2\right).
\end{align}\endgroup

In the above equation, term $\sum_{u, v \in \mathcal{U}_i} A[u, v]$ denotes the number of links connecting users within the community $\mathcal{U}_i$ (which will be $2$ times the intra-community links for undirected networks, as each link will be counted twice). Term $\sum_{u \in \mathcal{U}_i} D(u)$ denotes the sum of node degrees in community $\mathcal{U}_i$, which equals to the number of intra-community and inter-community links connected to nodes in community $\mathcal{U}_i$. If there exist lots of inter-community links, then the \textit{modularity} measure will have a smaller value. On the other hand, if the inter-community links are very rare, the \textit{modularity} measure will have a larger value. Therefore, maximizing the community \textit{modularity} measure is equivalent to minimizing the inter-community link numbers. 

The \textit{modularity} measure can also be represented with linear algebra equations. Let matrix $\mb{A}$ denote the adjacency matrix of the network, and vector $\mb{d} \in \mathbb{R}^{|\mathcal{V}| \times 1}$ denote the degrees of nodes in the network. The \textit{modularity matrix} can be defined as
\begin{equation}
\mb{B} = \mb{A} - \frac{\mb{d}\mb{d}^\top}{2 |\mathcal{E}|}.
\end{equation}

Let matrix $\mb{H} \in \{0, 1\}^{|\mathcal{V}| \times k}$ denotes the communities that users in $\mathcal{V}$ belong to. In real application, such a binary constraint can be relaxed to allow real value solutions for matrix $\mb{H}$. The optimal community detection result can be obtained by solving the following objective function
\begin{align}
&\max \frac{1}{2 |\mathcal{E}|} \mbox{Tr}(\mb{H}^\top \mb{B} \mb{H})\\
&s.t. \ \ \mb{H}^\top\mb{H} = \mb{I},
\end{align}
where constraint $\mb{H}^\top\mb{H} = \mb{I}$ ensures there are not overlap in the community detection result.

The above objective function looks very similar to the objective function of \textit{spectral clustering} to be introduced in the next section. After obtaining the optimal $\mb{H}$, the communities can be obtained by applying the K-Means algorithm to $\mb{H}$ to determine the cluster labels of each node in the network.


\subsubsection{Spectral Clustering based Community Detection}\label{subsec:chap8_sec2_spectral}

In the community detection process, besides maximizing the proximity of nodes belonging to the same communities (as introduced in Section~\ref{sec:chap8_sec2_proximity}), minimizing the connections among nodes in different clusters is also an important factor. Different from the previous proximity based community detection algorithms, another way to address the community detection problem is from the cost perspective. Partition the nodes into different clusters will cut the links among the clusters. To ensure the nodes partitioned into different clusters have less connections with each other, the number of links to be cut in the community detection process should be as small as possible \cite{SM00, L07}.

\noindent \textbf{Cut}

Formally, given the community structure $\mathcal{C} = \{\mathcal{U}_1, \mathcal{U}_2, \cdots, \mathcal{U}_k\}$ detected from network $G$. The number of links cut \cite{SM00} between communities $\mathcal{U}_i, \mathcal{U}_j \in \mathcal{C}$ can be represented as
\begin{equation}
cut(\mathcal{U}_i, \mathcal{U}_j) = \sum_{u \in \mathcal{U}_i} \sum_{v \in \mathcal{U}_j} I(u, v),
\end{equation}
where function $I(u, v) = 1$ if $(u, v) \in \mathcal{E}$; otherwise, it will be $0$.

The total number of links cut in the partition process can be represented as
\begin{equation}
cut(\mathcal{C}) = \sum_{\mathcal{U}_i \in \mathcal{C}} cut(\mathcal{U}_i, \bar{\mathcal{U}}_i),
\end{equation}
where set $\bar{\mathcal{U}}_i = \mathcal{C} \setminus \mathcal{U}_i$ denotes the remaining communities except $\mathcal{U}_i$.

By minimizing the cut cost introduced in the partition process, the optimal community detection result can be obtained with the minimum number of cross-community links. However, as introduced in \cite{SM00, L07}, by minimizing the cut of edges across clusters, the results may involve high imbalanced communities, some community may involve one single node. Such a problem will be much more severe when it comes to the real-world social network data. In the following part of this section, we will introduce two other cost measures that can help achieve more balanced community detection results.


\noindent \textbf{Ratio-Cut and Normalized-Cut}

As shown in the example, the minimum cut cost treat all the links in the network equally, and can usually achieve very imbalanced partition results (e.g., a singleton node as a cluster) when applied in the real-world community detection problem. To overcome such a disadvantage, some models have been proposed to take the community size into consideration. The community size can be calculated by counting the number of nodes or links in each community, which will lead to two new cost measures: \textit{ratio-cut} and \textit{normalized-cut} \cite{SM00, L07}.

Formally, given the community detection result $\mathcal{C} = \{\mathcal{U}_1, \mathcal{U}_2,\\ \cdots, \mathcal{U}_k\}$ in network $G$, the \textit{ratio-cut} and \textit{normalized-cut} costs introduced in the community detection result can be defined as follows respectively.

\begin{equation}
ratio-cut(\mathcal{C}) = \frac{1}{k} \sum_{\mathcal{U}_i \in \mathcal{C}} \frac{cut(\mathcal{U}_i, \bar{\mathcal{U}}_i)}{|\mathcal{U}_i|},
\end{equation}
where $|\mathcal{U}_i|$ denotes the number of nodes in community $\mathcal{U}_i$.

\begin{equation}
ncut(\mathcal{C}) = \frac{1}{k} \sum_{\mathcal{U}_i \in \mathcal{C}} \frac{cut(\mathcal{U}_i, \bar{\mathcal{U}}_i)}{vol(\mathcal{U}_i)},
\end{equation}
where $vol(\mathcal{U}_i)$ denotes the degree sum of nodes in community $\mathcal{U}_i$.

As shown in the above example, from the computed costs, we find that the community detected in plot C achieves much lower ratio-cut and ncut costs compared with those in plots B and D. Compared against the regular \textit{cut} cost, both \textit{ratio-cut} and \textit{normalized-cut} prefer a balanced partition of the social network.


\noindent \textbf{Spectral Clustering}

Actually the objective function of both \textit{ratio-cut} and \textit{normalized-cut} can be unified as the following linear algebra equation
\begin{equation}
\min_{\mb{H} \in \{0, 1\}^{|\mathcal{V}| \times k}} \mbox{Tr}(\mb{H}^\top \bar{\mb{L}} \mb{H}),
\end{equation}
where matrix $\mb{H} \in \{0, 1\}^{|\mathcal{V}| \times k}$ denotes the communities that users in $\mathcal{V}$ belong to. 

Let $\mb{A} \in \{0, 1\}^{|\mathcal{V}| \times |\mathcal{V}|}$ denote the social adjacency matrix of the network, and the corresponding diagonal matrix of $\mb{A}$ can be represented as matrix $\mb{D}$, where $\mb{D}$ has value $D(i,i) = \sum_j A(i,j)$ on its diagonal. The Laplacian matrix of the network adjacency matrix $\mb{A}$ can be represented as $\mb{L} = \mb{D} - \mb{A}$. Depending on the specific measures applied, matrix $\bar{\mb{L}}$ can be represented as
\begin{equation}
\bar{\mb{L}} = 
\begin{cases}
\mb{L}, & \mbox{ for ratio-cut measure,}\\
\mb{D}^{\frac{-1}{2}} \mb{L} \mb{D}^{\frac{-1}{2}}, & \mbox{ for normalized-cut measure.}
\end{cases}
\end{equation}

The binary constraint on the variable $\mb{H}$ renders the problem a non-linear integer programming problem, which is very hard to solve. One common practice to learn the variable $\mb{H}$ is to apply spectral relaxation to replace the binary constraint with the orthogonality constraint.
\begin{align}
&\min \mbox{Tr}(\mb{H}^\top \bar{\mb{L}} \mb{H}),\\
&s.t. \mb{H}^\top \mb{H} = \mb{I}.
\end{align}

As proposed in \cite{SM00}, the optimal solution $\mb{H}^*$ to the above objective function equals to the eigen-vectors corresponding to the $k$ smallest eigen-values of matrix $\bar{\mb{L}}$.


\subsection{Emerging Network Community Detection}\label{sec:chap8_sec3_cold}

The community detection algorithms introduced in the previous section are mostly proposed for one single homogeneous network. However, in the real world, most of the online social networks are actually heterogeneous containing very complex information. In recent years, lots of new online social networks have emerged and start to provide services, the information available for the users in these emerging networks is usually very limited. Meanwhile, many of the users are also involved in multiple online social networks simultaneously. For users who are using these emerging networks, they may also be involved in other developed social networks for a long time \cite{sdm15, pakdd15}. The abundant information available in these mature networks can actually be useful for the community detection in the emerging networks. In this section, we will introduce the cross-network community detection for emerging networks with information transferred from other mature social networks \cite{sdm15}.

In this part, we will introduce the social community detection for \textit{emerging networks} with information propagated across multiple \textit{partially aligned social networks}, which is formally defined as the ``\textit{emerging network community detection}'' problem. Especially, when the network is brand new, the problem will be the ``\textit{cold start community detection}'' problem. \textit{Cold start problem} is mostly prevalent in \textit{recommender systems} \cite{icdm13}, where the system cannot draw any inferences for users or items, for which it has not yet gathered sufficient information, but few works have been done on studying the \textit{cold start problem} in clustering/community detection problems. The ``\textit{emerging network community detection}'' problem and ``\textit{cold start community detection}'' problem studied in this section are both novel problems and very different from other existing works on community detection with abundant information.

Networks studied in this section can be formulated as two partially aligned attribute augmented heterogeneous networks: $\mathcal{G} = ((G^t, G^s), (A^{t,s}, A^{s,t}))$, where $G^t$ and $G^s$ are the emerging target network and well-developed source network respectively and $A^{t,s}, A^{s,t}$ are the sets of anchor links between $G^t$ and $G^s$. Both $G^t$ and $G^s$ can be formulated as the attribute augmented heterogeneous social network, e.g., $G^t = (\mathcal{V}^t, \mathcal{E}^t, \mathcal{A}^t)$ (where sets $\mathcal{V}^t$, $\mathcal{E}^t$ and $\mathcal{A}^t$ denote the user nodes, social links and diverse attributes in the network). With information propagated across $\mathcal{G}$, the \textit{intimacy matrix}, $\mb{H}$, among users in $\mathcal{V}^t$ can be computed. \textit{emerging network community detection} problem aims at partitioning user set $\mathcal{V}^t$ of the emerging network $G^t$ into $K$ disjoint clusters, $\mathcal{C} = \{C_1, C_2, \cdots,$ $C_K\}$, based on the \textit{intimacy matrix}, $\mb{H}$, where $\bigcup_i^K C_i = \mathcal{V}^t$ and $C_i \cap C_j = \emptyset, \forall i, j \in \{1, 2, \cdots, K\}, i \neq j$. When the target network $G^t$ is brand new, i.e., $\mathcal{E}^t = \emptyset$ and $\mathcal{A}^t = \emptyset$, the problem will be the \textit{cold start community detection} problem.

To solve all the above challenges, we will introduce a novel community detection method, {\ourcad}, proposed in \cite{sdm15}. {\ourcad} introduces a new concept, \textit{intimacy}, to measure the closeness relationships among users with both link and attribute information in online social networks. Useful information from aligned well-developed networks will be propagated via {\ourcad} to the emerging network to solve the shortage of information problem.


%



\subsubsection{Intimacy Matrix of Homogeneous Network}

The {\ourcad} model is built based on the closeness scores among users, which is formally called the \textit{intimacy scores} in this section. Here, we will introduce the \textit{intimacy scores} and \textit{intimacy matrix} used in {\ourcad} from a information propagation perspective.

For a given homogeneous network, e.g., $G = (\mathcal{V}, \mathcal{E})$, where $\mathcal{V}$ is the set of users and $\mathcal{E}$ is the set of social links among users in $\mathcal{V}$, the adjacency matrix of $G$ can be defined to be $\mb{A} \in \mathbb{R}^{|\mathcal{V}| \times |\mathcal{V}|}$, where $A(i, j) = 1$, iff $(u_i, u_j) \in \mathcal{E}$. Meanwhile, via the social links in $\mathcal{E}$, information can propagate among the users within the network, whose propagation paths can reflect the closeness among users \cite{PNX12}. Formally, term
\begin{equation}
p_{ji} = \frac{A(j, i)}{\sqrt{\sum_m A(j, m) \sum_n A(n, i)}}
\end{equation}
is called the information \textit{transition probability} from $u_j$ to $u_i$, which equals to the proportion of information propagated from $u_j$ to $u_i$ in one step. 

We can use an example to illustrate how information propagates within the network more clearly. Let's assume that user $u_i \in \mathcal{V}$ injects a stimulation into network $G$ initially and the information will be propagated to other users in $G$ via the social interactions afterwards. During the propagation process, users receive stimulation from their neighbors and the amount is proportional to the difference of the amount of information reaching the user and his neighbors. Let vector $\boldsymbol{f}^{(\tau)} \in \mathbb{R}^{|\mathcal{V}|}$ denote the states of all users in $\mathcal{V}$ at time $\tau$, i.e., the proportion of stimulation at users in $\mathcal{V}$ at $\tau$. The change of stimulation at $u_i$ at time $\tau + \Delta t$ is defined as follows:
\begin{equation}
\frac{f^{(\tau + \Delta t)}(i) - f^{(\tau)}(i)}{\Delta t} = \alpha \sum_{u_j \in \mathcal{V}} p_{ji} (f^{(\tau)}(j) - f^{(\tau)}(i)),
\end{equation}
where coefficient $\alpha$ can be set as $1$. The \textit{transition probabilities} $p_{ij}, i, j \in \{1, 2, \cdots, |\mathcal{V}|\}$ can be represented with the \textit{transition matrix} 
\begin{equation}
\mb{X} = (\mb{D}^{-\frac{1}{2}} \mb{A} \mb{D}^{-\frac{1}{2}})
\end{equation} 
of network $G$, where $\mb{X} \in \mathbb{R}^{|\mathcal{V}| \times |\mathcal{V}|}$, $X(i,j) = p_{ij}$ and diagonal matrix $\mb{D} \in \mathbb{R}^{|\mathcal{V}| \times |\mathcal{V}|}$ has value $D(i,i) = \sum_{j = 1}^{|\mathcal{V}|} A(i, j)$ on its diagonal. 

\begin{defn} 
(Social Transition Probability Matrix): The \textit{social transition probability matrix} of network $G$ can be represented as $\mb{Q} = \mb{X} - \mb{D}_{\mb{X}}$, where $\mb{X}$ is the \textit{transition matrix} defined above and diagonal matrix $D_{\mb{X}}$ has value $D_{\mb{X}}(i,i) = \sum_{j = 1}^{|\mathcal{V}|} \mb{X}(i, j)$ on its diagonal.
\end{defn}

Furthermore, by setting $\Delta t = 1$, denoting that stimulation propagates step by step in a discrete time through network, the propagation updating equation can be rewritten as: \begingroup\makeatletter\def\f@size{8}\check@mathfonts
\begin{align}
\boldsymbol{f}^{(\tau)} &=\boldsymbol{f}^{(\tau - 1)} + \alpha (\mb{X} - \mb{D}_{\mb{X}}) \boldsymbol{f}^{(\tau - 1)} = (\mb{I} + \alpha \mb{Q}) f^{(\tau - 1)} \\
&= (\mb{I} + \alpha \mb{Q})^\tau \boldsymbol{f}^{(0)}.
\end{align}\endgroup
Such a propagation process will stop when $\boldsymbol{f}^{(\tau)} = \boldsymbol{f}^{(\tau - 1)}$, i.e., 
\begin{equation}
(\mb{I} + \alpha \mb{Q})^{(\tau)} = (\mb{I} + \alpha \mb{Q})^{(\tau - 1)}.
\end{equation} 
The smallest $\tau$ that can stop the propagation is defined as the \textit{stop step}. To obtain the \textit{stop step} $\tau$, {\ourcad} need to keep checking the powers of $(\mb{I} + \alpha \mb{Q})$ until it doesn't change as $\tau$ increases, i.e., the \textit{stop criteria}.

\begin{defn} 
(Intimacy Matrix): Matrix 
\begin{equation}
\mb{H} = (\mb{I} + \alpha \mb{Q})^\tau \in \mathbb{R}^{|\mathcal{V}| \times |\mathcal{V}|}
\end{equation} 
is defined as the \textit{intimacy matrix} of users in $\mathcal{V}$, where $\tau$ is the \textit{stop step} and $H(i,j)$ denotes the \textit{intimacy score} between $u_i$ and $u_j \in \mathcal{V}$ in the network.
\end{defn}


\subsubsection{Intimacy Matrix of Attributed Heterogeneous Network}

\begin{figure*}[t]
\centering
\subfigure[augmented network]{\label{fig:chap8_sec3_eg_fig_attribute_1}
\begin{minipage}[l]{0.4\columnwidth}
\centering
\includegraphics[width=1.0\textwidth]{./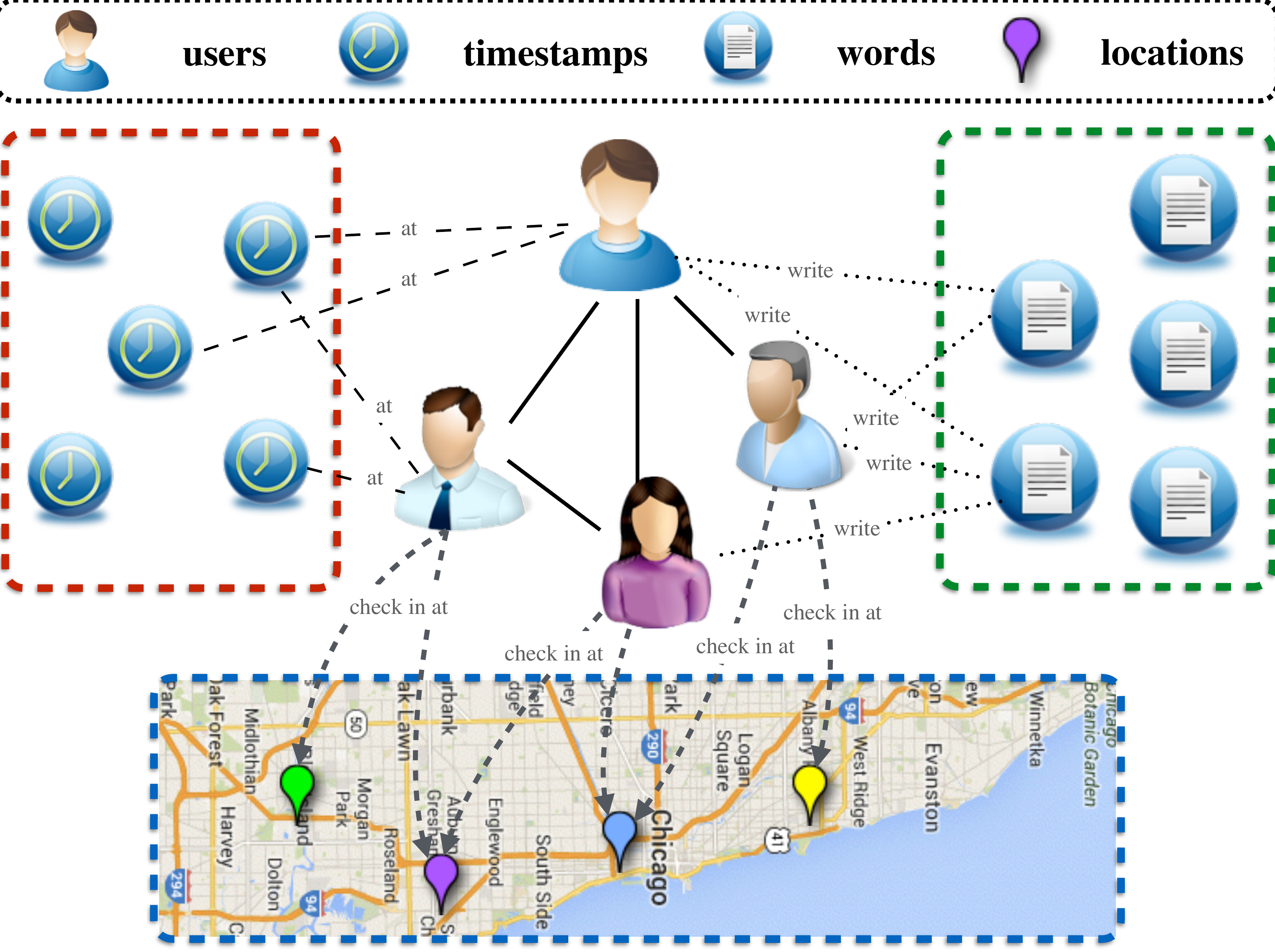}
\end{minipage}
}
\subfigure[timestamp attribute]{\label{fig:chap8_sec3_eg_fig_attribute_2}
\begin{minipage}[l]{0.4\columnwidth}
\centering
\includegraphics[width=1.0\textwidth]{./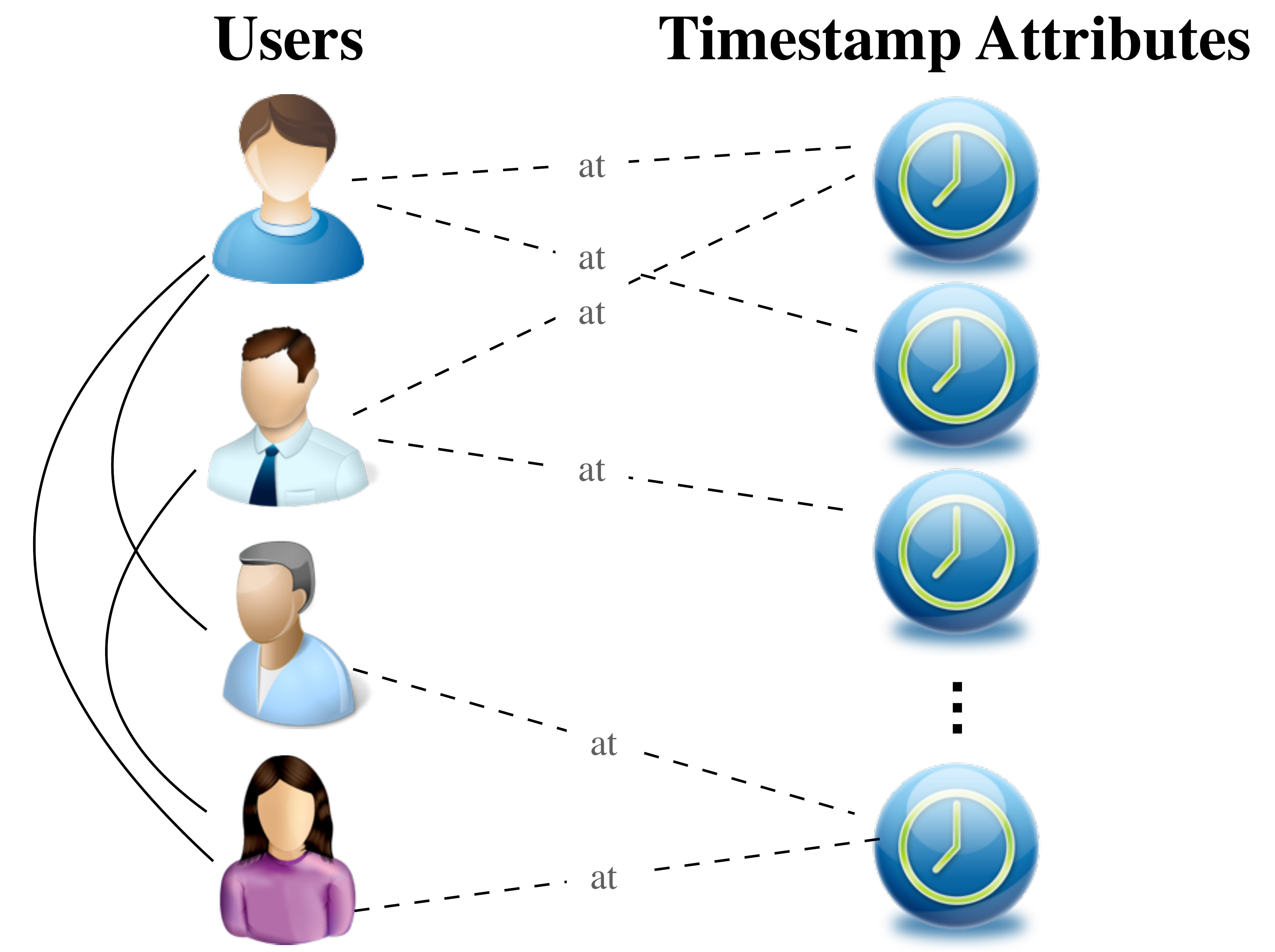}
\end{minipage}
}
\subfigure[text attribute]{\label{fig:chap8_sec3_eg_fig_attribute_3}
\begin{minipage}[l]{0.4\columnwidth}
\centering
\includegraphics[width=1.0\textwidth]{./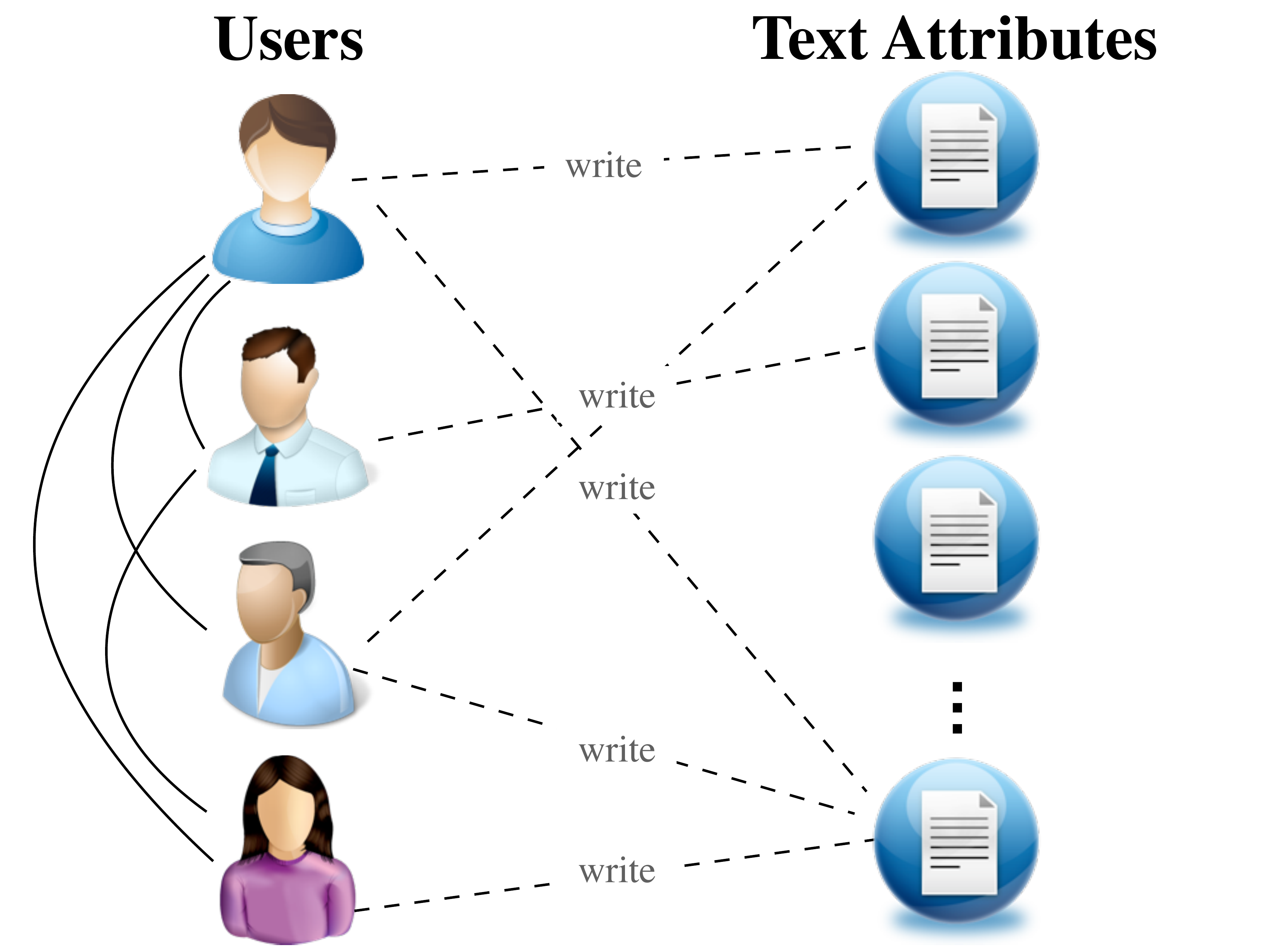}
\end{minipage}
}
\subfigure[checkin attribute]{\label{fig:chap8_sec3_eg_fig_attribute_4}
\begin{minipage}[l]{0.4\columnwidth}
\centering
\includegraphics[width=1.0\textwidth]{./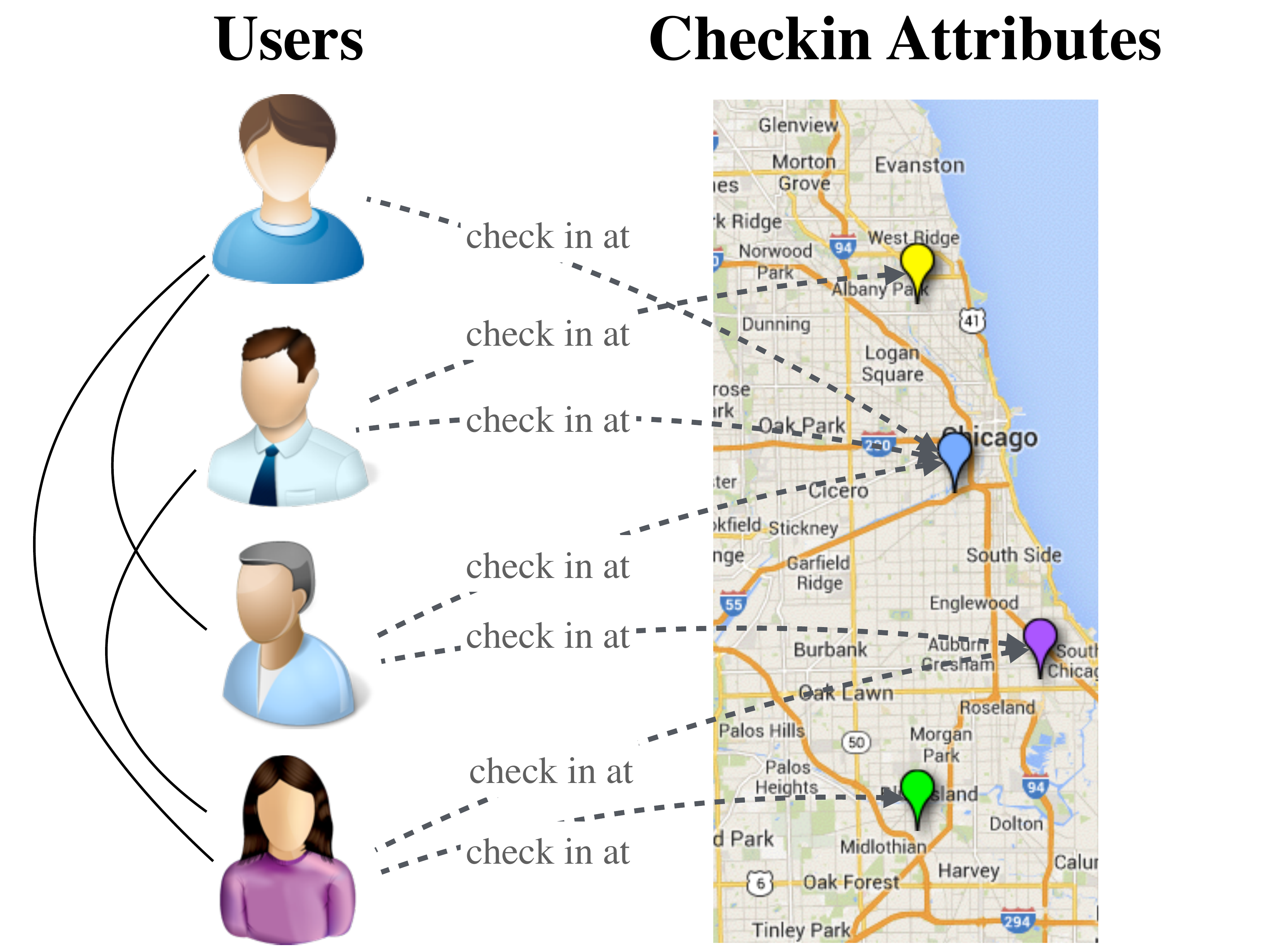}
\end{minipage}
}
\caption{An example of attribute augmented heterogeneous network. (a): attribute augmented heterogeneous network, (b): timestamp attribute, (c): text attribute, (d): location checkin attribute.}\label{fig:chap8_sec3_eg_fig_attribute}
\end{figure*}

Real-world social networks can usually contain various kinds of information, e.g., links and attributes, and can be formulated as $G = (\mathcal{V}, \mathcal{E}, \mathcal{A})$. Attribute set $\mathcal{A} = \{a_1, a_2,$ $\cdots,$ $a_m\}$, $a_i = \{a_{i1}, a_{i2},$$\cdots,$ $a_{in_i}\}$, can have $n_i$ different values for $i \in \{1, 2, \cdots, m\}$. An example of attribute augmented heterogeneous network is given in Figure~\ref{fig:chap8_sec3_eg_fig_attribute}, where Figure~\ref{fig:chap8_sec3_eg_fig_attribute_1} is the input \textit{attribute augmented heterogeneous network}. Figures~\ref{fig:chap8_sec3_eg_fig_attribute_2}-\ref{fig:chap8_sec3_eg_fig_attribute_4} show the attribute information in the network, which include timestamps, text and location checkins. Including the attributes as a special type of nodes in the graph definition provides a conceptual framework to handle social links and node attributes in a unified framework. The effect on increasing the dimensionality of the network will be handled  as in Lemma~\ref{lemma:chap8_sec3_lemma1} in  lower dimensional space.

\begin{defn} 
(Attribute Transition Probability Matrix): The connections between users and attributes, e.g., $a_i$, can be represented as the \textit{attribute adjacency matrix} $\mb{A}_{a_i}\in \mathbb{R}^{|\mathcal{V}| \times n_i}$. Based on $\mb{A}_{a_i}$, {\ourcad} formally defines the \textit{attribute transition probability matrix} from users to attribute $a_i$ to be $\mb{R}_i \in \mathbb{R}^{|\mathcal{V}| \times n_i}$, where  \begingroup\makeatletter\def\f@size{7}\check@mathfonts
\begin{equation}
\mb{R}_i(i, j) = \frac{1}{\sqrt{(\sum_{m = 1}^{n_i} \mb{A}_{a_i}(i, m))(\sum_{n = 1}^{|\mathcal{V}|} \mb{A}_{a_i}(n,j))}} \mb{A}_{a_i}(i, j).
\end{equation}
\end{defn} \endgroup
Similarly, {\ourcad} defines the \textit{attribute transition probability matrix} from attribute $a_i$ to users in $\mathcal{V}$ as $\mb{S}_i = \mb{R}_i^T$.

The importance of different information types in calculating the closeness measure among users can be different. 
To handle the \textit{network heterogeneity problem}, the {\ourcad} model proposes to apply the \textit{micro-level} control by giving different information sources distinct weights to denote their differences: $\mb{\omega} = [\omega_0, \omega_1, \cdots, \omega_m]^\top$, where $\sum_{i = 0}^m \omega_i = 1.0$, $\omega_0$ is the weight of link information and $\omega_i$ is the weight of attribute $a_i$, for $i \in \{1, 2, \cdots, m\}$. 

\begin{defn} 
(Weighted Attribute Transition Probability Matrix): With weights $\mb{\omega}$, {\ourcad} can define matrices  \begingroup\makeatletter\def\f@size{8}\check@mathfonts
\begin{equation}
\mb{\tilde{R}} = \left[\omega_1 \mb{R}_1, \cdots, \omega_n \mb{R}_n \right], \mbox{ and } \mb{\tilde{S}} = \left[\omega_1 \mb{S}_1, \cdots, \omega_n \mb{S}_n \right]^\top
\end{equation}\endgroup
to be the \textit{weighted attribute transition probability matrices} between users and all attributes, where $\mb{\tilde{R}} \in \mathbb{R}^{|\mathcal{V}|\times (n_{aug} - |\mathcal{V}|)}$, $\mb{\tilde{S}} \in \mathbb{R}^{(n_{aug} - |\mathcal{V}|) \times |\mathcal{V}|}$,
$n_{aug} = (|\mathcal{V}| + \sum_{i=1}^m n_i)$ is the number of all user and attribute nodes in the augmented network.
\end{defn} 

\begin{defn} 
(Network Transition Probability Matrix): Furthermore, the \textit{transition probability matrix} of the whole attribute augmented heterogeneous network $G$ is defined as 
\begin{equation}
\mb{\tilde{Q}}_{aug} = \begin{bmatrix}
\mb{\tilde{Q}} &\mb{\tilde{R}} \\
\mb{\tilde{S}} &\mb{0}\\
\end{bmatrix},
\end{equation}
where $\mb{\tilde{Q}}_{aug} \in \mathbb{R}^{n_{aug} \times n_{aug}}$ and block matrix $\mb{\tilde{Q}} = \omega_0 \mb{Q}$ is the \textit{weighted social transition probability matrix} of social links in $\mathcal{E}$.
\end{defn} 

In the real world, heterogeneous social networks can contain large amounts of attributes, i.e., $n_{aug}$ can be extremely large. The \textit{weighted transition probability matrix}, i.e., $\mb{\tilde{Q}}_{aug}$, can be of extremely high dimensions and can hardly fit in the memory. As a result, it will be impossible to update the matrix until the \textit{stop criteria} meets to obtain the \textit{stop step} and the \textit{intimacy matrix}. To solve such problem, {\ourcad} proposes to obtain the \textit{stop step} and the \textit{intimacy matrix} by applying partitioned block matrix operations with the following Lemma~\ref{lemma:chap8_sec3_lemma1}.

\begin{lemma}
$(\mb{\tilde{Q}}_{aug})^k =
\begin{bmatrix}
\mb{\tilde{Q}}_k &\mb{\tilde{Q}}_{k - 1}\mb{\tilde{R}} \\
\mb{\tilde{S}}\mb{\tilde{Q}}_{k - 1} &\mb{\tilde{S}}\mb{\tilde{Q}}_{k - 2}\mb{\tilde{R}}\\
\end{bmatrix}
$, $k \ge 2$, where 
\begin{equation}
\mb{\tilde{Q}}_k =
\begin{cases}
\mb{I}, & \mbox{if }k=0,\\
\mb{\tilde{Q}}, & \mbox{if }k=1, \\
\mb{\tilde{Q}}\mb{\tilde{Q}}_{k - 1} + \mb{\tilde{R}}\mb{\tilde{S}}\mb{\tilde{Q}}_{k - 2}, &\mbox{if } k \ge 2\\
\end{cases}
\end{equation}
and the \textit{intimacy matrix} among users in $\mathcal{V}$ can be represented as
\begin{align}
\mb{\tilde{H}}_{aug} &= \left( \mb{I} + \alpha \mb{\tilde{Q}}_{aug}\right )^\tau(1:|\mathcal{V}|, 1:|\mathcal{V}|)\\
& =  \left( \sum_{t = 0}^\tau \dbinom{\tau}{t} \alpha^t (\mb{\tilde{Q}}_{aug})^t \right)(1:|\mathcal{V}|, 1:|\mathcal{V}|) \\
& =  \left( \sum_{t = 0}^\tau \dbinom{\tau}{t} \alpha^t \left( (\mb{\tilde{Q}}_{aug})^t (1:|\mathcal{V}|, 1:|\mathcal{V}|)\right) \right) \\
&= \left( \sum_{t = 0}^\tau \dbinom{\tau}{t} \alpha^t \mb{\tilde{Q}}_t \right),
\end{align}
where $\mb{X}(1:|\mathcal{V}|, 1:|\mathcal{V}|)$ is a sub-matrix of $\mb{X}$ with indexes in range $[1, |\mathcal{V}|]$, $\tau$ is the \textit{stop step}, achieved when $\mb{\tilde{Q}}_{\tau} = \mb{\tilde{Q}}_{\tau - 1}$, i.e., the \textit{stop criteria}, $\mb{\tilde{Q}}_{\tau}$ is called the \textit{stationary matrix} of the attributed augmented heterogeneous network.
\end{lemma}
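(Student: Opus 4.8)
The plan is to prove the block-power identity by induction on $k$ and then read off the intimacy-matrix formula as a corollary via the binomial theorem. For the base case $k=2$ I would directly square the block matrix $\mb{\tilde{Q}}_{aug}$: its top-left block is $\mb{\tilde{Q}}^2 + \mb{\tilde{R}}\mb{\tilde{S}}$, which equals $\mb{\tilde{Q}}\mb{\tilde{Q}}_1 + \mb{\tilde{R}}\mb{\tilde{S}}\mb{\tilde{Q}}_0 = \mb{\tilde{Q}}_2$ since $\mb{\tilde{Q}}_0 = \mb{I}$ and $\mb{\tilde{Q}}_1 = \mb{\tilde{Q}}$; the remaining three blocks come out as $\mb{\tilde{Q}}\mb{\tilde{R}}$, $\mb{\tilde{S}}\mb{\tilde{Q}}$, and $\mb{\tilde{S}}\mb{\tilde{R}}$, matching the predicted $\mb{\tilde{Q}}_1\mb{\tilde{R}}$, $\mb{\tilde{S}}\mb{\tilde{Q}}_1$, and $\mb{\tilde{S}}\mb{\tilde{Q}}_0\mb{\tilde{R}}$, where the simplifications use that the $(2,2)$ block of $\mb{\tilde{Q}}_{aug}$ is $\mb{0}$.

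For the inductive step the crucial decision is to expand $(\mb{\tilde{Q}}_{aug})^{k+1} = \mb{\tilde{Q}}_{aug}\,(\mb{\tilde{Q}}_{aug})^{k}$, multiplying by $\mb{\tilde{Q}}_{aug}$ on the \emph{left} rather than the right. This matters because the defining recurrence $\mb{\tilde{Q}}_{k} = \mb{\tilde{Q}}\mb{\tilde{Q}}_{k-1} + \mb{\tilde{R}}\mb{\tilde{S}}\mb{\tilde{Q}}_{k-2}$ carries $\mb{\tilde{Q}}$ and $\mb{\tilde{R}}\mb{\tilde{S}}$ as \emph{left} factors, and since these matrices do not commute, right-multiplication would produce a mismatched recurrence. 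Performing the left block-multiplication, the top-left entry becomes $\mb{\tilde{Q}}\mb{\tilde{Q}}_k + \mb{\tilde{R}}\mb{\tilde{S}}\mb{\tilde{Q}}_{k-1}$, which is exactly $\mb{\tilde{Q}}_{k+1}$; the top-right entry factors as $(\mb{\tilde{Q}}\mb{\tilde{Q}}_{k-1} + \mb{\tilde{R}}\mb{\tilde{S}}\mb{\tilde{Q}}_{k-2})\mb{\tilde{R}} = \mb{\tilde{Q}}_k\mb{\tilde{R}}$; and the two bottom entries collapse through the zero $(2,2)$ block to $\mb{\tilde{S}}\mb{\tilde{Q}}_k$ and $\mb{\tilde{S}}\mb{\tilde{Q}}_{k-1}\mb{\tilde{R}}$. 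These are precisely the four blocks the lemma predicts for exponent $k+1$, closing the induction.

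With the block formula in hand I would record the corollary that the top-left $|\mathcal{V}|\times|\mathcal{V}|$ sub-block of $(\mb{\tilde{Q}}_{aug})^t$ equals $\mb{\tilde{Q}}_t$ for every $t\ge 0$, the cases $t=0,1$ being read directly off $\mb{I}$ and $\mb{\tilde{Q}}_{aug}$. The intimacy-matrix identity then follows in three mechanical moves: expand $(\mb{I}+\alpha\mb{\tilde{Q}}_{aug})^\tau = \sum_{t=0}^{\tau}\binom{\tau}{t}\alpha^t(\mb{\tilde{Q}}_{aug})^t$ by the binomial theorem, which is legitimate because $\mb{I}$ commutes with $\mb{\tilde{Q}}_{aug}$; push the linear ``extract top-left block'' operation through the finite sum; and replace each $(\mb{\tilde{Q}}_{aug})^t(1:|\mathcal{V}|,1:|\mathcal{V}|)$ by $\mb{\tilde{Q}}_t$ using the corollary, producing $\sum_{t=0}^{\tau}\binom{\tau}{t}\alpha^t\mb{\tilde{Q}}_t$.

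I expect the only genuine obstacle to be the bookkeeping around non-commutativity in the inductive step, namely keeping $\mb{\tilde{Q}}$ and $\mb{\tilde{R}}\mb{\tilde{S}}$ on the correct (left) side so the matrix recurrence reproduces the scalar recurrence for $\mb{\tilde{Q}}_k$, together with a quick check that block dimensions are consistent: $\mb{\tilde{R}}$ is $|\mathcal{V}|\times(n_{aug}-|\mathcal{V}|)$ and $\mb{\tilde{S}}$ is $(n_{aug}-|\mathcal{V}|)\times|\mathcal{V}|$, so $\mb{\tilde{R}}\mb{\tilde{S}}$ is $|\mathcal{V}|\times|\mathcal{V}|$ and every product in the recurrence is well-defined. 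Everything else is routine block arithmetic.
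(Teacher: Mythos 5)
Your proposal is correct and follows essentially the same route as the paper, which simply states that the block-power identity is proved by induction on $k$; you have filled in the base case, the left-multiplication inductive step, and the binomial-theorem/linearity argument for the intimacy matrix, all of which check out. The only thing the paper adds beyond the induction is a remark on the space cost, namely that $\mb{\tilde{R}}\mb{\tilde{S}} \in \mathbb{R}^{|\mathcal{V}| \times |\mathcal{V}|}$ can be precomputed so the recurrence runs in $O(|\mathcal{V}|^2)$ space with $|\mathcal{V}| \ll n_{aug}$, which is the practical point of the lemma but not part of its proof.
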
\label{lemma:chap8_sec3_lemma1}

\begin{proof}
The lemma can be proved by induction on $k$. Considering that $(\mb{\tilde{R}}\mb{\tilde{S}}) \in \mathbb{R}^{|\mathcal{V}| \times |\mathcal{V}|}$ can be precomputed in advance, the space cost of Lemma~\ref{lemma:chap8_sec3_lemma1} is $\textrm O(|\mathcal{V}|^2)$, where $|\mathcal{V}| \ll n_{aug}$.
\end{proof}

Since we are only interested in the \textit{intimacy} and \textit{transition matrices} among user nodes instead of those between the augmented items and users for the community detection task, {\ourcad} creates a reduced dimensional representation only involving users for $\mb{\tilde{Q}}_k$ and $\mb{\tilde{H}}$ such that {\ourcad} can capture the effect of ``user-attribute'' and ``attribute-user'' transition on ``user-user'' transition. $\mb{\tilde{Q}}_k$ is a reduced dimension representation of $\mb{\tilde{Q}}_{aug}^k$, while eliminating the augmented items, it can still capture the ``user-user'' transitions effectively. 


\subsubsection{Intimacy Matrix across Aligned Heterogeneous Networks}

When $G^t$ is new, the \textit{intimacy matrix} $\mb{\tilde{H}}$ among users calculated based on the information in $G^t$ can be very sparse. To solve this problem, {\ourcad} proposes to propagate useful information from other well developed aligned networks to the emerging network. Information propagated from other aligned well-developed networks can help solve the shortage of information problem in the emerging network \cite{icdm13,wsdm14}. However, as proposed in \cite{PY10}, different networks can have different properties and information propagated from other well-developed aligned networks can be very different from that of the emerging network as well. 

To handle this problem, {\ourcad} model proposes to apply the \textit{macro-level control} technique by using weights, $\rho^{s,t}, \rho^{t,s }\in [0, 1]$, to control the proportion of information propagated between developed network $G^s$ and emerging network $G^t$. If information from $G^s$ is helpful for improving the community detection results in $G^t$, {\ourcad} can set a higher $\rho^{s,t}$ to propagate more information from $G^s$. Otherwise, {\ourcad} can set a lower $\rho^{s,t}$ instead. The weights $\rho^{s,t}$ and $\rho^{t,s}$ can be adjusted automatically with method to be introduced in \cite{sdm15}.

\begin{defn} 
(Anchor Transition Matrix): To propagate information across networks, {\ourcad} introduces the \textit{anchor transition matrices} between $G^t$ and $G^s$ to be $\mb{T}^{t,s} \in \mathbb{R}^{|\mathcal{V}^t| \times |\mathcal{V}^s|}$ and $\mb{T}^{s,t}\in \mathbb{R}^{|\mathcal{V}^s| \times |\mathcal{V}^t|}$, where entries $\mb{T}^{t,s}(i, j) = \mb{T}^{s,t}(j, i) = 1$, iff $(u^t_i, u^s_j) \in A^{t,s}, u^t_i \in \mathcal{V}^t, u^s_j \in \mathcal{V}^s$. 
\end{defn}

Meanwhile, with weights $\rho^{s,t}$ and $\rho^{t,s}$, the \textit{weighted network transition probability matrix} of $G^t$ and $G^s$ are represented as \begingroup\makeatletter\def\f@size{7}\check@mathfonts
\begin{equation}
\mb{\bar{Q}}^t_{aug} = (1-\rho^{t,s}) \begin{bmatrix}
\mb{\tilde{Q}}^t &\mb{\tilde{R}}^t \\
\mb{\tilde{S}}^t &\mb{0}\\
\end{bmatrix},
\mb{\bar{Q}}^s_{aug} = (1-\rho^{s,t}) \begin{bmatrix}
\mb{\tilde{Q}}^s &\mb{\tilde{R}}^s \\
\mb{\tilde{S}}^s &\mb{0}\\
\end{bmatrix},
\end{equation} \endgroup
where $\mb{\bar{Q}}^t_{aug} \in \mathbb{R}^{n^t_{aug} \times n^t_{aug}}$ and $\mb{\bar{Q}}^s_{aug} \in \mathbb{R}^{n^s_{aug} \times n^s_{aug}}$, $n^t_{aug}$ and $n^s_{aug}$ are the numbers of all nodes in $G^t$ and $G^s$ respectively. 

Furthermore, to accommodate the dimensions, {\ourcad} introduces the \textit{weighted anchor transition matrices} between $G^s$ and $G^t$ to be \begingroup\makeatletter\def\f@size{7}\check@mathfonts
\begin{equation}
\mb{\bar{T}}^{t,s} = (\rho^{t,s}) \begin{bmatrix}
\mb{T}^{t,s} &\mb{{0}} \\
\mb{{0}} &\mb{0}\\
\end{bmatrix},
\mbox{ and }
\mb{\bar{T}}^{s,t} = (\rho^{s,t}) \begin{bmatrix}
\mb{T}^{s,t} &\mb{{0}} \\
\mb{{0}} &\mb{0}\\
\end{bmatrix},
\end{equation} \endgroup
where $\mb{\bar{T}}^{t,s}$ $\in \mathbb{R}^{n^t_{aug} \times n^s_{aug}}$ and $\mb{\bar{T}}^{s,t} \in \mathbb{R}^{n^s_{aug} \times n^t_{aug}}$. Nodes corresponding to entries in $\mb{\bar{T}}^{t,s}$ and $\mb{\bar{T}}^{s,t}$ are of the same order as those in $\mb{\bar{Q}}^t_{aug}$ and $\mb{\bar{Q}}^s_{aug}$ respectively.

By combining the weighted intra-network transition probability matrices together with the weighted anchor transition matrices, {\ourcad} defines the \textit{transition probability matrix} across \textit{aligned networks} as
\begin{equation}
\mb{\bar{Q}}_{align} = \begin{bmatrix}
\mb{\bar{Q}}^t_{aug} &\mb{\bar{T}}^{t,s} \\
\mb{\bar{T}}^{s,t} &\mb{\bar{Q}}^s_{aug} \\
\end{bmatrix}
\end{equation}
where $\mb{\bar{Q}}_{align} \in \mathbb{R}^{n_{align} \times n_{align}}$, $n_{align} = n^t_{aug} + n^s_{aug}$ is the number of all nodes across the aligned networks.

\begin{defn} 
(Aligned Network Intimacy Matrix): According to the previous remarks, with $\mb{\bar{Q}}_{align}$, {\ourcad} can obtain the the \textit{intimacy matrix}, $\mb{\bar{H}}_{align}$, of users in $G^t$ to be
\begin{equation}
\mb{\bar{H}}_{align} = (\mb{I} + \alpha \mb{\bar{Q}}_{align})^\tau (1:|\mathcal{V}^t|, 1:|\mathcal{V}^t|),
\end{equation}
where $\mb{\bar{H}}_{align} \in \mathbb{R}^{|\mathcal{V}^t| \times |\mathcal{V}^t|}$, $\tau$ is the \textit{stop step}.
\end{defn} 

Meanwhile, the structure of $(\mb{I} + \alpha \mb{\bar{Q}}_{align})$ can not meet the requirements of Lemma~\ref{lemma:chap8_sec3_lemma1} as it doesn't have a zero square matrix at the bottom right corner. As a result, methods introduced in Lemma~\ref{lemma:chap8_sec3_lemma1} cannot be applied. To obtain the \textit{stop step}, there is no other choice but to keep calculating powers of $(\mb{I} + \alpha \mb{\bar{Q}}_{align})$ until the \textit{stop criteria} can meet, which can be very time consuming. In this part, we will introduce with the following Lemma~\ref{lemma:chap8_sec3_lemma2} adopted by {\ourcad} model for efficient computation of the high-order powers of matrix $(\mb{I} + \alpha \mb{\bar{Q}}_{align})$.

\begin{lemma}
For the given matrix $(\mb{I} + \alpha \mb{\bar{Q}}_{align})$, its $k_{th}$ power meets 
\begin{equation}
(\mb{I} + \alpha \mb{\bar{Q}}_{align})^k\mb{P} = \mb{P}\boldsymbol{\Lambda}^k, k \ge 1,
\end{equation} 
matrices $\mb{P}$ and $\boldsymbol{\Lambda}$ contain the eigenvector and eigenvalues of $(\mb{I} + \alpha \mb{\bar{Q}}_{align})$. The $i_{th}$ column of matrix $\mb{P}$ is the eigenvector of $(\mb{I} + \alpha \mb{\bar{Q}}_{align})$ corresponding to its $i_{th}$ eigenvalue $\lambda_i$ and diagonal matrix $\boldsymbol{\Lambda}$ has value $\Lambda(i,i) = \lambda_i$ on its diagonal.
\end{lemma}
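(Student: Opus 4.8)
The plan is to recognize this statement as the standard eigendecomposition ``power'' identity and to prove it by a short induction on $k$. Write $\mb{M} = \mb{I} + \alpha \mb{\bar{Q}}_{align}$ for brevity, and let $\mb{p}_1, \mb{p}_2, \cdots, \mb{p}_{n_{align}}$ be the columns of $\mb{P}$, so that by hypothesis each satisfies the eigen-relation $\mb{M}\mb{p}_i = \lambda_i \mb{p}_i$, with $\boldsymbol{\Lambda}$ the diagonal matrix carrying the corresponding $\lambda_i$. The first step is simply to bundle these $n_{align}$ scalar eigen-relations into a single matrix equation. Since the $i_{th}$ column of $\mb{P}\boldsymbol{\Lambda}$ is exactly $\lambda_i \mb{p}_i = \mb{M}\mb{p}_i$, which is the $i_{th}$ column of $\mb{M}\mb{P}$, the columns agree one by one and we obtain $\mb{M}\mb{P} = \mb{P}\boldsymbol{\Lambda}$. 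This establishes the claim for $k = 1$ and serves as the base case.

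For the inductive step I would assume $\mb{M}^k \mb{P} = \mb{P}\boldsymbol{\Lambda}^k$ and then compute, using only associativity of matrix multiplication together with the base case,
\[ \mb{M}^{k+1}\mb{P} = \mb{M}\left(\mb{M}^k \mb{P}\right) = \mb{M}\left(\mb{P}\boldsymbol{\Lambda}^k\right) = \left(\mb{M}\mb{P}\right)\boldsymbol{\Lambda}^k = \left(\mb{P}\boldsymbol{\Lambda}\right)\boldsymbol{\Lambda}^k = \mb{P}\boldsymbol{\Lambda}^{k+1}. \]
This closes the induction and proves the identity for all $k \ge 1$. Note that the argument is purely algebraic and requires nothing special about $\mb{\bar{Q}}_{align}$: the per-column relation $\mb{M}^k \mb{p}_i = \lambda_i^k \mb{p}_i$ holds for any eigenpair, so neither symmetry nor diagonalizability of $\mb{M}$ is needed for the stated equation itself.

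The genuine point of interest, and where the real care lies, is the intended use of this identity rather than its derivation. The lemma is invoked so that high powers $(\mb{I} + \alpha \mb{\bar{Q}}_{align})^k$ (needed to reach the \textit{stop step} $\tau$ and the aligned-network \textit{intimacy matrix}) can be computed cheaply: if $\mb{P}$ is a full set of linearly independent eigenvectors, then $\mb{P}$ is invertible and right-multiplying the identity by $\mb{P}^{-1}$ gives $\mb{M}^k = \mb{P}\boldsymbol{\Lambda}^k \mb{P}^{-1}$. Monitoring the \textit{stop criteria} then amounts to raising the diagonal $\boldsymbol{\Lambda}$ to the power $k$, an $O(n_{align})$ operation per step, instead of performing a full dense matrix product each iteration. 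The main obstacle is therefore not the induction but justifying that $\mb{P}$ is indeed invertible, i.e.\ that $\mb{M}$ admits a complete eigenbasis. I would address this by appealing to the structure of $\mb{\bar{Q}}_{align}$: the degree-normalized transition blocks make the relevant operator similar to a symmetric matrix (via a diagonal degree rescaling), and a symmetric matrix is orthogonally diagonalizable, guaranteeing a full eigenbasis and hence an invertible $\mb{P}$. If diagonalizability cannot be guaranteed in degenerate cases, the weaker column-wise form $\mb{M}^k \mb{P} = \mb{P}\boldsymbol{\Lambda}^k$ proved above still holds and suffices for the stated lemma, with the efficient closed form $\mb{M}^k = \mb{P}\boldsymbol{\Lambda}^k\mb{P}^{-1}$ reserved for the diagonalizable regime.
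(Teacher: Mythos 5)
Your proof is correct and follows the same route the paper indicates: the paper simply states that the lemma ``can be proved by induction on $k$,'' and your base case $\mb{M}\mb{P} = \mb{P}\boldsymbol{\Lambda}$ plus the associativity-based inductive step is exactly that argument spelled out. Your additional remarks on the invertibility of $\mb{P}$ correctly anticipate the paper's own subsequent caveat (``if $\mb{P}$ is invertible'') for the closed form $\mb{M}^k = \mb{P}\boldsymbol{\Lambda}^k\mb{P}^{-1}$, but are not needed for the lemma as stated.
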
\label{lemma:chap8_sec3_lemma2}

The Lemma can be proved by induction on $k$ \cite{P12}. The time cost of calculating $\boldsymbol{\Lambda}^k$ is $\textrm O(n_{align})$, which is far less than that required to calculate $(\mb{I} + \alpha \mb{\bar{Q}}_{align})^k$.

\begin{defn} 
(Eigen-decomposition based Aligned Network Intimacy Matrix): In addition, if $\mb{P}$ is invertible, we can have
\begin{equation}
(\mb{I} + \alpha \mb{\bar{Q}}_{align})^k = \mb{P}\boldsymbol{\Lambda}^k\mb{P}^{-1},
\end{equation}
where $\boldsymbol{\Lambda}^k$ has $\Lambda(i,i)^k$ on its diagonal. And the intimacy calculated based on eigenvalue decomposition will be
\begin{equation}
\mb{\bar{H}}_{align} = \left( \mb{P}\boldsymbol{\Lambda}^\tau \mb{P}^{-1}\right )(1:|\mathcal{V}^t|, 1:|\mathcal{V}^t|).
\end{equation}
where the \textit{stop step} $\tau$ can be obtained when $\mb{P}\boldsymbol{\Lambda}^\tau \mb{P}^{-1} = \mb{P}\boldsymbol{\Lambda}^{\tau - 1}\mb{P}^{-1}$, i.e., \textit{stop criteria}.
\end{defn} 

Based on the computed matrix $\mb{\bar{H}}_{align}$, various clustering methods, e.g., KMedoids, can be adopted to identify the clusters of the social community.


\subsection{Mutual Community Detection}\label{sec:chap8_sec4_mutual}

Besides the knowledge transfer from developed networks to the emerging networks to overcome the cold start problem, information in developed networks can also be transferred mutually to help refine the detected community structure detected from each of them. In this section, we will introduce the mutual community detection problem across multiple aligned heterogeneous networks and introduce a new cross-network mutual community detection model {\ourmcd}. To refine the community structures, a new concept named \textit{discrepancy} is introduced to help preserve the consensus of the community detection result of the shared anchor users according to \cite{bigdata15}.

For the given multiple aligned heterogeneous networks $\mathcal{G}$, the \textit{Mutual Community Detection} problem aims to obtain the optimal communities $\{\mathcal{C}^{(1)}, \mathcal{C}^{(2)}, \cdots, \mathcal{C}^{(n)}\}$ for $\{G^{(1)}, G^{(2)}, \cdots,\\ G^{(n)}\}$ simultaneously, where $\mathcal{C}^{(i)} = \{U^{(i)}_1, U^{(i)}_2, \ldots, U^{(i)}_{k^{(i)}}\}$ is a partition of the users set $\mathcal{U}^{(i)}$ in $G^{(i)}$, $k^{(i)} = \left | \mathcal{C}^{(i)} \right |$, $U^{(i)}_l \cap U^{(i)}_m = \emptyset$, $\forall\ l,m \in \{1, 2, \ldots, k^{(i)}\}$ and $\bigcup_{j = 1}^{k^{(i)}} U^{(i)}_j = \mathcal{U}^{(i)}$. Users in each detected social community are more densely connected with each other than with users in other communities. In this section, we focus on studying the hard (i.e., non-overlapping) community detection of users in online social networks, and will illustrate a model proposed in paper \cite{bigdata15}.

Instead of the propagation based social intimacy score computation among users, {\ourmcd} proposes to use the meta paths introduced in Section~\ref{sec:meta_path} to utilize both direct and indirect connections among users in closeness scores calculation. With full considerations of the network characteristics, {\ourmcd} exploits the information in aligned networks to refine and disambiguate the community structures of the multiple networks concurrently. More detailed information about the {\ourmcd} model will be introduced as follows.

\subsubsection{Meta Path based Social Proximity Measure}\label{subsec:metapath_proximity}


\begin{table*}[t]
\scriptsize
\centering
{
\caption{Summary of HNMPs.}\label{tab:chap8_sec4_meta_path}
\begin{tabular}{llll}
\hline
\textbf{ID}
&\textbf{Notation}
& \textbf{Heterogeneous Network Meta Path}
& \textbf{Semantics}\\
\hline
\hline

1
&U $\to$ U
&User $\xrightarrow{follow}$ User
&Follow\\

2
&U $\to$ U $\to$ U
&User $\xrightarrow{follow}$ User $\xrightarrow{follow}$ User
&Follower of Follower\\

3
&U $\to$ U $\gets$ U
&User $\xrightarrow{follow}$ User $\xrightarrow{follow^{-1}}$ User
&Common Out Neighbor\\

4
&U $\gets$ U $\to$ U
&User $\xrightarrow{follow^{-1}}$ User $\xrightarrow{follow}$ User
&Common In Neighbor\\

\hline


5
&U $\to$ P $\to$ W $\gets$ P $\gets$ U
&User $\xrightarrow{write}$ Post $\xrightarrow{contain}$ Word 
&Posts Containing Common Words\\

&&\ \ \ \ \ \ \ \ \ \ \ \ \ \ \ \ \ \ \ \ $\xrightarrow{contain^{-1}}$ Post $\xrightarrow{write^{-1}}$ User&\\

6
&U $\to$ P $\to$ T $\gets$ P $\gets$ U
&User $\xrightarrow{write}$ Post $\xrightarrow{contain}$ Time 
&Posts Containing Common Timestamps\\

&&\ \ \ \ \ \ \ \ \ \ \ \ \ \ \ \ \ \ \ \ $\xrightarrow{contain^{-1}}$ Post $\xrightarrow{write^{-1}}$ User&\\

7
&U $\to$ P $\to$ L $\gets$ P $\gets$ U
&User $\xrightarrow{write}$ Post $\xrightarrow{attach}$ Location
&Posts Attaching Common Location Check-ins\\

&&\ \ \ \ \ \ \ \ \ \ \ \ \ \ \ \ \ \ \ \ $\xrightarrow{attach^{-1}}$ Post $\xrightarrow{write^{-1}}$ User&\\
\hline

\end{tabular}
}
\end{table*}

Many existing similarity measures, e.g., ``Common Neighbor'' \cite{HZ11}, ``Jaccard's Coefficient'' \cite{HZ11}, defined for homogeneous networks cannot capture all the connections among users in heterogeneous networks. To use both direct and indirect connections among users in calculating the similarity score among users in the heterogeneous information network, {\ourmcd} introduces meta path based similarity measure HNMP-Sim, whose information will be introduced as follows.

In heterogeneous networks, pairs of nodes can be connected by different paths, which are sequences of links in the network. Meta paths \cite{SHYYW11, SYH09} in heterogeneous networks, i.e., \textit{heterogeneous network meta paths} (HNMPs), can capture both direct and indirect connections among nodes in a network. The length of a meta path is defined as the number of links that constitute it. Meta paths in networks can start and end with various node types. However, in this section, we are mainly concerned about those starting and ending with users, which are formally defined as the \textit{social HNMPs}. A formal definition of \textit{social HNMPs} is available in \cite{kdd14, bigdata15, cikm16}. The notation, definition and semantics of $7$ different \textit{social HNMPs} used in {\ourmcd} are listed in Table~\ref{tab:chap8_sec4_meta_path}. To extract the social meta paths, prior domain knowledge about the network structure is required.

These $7$ different social HNMPs in Table~\ref{tab:chap8_sec4_meta_path} can cover lots of connections among users in networks. Some meta path based similarity measures have been proposed so far, e.g., the \textit{PathSim} proposed in \cite{SHYYW11}, which is defined for undirected networks and considers different meta paths to be of the same importance. To measure the social closeness among users in directed heterogeneous information networks, we extend \textit{PathSim} to propose a new closeness measure as follows. 
\begin{defn} 
(HNMP-Sim): Let $\mathcal{P}_i(x \rightsquigarrow y)$ and $\mathcal{P}_i(x \rightsquigarrow \cdot)$ be the sets of path instances of HNMP \# $i$ going from $x$ to $y$ and those going from $x$ to other nodes in the network. The HNMP-Sim (HNMP based Similarity) of node pair $(x, y)$ is defined as \begingroup\makeatletter\def\f@size{7}\check@mathfonts
\begin{equation}
\mbox{HNMP-Sim}(x, y) = \sum_i \omega_i \left(\frac{\left| \mathcal{P}_i(x \rightsquigarrow y) \right| + \left| \mathcal{P}_i(y \rightsquigarrow x) \right|} {{\left| \mathcal{P}_i(x \rightsquigarrow \cdot) \right| } + \left| \mathcal{P}_i(y \rightsquigarrow \cdot) \right|}\right),
\end{equation}\endgroup
where $\omega_i$ is the weight of the $i_{th}$ HNMP and $\sum_i \omega_i = 1$. In {\ourmcd}, the weights of different HNMPs can be automatically adjusted by applying a greedy search technique as introduced in \cite{bigdata15, sdm15}. 
\end{defn} 

Let $\mathbf{A}_i$ be the \textit{adjacency matrix} corresponding to the $i_{th}$ HNMP among users in the network and $\mb{A}_i(m, n) = k$ iff there exist $k$ different path instances of the $i_{th}$ HNMP from user $m$ to $n$ in the network. Furthermore, the similarity score matrix among users of HNMP \# $i$ can be represented as $\mb{S}_i = \mb{B}_i \circ \left( \mb{A}_i + \mb{A}_i^T \right)$, where $\mb{A}_i^T$ denotes the transpose of $\mb{A}_i$ and $\mb{B}_i$ represents the sum of the out-degree of user $x$ and $y$ has values $\mb{B}_i{(x,y)} = \frac{1}{\left(\sum_m \mb{A}_i{(x,m)} + \sum_m \mb{A}_i{(y,m)}\right)}$. The $\circ$ symbol represents the Hadamard product of two matrices. The HNMP-Sim matrix of the network which can capture all possible connections among users is represented as follows:
\begin{equation}
\mathbf{S} = \sum_i \omega_i \mb{S}_i = \sum_i \omega_i \left( \mb{B}_i \circ \left( \mb{A}_i + \mb{A}_i^T \right) \right).
\end{equation}


\subsubsection{Network Characteristic Preservation Clustering}
Clustering each network independently can preserve each networks characteristics effectively as no information from external networks will interfere with the clustering results. Partitioning users of a certain network into several clusters will cut connections in the network and lead to some costs inevitably. Optimal clustering results can be achieved by minimizing the clustering costs.

For a given network $G$, let $\mathcal{C} = \{U_1, U_2, \ldots, U_k\}$ be the community structures detected from $G$. Term $\overline{U_i} = \mathcal{U} - U_i$ is defined to be the complement of set $U_i$ in $G$. Various cost measure of partition $\mathcal{C}$ can be used, e.g., \textit{cut} and \textit{normalized cut} as introduced in Section~\ref{subsec:chap8_sec2_spectral}:
\begin{align}
&cut(\mathcal{C}) =  \frac{1}{k}\sum_{i = 1}^k S(U_i, \overline{U_i}) = \frac{1}{k}\sum_{i = 1}^k \sum_{u \in U_i, v \in \overline{U_i}} S(u, v),\\
&ncut(\mathcal{C}) = \frac{1}{k}\sum_{i = 1}^k \frac{S(U_i, \overline{U_i})}{S(U_i, \cdot)} = \frac{1}{k} \sum_{i = 1}^k\frac{cut(U_i, \overline{U}_i)}{S(U_i, \cdot)},
\end{align}
where term $S(u, v)$ denotes the HNMP-Sim between $u, v$ and $S(U_i, \cdot) = S(U_i, \mathcal{U}) = S(U_i, U_i) + S(U_i, \overline{U}_i)$. 

For all users in $\mathcal{U}$, their clustering result can be represented in the \textit{result confidence matrix} $\mb{H}$, where $\bf{H} = [\bf{h}_1,$ $\bf{h}_2,$ $\ldots,$ $\bf{h}_n]^T$, $n = |\mathcal{U}|$, $\mb{h}_i = (h_{i,1}, h_{i,2}, \ldots, h_{i,k})$ and $h_{i,j}$ denotes the confidence that $u_i \in \mathcal{U}$ is in cluster $U_j \in \mathcal{C}$. The optimal $\bf{H}$ that can minimize the normalized-cut cost can be obtained by solving the following objective function \cite{L07}:
\begin{align}
\min_{\mb{H}}\ \ &\mbox{Tr} (\mb{H}^T\mb{L}\mb{H}),\\
s.t. \ \ &\mb{H}^T\mb{D}\mb{H} = \mb{I}.
\end{align}
where $\mb{L} = \mb{D} - \mb{S}$, diagonal matrix $\mb{D}$ has ${D}(i,i) = \sum_j {S}(i,j)$ on its diagonal, and $\mb{I}$ is an identity matrix.


\subsubsection{Discrepancy based Clustering of Multiple Networks}

Besides the shared information due to common network construction purposes and similar network features \cite{sdm15}, anchor users can also have unique information (e.g., social structures) across aligned networks, which can provide us with a more comprehensive knowledge about the community structures formed by these users. Meanwhile, by maximizing the consensus (i.e., minimizing the ``\textit{discrepancy}'') of the clustering results about the anchor users in multiple partially aligned networks, model {\ourmcd} will be able to refine the clustering results of the anchor users with information in other aligned networks mutually. The clustering results achieved in $G^{(1)}$ and $G^{(2)}$ can be represented as $\mathcal{C}^{(1)} = \{U^{(1)}_1, U^{(1)}_2,$ $\cdots,$ $U^{(1)}_{k^{(1)}}\}$ and $\mathcal{C}^{(2)} = \{U^{(2)}_1, U^{(2)}_2, \cdots, U^{(2)}_{k^{(2)}}\}$ respectively.

Let $u_i$ and $u_j$ be two anchor users in the network, whose accounts in $G^{(1)}$ and $G^{(2)}$ are $u^{(1)}_i$, $u^{(2)}_i$, $u^{(1)}_j$ and $u^{(2)}_j$ respectively. If users $u^{(1)}_i$ and $u^{(1)}_j$ are partitioned into the same cluster in $G^{(1)}$ but their corresponding accounts $u^{(2)}_i$ and $u^{(2)}_j$ are partitioned into different clusters in $G^{(2)}$, then it will lead to a \textit{discrepancy} \cite{bigdata15, ijcnn16} between the clustering results of $u^{(1)}_i$, $u^{(2)}_i$, $u^{(1)}_j$ and $u^{(2)}_j$ in aligned networks $G^{(1)}$ and $G^{(2)}$.

\begin{defn} 
(Discrepancy): The discrepancy between the clustering results of $u_i$ and $u_j$ across aligned networks $G^{(1)}$ and $G^{(2)}$ is defined as the difference of confidence scores of $u_i$ and $u_j$ being partitioned in the same cluster across aligned networks. Considering that in the clustering results, the confidence scores of $u^{(1)}_i$ and $u^{(1)}_j$ ($u^{(2)}_i$ and $u^{(2)}_j$ ) being partitioned into $k^{(1)}$ ($k^{(2)}$) clusters can be represented as vectors $\mb{h}_i^{(1)}$ and $\mb{h}_j^{(1)}$ ($\mb{h}_i^{(2)}$ and $\mb{h}_j^{(2)}$) respectively, while the confidences that $u_i$ and $u_j$ are in the same cluster in $G^{(1)}$ and $G^{(2)}$ can be denoted as $\mb{h}_i^{(1)} (\mb{h}_j^{(1)})^T$ and $\mb{h}_i^{(2)} (\mb{h}_j^{(2)})^T$. Formally, the discrepancy of the clustering results about $u_i$ and $u_j$ is defined to be \begingroup\makeatletter\def\f@size{8}\check@mathfonts$d_{ij}(\mathcal{C}^{(1)}, \mathcal{C}^{(2)}) = \left(\mb{h}_i^{(1)} (\mb{h}_j^{(1)})^T - \mb{h}_i^{(2)} (\mb{h}_j^{(2)})^T\right )^2$\endgroup \\ if $u_i, u_j$ are both anchor users; and $d_{ij}(\mathcal{C}^{(1)}, \mathcal{C}^{(2)}) = 0$ otherwise.
Furthermore, the discrepancy of $\mathcal{C}^{(1)}$ and $\mathcal{C}^{(2)}$ will be:
\begin{align}
d(\mathcal{C}^{(1)}, \mathcal{C}^{(2)}) &= \sum_i^{n^{(1)}} \sum_j^{n^{(2)}} d_{ij}(\mathcal{C}^{(1)}, \mathcal{C}^{(2)}),
\end{align}
where ${n^{(1)}} = |\mathcal{U}^{(1)}|$ and ${n^{(2)}} = |\mathcal{U}^{(2)}|$. In the definition, non-anchor users are not involved in the discrepancy calculation.
\end{defn} 

However, considering that $d(\mathcal{C}^{(1)}, \mathcal{C}^{(2)})$ is highly dependent on the number of anchor users and anchor links between $G^{(1)}$ and $G^{(2)}$, minimizing $d(\mathcal{C}^{(1)}, \mathcal{C}^{(2)})$ can favor highly consented clustering results when the anchor users are abundant but have no significant effects when the anchor users are very rare. To solve this problem, model {\ourmcd} proposes to minimize the \textit{normalized discrepancy} instead.

\begin{defn} 
(Normalized Discrepancy) The normalized discrepancy measure computes the  differences of clustering results in two aligned networks as a fraction of the discrepancy with regard to the number of anchor users across partially aligned networks:
\begin{equation}
nd(\mathcal{C}^{(1)}, \mathcal{C}^{(2)}) = \frac{d(\mathcal{C}^{(1)}, \mathcal{C}^{(2)})}{\left( \left | A^{(1,2)} \right | \right) \left( \left | A^{(1,2)} \right | - 1 \right)}.
\end{equation}
\end{defn}

Optimal consensus clustering results of $G^{(1)}$ and $G^{(2)}$ will be $\hat{\mathcal{C}^{(1)}}, \hat{\mathcal{C}^{(2)}}$:
\begin{equation}
\hat{\mathcal{C}}^{(1)}, \hat{\mathcal{C}}^{(2)} = \arg \min_{\mathcal{C}^{(1)}, \mathcal{C}^{(2)}} nd(\mathcal{C}^{(1)}, \mathcal{C}^{(2)}).
\end{equation}

Similarly, the normalized-discrepancy objective function can also be represented with the \textit{clustering results confidence matrices} $\mb{H}^{(1)}$ and $\mb{H}^{(2)}$ as well. Meanwhile, considering that the networks studied in this section are partially aligned, matrices $\mb{H}^{(1)}$ and $\mb{H}^{(2)}$ contain the results of both anchor users and non-anchor users, while non-anchor users should not be involved in the discrepancy calculation according to the definition of discrepancy. The introduced model proposes to prune the results of the non-anchor users with the following \textit{anchor transition matrix} first.

\begin{defn} 
(Anchor Transition Matrix): Binary matrix $\mb{T}^{(1,2)}$ (or $\mb{T}^{(2,1)}$) is defined as the anchor transition matrix from networks $G^{(1)}$ to $G^{(2)}$ (or from $G^{(2)}$ to $G^{(1)}$), where $\mb{T}^{(1,2)} = ({\mb{T}^{(2,1)}})^T$, $\mb{T}^{(1,2)}(i, j) = 1$ if $(u^{(1)}_i, u^{(2)}_j) \in A^{(1,2)}$ and $0$ otherwise. The row indexes of $\mb{T}^{(1,2)}$ (or $\mb{T}^{(2,1)}$) are of the same order as those of $\mb{H}^{(1)}$ (or $\mb{H}^{(2)}$). Considering that the constraint on anchor links is ``\textit{one-to-one}'' in this section, as a result, each row/column of $\mb{T}^{(1,2)}$ and $\mb{T}^{(2,1)}$ contains at most one entry filled with $1$. 
\end{defn} 

\begin{figure}[t]
\centering
    \begin{minipage}[l]{0.8\columnwidth}
      \centering
      \includegraphics[width=1.0\textwidth]{./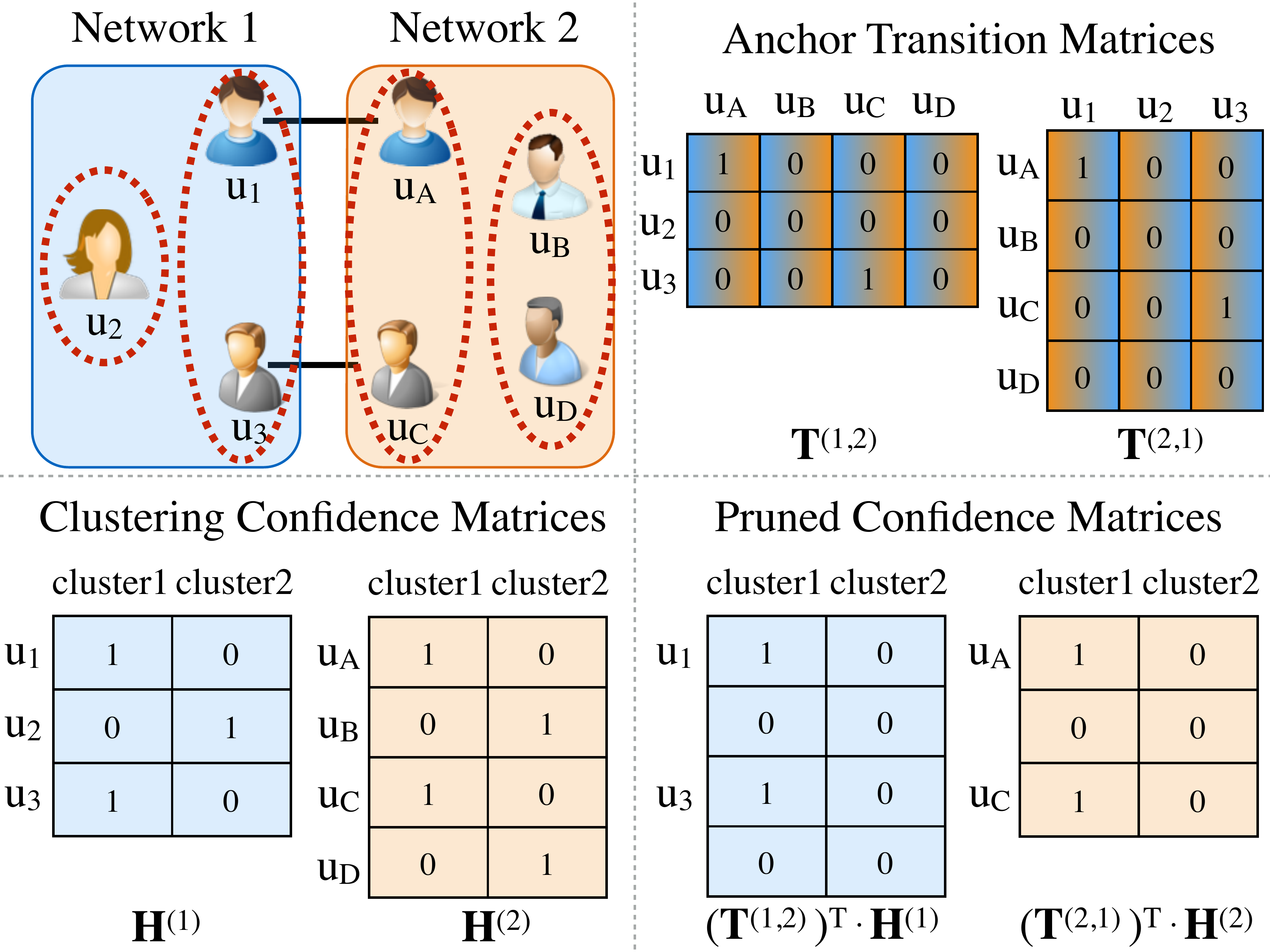}
    \end{minipage}
\caption{An example to illustrate the clustering discrepancy.}\label{fig:chap8_sec4_fig_discrepancy_example}
\end{figure}

Furthermore, the objective function of inferring clustering confidence matrices, which can minimize the normalized discrepancy can be represented as follows
\begin{align}
\min_{\mb{H}^{(1)}, \mb{H}^{(2)}} & \frac{\left \| \bar{\mb{H}}^{(1)} \left(\bar{\mb{H}}^{(1)} \right )^T - \bar{\mb{H}}^{(2)} \left ( \bar{\mb{H}}^{(2)} \right)^T \right \|^2_F}{\left \| \mb{T}^{(1,2)} \right \|^2_F \left (\left \| \mb{T}^{(1,2)} \right \|^2_F  - 1\right)}, \\
s.t. \ \ &(\mb{H}^{(1)})^T\mb{D}^{(1)}\mb{H}^{(1)} = \mb{I}, (\mb{H}^{(2)})^T\mb{D}^{(2)}\mb{H}^{(2)} = \mb{I}.
\end{align}
where $\mb{D}^{(1)}$, $\mb{D}^{(2)}$ are the corresponding diagonal matrices of HNMP-Sim matrices of networks $G^{(1)}$ and $G^{(2)}$ respectively.


\begin{algorithm}[t]
\caption{Curvilinear Search Method ($\mathcal{CSM}$)}
\label{alg:chap8_sec4_update}
\begin{algorithmic}[1]
	\REQUIRE $\mb{X}_k$ $C_k$, $Q_k$ and function $\mathcal{F}$\\
\qquad	parameters $\mb{\epsilon} = \{\rho, \eta, \delta, \tau, \tau_m, \tau_M\}$\\
\ENSURE  $\mb{X}_{k + 1}$, $C_{k + 1}$, $Q_{k + 1}$\\

\STATE	$\mb{Y}(\tau) = \left( \mb{I} + \frac{\tau}{2}\mb{A}\right)^{-1} \left( \mb{I} - \frac{\tau}{2}\mb{A}\right) \mb{X}_k$
\WHILE{$\mathcal{F}\left( \mb{Y}(\tau)\right) \ge \mb{C}_k + \rho \tau \mathcal{F}^\prime \left( (\mb{Y}(0)) \right)$}
\STATE	{$\tau = \delta \tau$}
\STATE	$\mb{Y}(\tau) = \left( \mb{I} + \frac{\tau}{2}\mb{A}\right)^{-1} \left( \mb{I} - \frac{\tau}{2}\mb{A}\right) \mb{X}_k$
\ENDWHILE
\STATE	{$\mb{X}_{k + 1} = \mb{Y}_k(\tau)$\\$Q_{k + 1} = \eta Q_k + 1$\\ $C_{k + 1} = \left( \eta Q_k C_k + \mathcal{F}(\mb{X}_{k + 1}) \right)/Q_{k + 1}$\\$\tau = \max \left( \min(\tau, \tau_M), \tau_m \right)$}
\end{algorithmic}
\end{algorithm}


\subsubsection{Joint Mutual Clustering of Multiple Networks}\label{subsec:joint}
Normalized-Cut objective function favors clustering results that can preserve the characteristic of each network, however, normalized-discrepancy objective function favors consensus results which are mutually refined with information from other aligned networks. Taking both of these two issues into considerations, the optimal \textit{Mutual Community Detection} results $\mathcal{\hat{C}}^{(1)}$ and $\mathcal{\hat{C}}^{(2)}$ of aligned networks $G^{(1)}$ and $G^{(2)}$ can be achieved as follows: \begingroup\makeatletter\def\f@size{7}\check@mathfonts
\begin{align}
\arg \min_{\mathcal{C}^{(1)}, \mathcal{C}^{(2)}} \alpha ncut(\mathcal{C}^{(1)}) + \beta ncut(\mathcal{C}^{(2)}) + \theta nd(\mathcal{C}^{(1)}, \mathcal{C}^{(2)})
\end{align}\endgroup
where $\alpha$, $\beta$ and $\theta$ represents the weights of these terms and, for simplicity, $\alpha$, $\beta$ are both set as $1$ in {\ourmcd}.

By replacing $ncut(\mathcal{C}^{(1)})$, $ncut(\mathcal{C}^{(2)})$, $nd(\mathcal{C}^{(1)}, \mathcal{C}^{(2)})$ with the objective equations derived above, the joint objective function can be rewritten as follows: \begingroup\makeatletter\def\f@size{7}\check@mathfonts
\begin{align}
\min_{\mb{H}^{(1)}, \mb{H}^{(2)}}\ \ \alpha &\mbox{Tr} (({\mb{H}^{(1)}})^T\mb{L}^{(1)}\mb{H}^{(1)}) + \beta \mbox{Tr} (({\mb{H}^{(2)}})^T\mb{L}^{(2)}\mb{H}^{(2)}) \\
&+ \theta  \frac{\left \| \bar{\mb{H}}^{(1)} \left(\bar{\mb{H}}^{(1)} \right )^T - \bar{\mb{H}}^{(2)} \left ( \bar{\mb{H}}^{(2)} \right)^T \right \|^2_F}{\left \| \mb{T}^{(1,2)} \right \|^2_F \left (\left \| \mb{T}^{(1,2)} \right \|^2_F  - 1\right)},\\
s.t. \ \ &({\mb{H}^{(1)}})^T\mb{D}^{(1)}\mb{H}^{(1)} = \mb{I}, ({\mb{H}^{(2)}})^T\mb{D}^{(2)}\mb{H}^{(2)} = \mb{I},
\end{align}\endgroup
where $\mb{L}^{(1)} = \mb{D}^{(1)} - \mb{S}^{(1)}$, $\mb{L}^{(2)} = \mb{D}^{(2)} - \mb{S}^{(2)}$ and matrices $\mb{S}^{(1)}$, $\mb{S}^{(2)}$ and $\mb{D}^{(1)}$, $\mb{D}^{(2)}$ are the HNMP-Sim matrices and their corresponding diagonal matrices defined before.


\begin{algorithm}[t]
\caption{Mutual Community Detector ({\ourmcd})}
\label{alg:chap8_sec4_framework}
\begin{algorithmic}[1]
	\REQUIRE aligned network: $\mathcal{G}$ = $\{\{G^{(1)}$, $G^{(2)}\}$, $\{A^{(1,2)},$ $A^{(2,1)}\}\}$;\\
\qquad	number of clusters in $G^{(1)}$ and $G^{(2)}$: $k^{(1)}$ and $k^{(2)}$;\\
\qquad	HNMP Sim matrices weight: $\mb{\omega}$;\\
\qquad	parameters: $\mb{\epsilon} = \{\rho, \eta, \delta, \tau, \tau_m, \tau_M\}$;\\
\qquad	function $\mathcal{F}$ and consensus term weight $\theta$
\ENSURE  $\mb{H}^{(1)}$, $\mb{H}^{(2)}$\\
\STATE	Calculate HNMP Sim matrices, $\mb{S}_i^{(1)}$ and $\mb{S}^{(2)}_i$
\STATE	$\mb{S}^{(1)} = \sum_i \omega_i S_i^{(1)}$, $\mb{S}^{(2)} = \sum_i \omega_i S_i^{(2)}$
\STATE	Initialize $\mb{X}^{(1)}$ and $\mb{X}^{(2)}$ with Kmeans clustering results on $\mb{S}^{(1)}$ and $\mb{S}^{(2)}$
\STATE	Initialize $C^{(1)}_0 = 0, Q^{(1)}_0 = 1$ and $C^{(2)}_0 = 0, Q^{(2)}_0 = 1$
\STATE	$converge = False$
\WHILE{$converge = False$}
\STATE	/* update $\mb{X}^{(1)}$ and $\mb{X}^{(2)}$ with $\mathcal{CSM}$ */\\
	$\mb{X}^{(1)}_{k+1}$, $C^{(1)}_{k+1}$, $Q^{(1)}_{k+1}$ = $\mathcal{CSM}(\mb{X}^{(1)}_k, C^{(1)}_k, Q^{(1)}_k, \mathcal{F}, \mb{\epsilon})$\\
		$\mb{X}^{(2)}_{k+1}$, $C^{(2)}_{k+1}$, $Q^{(2)}_{k+1}$ = $\mathcal{CSM}(\mb{X}^{(2)}_k, C^{(2)}_k, Q^{(2)}_k, \mathcal{F}, \mb{\epsilon})$
\IF{$\mb{X}^{(1)}_{k+1}$ and $\mb{X}^{(2)}_{k+1}$ both converge}
\STATE	$converge = True$ 
\ENDIF
\ENDWHILE
\STATE	{$\mb{H}^{(1)} = \left((\mb{D}^{(1)})^{-\frac{1}{2}} \right)^T \mb{X}^{(1)}$, $\mb{H}^{(2)} = \left((\mb{D}^{(2)})^{-\frac{1}{2}} \right)^T \mb{X}^{(2)}$}
\end{algorithmic}
\end{algorithm}

The objective function is a complex optimization problem with orthogonality constraints, which can be very difficult to solve because the constraints are not only non-convex but also numerically expensive to preserve during iterations. {\ourmcd} adopts curvilinear search method (i.e., Algorithm~\ref{alg:chap8_sec4_update}) with Barzilai-Borwein step \cite{WY10} to solve the problem, where the learning process can also converge quickly. The pseudo-code of the {\ourmcd} model is available in Algorithm~\ref{alg:chap8_sec4_framework}, which will call Algorithm~\ref{alg:chap8_sec4_update} for updating the variables iteratively.


\subsection{Large-Scale Network Synergistic Community Detection}\label{sec:chap8_sec5_large}

The community detection algorithm proposed in the previous section involves very complicated matrix operations, and works well for small-sized network data. However, when being applied to handle real-world online social networks involving millions even billions of users, they will suffer from the time complexity problem a lot. The problem to be introduced here follows the same formulation as the one introduced in Section~\ref{sec:chap8_sec4_mutual}, but the involved networks are of far larger sizes in terms of both node number and the social connection number. Synergistic partitioning across multiple large-scale social networks is very difficult for the following challenges:

\begin{itemize}
\item \textit{Social Network}: Distinct from generic data, usually contains intricate interactions, and multiple heterogeneous networks mean that the relationships across multiple networks should be taken into consideration. 

\item \textit{Network Scale}: Network size implies it is difficult for stand-alone programs to apply traditional partitioning methods and it is a difficult task to parallelize the existing stand-alone network partitioning algorithms. 

\item \textit{Distributed Framework}: For distributed algorithms, load balance should be taken into consideration and how to generate balanced partitions is another challenge.
\end{itemize}

\begin{algorithm}[t]
\caption{Edge Weight based Matching ($\mathcal{EWM}$)}
\scriptsize
\label{alg:chap8_sec5_partition}
\begin{algorithmic}[1]
	\REQUIRE Network $G_h$\\
	\qquad Maximum weight of a node $maxV W = n/k$
\ENSURE A coarser network $G_{h+1}$


\STATE	{$\mb{map}()$ Function:}
\FOR		{node $i$ in current data bolck}
\IF		{$match[i] == -1$}
\STATE	{$maxIdx = -1$}
\STATE	{$sortByEdgeWeight(NN(i))$}
\FOR		{$v_j \in NN(i)$}
\IF		{$match[j] == -1$ and $VW(i)+VW(j)<maxVW$}
\STATE	{maxIdx = j}
\ENDIF
\STATE	{$match[i]=maxIdx$}
\STATE	{$match[maxIdx] = i$}
\ENDFOR
\ENDIF
\ENDFOR

\STATE	{$\mb{reduce}()$ Function:}
\STATE	{new $newNodeID[n+1]$}
\STATE	{new $newVW[n+1]$}
\STATE	{set $idx=1$}
\FOR		{$i \in \{1, 2, \cdots, n\}$}
\IF		{$i < match[i]$}
\STATE	{set $newNodeID[match[i]] = idx$}
\STATE	{set $newNodeID[i] = idx$}
\STATE	{set $newVW[i] = newVW[match[i]]=VW(i)+VW(match[i[)$}
\STATE	{$idx ++$}
\ENDIF
\ENDFOR

\end{algorithmic}
\end{algorithm}

To address the challenges, in this section, we will introduce a network structure based distributed network partitioning framework, namely {\ourscalable} \cite{bigdata14}. The {\ourscalable} model identifies the anchor nodes among the multiple networks, and selects a network as the datum network, then divides it into k balanced partitions and generate $\langle$anchor node ID, partition ID$\rangle$ pairs as the main objective. Based on the objective, {\ourscalable} coarsens the other networks (called as synergistic networks) into smaller ones, which will further divides the smallest networks into $k$ balanced initial partitions, and tries to assign same kinds of anchor nodes into the same initial partition as many as possible. Here, anchor nodes of same kind means that they are divided into same partition in the datum network. Finally, {\ourscalable} projects the initial partitions back to the original networks.


\subsubsection{Distributed Multilevel k-way Partitioning}

In this section, we describe the heuristic framework for synergistic partitioning among multiple large scale social networks, and we call the framework {\ourscalable}. For large-sized networks, data processing in {\ourscalable} can be roughly divided into two stages: datum generation stage and network alignment stage.

When got the anchor node set $\mathcal{A}^{(1,2)}$ between networks $G^{(1)}$ and $G^{(2)}$, the {\ourscalable} framework will apply a distributed multilevel $k$-way partitioning method onto the datum network to generate $k$ balanced partitions. During this process, the anchor nodes are ignored and all the nodes are treated identically. We call this process datum generation stage. When finished, partition result of anchor nodes will be generated, {\ourscalable} stores them in a set-$Map \langle anidx,pidx \rangle$, where $anidx$ is anchor node ID and $pidx$ represents the partition ID the anchor node belongs to. After the datum generation stage, synergistic networks will be partitioned into k partitions according to the $Map\langle anidx, pidx \rangle$ to make the synergistic networks to align to the datum network, and during this process \textit{discrepancy} and \textit{cut} are the objectives to be minimized. We call this process network alignment stage.


Algorithms guaranteed to find out near-optimal partitions in a single network have been studied for a long period. But most of the methods are stand-alone, and performance is limited by the server's capacity. Inspired by the multilevel $k-way$ partitioning (MKP) method proposed by Karypis and Kumar \cite{KK98, KK96} and based on our previous work \cite{AXY09}, {\ourscalable} uses MapReduce \cite{DG08} to speedup the MKP method. As the same with other multilevel methods, MapReduce based MKP also includes three phases: coarsening, initial partitioning and un-coarsening.

Coarsening phase is a multilevel process and a sequence of smaller approximate networks $G_i = (\mathcal{V}_i, \mathcal{E}_i)$ are constructed from the original network $G_0 = (\mathcal{V}, \mathcal{E})$ and so forth, where $|\mathcal{V}_i| < |\mathcal{V}_{i-1}|, i \in \{1, 2, \cdots, n\}$. To construct coarser networks, node combination and edge collapsing should be performed. The task can be formally defined in terms of matching inside the networks \cite{BJ93}. A intra-network matching can be represented as a set of node pairs $\mathcal{M} = \{(v_i, v_j)\}, i \neq j$ and $(v_i, v_j) \in \mathcal{E}$, in which each node can only appear for no more than once. For a network $G_i$ with a matching $\mathcal{M}_i$, if $(v_j, v_k) \in \mathcal{M}_i$ then $v_j$ and $v_k$ will form a new node $v_q \in \mathcal{V}_{i+1}$ in network $G_{i+1}$ coarsen from $G_i$. The weight of $v_q$ equals to the sum of weight $v_j$ and $v_k$, besides, all the links connected to $v_j$ or $v_k$ in $G_i$ will be connected to $v_q$ in $G_{i+1}$. The total weight of nodes will remain unchanged during the coarsening phase but the total weight of edges and number of nodes will be greatly reduced. Let's define $W(\cdot)$ to be the sum of edge weight in the input set and $N(\cdot)$ to be the number of nodes/components in the input set. In the coarsening process, we have
\begin{align}
W(\mathcal{E}_{i+1}) &= W(\mathcal{E}_{i}) - W(\mathcal{M}_{i}),\\
N(\mathcal{V}_{i+1}) &= N(\mathcal{V}_{i}) - N(\mathcal{M}_{i}).
\end{align}

Analysis in \cite{KK95} shows that for the same coarser network, smaller edge-weight corresponds to smaller edge-cut. With the help of MapReduce framework, {\ourscalable} uses a local search method to implement an edge-weight based matching (EWM) scheme to collect larger edge weight during the coarsening phase. For the convenience of MapReduce, {\ourscalable} designs an emerging network representation format: each line contains essential information about a node and all its neighbors (NN), such as node ID, vertex weight (VW), edge weight (W), et al. The whole network data are distributed in distributed file system, such as HDFS \cite{SKRC10}, and each data block only contains a part of node set and corresponding connection information. Function $map()$ takes a data block as input and searches locally to find node pairs to match according to the edge weight. Function $reduce()$ is in charge of node combination, renaming and sorting. With the new node IDs and matching, a simple MapReduce job will be able to update the edge information and write the coarser network back onto HDFS. The complexity of EWM is $O(|\mathcal{E}|)$ in each iteration and pseudo code about EWM is shown in Algorithm~\ref{alg:chap8_sec5_partition}.

After several iterations, a coarsest weighted network $G_s$ consisting of only hundreds of nodes will be generated. For the network size of $G_s$, stand-alone algorithms with high computing complexity will be acceptable for initial partitioning. Meanwhile, the weights of nodes and edges of coarser networks are set to reflect the weights of the finer network during the coarsening phase, so $G_s$ contains sufficient information to intelligently satisfy the balanced partition and the minimum edge-cut requirements. Plenty of traditional bisection methods are quite qualified for the task. In {\ourscalable}, it adopts the KL method with an $O(|\mathcal{E}|^3)$ computing complexity to divide $G_s$ into two partitions and then take recursive invocations of KL method on the partitions to generate balanced $k$ partitions.

Un-coarsening phase is inverse processing of coarsening phase. With the initial partitions and the matching of the coarsening phase, it is easy to run the un-coarsening process on the MapReduce cluster.

\begin{algorithm}[t]
\caption{Synergistic Partitioning ($\mathcal{SP}$)}
\scriptsize
\label{alg:chap8_sec5_synergistic}
\begin{algorithmic}[1]
	\REQUIRE Network $G_h$\\
	\qquad Anchor Link Map $Map<anidx, pidx>$\\
	\qquad Maximum weight of a node $maxV W = n/k$
\ENSURE A coarser network $G_{h+1}$


\STATE	{Call Synergistic Partitioning-Map Function}
\STATE	{Call Synergistic Partitioning-Reduce Function}

\end{algorithmic}
\end{algorithm}

\begin{algorithm}[t]
\caption{Synergistic Partitioning-Map}
\scriptsize
\label{alg:chap8_sec5_synergistic_map}
\begin{algorithmic}[1]
	\REQUIRE Network $G_h$\\
	\qquad Anchor Link Map $Map<anidx, pidx>$\\
	\qquad Maximum weight of a node $maxV W = n/k$
\ENSURE A coarser network $G_{h+1}$


\STATE	{$\mb{map}()$ Function:}
\FOR		{node $i$ in current data bolck}
\IF		{$match[i] == -1$}
\STATE	{set $flag = false$}
\STATE	{$sortByEdgeWeight(NN(i))$}
\IF		{$v_i \in Map<anidx, pidx>$}

\FOR		{$v_j \in NN(i)$ \& $match[j] == -1$}
\IF		{$v_j \in Map<anidx, pidx>$ \& $Map.get(v_i)==Map.get(v_j)$ \& $VW(i)+VW(j)<maxVW$}
\STATE	{$match[i] = j, match[j] = i$}
\STATE	{$flag=true$, break}
\ENDIF
\ENDFOR

\IF		{$flag== false$, no suitable anchor node}
\FOR		{$v_j \in NN(i)$ \& $match[j] == -1$ \& $VW(v_i) + VW(v_j) < maxVW$}
\STATE	{$indirectNeighbor = NN(v_j)$}
\STATE	{$sortByEdgeWeight(NN(i))$}
\FOR		{$v_k \in indirectNeighbor$}
\IF		{$v_k \in Map<anidx, pidx>$ \& $Map.get(v_i)==Map.get(v_k)$}
\STATE	{$match[i] = j, match[j] = i$}
\STATE	{$flag=true$, break}
\ENDIF
\ENDFOR

\IF		{$flag == true$}
\STATE	{break}
\ENDIF
\ENDFOR
\ENDIF
\ELSE	
\STATE	{$sortByEdgeWeight(NN(i))$}
\FOR		{$v_j \in NN(v_i)$ \& $v_j \notin Map<anidx, pidx>$ \& $VW(i) + VW(j) < maxVW$ \& $match[j] == -1$}
\STATE	{$match[i] = j$, $match[j] =i$, break}
\ENDFOR
\ENDIF
\ENDIF
\ENDFOR

\end{algorithmic}
\end{algorithm}

\begin{algorithm}[t]
\caption{Synergistic Partitioning-Reduce}
\scriptsize
\label{alg:chap8_sec5_synergistic_reduce}
\begin{algorithmic}[1]
	\REQUIRE Network $G_h$\\
	\qquad Anchor Link Map $Map<anidx, pidx>$\\
	\qquad Maximum weight of a node $maxV W = n/k$
\ENSURE A coarser network $G_{h+1}$


\STATE	{$\mb{reduce}()$ Function:}
\STATE	{new $newNodeID[n+1]$}
\STATE	{new $newVW[n+1]$}
\STATE	{set $idx=1$}
\FOR		{$i \in newNodeID[]$}
\IF		{$i < match[i]$}
\STATE	{set $newNodeID[match[i]] = idx$}
\STATE	{set $newNodeID[i] = idx$}
\STATE	{set $newVW[i] = newVW[match[i]]=VW(i)+VW(match[i[)$}
\STATE	{$idx ++$}
\ENDIF
\ENDFOR
\STATE	{new $newPurity[idx+1]$}
\STATE	{new $newPidx[idx+1]$}
\FOR		{$i \in [1, idx]$}
\STATE	{$newPurity[i]=\frac{purity[i]*VW(i) + purity[j]*VW(j)}{VW(i) + VW(j)}$}
\STATE	{$newPidx[i] = \max\{pidx[i], pidx[match[i]]\}$}
\ENDFOR

\end{algorithmic}
\end{algorithm}


\subsubsection{Distributed Synergistic Partitioning Process}

In this section, we will talk about the synergistic partitioning process in {\ourscalable} based on the synergistic networks with the knowledge of partition results of anchor nodes from datum network. The synergistic partitioning is also a MKP process but quite different from general MKP methods.

In the coarsening phase, anchor nodes are endowed with higher priority than non-anchor nodes. When choosing nodes to pair, {\ourscalable} assumes that anchor nodes and non-anchor nodes have different tendencies. Let $G^d$ be the datum network. For an anchor node $v_i$ in another aligned networks, at the top of its preference list, it would like to matched with another anchor node $v_i$, which has the same partition ID in the datum network, i.e., $pidx(G^d, v_i) = pidx(G^d, v_j)$ (here $pidx(G^d, v_i)$ denotes the community label that $v_j$ belongs to in $G^d$). Second, if there is no appropriate anchor node, it would try to find a non-anchor node to pair. When planing to find a non-anchor node to pair, the anchor node, assuming to be $v_i$, would like to find a correct direction, and it would prefer to match with the non-anchor node $v_j$, which has lots of anchor nodes as neighbors with the same $pidx$ with $v_i$. When being matched together, the new node will be given the same $pidx$ as the anchor node. To improve the accuracy of synergistic partitioning among multiple social networks, an anchor node will never try to combine with another anchor node with different $pidx$.

For a non-anchor node, it would prefer to be matched with an anchor node neighbor which belongs to the dominant partition in the non-anchor node's neighbors. Here, dominant partition in a node's neighbors means the number of anchor nodes with this partition ID is the largest. Next, a non-anchor node would choose a general non-anchor node to pair with. At last, a non-anchor node would not like to combine with an anchor node being part of the partitions which are in subordinate status. After combined together, the new node will be given the same $pidx$ as the anchor node. To ensure the balance among the partitions, about $\frac{1}{3}$ of the nodes in the coarsest network are unlabeled.

\begin{figure}[t]
\centering
    \begin{minipage}[l]{0.8\columnwidth}
      \centering
      \includegraphics[width=1.0\textwidth]{./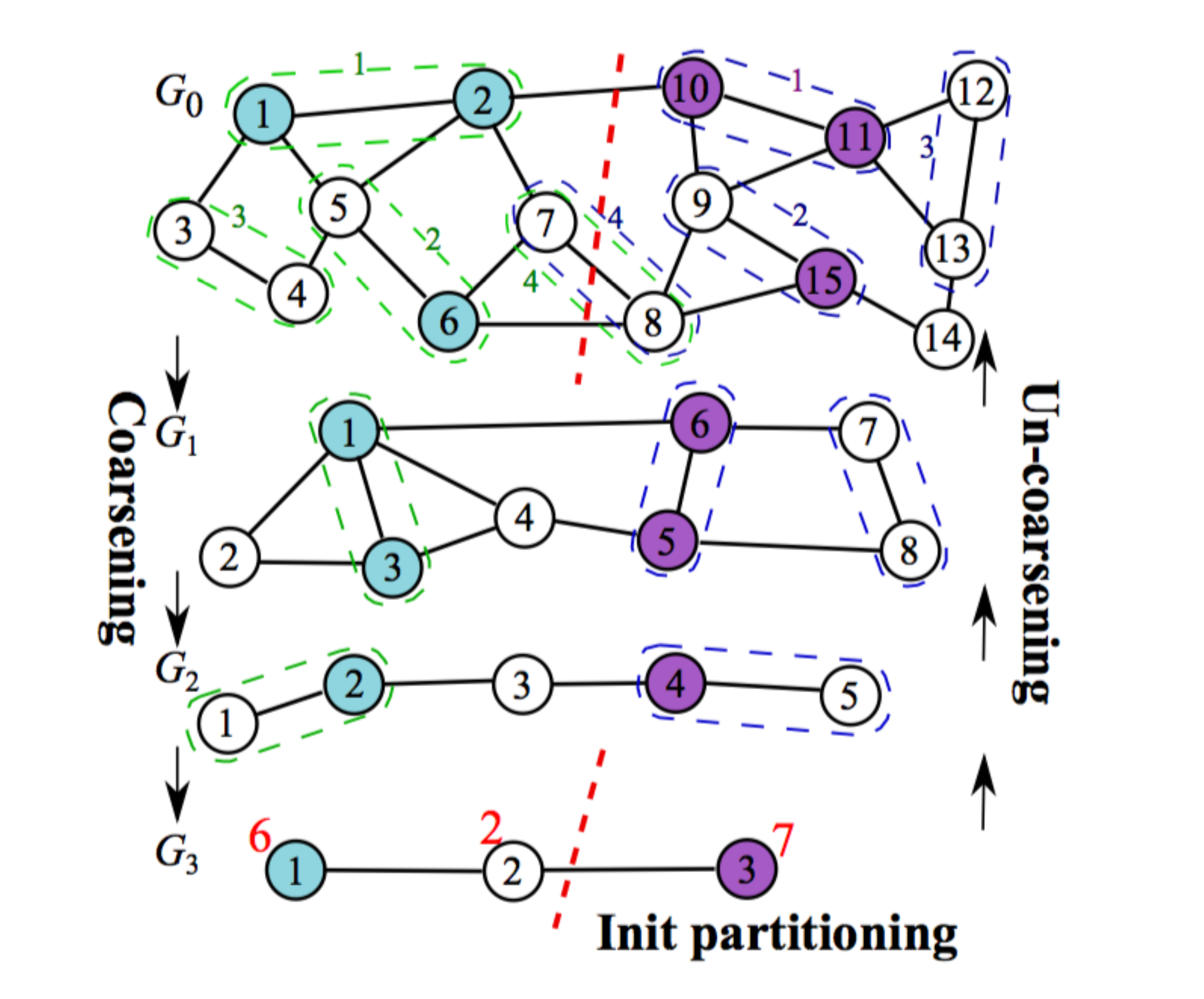}
    \end{minipage}
\caption{An Example of Synergistic Partition Process. In coarsening phase, the networks are stored in two servers, $V_1^i = \{v^i(j)|j \le |V^i|/2\}$ are stored on a sever and the others are on the other server. Anchor nodes are with colors, and different colors represent different partitions. Node pairs encircled by dotted chains represent the matchings. Numbers on chains mean the order of pairing.}\label{fig:chap8_sec5_network_coarsening_uncoarsening}
\end{figure}

In addition to minimizing both the discrepancy and cut discussed before, {\ourscalable} also tries to balance the size of partitions are the objectives in synergistic partitioning process. However, when put together, it is impossible to achieve them simultaneously. So, {\ourscalable} tries to make a compromise among them and develop a heuristic method to tackle the problems. 

\begin{itemize}
\item First, according to the conclusion smaller edge-weight corresponds to smaller edge-cut and the pairing tendencies, {\ourscalable} proposes a modified EWM (MEWM) method to find a matching in the coarsening phase, of which the edge-weight is as large as possible. At the end of the coarsening phase, there is no impurity in any node, meaning that each node contains no more than one type of anchor nodes. Besides, a ``\textit{purity}'' vector attribute and a $pidx$ attribute are added to each node to represent the percentage of each kind of anchor nodes swallowed up by it and the $pidx$ of the new node, respectively. 

\item Then, during the initial partitioning phase, {\ourscalable} treats the anchor nodes as labeled nodes and use a modified label propagation algorithm to deal with the non-anchor nodes in the coarsest network. 

\item At the end of the initial partitioning phase, {\ourscalable} will be able to generate balanced $k$ partitions and to maximize the number of same kind of anchor nodes being divided into same partitions. 

\item Finally, {\ourscalable} projects the coarsest network back to the original network, which is the same as traditional MKP process. 
\end{itemize}

The pseudo code of coarsening phase in synergistic partitioning process is available in Algorithm~\ref{alg:chap8_sec5_synergistic}, which will call the $Map()$ and $Reduce()$ functions in Algorithms~\ref{alg:chap8_sec5_synergistic_map} and \ref{alg:chap8_sec5_synergistic_reduce} respectively.



\section{Information Diffusion}\label{chapter:information_diffusion}\label{sec:diffusion}

Social influence can be widely spread among people, and information exchange has become one of the most important social activities in the real world. The creation of the Internet and online social networks has rapidly facilitated the communication among people. Via the interactions among users in online social networks, information can be propagated from one user to other users. For instance, in recent years, online social networks have become the most important social occasion for news acquisition, and many outbreaking social events can get widely spread in the online social networks at a very fast speed. People as the multi-functional ``sensors'' can detect different kinds of signals happening in the real world, and write posts to report their discoveries to the rest of the world via the online social networks. 

In this section, we will study the information diffusion process in the online social networks. \textit{Diffusion} denotes the spreading process of certain entities (like information, idea, innovation, even heat in physics and disease in bio-medical science) through certain channels among the target object group in a system. The entities to be spread, the channels available, the target object group and the system can all affect the diffusion process and lead to different diffusion observations. Therefore, different types of diffusion models have been proposed already, which will be introduced in this chapter.

Depending on the system where the diffusion process is originally studied, the diffusion models can be divided into (1) information diffusion models in social networks \cite{KKT03, cikm16}, (2) viral spreading in the bio-medical system \cite{BMZ83, DGG01}, and (3) heat diffusion in physical system \cite{N99, B96_2}. We will take the information diffusion in online social networks as one example. The channels for information diffusion belong to certain sources, like online world diffusion channels and offline world diffusion channels, or diffusion channels in different social networks. Meanwhile, depending on the diffusion channels and sources available, the diffusion models include (1) single-channel diffusion model \cite{asonam16_intertwined, KKT03}, (2) single source multi-channel diffusion model \cite{icdcs17}, (3) multi-source single-channel diffusion model \cite{ZNDZT16, iri16}, and (4) multi-source multi-channel diffusion model \cite{pakdd15, cikm16, bibm16}. Based on the categories of topics to be spread in the online social networks, the diffusion models can be categorized into (1) single topic diffusion \cite{KKT03, pakdd15}, (2) multiple intertwined topics concurrent diffusion \cite{cikm16, asonam16_intertwined, KOW08, DMS10, BKS07}.

In the following part of this section, we will introduce different kinds of diffusion models proposed to depict how information propagates among users in online social networks. We will first talk about the classic diffusion models proposed for the single-network single channel scenario, including the \textit{threshold based models}, \textit{cascades based models}, \textit{heat diffusion based models} and \textit{viral diffusion based models}. After that, several different cross-network diffusion models will be introduced, including the \textit{network coupling based diffusion model}, \textit{multi-source multi-channel diffusion model}, and \textit{cross-network random walk based diffusion model}.

\subsection{Traditional Information Diffusion Models}

The ``\textit{diffusion}'' phenomenon has been observed in different disciplines, like social science, physics, and bio-medical science. Various diffusion models have been proposed in these areas already. In this part, we will provide a brief introduction to these models, and introduce how to apply or adapt them for describe information diffusion process in online social networks.

Let $G = (\mathcal{V}, \mathcal{E})$ represent the network structure, based on which we want to study the information diffusion problem. Formally, given a user node $u \in \mathcal{V}$, we can represent the set of neighbors of $u$ as $\Gamma(u)$. Each user node in the network $G$ will have an indicator denoting whether the user has been activated or not. We will use notation $s(u) =1$ to denote that user $u$ has been activated, and $s(u) =0$ to represent that $u$ is still inactive. Initially, all the users are inactive to a certain information. Information can be propagated from an initial influence seed user set $\mathcal{S} \subset \mathcal{V}$ who are exposed to and activated by the information at the very beginning. At a timestamp in the diffusion process, given user $u$'s neighbor, we can represent the subset of the active neighbors as $\Gamma^a(u) = \{v | v \in \Gamma(u), s(v) = 1\}$. The set of inactive neighbors can be represented as $\Gamma^i(u) = \Gamma(u) \setminus \Gamma^a(u)$. Generally, the information diffusion process will stop if no new activation is available.

\subsubsection{Linear Threshold (LT) Models}

In this subsection, we will introduce the threshold models, and will use \textit{linear threshold model} as an example to illustrate such a kind of models. Several different variants of the \textit{linear threshold models} will be briefly introduced here as well.

Generally, the \textit{threshold models} assume that individuals have a unique threshold indicating the minimum amount of required information for them to be activated by certain information. Information can propagate among the users, and the information amount is determined by the closeness of the users. Close friends can influence each other much more than regular friends and strangers. If the information propagated from other users in the network surpass the threshold of a certain user, the user will turn to an activated status and also start to influence other users. Therefore, the threshold values can determine the performance of users in the online social networks. Depending on the setting of the thresholds as well as the amount of information propagated among the users, the \textit{threshold models} have different variants.

\noindent \textbf{LT Model}

In the \textit{linear threshold} (LT) model \cite{KKT03}, each user has a unique threshold denoting the minimum required information to active the user. Formally, the threshold of user $u$ can be represented as $\theta_u \in [0, 1]$. In the simulation experiments, the threshold values are normally selected from the uniform distribution $U(0, 1)$. Meanwhile, for each user pair, like $u, v \in \mathcal{V}$, information can be propagated between them. As mentioned before, close friends will have larger influence on each other compared with regular friends and strangers. Formally, the amount of information users $u$ can send to $v$ is denoted as weight $w_{u, v} \in [0, 1]$. Generally, the total amount of informations can send out is bounded. For instance, in the LT model, the total amount of information user $u$ can send out is bounded by $1$, i.e., $\sum_{v \in \Gamma(u)} w_{u, v} \le 1$. Different ways have been proposed to define the specific value of the weight $w_{u, v}$ value, and in many of the cases $w_{u, v}$ can be different from $w_{v, u}$ since the information each user can send out can be different. However, in many other cases, to simplify the setting, for the same user pair, $w_{u, v}$ and $w_{v, u}$ are usually assigned with the same value. For instance, in some LT models, Jaccard's Coefficient is applied to calculate the closeness between the user pairs which will be used as the weight value.

In the LT model, the information sent from the neighbors to user $u$ can be aggregated with linear summation. For instance, the total amount of information user $u$ can receive from his/her neighbors can be denoted as $\sum_{v \in \Gamma(u)} w(v, u) s(v)$ or $\sum_{v \in \Gamma^a{u}} w(v, u)$. To check whether a user can be activated or not, LT model will only need to check whether the following equation holds or not,
\begin{equation}
\sum_{v \in \Gamma^a{u}} w(v, u) \ge \theta_u.
\end{equation}
It denotes whether the received information surpasses the activation threshold of user $u$ or not. Here, we also need to notice that inactive neighbors will not send out information, and only the active neighbors can send out information. The information provided so far shows the critical details of the LT model. Next, we will show the general framework of the LT model to illustrate how it works.

In the LT model, the initial activated seed user set can be represented as $\mathcal{S}$, users in which can start the propagation of information to their neighbors. Generally, information propagates within the network step by step.
\begin{itemize}
\item \textit{Diffusion Starts}: At step $0$, only the seed users in $\mathcal{S}$ are active, and all the remaining users have inactive status.

\item \textit{Diffusion Spreads}: At step $t (t > 0)$, for each user $u$, if the information propagated from $u$'s active neighbors is greater than the threshold of $u$, i.e., $\sum_{v \in \Gamma^a{u}} w(v, u) \ge \theta_u$, $u$ will be activated with status $s(u) = 1$. All the activated users will remain active in the coming rounds, and can send out information to the neighbors. Active users cannot be activated again.

\item \textit{Diffusion Ends}: If no new activation happens in step $t$, the diffusion process will stop.
\end{itemize}

Specifically, in the diffusion process, at step $t$, we don't need to check all the users to see whether they will be activated or not. The reason is that, in the diffusion process, for most of the inactive users, if the status of their neighbors are not changed in the previous step, i.e., step $t-1$, the influence they can receive in step $t$ will still be the same as in step $t-1$. And they will remain the same status as they are in the previous step, i.e., ``\textit{inactive}''. Let $\mathcal{V}^a(t-1)$ denote the set of users who are recently activated in step $t-1$, we can represent the set of users they can influence as $\bigcup_{u \in \mathcal{V}^a(t-1)} \Gamma(u)$. In step $t$, these recently activated users will make changes to the information their neighbors can receive. Therefore, we only need to check whether the status of inactive users in the set $\bigcup_{u \in \mathcal{V}^a(t-1)} \Gamma(u)$ will meet the activation criterion or not.

After the diffusion process stops, a group of users with the active status will indicates the influence these seed users spread to, which can be represented as set $\mathcal{V}^a$. Generally, there will exist a mapping: $\sigma: \mathcal{S} \to |\mathcal{V}^a|$, which is formally called the influence function. Given the \textit{influence function}, with different seed user sets as the input, the influence they can achieve is usually different. Choosing the optimal seed user who can lead to the maximum influence is named as the \textbf{influence maximization} problem.

\noindent \textbf{Other Threshold Models}

The LT model assumes the cumulative effects of information propagated from the neighbors, and can illustrate the basic information diffusion process among users in the online social networks. The LT model has been well analyzed, and many other variant models have been proposed as well. Depending on the assignment of the threshold and weight values, many other different diffusion models can all be reduced to a special case of the LT model.

\begin{itemize}
\item \textbf{Majority Threshold Model}: Different from the LT mode, in \textit{majority threshold model} \cite{C08}, an inactive user $u$ can be activated if majority of his/her neighbors are activated. The \textit{majority threshold model} can be reduced to the LT model in the case that: (1) the influence weight between any friends $(u, v)$ in the network is assigned with value $1$; (2) the threshold of any user $u$ is set as $\frac{1}{2} D(u)$, where $D(u)$ denotes the degree of node $u$ in the network. For the nodes with large degrees, like the central node in the star-structured diagram, their activation will lead to the activation of lots of surrounding nodes in the network.

\item \textbf{k-Threshold Model}: Another diffusion model similar to the LT model is called the \textit{k-threshold diffusion model} \cite{C08}, in which users can be activated of at least $k$ of his/her neighbors are active. The \textit{k-threshold model} is equivalent to the LT model with settings (1) the influence weight between any friend pairs $(u, v)$ in the network is assigned with value $1$; and (2) the activation thresholds of all the users are assigned with a shared value $k$. For each user $u$, if $k$ of his/her neighbors have been activated, $u$ will be activated. 

Depending on the values of $k$, the \textit{k-threshold model} will have different performance. When $k=1$, a user will be activated of at least one of his/her neighbor is active. In such a case, all the users in the same connected components with the initial seed users will be activated finally. When $k$ is a very large value and even greater than the large node degree, e.g., $k > \max_{u \in \mathcal{V}} D(u)$, no nodes can be activated. When $k$ is a medium value, some of the users will be activated as the information propagates, but the other users with less than $k$ neighbors will never be activated.
\end{itemize}

\subsubsection{Independent Cascade (IC) Model}

An information cascade occurs when a people observe the actions of others and then engage in the same acts. Cascade clearly illustrates the information propagation routes, and the activating actions performed for users to their neighbors. In the cascade model, the information propagation dynamics is carried out in a step-by-step fashion. At each step, users can have trials to activate their neighbors to change their opinions with certain probabilities. If they succeed, the neighbors will change their status to follow the initiators. In the case that multiple users can all have the change to activate certain target user, the activation trials are performed sequentially in an arbitrary order.

Depending on the activation trials and users' reactions to the activation trials, different cascade models have been proposed already. In this section, we will talk about the cascade based models and use the \textit{independent cascade} (IC) model as an example to illustrate the model architecture.

\noindent \textbf{IC Model}

In the diffusion process, about one certain target user, multiple activation trials can be performed by his/her neighbors. In the \textit{independent cascade} model \cite{KKT03}, each activation is performed independently regardless of the historical unsuccessful trials. The activation trials are performed step by step. When user $u$ who has been activated in the previous step and tries to activate user $v$ in the current step, the success probability is denoted as $p_{u, v} \in [0,1]$. Generally, if users $u$ and $v$ are close friends, the activation probability will be larger compared with regular friends and strangers. The specific activation probability values is usually correlated with the social closeness between users $u$ and $v$, which can also be defined based the Jaccard's Coefficient in the simulation. The activation trials will only happen among the users who are friends. If $u$ succeeds in activating $v$, then user $v$ will change his/her status to ``\textit{active}'' and will remain in the status in the following steps. However, if $u$ fails to activate $v$, $u$ will lose the chance and cannot perform the activation trials any more. 

In the IC mode, we can represent the initial seed user as set $\mathcal{S} \subset \mathcal{V}$, who will spread the information to the remaining users. We illustrate the general information propagation procedure as follows:
\begin{itemize}

\item \textit{Diffusion Starts}: In the initial, the seed users will send out the information and start to activate their neighbors. For the users in set $\mathcal{S}$, the activation trials will start from them in a random order. For instance, if we pick user $u \in \mathcal{S}$ as the first user, $u$ will activate his/her inactive friends in $\Gamma(u)$ in a random order as well.

\item \textit{Diffusion Spreads}: In step $t$, only the users who have just been activated in the previous step can activate other users. We can denote the users who have just been activated in the previous as set $\mathcal{S}(t-1)$. Users in set $\mathcal{S}(t-1)$ will start to perform activation trials. For the users who are activated by these users, they will remain active in the following steps and will be added to the set $\mathcal{S}(t)$, who will start the activation trials in the next step.

\item \textit{Diffusion Ends}: If no activation happens in a step, the diffusion process stops.

\end{itemize}

In IC model, the activation trials are performed by flipping a coin with certain probabilities, whose result is uncertain. Even with the same provided initial seed user set $\mathcal{S}$, the number of users who will be activated by the seed users can be different if we running the IC model twice. Formally, we can represent the set of activated users by the seed users as $\mathcal{V}^a \subset \mathcal{V}$. Therefore, in the experimental simulations, we usually run the diffusion model multiple times and calculate the average number of activated users, i.e., $|\mathcal{V}^a|$, to denote the expected influence achieved by the seed user set $\mathcal{S}$. 

\noindent \textbf{Other Cascade Models}

Generally, the independent activation assumption renders the IC model the simplest cascade based diffusion models. In the real world, the diffusion process will be more complicated. For the users, who have been failed to be activated by many other users, it probably indicates that the user is not interested in the information. Viewed in such a perspective, the probability for the user to be activated will decrease as more activation trials have been performed. In this part, we will introduce another cascade based diffusion model, \textit{decreasing cascade model} (DC) \cite{KKT05}.

To illustrate the DC model more clearly and show it difference compared with the IC model, we use notation $P(u \to v | \mathcal{T})$ to represent the probability for user $u$ to activate $v$ given a set of users $\mathcal{T}$ have performed and failed the activation trials to $v$ already. Let $\mathcal{T}$, $\mathcal{T}'$ denote two historical activation trial user set, where $\mathcal{T} \subseteq \mathcal{T}'$. In the IC model, we have
\begin{equation}
P(u \to v | \mathcal{T}) = P(u \to v | \mathcal{T}').
\end{equation}
In other words, every activation trial is independent with each other, and the activation probability will not be changed as more activation trials have been performed.

As introduced at the beginning of this subsection, the fact that users in set $\mathcal{T}$ fail to activate $v$ indicates that $v$ probably is not interested in the information, and the change for $v$ to be activated afterwards will be lower. Furthermore, as more activation trials, e.g., users in $\mathcal{T}'$ are performed, the probability for $u$ to active $v$ will be decreased, i.e.,
\begin{equation}
P(u \to v | \mathcal{T}) \ge P(u \to v | \mathcal{T}').
\end{equation}

Intuitively, this restriction states that a contagious node's probability of activating some $v$ decreases if more nodes have already attempted to activate $v$, and $v$ is hence more ``marketing-saturated''. The DC model incorporates the IC model as a special case, and is more general in information diffusion process modeling than the IC model.

\subsubsection{Epidemic Diffusion Model}

The threshold and cascade based diffusion models introduced in the previous part mostly assume that ``once a user is activated, he/she will remain the active status forever''. However, in the real world, these activated users can change their minds and the activated users can still have the chance to recover to the original status. In the bio-medical science, diffusion models have been studied for many years to model the spread of disease, and several \textit{epidemic diffusion models} have been introduced already. In the disease propagation, people who are susceptible to the disease can be get infected by other people. After some time, many of these infected people can get recovered and become immune to the disease, while many other users can get recovered and get susceptible to the disease again. Depending on the people's reactions to the disease after recovery, several different \textit{epidemic diffusion models} \cite{PCVV15} have been proposed already.

In this subsection, we will introduce the \textit{epidemic diffusion models}, and try to use them to model the diffusion of information in the online social networks.

\noindent \textbf{Susceptible-Infected-Recovered (SIR) Diffusion Model}

The SIR model was proposed by W. O. Kermack and A. G. McKendrick in 1927 to model the infectious diseases, which consider a fixed population with three main categories: \textit{susceptible} (S), \textit{infected} (I), and \textit{recovered} (R). As the disease propagates, the individual status can change among \{S, I, R\} following flow:
\begin{equation}
S \to I \to R.
\end{equation}
In other words, the individuals who are susceptible to the disease can get infected, while those infected individuals also have the chance to recover from the disease as well.

In this part, we will use the SIR model to describe the information cascading process in online social networks. Let $\mathcal{V}$ denote the set of users in the network. We introduce the following notations to represent the number of users in different categories:

\begin{itemize}
\item $S(t)$: the number of users who are \textit{susceptible} to the information at time $t$, but have not gotten \textit{infected} yet.

\item $I(t)$: the number of users who are currently \textit{infected} by the information, and can spread the information to others in the \textit{susceptible} catetory.

\item $R(t)$: the number of users who have been infected and already recovered from the information infection. The users are immune to the information will not be infected again.
\end{itemize}

Based on the above notations, we have the following equations hold in the SIR model.
\begin{align}
&S(t) + I(t) + R(t) = |\mathcal{V}|,\\
&\frac{\mathrm{d}S(t)}{\mathrm{d}t} + \frac{\mathrm{d}I(t)}{\mathrm{d}t} + \frac{\mathrm{d}R(t)}{\mathrm{d}t} = 0,\\
\end{align}
where,
\begin{equation}
\begin{cases}
&\frac{\mathrm{d}S(t)}{\mathrm{d}t} = - \beta S(t) I (t),\\
&\frac{\mathrm{d}I(t)}{\mathrm{d}t} = \beta S(t) I (t) - \gamma I(t),\\
&\frac{\mathrm{d}R(t)}{\mathrm{d}t} = \gamma I(t).
\end{cases}
\end{equation}

In the above equations, the parameters $\beta$ denotes the infection rate of these \textit{susceptible} users by the \textit{infected} users in unit time, and $\gamma$ represents the recovery rate. Generally, all the users in the social network will belong to these three categories, and the total number of users in these three categories will sum to $|\mathcal{V}|$ at any time in the diffusion process. Therefore, we can also get the derivatives of the summation with regarding to the time parameter $t$ will be $0$. At a unit time, the number of users transit from the \textit{susceptible} status to the \textit{infection} status depends on the available susceptible and infected users at the same time. For each infected user, the number of users he/she can infect is proportional to the available susceptible users, which can be denoted ad $\beta S(t)$. For all the infected users, the total number of users can get infected will be $\beta S(t) I(t)$. For the number of users who are recovered in unit time, it depends on the number of total infected users $I(t)$ as well as the recovery rate $\gamma$, which can be represented as $\gamma I(t)$. Meanwhile, as to the number of infected user changes in unit time is determined by both the number of susceptible users who get infected as well as the infected users who get recovered. 

We have parameters $\beta, \gamma \ge 0$, and the numbers $S(t), I(t),\\ R(t) \ge 0$ to be positive at any time. Therefore, we can know that (1) $\frac{\mathrm{d}S(t)}{\mathrm{d}t} \le 0$, and users in the \textit{susceptible} group is non-increasing; (2) $\frac{\mathrm{d}R(t)}{\mathrm{d}t}  \ge 0$, and users in the \textit{recovered} group is non-decreasing; while (3) the sign of term $\frac{\mathrm{d}I(t)}{\mathrm{d}t}$ can be either positive, zero or negative depending on the parameters $\beta, \gamma$ and the users in the \textit{susceptible} and \textit{infected} groups:
\begin{itemize}
\item \textit{positive}: if $\beta S(t) > \gamma$;

\item \textit{zero}: if $\beta S(t) = \gamma$ or $I(t) = 0$;

\item \textit{negative}: $\beta S(t) < \gamma$.
\end{itemize}

\noindent \textbf{Susceptible-Infected-Susceptible (SIS) Diffusion Model}

In some cases, the users cannot get immune to the information and don't exist the \textit{recovery} status actually. For the users, who get infected, they can go to the \textit{susceptible} status and can get \textit{infected} again in the future. To model such a phenomenon, another diffusion model very similar to the SIR model has been proposed, which is called the Susceptible-Infected-Susceptible (SIS) model.

In the SIS model, the individual status flow is provided as follows:
\begin{equation}
S \to I \to S.
\end{equation}

Such a status flow will continue, and individuals will switch their status between \textit{susceptible} and \textit{infected} in the information diffusion process. Therefore, the absolute number changes of individuals in these two categories will be the same in unit time.
\begin{align}
&\frac{\mathrm{d}S(t)}{\mathrm{d}t} = - \beta S(t) I (t) + \gamma I(t),\\
&\frac{\mathrm{d}I(t)}{\mathrm{d}t} = \beta S(t) I (t) - \gamma I(t).
\end{align}

\noindent \textbf{Susceptible-Infected-Recovered-Susceptible (SIRS)\\ Diffusion Model}

The Susceptible-Infected-Recovered-Susceptible (SIRS) diffusion model to be introduced in this part is another type of epidemic model, where the individuals in the \textit{recovery} category can lose the immunity and transit to the \textit{susceptible} category and have the potential to be infected again. Therefore, the individual status flow will be
\begin{equation}
S \to I \to R \to S.
\end{equation}

We can denote the rate of individuals who lose the immunity as $f$, and the total number of individuals who may lose the immunity will be $f\cdot R(t)$. Therefore, we can have the derivative of the individual numbers belonging to different categories as
\begin{align}
&\frac{\mathrm{d}S(t)}{\mathrm{d}t} = - \beta S(t) I (t) + f R(t),\\
&\frac{\mathrm{d}I(t)}{\mathrm{d}t} = \beta S(t) I (t) - \gamma I(t),\\
&\frac{\mathrm{d}R(t)}{\mathrm{d}t} = \gamma I(t) - f R(t).
\end{align}

Besides these epidemic diffusion models introduced in this subsection, there also exist many different version of the epidemic diffusion models, which considers many other factors in the diffusion process, like the birth/death of individuals. It is also very common in the real-world online social networks, since new users will join in the social network, and existing users will also delete their account and get removed from the social network. Involving such factors will make the diffusion model more complex, and we will not introduce them here due to the limited space. More information about these different epidemic diffusion models is available in \cite{N02, PCVV15}.

\subsubsection{Heat Diffusion Models}

Heat diffusion is a well observed physical phenomenon. Generally, in a medium, heat will always diffuses from regions with a high temperature to the region with a lower temperature. Recently, many works have applied the heat diffusion to model the information propagation in online social networks. In this subsection, we will talk about the \textit{heat diffusion model} and introduce how to adapt it to model the information diffusion process in online social networks.

\noindent \textbf{General Heat Diffusion}

Throughout a geometric manifold, let function $f(x, t)$ denote the temperature at location $x$ at time $t$, and we can represent the initial temperature at different locations as $f_0(x)$. The heat flows with initial conditions can be described by the following second order differential equation
\begin{equation}
\begin{cases}
\frac{\partial f(x, t)}{\partial t} - \Delta f(x, t) = 0\\
f(x, 0) = f_0(x),
\end{cases}
\end{equation}
where $\Delta f(x, t)$ is a \textit{Laplace-Beltrami operator} on function $f(x, t)$. 

Many existing works on the heat diffusion studies are mainly focused on the heat kernel matrix. Formally, let $\mb{K}_t$ denote the heat kernel matrix at timestamp $t$, which describes the heat diffusion among different regions in the medium. In the matrix, entry $K_t(x, y)$ denotes the heat diffused from the original position $y$ to position $x$ at time $t$. However, it is very difficult to represent the medium as a regular geometry with a known dimension. In the next part, we will introduce how to apply the heat diffusion observations to model the information diffusion in the network-structured graph data. 

\noindent \textbf{Heat Diffusion Model}

Given a homogeneous network $G = (\mathcal{V}, \mathcal{E})$, for each node $u \in \mathcal{V}$ in the network, we can represent the information at $u$ in timestamp $t$ as $f(u, t)$. The initial information available at each of the node can be denoted as $f(u, 0)$. The information can be propagated among the nodes in the network if there exists a pipe (i.e., a link) between them. For instance, with a link $(u, v) \in \mathcal{E}$ in the network, information can be propagated between $u$ and $v$.

Generally, in the diffusion process, the amount of information propagated between different nodes in the network depends on (1) the difference of information available at them, and (2) the thermal conductivity-the heat diffusion coefficient $\alpha$. For instance, at timestamp $t$, we can represent the amount of information reaching nodes $u, v \in \mathcal{V}$ as $f(u, t)$ and $f(v, t)$. If $f(u, t) > f(v, t)$, information tends to propagate from $u$ to $v$ in the network, and the amount of information propagated is $\alpha \cdot \left(f(u, t) - f(v, t) \right)$, and the propagation direction will be reversed if $f(u, t) < f(v, t)$. The information amount changes at node $u$ at timestamps $t$ and $t + \Delta t$ can be represented as
\begin{equation}
\frac{f(u, t + \Delta t) - f(u, t)}{\Delta t} = - \sum_{v \in \Gamma(u)} \alpha \cdot \left(f(u, t) - f(v, t) \right).
\end{equation}

Let's use vector $\mb{f}(t)$ to represent the amount of information available at all the nodes in the network at timestamp $t$. The above information amount changes can be rewritten as
\begin{equation}
\frac{\mb{f}(t + \Delta t) - \mb{f}(t)}{\Delta t} = \alpha \mb{H} \mb{f}(t),
\end{equation}
where in the matrix $\mb{H} \in \mathbb{R}^{|\mathcal{V}| \times |\mathcal{V}|}$, entry $H(u, v)$ has value
\begin{equation}
H(u, v) =
\begin{cases}
1, &\mbox{ if } (u, v) \in \mathcal{E} \lor (v, u) \in \mathcal{E},\\
-D(u), & \mbox{ if } u = v,\\
0, & \mbox{ otherwise},
\end{cases}
\end{equation}
where $D(u)$ denotes the degree of node $u$ in the network.

In the limit case $\Delta t \to 0$, we can rewrite the equation as
\begin{equation}
\frac{\mathrm{d} \mb{f}(t)}{\mathrm{d}t} = \alpha \mb{H} \mb{f}(t). 
\end{equation}
Solving the function, we can represent the amount of information at each node in the network as
\begin{align}
\mb{f}(t) &= \exp^{t \alpha \mb{H}} \mb{f}(0)\\
&= \left( \mb{I} + \alpha t \mb{H} + \frac{\alpha^2 t^2}{2!} \mb{H}^2 + \frac{\alpha^3 t^3}{3!} \mb{H}^3 + \cdots \right) \mb{f}(0),
\end{align}
where term $\exp^{t \alpha \mb{H}}$ is called the diffusion kernel matrix, which can be expanded according to Taylor's theorem.

\subsection{Intertwined Diffusion Models}\label{sec:chap9_sec3_intertwined}

For the models introduced in the previous section, they are all proposed for modeling the diffusion of information in online social networks involving one single type of connections in propagating one type of information only. However, in the real world, multiple types of information can be propagated within the network simultaneously, relationships among which can be quite intertwined, including \textit{competitive}, \textit{complimentary} and \textit{independent}. Furthermore, within the networks, even the network structure is homogeneous but the social links among users may be associated with polarities indicating the relationship among the users. For instance, for some of the social links, they denote friendship, while for some of the links, they indicate the user pairs are enemies. Formally, the social network structure with polarities associated with the social links are called signed networks, where the link polarities can affect the information diffusion in them greatly.

In this section, we will introduce the \textit{intertwined diffusion models} to describe the information propagation process about both (1) the information entities with intertwined relationships, and (2) for network structures with links attaching different polarities. The models to be introduced in this section are based on \cite{asonam16_intertwined, icdcs17} respectively.

\subsubsection{Intertwined Diffusion Models for Multiple Topics}

Traditional information diffusion studies mainly focus on one single online social network and has extensive concrete applications in the real world, e.g., product promotion \cite{DMS10, NN12} and opinion spread \cite{CCCKLRSWWY11}. In the traditional viral marketing setting \cite{DR01, KKT03}, only one product/idea is to be promoted. However, in the real scenarios, the promotions of multiple products can co-exist in the social networks at the same time, which is referred to as the \textit{intertwined information diffusion problem}.

The relationships among the products to be promoted in the network can be very complicated. For example, in Figure~\ref{fig:chap9_sec3_relationships}, we show $4$ different products to be promoted in an online social network and HP printer is our target product. At the product level, the relationships among these products can be:

\begin{itemize}

\item \textit{independent}: promotion activities of some products (e.g., HP printer and Pepsi) can be \textit{independent} of each other.

\item \textit{competing}: products having common functions will \textit{compete} for the market share \cite{BKS07, CNWZ07} (e.g., HP printer and Canon printer). Users who have bought a HP printer are less likely to buy a Canon printer again.

\item \textit{complementary}: product cross-sell is also very common in marketing \cite{NN12}. Users who have bought a certain product (e.g., PC) will be more likely to buy another product (e.g., HP printer) and the promotion of PC is said to be \textit{complementary} to that of HP printer.

\end{itemize}

\begin{figure}[t]
\centering
    \begin{minipage}[l]{0.63\columnwidth}
      \centering
      \includegraphics[width=\textwidth]{./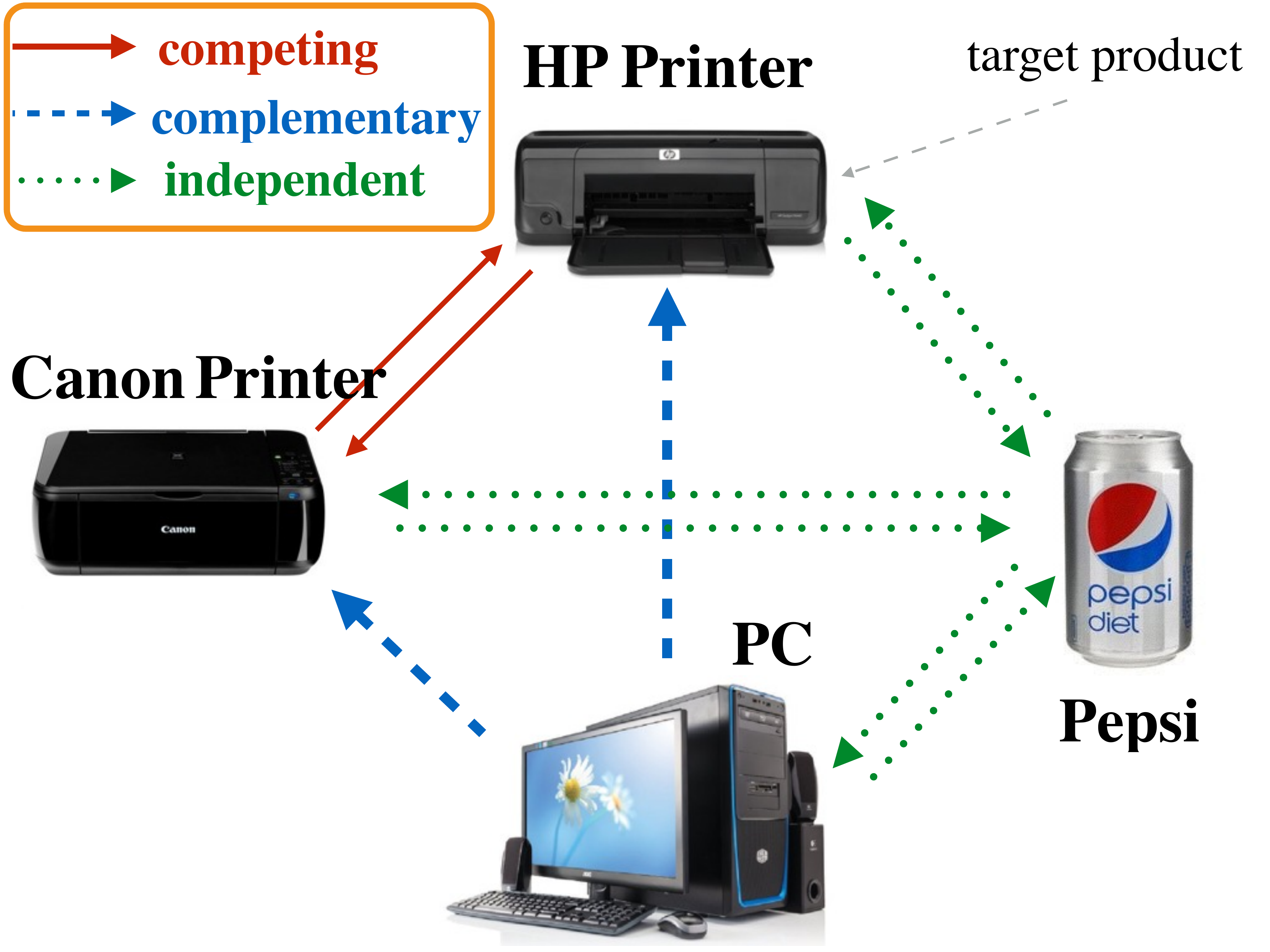}
    \end{minipage}
  \caption{Intertwined relationships among products.}\label{fig:chap9_sec3_relationships}
\end{figure}

In this section, we will study the information diffusion problem in online social networks, where multiple products are being promoted simultaneously. The relationships among these product can be obtained in advance via effective market research, which can be \textit{independent}, \textit{competitive} or \textit{complementary}. A novel information diffusion model inter\underline{T}wined \underline{L}inear \underline{T}hreshold ({\diffusion}) will be introduced in this section. {\diffusion} quantifies the \textit{impacts} among products with the \textit{intertwined threshold updating strategy} and can handle the intertwined diffusions of these products at the same time.

\noindent \textbf{Diffusion Setting Description and Concept Definition}

\begin{defn} 
(Social Network): An online \textit{social network} can be represented as $G = (\mathcal{V}, \mathcal{E})$, where $\mathcal{V}$ is the set of users and $\mathcal{E}$ contains the interactions among users in $\mathcal{V}$. The set of $n$ different products to be promoted in network $G$ can be represented as $\mathcal{P} = \{p^1, p^2, \cdots, p^n\}$.
\end{defn}

\begin{defn} 
(User Status Vector): For a given product $p^j \in \mathcal{P}$, users who are influenced to buy $p^j$ are defined to be ``\textit{active}'' to $p^j$, while the remaining users who have not bought $p^j$ are defined to be ``\textit{inactive}'' to $p^j$. User $u_i$'s status towards all the products in $\mathcal{P}$ can be represented as ``\textit{user status vector}'' $\mb{s}_i = (s_i^1, s_i^2, \cdots, s_i^n)$, where $s_i^j$ is $u_i$'s status to product $p^j$. Users can be activated by multiple products at the same time (even competing products), i.e., multiple entries in \textit{status vector} $\mb{s}_i$ can be ``\textit{active}'' concurrently.
\end{defn} 

\begin{defn} 
(Independent, Competing and Complementary Products): Let $P(s_i^j = 1)$ (or $P(s^j_i)$ for simplicity) denote the probability that $u_i$ is activated by product $p^j$ and $P(s_i^j | s_i^k)$ be the conditional probability given that $u_i$ has been activated by $p^k$ already. For products $p^j, p^k \in \mathcal{P}$, the promotion of $p^k$ is defined to be (1) \textit{independent} to that of $p^j$ if $\forall u_i \in \mathcal{V}$, $P(s_i^j | s_i^k) = P(s_i^j)$, (2) \textit{competing} to that of $p^j$ if $\forall u_i \in \mathcal{V}$, $P(s_i^j | s_i^k) < P(s_i^j)$, and (3) \textit{complementary} to that of $p^j$ if $\forall u_i \in \mathcal{V}$, $P(s_i^j | s_i^k) > P(s_i^j)$.
\end{defn} 

\noindent \textbf{TLT Diffusion Model}

To depict the intertwined diffusions of multiple independent/competing/complementary products, a new information diffusion model {\diffusion} is introduced in \cite{asonam16_intertwined}. In the existence of multiple products $\mathcal{P}$, user $u_i$'s influence to his neighbor $u_k$ in promoting product $p^j$ can be represented as $w^j_{i,k} \ge 0$. Similar to the traditional LT model, in {\diffusion}, the influence of different products can propagate within the network step by step. User $u_i$'s threshold for product $p^j$ can be represented as $\theta^j$ and $u_i$ will be activated by his neighbors to buy product $p^j$ if 
\begin{equation}
\sum_{u_l \in \Gamma_{out}(u_i)} w^j_{l,i} \ge \theta^j_i.
\end{equation}

Different from traditional LT model, in {\diffusion}, users in online social networks can be activated by multiple products at the same time, which can be either \textit{independent}, \textit{competing} or \textit{complementary}. As shown in Figure~\ref{fig:chap9_sec3_relationships}, we observe that users' chance to buy the HP printer will be (1) unchanged given that they have bought Pepsi (i.e., the \textit{independent} product of HP printer), (2) increased if they own PCs (i.e., the \textit{complementary} product of HP printer), and (3) decreased if they already have the Canon printer (i.e., the \textit{competing} product of HP printer).

To model such a phenomenon in {\diffusion}, the following \textit{intertwined threshold updating strategy} has been introduced in \cite{asonam16_intertwined}, where users' \textit{thresholds} to different products will change \textit{dynamically} as the influence of other products propagates in the network.

\begin{defn} 
(Intertwined Threshold Updating Strategy): Assuming that user $u_i$ has been activated by $m$ products $p^{\tau_1}$, $p^{\tau_2}$, $\cdots$, $p^{\tau_m} \in \mathcal{P} \setminus \{p^j\}$ in a sequence, then $u_i$'s \textit{threshold} towards product $p^j$ will be updated as follows:
\begin{align}
(\theta^j_i)^{\tau_1} &= \theta^j_i \frac{P(s^j_i)}{P(s^j_i | s^{\tau_1}_i)}, (\theta^j_i)^{\tau_2} = (\theta^j_i)^{\tau_1} \frac{P(s^j_i | s^{\tau_1}_i)}{P(s^j_i | s^{\tau_1}_i, s^{\tau_2}_i)}, \cdots\\
(\theta^j_i)^{\tau_m} &= (\theta^j_i)^{\tau_{m-1}} \frac{P(s^j_i | s^{\tau_1}_i, \cdots, s^{\tau_{m-1}}_i)}{P(s^j_i | s^{\tau_1}_i, \cdots, s^{\tau_{m-1}}_i, s^{\tau_{m}}_i)},
\end{align}
where $(\theta^j_i)^{\tau_k}$ denotes $u_i$'s threshold to $p^j$ after he has been activated by $p^{\tau_1}$, $p^{\tau_2}$, $\cdots$, $p^{\tau_k}$, $k \in \{1, 2, \cdots, m\}$.
\end{defn} 

In this section, we do not focus on the order of products that activate users \cite{CCCKLRSWWY11} and to simplify the calculation of the \textit{threshold updating strategy}, we assume only the most recent activation has an effect on updating current thresholds, i.e., 
\begin{equation}
\frac{P(s^j_i | s^{\tau_1}_i, \cdots, s^{\tau_{m-1}}_i)}{P(s^j_i | s^{\tau_1}_i, \cdots, s^{\tau_{m-1}}_i, s^{\tau_{m}}_i)} \approx \frac{P(s^j_i)}{P(s^j_i | s^{\tau_{m}}_i)}=\phi_i^{\tau_{m} \to j}.
\end{equation}

\begin{defn} 
(Threshold Updating Coefficient): Term $\phi_i^{l \to j} = \frac{P(s^j_i)}{P(s^j_i | s^l_i)}$ is formally defined as the ``\textit{threshold updating coefficient}'' of product $p^l$ to product $p^j$ for user $u_i$, where
\begin{equation}
\phi_i^{l \to j}
\begin{cases} 
< 1,  & \mbox{if }p^l\mbox{ is \textit{complementary} to }p^j, \\
= 1,  & \mbox{if }p^l\mbox{ is \textit{independent} to }p^j, \\
> 1,  & \mbox{if }p^l\mbox{ is \textit{competing} to }p^j.
\end{cases}
\end{equation}
\end{defn} 

The \textit{intertwined threshold updating strategy} can be rewritten based on the \textit{threshold updating coefficients} as follows:
\begin{equation}
(\theta^j_i)^{\tau_m} \approx \theta^j_i \cdot \phi_i^{\tau_{1} \to j} \cdot \phi_i^{\tau_{2} \to j} \cdots \phi_i^{\tau_{m} \to j}.
\end{equation}

\subsubsection{Diffusion Models for Signed Networks}\label{subsec:chap9_sec3_signed}

In recent years, signed networks \cite{zhang2016trust, DBLP:journals/corr/TangCAL15} have gained increasing attention because of their ability to represent diverse and contrasting social relationships. Some examples of such contrasting relationships include friends vs enemies \cite{WS12}, trust vs distrust \cite{YTYXL13}, positive attitudes vs negative attitudes \cite{YCZC13}, and so on. These contrasting relationships can be represented as links of different polarities, which result in signed networks. Signed social networks can provide a meaningful perspective on a wide range of social network studies, like \textit{user sentiment analysis} \cite{WPLP14}, \textit{social interaction pattern extraction} \cite{LHK10_2}, \textit{trustworthy friend recommendation} \cite{LHK10}, and so on.

Information dissemination is common in social networks~\cite{SZ11}. Due to the extensive social links among users, information on certain topics, e.g., politics, celebrities and product promotions, can propagate leading to a large number of nodes reporting the same (incorrect) observations rapidly in online social networks. In particular, the links in signed networks are of different polarities and can denote trust and distrust relationships among users~\cite{LXCGSL14}, which will inevitably have an impact on information propagation. 

\begin{figure}[t]
\centering
    \begin{minipage}[l]{0.6\columnwidth}
      \centering
      \includegraphics[width=\textwidth]{./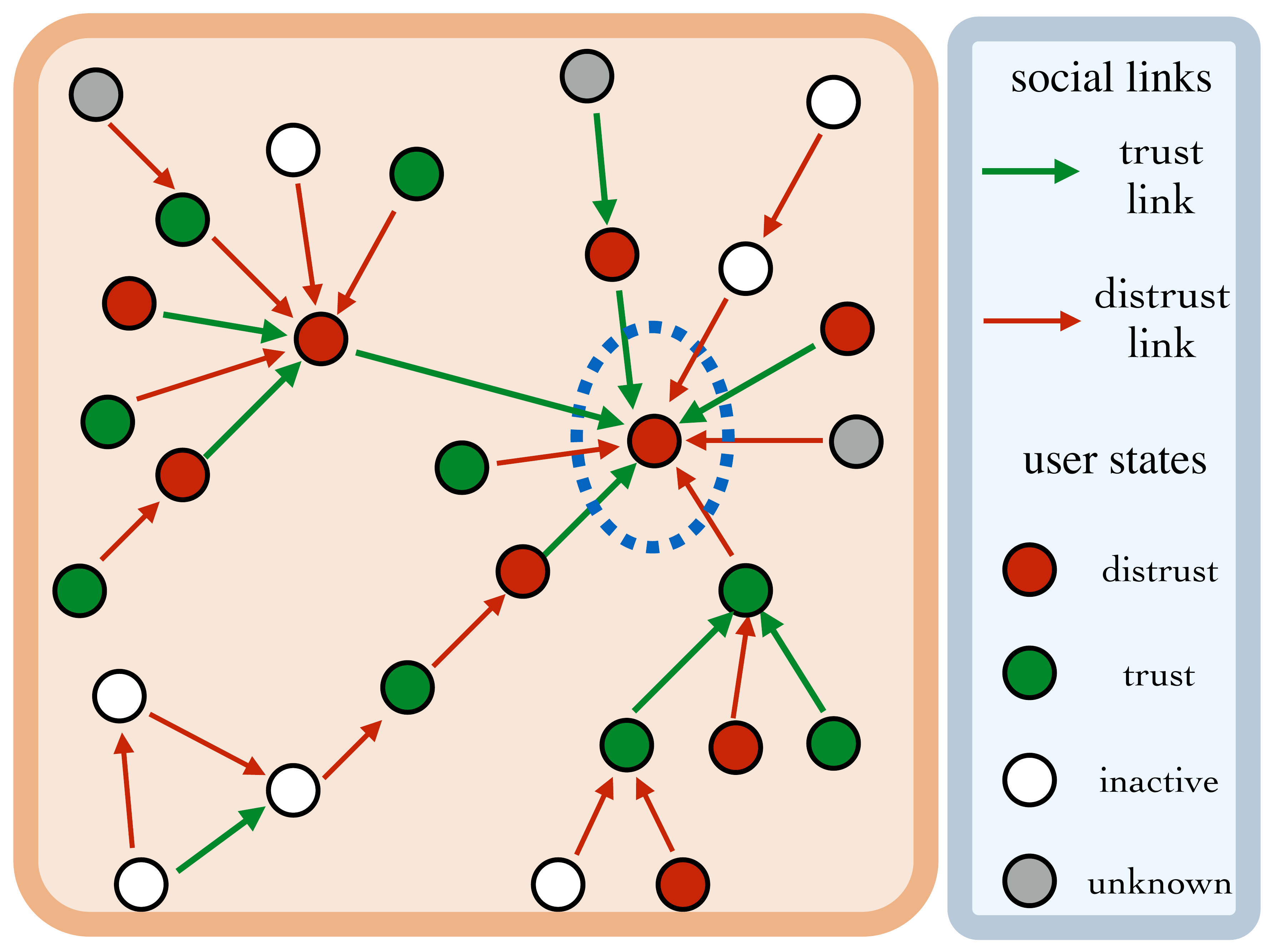}
    \end{minipage}
  \caption{Example of the information diffusion problem in signed networks.}\label{fig:chap9_sec3_example}
\end{figure}

In  Figure~\ref{fig:chap9_sec3_example}, an example is provided to help illustrate the information diffusion problem in signed networks more clearly. In the example, users are connected to one another with signed links, depending on their trust and distrust relations. It is noteworthy that the conventions used for the direction of information diffusion in this network are slightly different from traditional influence analysis, because they represent signed links. For instance, if Alice trusts (or follows) Bob, a directed edge exists from Alice to Bob, but the information diffusion direction will be from Bob to Alice. Via the signed links, inactive users in the network can get infected by certain information propagated from their neighbors with either a positive or negative opinion about the information (i.e., the green or red states in the figure). Considering the fact that it is often difficult to  directly identify  all the user infection states in real settings,  we allow for the possibility of some user states in the network  to  be unknown. Activated users can propagate the information to other users. In general, if a user is activated with a positive or negative opinion about the information, she might activate one or more of her incoming neighbors to trust or distrust the information, depending on the sign of the incoming link. 

In this subsection, we will focus on studying the information diffusion problem in signed networks. The edges in the network are directed and signed, and they represent trust or distrust relationships. For example, when node $i$ trusts or distrusts node $j$, we will have a corresponding positive or negative link from node $i$ to node $j$. In this setting, nodes are associated with states corresponding to a prevailing opinion about the truth of a fact. These states can be  drawn from $\{-1, +1, 0, ? \}$, where $+1$ indicates their agreement with a specific fact, $-1$ indicates their disagreement, $0$ indicates the fact that they have no opinion of the fact at hand, and $?$ indicates their opinion is unknown. The last of these states is necessary to model the fact that the states of many nodes in large-scale networks are often unknown.  Note that the use of multiple states of nodes in the network is different from traditional influence analysis.  Users are influenced with varying opinions of the fact in question, based on their observation of their neighbors (i.e., states of neighborhood nodes), and their trust or distrust of their neighbor's opinions (i.e., signs of links with them). This model is essentially a signed version of influence propagation models, because the sign of the link plays a critical role in how a specific bit of information is transmitted.

Most existing information diffusion models are designed for unsigned networks. In signed networks, information diffusion is also related to actor-centric  trust and distrust, in which notions of node states and  the signs on links play an important role. To depict how information propagates in the signed networks, a new diffusion model, namely \textit{asy\underline{M}metric \underline{F}lipping \underline{C}ascade} ({\sic}), has been introduced for signed networks in \cite{icdcs17}.

Traditional social networks are unsigned in the sense that the links are assumed, by default, to be positive links. Signed social networks are a generalization of this basic concept.

\begin{defn}
(Weighted Signed Social Network): Formally, a \textit{weighted signed social network} can be represented as a graph $G = (\mathcal{V}, \mathcal{E}, s, w)$, where $\mathcal{V}$ and $\mathcal{E}$ represents the nodes (users) and directed edges (social links), respectively.  In signed networks, each social link has its own polarity (i.e., the sign) and is associated with a weight indicating the intimacy among users, which can be represented with the mappings $s: \mathcal{E} \to \{-1, +1\}$ and $w: \mathcal{E} \to [0,1]$ respectively.
\end{defn}

As discussed in before,  we interpret the signs from a trust-centric point of view.  Information propagated among users is highly associated with the intimacy scores~\cite{sdm15} among them: information tends to propagate among close users. To represent the information diffusion process in trust-centric networks, the concept of \textit{weighted signed diffusion network} was defined as follows:

\begin{defn} (Weighted Signed Diffusion Network): Formally, given a \textit{signed social network} $G$, its corresponding \textit{weighted signed diffusion network} can be represented as $G_D = (\mathcal{V}_D, \mathcal{E}_D, s_D, w_D)$, where $\mathcal{V}_D = \mathcal{V}$ and $\mathcal{E}_D = \{(v, u)\}_{(u, v) \in \mathcal{E}}$. Diffusion links in $\mathcal{E}_D$ share the same sign and weight mappings as those in $\mathcal{E}$, which can be obtained via mappings $s_D: \mathcal{E}_D \to \{-1, +1\}$, $s_D(v, u) = s(u, v), \forall (v, u) \in \mathcal{E}_D$ and $w_D: \mathcal{E}_D \to [0,1]$, $w_D(v, u) = w(u, v), \forall (v, u) \in \mathcal{E}_D$. For any directed diffusion link $(u, v) \in \mathcal{E}_D$, we can represent its sign and weight to be $s_D(u, v)$ and $w_D(u, v)$ respectively.
\end{defn}

Note that we have reversed the direction of the links because of the trust-centric interpretation, in which information diffuses from A to B, when B trusts A. However, in networks with other semantic interpretations, this reversal does not need to be performed. The overall algorithm is agnostic to the specific preprocessing performed in order to fit a particular semantic interpretation of the signed network.

\noindent \textbf{MFC Diffusion Model}

The IC model, which  assumes that social links are all of the same polarity, works for unsigned networks, but it cannot be applied to signed networks with node states to reflects beliefs of different polarities. To overcome such a shortcoming, a novel diffusion model, {\sic}, will be introduced in this section.

\setlength{\textfloatsep}{0pt}
\begin{algorithm}[t]
\small
\caption{MFC Information Diffusion Model}
\label{alg:chap9_sec3_sic}
\begin{algorithmic}[1]
	\REQUIRE input rumor initiators $\mathcal{I}$ with states $\mathcal{S}$\\
	\qquad diffusion network $G_D = (\mathcal{V}_D, \mathcal{E}_D, s_D, w_D)$
\ENSURE  infected diffusion network $G_I$

\STATE	{initialize infected user set $\mathcal{U} = \mathcal{I}$, state set $\mathcal{S}_{\mathcal{U}} = \mathcal{S}$}
\STATE	{let recently infected user set $\mathcal{R} = \mathcal{I}$}
\WHILE	{$\mathcal{R} \neq \emptyset$}
\STATE	{new recently infected user set $\mathcal{N} = \emptyset$}
\FOR		{$u \in \mathcal{R}$}
\STATE	{let the set of users that $u$ can activate to be $\Gamma(u)$}
\FOR		{$v \in \Gamma(u)$}
\IF		{$s(v) = 0$ or \big($s_D(u, v) = +1$ and $s(u) \neq s(v)$\big)}
\IF		{$s_D(u, v) = +1$}
\STATE	{$p = \min\{1.0, \alpha \cdot w_D(u, v)\}$}
\ELSE
\STATE	{$p = w_D(u, v)$}
\ENDIF
\IF	{u activates $v$ with probability $p$}
\STATE	{$\mathcal{U} = \mathcal{U} \cup \{v\}$, $\mathcal{S}_{\mathcal{U}} = \mathcal{S}_{\mathcal{U}} \cup \{s(v) = s(u) \cdot s_D(u, v)\}$}
\STATE	{$\mathcal{N} = \mathcal{N} \cup \{v\}$}
\ENDIF
\ENDIF
\ENDFOR
\ENDFOR
\STATE	{$\mathcal{R} = \mathcal{N}$}
\ENDWHILE
\STATE	{extract infected diffusion network $G_I$ consisting of infected users $\mathcal{U}$}

\end{algorithmic}
\end{algorithm}

The signs associated with diffusion links denote the ``positive'' and ``negative'' relationships, e.g., trust and distrust, among users. In everyday life, people tend to believe information from people they trust and not believe the information from those they distrust. For example, if someone we trust says that ``Hillary Clinton will be the new president'', we believe it to be true. However, if someone we distrust says the same thing, we might not believe it. In addition, when receiving contradictory messages, information obtained from the trusted  people is usually given higher weights. In other words, the effects of trust and distrust diffusion links are asymmetric in activating users. For instance, when various actors assert that ``Hillary Clinton will be the new president", we may tend to  follow those we trust, even though the distrusted ones also say it. In addition, if someone we distrust says that ``Hillary Clinton will be the new president", we may think it to be false and will not believe it. However, after being activated to distrust it, if we are exposed to contradictory information from a trusted party, we might be willing to change our minds. To model such cases, which are unique to signed and state-centric networks, \cite{icdcs17} proposes to follow a number of basic principles in the {\sic} model, (1) the effects of positive links in activating users is boosted to give them higher weights in activating users, and (2) users who are activated already will stay active in the subsequential rounds but their activation states can be flipped to follow the people they trust.

In {\sic}, users have $3$ unique known states in the information diffusion process: $\{+1, -1, 0\}$ (i.e., trust, distrust and inactive respectively). Users with unknown states are automatically taken into account during the model construction process by assuming states as necessary. For simplicity, we use $s(\cdot)$ to represent both the sign of links as well as the states of users. If user $u$ trusts the information, then user $u$ is said to have a positive state $s(u) = +1$ towards the information. The initial states of all users in {\sic} are assigned a value of $0$ (i.e., inactive to the information). A set of information seed users $\mathcal{I} \subseteq \mathcal{V}$ activated by the information at the very beginning will have their own attitudes towards the information based on their judgements, which can be represented with $\mathcal{S} = \{+1, -1\}^{|\mathcal{I}|}$. Information seed users in $\mathcal{I}$ spread the information to other users in signed networks step by step. At step $\tau$, user $u$ (activated at $\tau-1$) is given only one chance to activate (1) inactive neighbor $v$, as well as (2) active neighbor $v$ but $v$ has different state from $u$ and $v$ trusts $u$, with the boosted success probability $\overline{w_D}(u, v)$, where $\overline{w_D}(v, u) \in [0, 1]$ can be represented as
\begin{equation}
\overline{w_D}(v, u) =
\begin{cases}
\min\{\alpha \cdot w_D(v, u), 1\}  & \mbox{if }s_D(v, u) = +1, \\
w_D(v, u), & \mbox{otherwise.}
\end{cases}
\end{equation}
In the above equation, parameter $\alpha > 1$ denotes the boosting of information from $u$ to $v$ and is called the \textit{asymmetric boosting coefficient}.

If $u$ succeeds, $v$ will become active in step $\tau + 1$, whose states can be represented as $s(v) = s(u) \cdot s(u, v)$. For example, if user $u$ thinks the information to be real (i.e., $s(u) = +1$) and $v$ trusts $u$ (i.e., $s(u, v) = +1$), once $v$ get activated by $u$ successfully, the state of $v$ will be $s(v) = +1$ (i.e., believe the information to be true). Otherwise, $v$ will keep its original state (either inactive or activated) and $u$ cannot make any further attempts to activate $v$ in subsequent rounds. All activated users will stay active in the following rounds and the process continues until no more activations are possible.

\begin{figure}[t]
\centering
    \begin{minipage}[l]{0.8\columnwidth}
      \centering
      \includegraphics[width=\textwidth]{./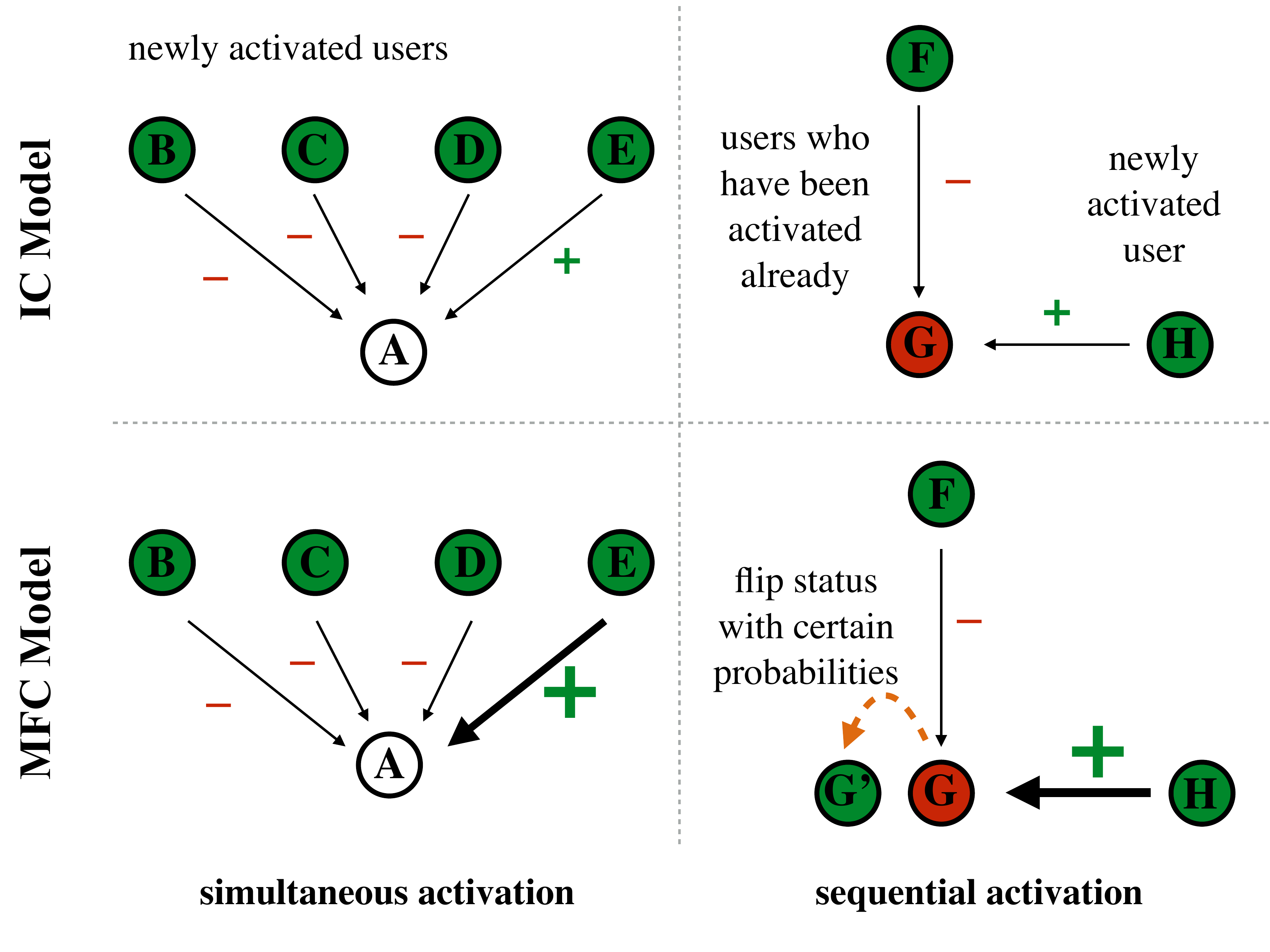}
    \end{minipage}
  \caption{Example of the binary tree transformation.}\label{fig:chap9_sec3_case}
\end{figure}

{\sic} can model the information diffusion process in signed social networks much better than traditional diffusion models, such as IC. To illustrate the advantages of {\sic}, we also give an example in Figure~\ref{fig:chap9_sec3_case}, where two different cases: ``simultaneous activation'' (i.e., the left two plots) and ``sequential activation'' (i.e., the right two plots) are shown. In the ``simultaneous activation'' case, multiple users ($B$, $C$, $D$ and $E$) are all just activated at step $\tau$, who all think a information to be true and at step $\tau+1$, $B$-$E$ will activate their inactive neighbor $A$. Among these users, $A$ trusts $E$ and distrusts the remaining users. In traditional IC models, signs on links are ignored and $B$-$E$ are given equal chance to activate $A$ in random order with activation probabilities $w_D(\cdot, A), \cdot \in \{B, C, D, E\}$. However, in the {\sic} model, signs of links are utilized and the activation probability of positive diffusion $(E, A)$ will be boosted and can be represented as $\min\{\alpha \cdot w_D(E, A), 1\}$. As a result, user $A$ is more likely to be activated by $E$ in {\sic}. Meanwhile, in the sequential activation case, once a user (e.g., $F$) succeeds in activating $G$, $G$ will remain active and other users (e.g., H) cannot reactivate $A$ any longer in traditional IC model. However, in the {\sic} model, we allow users to flip their activation state by people they trust. For example, if $G$ has been activated by $F$ with state $s(G) = -1$ already, the trusted user $H$ can still have the chance to flip $G$'s state with probability $\min\{\alpha \cdot w_D(H, G), 1\}$. The pseudo-code of the {\sic} diffusion model is provided in Algorithm~\ref{alg:chap9_sec3_sic}.

\subsection{Inter-Network Information Diffusion via Network Coupling}

The information diffusion models introduced in the previous sections are mostly based one single network, assuming that information will only propagate within the network only. However, in the real-world, users are involved in multiple social sites simultaneously, and cross-site information diffusion is happening all the time. Users as the bridges, they can receive information from one social sites, and share with their friends in another network. Sometimes, due to the social network settings, the activities happing in one social site (e.g., Foursquare) can be reposted to other social sites (e.g., Twitter) automatically.

In this section and the following two sections, we will study the information diffusion across multiple social sites. Several different existing cross-network information diffusion models will be introduced. Generally, different networks will great different information diffusion sources, and interactions available among users in each of the sources can all propagate information among users. Two cross-network information diffusion models based on \textit{network coupling} and \textit{random walk}. Meanwhile, in each of the diffusion sources, there usually exist different types of diffusion channels, since users can interact with each other via different types of services provided by the network service providers. A new diffusion model named {\muse} will introduced to depict how information belonging to different topics diffuses via multiple channels across multiple sources.

Cross-network information sharing and reposting renders the inter-network information diffusion ubiquitous and very common in the real-world online social networks. By involving in multiple online social networks simultaneously, users can also be exposed to more information from multiple social sites at the same time. Generally, once a user has been activated in one of the social site, the user account owner will receive the information and can diffuse it to other users in the other networks. The network coupling model proposes to combine multiple social networks together, and treat the information diffusion in each of the networks independently. 

\subsubsection{Single Network Diffusion Model}

Formally, let $G^{(1)}, G^{(2)}, \cdots, G^{(k)}$ denote the $k$ online social networks that we are focusing on in the information diffusion model, whose network structures are all homogeneous involving users and friendship links only. For each of the network, e.g., $G^{(i)}$, we can represent its structure as $G^{(i)} = (\mathcal{V}^{(i)}, \mathcal{E}^{(i)})$, where $\mathcal{V}^{(i)}$ denotes the set of users in the network. Information diffusion process in network $G^{(i)}$ can be modeled with some existing models. In this part, we will use the LT model as the base diffusion model for each of the networks.

Based on network $G^{(i)}$, each user $u$ in the network is associated with a threshold $\theta_u^{(i)}$ indicating the minimal amount of required information to activate the users. Meanwhile, the amount of information sent between the users (e.g., $u$ and $v$) can be denoted as weight $w_{u, v}^{(i)}$, whose value can be determined in the same way as the LT model introduced before. For an inactive user $u$, he/she can be activated iff the amount of information propagated from their friends is greater than $u$'s threshold, i.e.,
\begin{equation}
\sum_{v \in \Gamma(u)} I(v, t) \cdot w^{(i)}_{v, u} \ge \theta_u^{(i)},
\end{equation}
where $\Gamma(u)$ represents the neighbors of user $u$ and $I(v, t)$ indicates whether $v$ has been activated or not at time $t$.

\subsubsection{Network Coupling Scheme}

Generally, in the real world, among these $k$ different online social sites $G^{(1)}, G^{(2)}, \cdots, G^{(k)}$, if there exists one network $G^{(i)}$, in which the above equation holds, user $u$ will become active. In other words, to determine whether user $u$ has been activated or not, we need to check his/her status in all these $k$ networks one by one. To reduce the activation checking works, in the lossy network coupling scheme, the activation checking criterion is relaxed to
\begin{equation}
\sum_{i = 1}^k \alpha^{(i)} \cdot \sum_{v \in \Gamma(u)} I(v, t) \cdot w^{(i)}_{v, u} \ge \sum_{i = 1}^k \alpha^{(i)} \cdot \theta_u^{(i)},
\end{equation}
where $\alpha^{(1)}, \alpha^{(2)}, \cdots, \alpha^{(k)} > 0$ denote the parameters representing the importance of different networks. 

\begin{theorem}
Given the $k$ networks, $G^{(1)}, G^{(2)}, \cdots, G^{(k)}$, if equation
\begin{equation}
\sum_{i = 1}^k \alpha^{(i)} \cdot \sum_{v \in \Gamma(u)} I(v, t) \cdot w^{(i)}_{v, u} \ge \sum_{i = 1}^k \alpha^{(i)} \cdot \theta_u^{(i)},
\end{equation}
holds, user $u$ will be activated.
\end{theorem}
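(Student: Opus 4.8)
The plan is to prove the statement by contraposition, leveraging two facts already established in the excerpt. First, the single-network diffusion model stipulates that $u$ becomes active as soon as the linear-threshold criterion $\sum_{v \in \Gamma(u)} I(v, t) \cdot w^{(i)}_{v, u} \ge \theta_u^{(i)}$ is satisfied in \emph{some} network $G^{(i)}$; that is, activation in the coupled setting inherits a disjunctive (``active in at least one network'') semantics. Second, all coupling weights are strictly positive, $\alpha^{(i)} > 0$. The real content of the theorem is therefore that the single pooled inequality in the hypothesis forces the per-network criterion to hold for at least one index $i$.

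First I would assume the negation of the conclusion: that $u$ fails to be activated, which by the disjunctive semantics of the single-network model means the threshold criterion is violated in every one of the $k$ networks. Formally, for each $i \in \{1, \dots, k\}$ this gives the strict reverse inequality $\sum_{v \in \Gamma(u)} I(v, t) \cdot w^{(i)}_{v, u} < \theta_u^{(i)}$.

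Next I would multiply the $i$-th inequality by $\alpha^{(i)}$. Because each $\alpha^{(i)}$ is strictly positive, the direction and strictness of each inequality are preserved, giving $\alpha^{(i)} \sum_{v \in \Gamma(u)} I(v, t) \cdot w^{(i)}_{v, u} < \alpha^{(i)} \theta_u^{(i)}$. Summing these $k$ strict inequalities over $i$ then yields
\begin{equation}
\sum_{i = 1}^k \alpha^{(i)} \sum_{v \in \Gamma(u)} I(v, t) \cdot w^{(i)}_{v, u} < \sum_{i = 1}^k \alpha^{(i)} \cdot \theta_u^{(i)},
\end{equation}
which directly contradicts the pooled hypothesis, an inequality of the form ``$\ge$''. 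Hence the assumption is untenable, the per-network criterion must hold for at least one $i$, and $u$ is activated.

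The main obstacle is not analytic but a matter of making the bookkeeping precise. I must state explicitly that activation in the coupled model inherits the ``in at least one network'' semantics of the underlying LT model, since that is the only bridge connecting the pooled inequality to genuine activation; without it the theorem has no meaning. I also need the summed inequality to be \emph{strict}, and this is exactly what the assumption $\alpha^{(i)} > 0$ secures: were some weight allowed to vanish, a network whose criterion fails could be silently dropped from the sum and the contradiction would degrade to a non-strict one, so I would flag the positivity of the $\alpha^{(i)}$ as the load-bearing hypothesis of the argument.
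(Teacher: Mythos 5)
Your proposal is correct and follows essentially the same route as the paper's own proof: assume $u$ is not activated, so the per-network linear-threshold criterion fails strictly in every $G^{(i)}$, multiply each strict inequality by the positive weight $\alpha^{(i)}$, sum over $i$, and contradict the pooled hypothesis. Your additional remarks on the disjunctive activation semantics and on the positivity of the $\alpha^{(i)}$ being the load-bearing assumption are accurate and make explicit what the paper leaves implicit.
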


\begin{proof}
The theorem can be proven with by contradiction. Let's assume the equation holds but $u$ has not been activated in networks $G^{(1)}, G^{(2)}, \cdots, G^{(k)}$, then we have
\begin{equation}
\sum_{v \in \Gamma(u)} I(v, t) \cdot w^{(i)}_{v, u} < \theta_u^{(i)},
\end{equation}
hold for all these networks. 

By times both sides of the inequality with a positive weight $\alpha^{(i)}$, and sum the equations across all these $k$ networks, we have
\begin{equation}
\sum_{i = 1}^k \alpha^{(i)} \cdot \sum_{v \in \Gamma(u)} I(v, t) \cdot w^{(i)}_{v, u} < \sum_{i = 1}^k \alpha^{(i)} \cdot \theta_u^{(i)},
\end{equation}
which contradicts the equation in the theorem.

Therefore, if the new activation criterion holds, user $u$ will be activated.
\end{proof}

The relaxed activation criterion is actually a sufficient but not necessary condition when determining whether $u$ is activated or not. In some cases, $u$ has already been activated in some of the networks, but the criterion cannot meet, which will lead to some latency in status checking. One way to solve the problem is to assign an appropriate weight $\alpha^{(i)}$ by increasing its value proportion to $\sum_{v \in \Gamma(u)} I(v, t) \cdot w^{(i)}_{v, u} - \theta_u^{(i)}$. In the special case that user $u$ can be activated in network $G^{(i)}$ already, we can assign the weight $\alpha^{(i)}$ with a very large value, where $\alpha^{(i)} \gg \alpha^{(j)}, j \in \{1, 2, \cdots, k\}, j \neq i$ is way larger compared with the remaining networks. So far, there don't exist any methods to adjust the parameters automatically, and heuristics are applied in most of the cases.

\subsection{Random Walk based Diffusion Model}\label{sec:chap9_sec5_random_walk}

Different online social networks usually have their own characteristics, and users tend to have different status regarding the same information. For instance, information about personal entertainments (like movies, pop stars) can be widely spread among users in Facebook, and users interested in them will be activated very easily and also share the information to their friends. However, such a kind of information is relatively rare in the professional social network LinkedIn, where people seldom share personal entertainment to their colleagues, even though they may have been activated already in Facebook. What's more, the structures of these online social networks are usually heterogeneous, containing many different kinds of connections. Besides the direct follow relationships among the users, these diverse connections available among the users may create different types of communication channels for information diffusion. To model such an observation in information diffusion across multiple heterogeneous online social sites, in this part, we will introduce a new information diffusion model, {\ipath}, based on random walk. 

\subsubsection{Intra-Network Propagation}
The traditional research works on homogeneous networks assume that information can only be spread by the social links among users. If user $v$ follows user $u$, $(v,u) \in \mathcal{E}$ (where $\mathcal{E}$ is the edge set), the message can spread from $u$ to $v$, i.e. $u \to v$. However in a heterogeneous network, multi-typed and interconnected entities, such as images, videos and locations, can create various information propagation relations among users. For instance, if user $u$ recommends a good restaurant to his friend $v$ by checking in at this place, information will flow from $u$ to $v$ through the location entity $l$, which can be expressed by $u \xrightarrow[l]{check-in} v$. Similarly, we can represent the information diffusion routes among users via other information entities, which can be formally represented as the diffusion route set $\mathcal{R} = \{r_1, r_2, \dots, r_m\}$, where $m$ is the route number.

According to each diffusion route, we can represent the connections among users as an adjacency matrix. We can take the source network $G^{(s)}=(\mathcal{V}^{(s)}, \mathcal{E}^{(s)})$ as an example. For any diffusion route $r_i \in \mathcal{R}$, the adjacency matrix of $r_i$ will be $\mb{A}^{(s)}_i \in \mathbb{R}^{|\mathcal{V}^{(s)}|\times|\mathcal{V}^{(s)}|}$, where $A^{(s)}_i(u,v)$ is a binary-value variable and $A^{(s)}_i(u, v)=1$ iff $u$ and $v$ are connected with each other via relation $r_i$. The weighted diffusion matrix can be represented as the normalization of $\mb{W}^{(s)}_i = \mb{A}^{(s)}_i \mb{D}^{-1}$, where $\mb{D}^{-1}$ is a diagonal matrix with $D(u,u)=\sum_{v}^{|\mathcal{V}^{(s)}|} A^{(s)}_i(v,u)$, denoting the in-degree of $u$. The entry ${W}^{(s)}_i(u,v)$ denotes the probability of going from $v$ to $u$ in one step. In a similar way, we can represent the weighted diffusion matrices for other relations, which altogether can be represented as $\{\mb{W}^{(s)}_1, \mb{W}^{(s)}_2, \dots, \mb{W}^{(s)}_m\}$. To fuse the information diffused from different relations, {\ipath} will linearly combine these weighted matrices as follows:
\begin{equation}\label{eq:aggregate}
	\mb{W}_s = \lambda_1 \times \mb{W}^{(s)}_1+ \lambda_2 \times \mb{W}^{(s)}_2+ \cdots + \lambda_m \times \mb{W}^{(s)}_m,
\end{equation} 
where $\lambda_i$ denotes the aggregation weight of matrix corresponding to relation $r_i$. In real scenarios, different relations play different roles in the information propagation for different users. However, to simplify the settings, in {\ipath}, all these relations are treated to be equally important, and the aggregated matrix $\mb{W}^{(s)}$ takes the average of all these weighted diffusion matrices. In a similar way, we can define the weight matrix $\mb{W}^{(t)}$ of the target network $G^{(t)}$. 

\subsubsection{Inter-Network Propagation}
Across the aligned networks, information can propagate not only within networks but also across networks.
Based on the known anchor links between networks $G^{(t)}$ and $G^{(s)}$, i.e., set $\mathcal{A}^{(s,t)}$, we can define the binary adjacency matrix $\mb{A}^{(s \rightarrow t)} \in \mathbb{R}^{|\mathcal{V}^{(s)}|\times|\mathcal{V}^{(t)}|}$, where $A^{(s \rightarrow t)} (u,v) = 1$ if $(u^{(s)}, v^{(t)}) \in \mathcal{A}^{(s,t)}$. {\ipath} assumes that each anchor user in $G^{(s)}$ only has one corresponding account in $G^{(t)}$. Therefore $\mb{A}^{(s \rightarrow t)}$ has been normalized and the weight matrix $\mb{W}^{(s \rightarrow t)}  =  \mb{A}^{(s \rightarrow t)}$, denoting the chance of information propagating from $G^{(s)}$ to $G^{(t)}$. Furthermore, we can represent the weighted diffusion matrix from networks $G^{(t)}$ to $G^{(s)}$ as $\mb{W}^{(t \rightarrow s)} = (\mb{W}^{(s \rightarrow t)})^\top$, considering that the anchor links are undirected. 

\subsubsection{The IPATH Information Propagation Model}

\begin{figure}
	\centering
	\includegraphics[width=0.2\textwidth]{./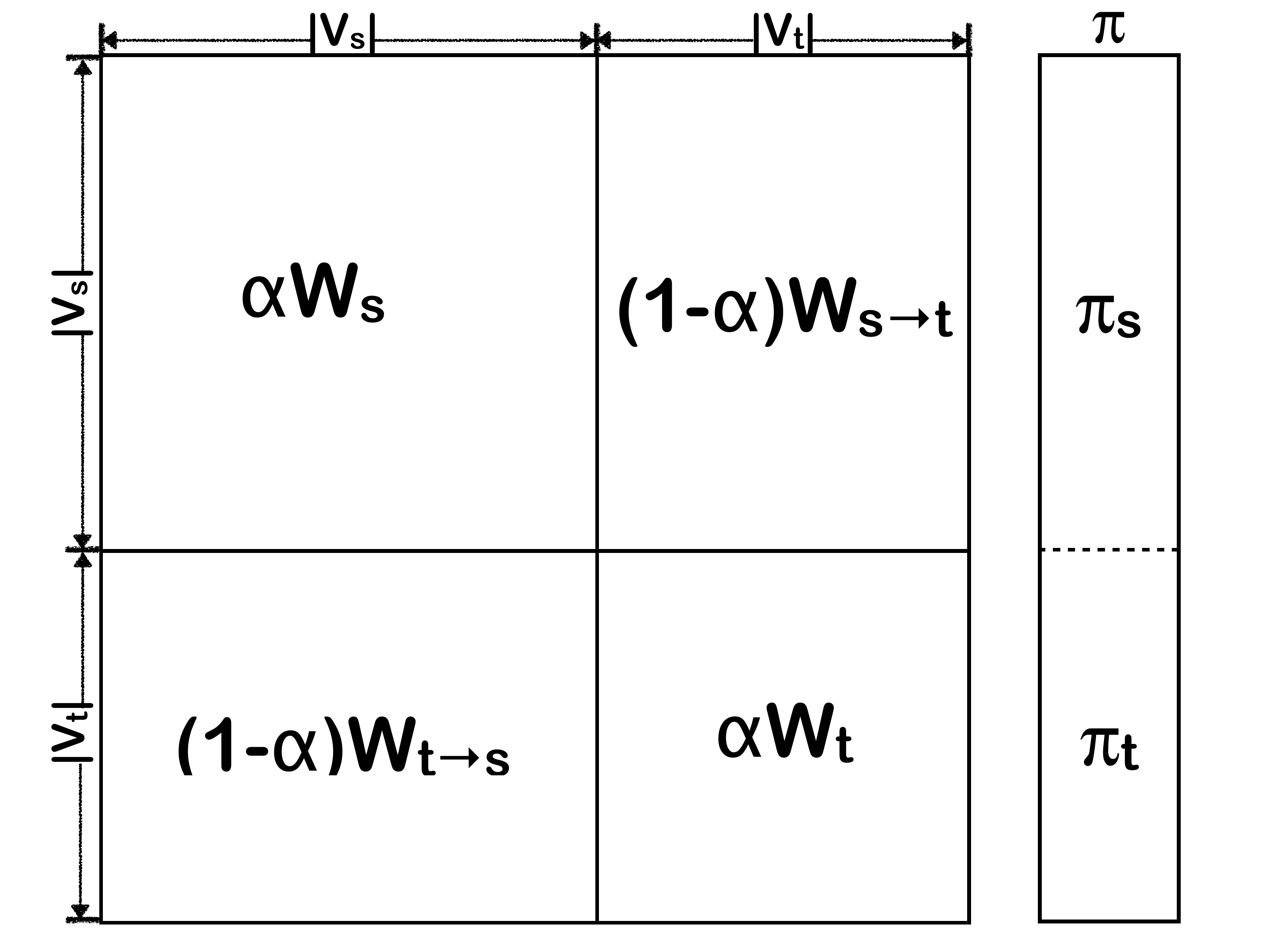}
	\caption{The weight matrix and the information distribution vector}
	\label{fig:chap9_sec5_matrix}
\end{figure}

Both the intra-network propagation relations, represented by weight matrices $\mb{W}^{(s)}$ and $\mb{W}^{(t)}$ in networks $G^{(s)}$ and $G^{(t)}$ respectively, and the inter-network propagation relations, represented by weight matrix $\mb{W}^{(s \rightarrow t)}$ and $\mb{W}^{(t \rightarrow s)}$, have been constructed already in the previous subsection. 
As shown in Figure \ref{fig:chap9_sec5_matrix}, to model the cross-network information diffusion process involving both the intra- and inter-network relations simultaneously, {\ipath} proposes to combine these weighted diffusion matrices to build an integrated matrix  $\mb{W} \in \mathbb{R}^{(|\mathcal{V}^{(s)}|+|\mathcal{V}^{(t)}|)^2}$. In the integrated matrix $\mb{W}$, the parameter $\alpha \in [0, 1]$ denotes the probability that the message stay in the original network, thus $1-\alpha$ represents the chance of being transmitted across networks (i.e., the probability of activated anchor user passing the influence to the target network).
In real scenarios, the probabilities for different users to repost information across aligned networks can be quite diverse. However, to simplify the problem setting, in {\ipath}, these probabilities are unified with parameter $\alpha$.

Let vector $\mb{\pi}_k \in \mathbb{R}^{(|\mathcal{V}^{(s)}|+|\mathcal{V}^{(t)}|)}$ represent the information that users in $G^{(s)}$ and $G^{(t)}$ can receive after $k$ steps. As shown in Figure \ref{fig:chap9_sec5_matrix}, vector $\pi_k$ consists of two parts $\pi_k = [\pi_k^{(s)}, \pi_k^{(t)}]$, where $\pi_k^{(s)} \in \mathbb{R}^{|\mathcal{V}^{(s)}| }$ and $\pi_k^{(t)} \in \mathbb{R}^{|\mathcal{V}^{(t)}| }$. The initial state of the vector can be denoted as $\pi_0$, which is defined based on the seed user set $\mathcal{Z}$ with function $g(\cdot)$ as follows:
\begin{equation}
\begin{aligned}
\pi_{0} &= g(\mathcal{Z}), \mbox{where } \pi_0[u] &= 
\begin{cases}
     1  \quad \text{if } u \in \mathcal{Z},\\
     0  \quad \text{otherwise.}\\
\end{cases}
\end{aligned}
\end{equation}
Seed set $\mathcal{Z}$ can also be represented as $\mathcal{Z} = g^{-1}(\pi_0)$. Users from $G^{(s)}$ and $G^{(t)}$ both have the chance of being selected as seeds, but when the structure information of $G^{(t)}$ is hard to obtain, the seed users will be only chosen from $G^{(s)}$. 
In {\ipath}, the information diffusion process is modeled by \textit{random walk}, because it is widely used in which the total probability of the diffusing through different relations remains constant 1 \cite{TFP06, GLSVT11}. Therefore, in the information propagation process, vector $\pi$ will be updated stepwise with the following equation:
\begin{equation}
\label{eq:update}
	{\pi}^{(k+1)} =(1- \alpha)\times \mb{W} {\pi}_k + \alpha \times \pi_0,
\end{equation}
where constant $\alpha$ denotes the probability of returning to the initial state.
By keeping updating $\pi$ according to (\ref{eq:update}) until convergence, we can present the stationary state of vector $\pi$ to be $\pi^*$,
\begin{equation}
\label{eq:convergence}
\pi^* =\alpha[\mb{I} -(1-\alpha) \mb{W}]^{-1}\pi_0,
\end{equation}
where matrix $\mb{I} \in \{0,1\}^{(|\mathcal{V}^{(s)}|+|\mathcal{V}^{(t)}|)\times(|\mathcal{V}^{(s)}|+|\mathcal{V}^{(t)}|)}$ is an identity matrix. The value of entry $\pi^*[u]$ denotes the activation probability of $u$, and user $u$ will be activated if $\pi^*[u] \geq \theta$, where $\theta$ denotes the threshold of accepting the message. In {\ipath}, parameter $\theta$ is randomly sampled from range $[0, \theta_{bound}]$. The threshold bound $\theta_{bound}$ is a small constant value, as the amount of information each user can get at the stationary state in {\ipath} can be very small (which is set as 0.01 in the experiments). In addition, we can further represent the activation status of user $u$ as vector $\pi'$, where 
\begin{equation}
\label{eq:threshold}
\pi'[u] = 
\begin{cases}
     1  \quad \text{if } \pi^*[u] \geq \theta,\\
     0  \quad \text{otherwise.}\\
\end{cases}
\end{equation}
In Equation (\ref{eq:threshold}), $\pi'[u] = 1$ denotes that user $u$ is activated. In practice, the value of $\pi^*[u]$ is usually in $[0,1]$ when the networks are sparse and the size of the seed set is small, and it can be represented approximately as following: 
\begin{equation} \pi'[u]  \approx \lfloor  \pi^*[u] - \theta+1 \rfloor. \end{equation}

Based on this, we define the mapping function $h$ between two vectors, where the floor function is applied to each element in the vector, i.e.,
\begin{equation} \pi' = h(\pi^*) = \lfloor \pi^* + \mathbf{c} \rfloor, \end{equation}
where $\mathbf{c}$ is a constant vector where each entry equals to $1-\theta$. To calculate the final number of activated users in $G^{(t)}$, we define a $(|\mathcal{V}^{(s)}|+|\mathcal{V}^{(t)}|)$-dimension constant vector  $\mathbf{b}= [0,0,\cdots,0,1,1,\cdots,1]$, where the number of $0$ is $|\mathcal{V}^{(s)}|$ and the number of $1$ is $|\mathcal{V}^{(t)}|$. Thus the influence function of the {\ipath} model can be denoted as
\begin{equation}
\label{eq:objective}
\begin{aligned}
\hspace{-5pt}\sigma(\mathcal{Z}) = \mathbf{b}\cdot h(\pi^*) = \mathbf{b} \cdot h\left( a[I-(1-a)\mb{W}]^{-1} \cdot g(\mathcal{Z}) \right),
\end{aligned}
\end{equation}
which can effectively compute the number of users who could be activated by the model based on the seed user set $\mathcal{Z}$.

\subsection{MUSE Model across Online and Offline World}\label{sec:chap9_sec6_muse}

Besides the online world, information can actually propagate within the online and offline world simultaneously. In this section, we will use the workplace as one example to illustrate the information diffusion via both the online and offline world simultaneously. On average, people nowadays need to spend more than $30\%$ of their time at work everyday. According to the statistical data in \cite{K00}, the total amount of time people spent at workplace in their life is tremendously large. For instance, a young man who is 20 years old now will spend $19.1\%$ of his future time working \cite{K00}. Therefore, workplace is actually an easily neglected yet important social occasion for effective communication and information exchange among people in our social life.

Besides the traditional offline contacts, like face-to-face communication, telephone calls and messaging, to facilitate the cooperation and communications among employees, a new type of online social networks named Enterprise Social Networks (ESNs) has been launched inside the firewalls of many companies \cite{kdd15, cikm15}. A representative example is Yammer, which is used by over $500,000$ leading businesses around the world, including $85\%$ of the Fortune $500$\footnote{https://about.yammer.com/why-yammer/}. Yammer provides various online communication services for employees at workplace, which include instant online messaging, write/reply/like posts, file upload/download/share, etc. In summary, the communication means existing among employees at workplaces are so diverse, which can generally be divided into two categories \cite{TQBGB10}: (1) offline communication means, and (2) online virtual communication means. 

\begin{figure}[t]
\centering
    \begin{minipage}[l]{0.8\columnwidth}
      \centering
      \includegraphics[width=\textwidth]{./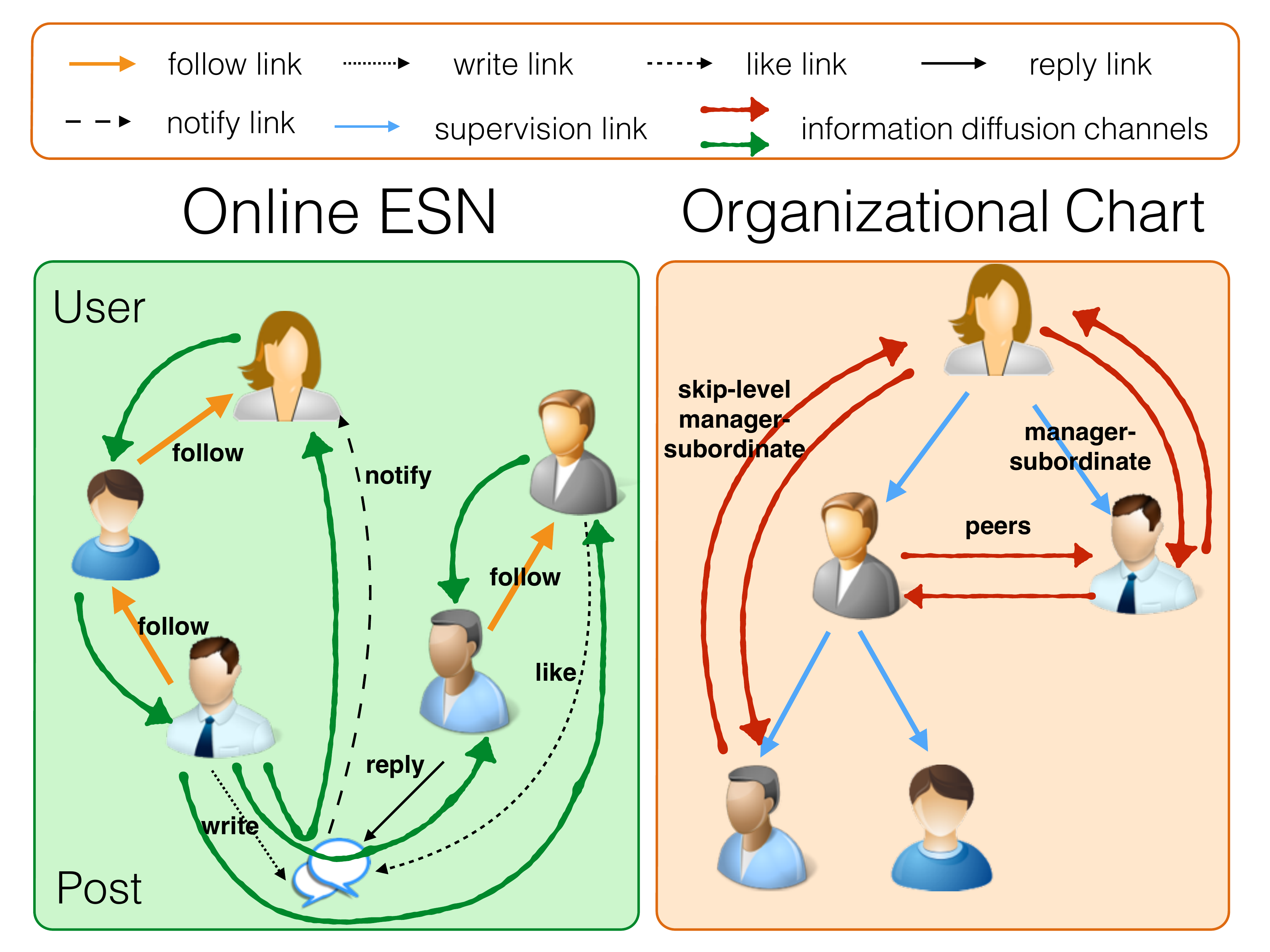}
    \end{minipage}
  \caption{An example of information diffusion at workplace.}\label{fig:chap9_sec6_example}
\end{figure}

In this section, we will study how information diffuses via both online and offline communication means among employees at workplace. To help illustrate the problem more clearly, we also give an example in Figure~\ref{fig:chap9_sec6_example}. The left plot of Figure~\ref{fig:chap9_sec6_example} is about an online ESN, employees in which can perform various social activities. For instances, employees can follow each other, can write/reply/like posts online, and posts written by them can also @certain employees to send notifications, which create various online information diffusion channels (i.e., the green lines) among employees. Meanwhile, the relative management relationships among the employees in the company can be represented with the organizational chart (i.e., the right plot), which is a tree-structure diagram connecting employees via supervision links (from managers to subordinates). Colleagues who are physically close in the organizational chart (e.g., peers, manager-subordinates) may have more chance to meet in the offline workplace. For example, subordinates need to report to their managers regularly, peers may co-operate to finish projects together, which can form various offline information diffusion channels (i.e., the red lines) among employees at workplace.

\begin{defn}
(Enterprise Social Networks (ESNs)): Online \textit{enterprise social networks} are a new type of online social networks used in enterprises to facilitate employees' communications and daily work, which can be represented as \textit{heterogeneous information networks} $G = (\mathcal{V}, \mathcal{E})$, where $\mathcal{V} = \bigcup_i \mathcal{V}_i$ is the set of different kinds of nodes and $\mathcal{E} = \bigcup_j \mathcal{E}_j$ is the union of complex links in the network.
\end{defn}

In this section, we will use Yammer as an example of online ESNs. Yammer can be represented as $G = (\mathcal{V}, \mathcal{E})$, where node set $\mathcal{V} = \mathcal{U} \cup \mathcal{O} \cup \mathcal{P}$ and $\mathcal{U}$, $\mathcal{O}$ and $\mathcal{P}$ are the sets of users, groups and posts respectively; link set $\mathcal{E} = \mathcal{E}_{s} \cup \mathcal{E}_{j} \cup \mathcal{E}_{w} \cup \mathcal{E}_{r} \cup \mathcal{E}_{l}$ denoting the union of social, group membership, write, reply and like links in Yammer respectively. In this section, we regard different group participation as the target activity, information about which can diffuse among employees at the workplace. Groups in ESNs are usually of different themes (e.g., new products, state-of-art techniques, daily-life entertainments), which are treated as different information topics in this section.

\begin{defn}
(Organizational Chart): \textit{Organizational chart} is a diagram outlining the structure of an organization as well as the relative ranks of employees' positions and jobs, which can be represented as a rooted tree $C = (\mathcal{N}, \mathcal{L}, root)$, where $\mathcal{N}$ denotes the set of employees and $\mathcal{L}$ is the set of directed \textit{supervision links} from managers to subordinates in the company, $root$ usually represents the CEO by default.
\end{defn}

Each employee in the company can create exactly one account in Yammer with valid employment ID, i.e., there is \textit{one-to-one} correspondence between the users in Yammer and employees in the organization chart. For simplicity, in this section, we assume the user set in online ESN to be identical to the employee set in the organizational chart (i.e., $\mathcal{U} = \mathcal{N}$) and we will use ``Employee'' to denote individuals in both online ESN and offline organizational chart by default.

To address all the above challenges, we will introduce a novel information diffusion model {\muse} (\underline{M}ulti-source \underline{M}ulti-channel \underline{M}ulti-topic diff\underline{U}sion \underline{SE}lection) proposed in \cite{cikm16}. {\muse} extracts and infers sets of online, offline and hybrid (of online and offline) diffusion channels among employees across online ESN and offline organizational structure. Information propagated via different channels can be aggregated effectively in {\muse}. Different diffusion channels will be weighted according to their importance learned from the social activity log data with optimization techniques and top-K effective diffusion channels will be selected in {\muse} finally.

\subsubsection{Preliminary}

In this section, a novel information diffusion model {\muse} will be proposed to depict the information propagation process of multiple topics via different diffusion channels across the online and offline world at workplace. We denote the set of topics diffusing in the workplace as set $\mathcal{T}$. Three different diffusion sources will be our main focus in this section: online source, offline source and the hybrid source (across online and offline sources). The diffusion channel set of all these three sources can be represented as $\mathcal{C}^{(on)}$, $\mathcal{C}^{(off)}$ and $\mathcal{C}^{(hyb)}$ respectively, whose sizes are $\left | \mathcal{C}^{(on)} \right | = k^{(on)}$, $\left | \mathcal{C}^{(off)} \right | = k^{(off)}$, $\left | \mathcal{C}^{(hyb)} \right | = k^{(hyb)}$.

In {\muse}, a set of users are activated initially, whose information will propagate in discrete steps within the network to other users. Let $v$ be an employee at workplace who has been activated by topic $t \in \mathcal{T}$. For instance, at step $\tau$, $v$ will send a amount of $w^{(on), i}(v, u, t)$ information on topic $t$ to $u$ via the $i_{th}$ channel in the online source (i.e., channel $c^{(on),i} \in \mathcal{C}^{(on)}$), where $u$ is an employee following $v$ in channel $c^{(on),i}$. The amount of information that $u$ receives from $v$ via all the channels in the online source at step $\tau$ can be represented as vector $\mb{w}^{(on)}(v, u, t) = [w^{(on), 1}(v, u, t), w^{(on), 2}(v, u, t), \cdots, w^{(on), k^{(on)}}(v, u, t)]$. Similarly, we can also represent the vectors of information $u$ receives from $v$ through channels in offline source and hybrid source as vectors $\mb{w}^{(off)}(v, u, t)$ and $\mb{w}^{(hyb)}(v, u, t)$ respectively.

Meanwhile, users in {\muse} are associated thresholds to different topics, which are selected at random from the uniform distribution in range $[0, 1]$. Employee $u$ can get activated by topic $t$ if the information received from his active neighbors via diffusion channels of all these three sources can exceed his \textit{activation threshold} $\theta(u, t)$ to topic $t$,
\begin{equation}
f\left (\mb{w}^{(on)}(\cdot, u, t), \mb{w}^{(off)}(\cdot, u, t),
 \mb{w}^{(hyb)}(\cdot, u, t)\right ) \ge \theta(u, t),
 \end{equation}
where aggregation function $f(\cdot)$ maps the information $u$ receives from all the channels to $u$'s \textit{activation probability} in range $[0,1]$. Here, the vector $\mb{w}^{(on)}(\cdot, u, t) = [w^{(on),1}(\cdot, u, t),\\ w^{(on),2}(\cdot, u, t), \cdots, w^{(on),k^{(on)}}(\cdot, u, t)]$, where $w^{(on),i}(\cdot, u, t)$ denotes the information received from all the employees $u$ follows in channel $c^{(on),i}$, i.e., 
\begin{equation}
w^{(on),i}(\cdot, u, t) = \sum_{v \in \Gamma_{out}^{(on),i}(u)} w^{(on),i}(v, u, t).
 \end{equation}
Vectors $\mb{w}^{(off)}(\cdot, u, t)$ and $\mb{w}^{(hyb)}(\cdot, u, t)$ can be represented in a similar way. Once being activated, a user will stay active in the remaining rounds and each user can be activated at most once. Such a process will end if no new activations are possible.

Considering that individuals' \textit{activation thresholds} $\theta(u, t)$ to topic $t$ is is pre-determined by the uniform distribution, next we will focus on studying the information received via channels of the \textit{online}, \textit{offline} and \textit{hybrid} sources and the \textit{aggregation function} $f(\cdot)$ in details.

\subsubsection{Online and Offline Diffusion Channels Extraction}

Both online ESNs and offline organizational chart provide various communication means for employees to contact each other, where individuals who have no social connections can still pass information via many other connections. Each connection among employees can form an information diffusion channel across online ESN and offline organizational chart. In {\muse}, various diffusion channels among employees will be extracted based on a set of \textit{social meta paths} \cite{SHYYW11} extracted across the online and offline world.

In enterprise social networks, individuals can (1) get information from employees they follow (i.e., their followees) and (2) people that their ``followees'' follow (i.e., $2_{nd}$ level followees), and obtain information from employees by (3) viewing and replying their posts, (4) viewing and liking their posts, as well as (5) getting notified by their posts (i.e., explicitly @ certain users in posts). {\muse} proposes to extract $5$ different \textit{online social meta paths} from the online ESN, whose physical meanings, representations and abbreviated notations are listed as follows:
\begingroup\makeatletter\def\f@size{6}\check@mathfonts
\begin{itemize}
\item Followee: $Employee \xleftarrow{Social^{-1}} Employee$, whose notation is $\Phi_1$.
\item Followee-Followee: $Employee \xleftarrow{Social^{-1}} Employee \xleftarrow{Social^{-1}} Employee$, whose notation is $\Phi_2$.
\item Reply Post: $Employee \xleftarrow{Reply^{-1}} Post \xleftarrow{Write} Employee$, whose notation is $\Phi_3$.
\item Like Post: $Employee \xleftarrow{Like^{-1}} Post \xleftarrow{Write} Employee$, whose notation is $\Phi_4$.
\item Post Notification: $Employee \xleftarrow{Notify} Post \xleftarrow{Write} Employee$, whose notation is $\Phi_5$.
\end{itemize}\endgroup

Meanwhile, in offline workplace, the most common social interaction should happen between close colleagues, e.g., peers, manager-subordinate, and skip-level manager-subordinates, etc. The physical meaning and notations of offline social meta paths extracted in this section are listed as follows:\begingroup\makeatletter\def\f@size{6}\check@mathfonts
\begin{itemize}
\item Manager: $Employee \xleftarrow{Supervision} Employee$, whose notation is $\Omega_1$.
\item Subordinate: $Employee \xleftarrow{Supervision^{-1}} Employee$, whose notation is $\Omega_2$.
\item Peer: $Employee \xleftarrow{Supervision} Employee \xleftarrow{Supervision^{-1}} Employee$, whose notation is $\Omega_3$.
\item 2nd-Level Manager: $Employee \xleftarrow{Supervision} Employee\\ \xleftarrow{Supervision} Employee$, whose notation is $\Omega_4$.
\item 2nd-Level Subordinate: $Employee \xleftarrow{Supervision^{-1}} Employee \\ \xleftarrow{Supervision^{-1}} Employee$, whose notation is $\Omega_5$.
\end{itemize}\endgroup

Besides the pure online/offline diffusion channels, information can also propagate across both online and offline world simultaneously. Consider, for example, two employees $v$ and $u$ who are not connected by any diffusion channels in online ESN or offline workplace, $v$ can still influence $u$ by activating $u$'s manager via online contacts and the manager will further propagate the influence to $v$ via offline interactions. To capture such relationships among the employees, a set of hybrid social meta path extracted in this {\muse}, together with their physical meanings, notations are listed as follows: \begingroup\makeatletter\def\f@size{6}\check@mathfonts
\begin{itemize}
\item Followee-Manager: $Employee \xleftarrow{Social^{-1}} Employee \xleftarrow{Supervision} Employee$, whose notation is $\Psi_1$,
\item Followee-Subordinate: $Employee \xleftarrow{Social^{-1}} Employee \\ \xleftarrow{Supervision^{-1}} Employee$, whose notation is $\Psi_2$,
\item Manager-Followee: $Employee \xleftarrow{Supervision} Employee \xleftarrow{Social^{-1}} Employee$, whose notation is $\Psi_3$,
\item Subordinate-Followee: $Employee \xleftarrow{Supervision^{-1}}  Employee \\ \xleftarrow{Social^{-1}} Employee$, whose notation is $\Psi_4$,
\item Followee-Peer: $Employee \xleftarrow{Social^{-1}} Employee \xleftarrow{Supervision} Employee \xleftarrow{Supervision^{-1}} Employee$, whose notation is $\Psi_5$,
\item Peer-Followee: $Employee \xleftarrow{Supervision} Employee \xleftarrow{Supervision^{-1}} Employee \xleftarrow{Social^{-1}} Employee$, whose notation is $\Psi_6$,
\end{itemize} \endgroup

The direction of the links denotes the information diffusion direction and end of the diffusion links (i.e., the first employee of the above paths) represents the target employee to receive the information. Each of the above \textit{social meta path} defines a information diffusion channel among individuals across the online and offline world.

Furthermore, let $\mathcal{P}^{(on)}_{\Phi_i}(v \rightsquigarrow \cdot)$ and $\mathcal{P}^{(on)}_{\Phi_i}(\cdot \rightsquigarrow u)$ be the sets of path instances of $\Phi_i$ going out from $v$ and going into $u$ respectively, with which we can define the amount of information propagating from $v$ to $u$ via diffusion channel $c^{(on), i} = \Phi_i$ to be
\begin{align}
w^{(on),i}(v, u, t) = \frac{2 \left | \mathcal{P}^{(on)}_{\Phi_i}(v \rightsquigarrow u) \right | \cdot I(v, t)}{\left | \mathcal{P}^{(on)}_{\Phi_i}(v \rightsquigarrow \cdot) \right | + \left | \mathcal{P}^{(on)}_{\Phi_i}(\cdot \rightsquigarrow u) \right |},
\end{align}
where binary function $I(v, t) = 1$ if $v$ has been activated by topic $t$ and $0$ otherwise.

Similarly, based on offline social meta path, e.g., $\Omega_i$, and hybrid diffusion channel, e.g., $\Psi_i$, the amount of information on topic $t$ propagating from employee $v$ to $u$ can be represented as follows respectively: \begingroup\makeatletter\def\f@size{8}\check@mathfonts
\begin{align}
&w^{(off),i}(v, u, t) &= \frac{2 \left | \mathcal{P}^{(off)}_{\Omega_i}(v \rightsquigarrow u) \right | \cdot I(v, t)}{\left | \mathcal{P}^{(off)}_{\Omega_i}(v \rightsquigarrow \cdot) \right | + \left | \mathcal{P}^{(off)}_{\Omega_i}(\cdot \rightsquigarrow u) \right |},\\
&w^{(hyb),i}(v, u, t) &= \frac{2 \left | \mathcal{P}^{(hyb)}_{\Psi_i}(v \rightsquigarrow u) \right | \cdot I(v, t)}{\left | \mathcal{P}^{(hyb)}_{\Psi_i}(v \rightsquigarrow \cdot) \right | + \left | \mathcal{P}^{(hyb)}_{\Psi_i}(\cdot \rightsquigarrow u) \right |}.
 \end{align}\endgroup

\subsubsection{Channel Aggregation}

Different diffusion channels deliver various amounts of information among employees via the online communications in ESN and offline contacts. In this subsection, we will focus on aggregating information propagated via different channels with the information aggregation function $f(\cdot): \mathbb{R}^{n\times1} \to [0, 1]$, which can map the amount of information received by employees to their activation probabilities. Generally, any function that can map real number to probabilities in range $[0, 1]$ can be applied and without loss of generality, we will use the logistic function $f(x) = \frac{e^x}{1 + e^x}$ \cite{DS86} in this section. 

Based on the information on topic $t$ received by $u$ via the online, offline and hybrid diffusion channels, we can represent $u$'s activation probability to be: \begingroup\makeatletter\def\f@size{6}\check@mathfonts
\begin{align}
&f\left (\mb{w}^{(on)}(\cdot, u, t), \mb{w}^{(off)}(\cdot, u, t),
 \mb{w}^{(hyb)}(\cdot, u, t)\right )\\
&= \frac{e^{\left (g(\mb{w}^{(on)}(\cdot, u, t)) + g(\mb{w}^{(off)}(\cdot, u, t)) + g(\mb{w}^{(hyb)}(\cdot, u, t)) + \theta_0\right)}}{1 + e^{\left(g(\mb{w}^{(on)}(\cdot, u, t)) + g(\mb{w}^{(off)}(\cdot, u, t)) + g(\mb{w}^{(hyb)}(\cdot, u, t)) + \theta_0\right)}},
\end{align}\endgroup
where function $g(\cdot)$ linearly combines the information in different channels belonging to certain sources and $\theta_0$ denotes the weight of the constant factor. Terms $g(\mb{w}^{(on)}(\cdot, u, t))$, $g(\mb{w}^{(off)}(\cdot, u, t))$ and $g(\mb{w}^{(hyb)}(\cdot, u, t))$ can be represented as follows \begingroup\makeatletter\def\f@size{6}\check@mathfonts
\begin{align}
g(\mb{w}^{(on)}(\cdot, u, t)) &= \sum_{i=1}^{k^{(on)}} \alpha_i \cdot \sum_{v \in \Gamma_{out}^{(on), i}(u)} {w}^{(on), i}(v, u, t),\\
g(\mb{w}^{(off)}(\cdot, u, t)) &= \sum_{i=1}^{k^{(off)}} \beta_i \cdot \sum_{v \in \Gamma_{out}^{(off), i}(u)} {w}^{(off), i}(v, u, t),\\
g(\mb{w}^{(hyb)}(\cdot, u, t)) &= \sum_{i=1}^{k^{(hyb)}} \gamma_i \cdot \sum_{v \in \Gamma_{out}^{(hyb), i}(u)} {w}^{(hyb), i}(v, u, t),
\end{align}\endgroup
where $\alpha_i$, $\beta_i$, $\gamma_i$ are the weights of different \textit{online}, \textit{offline} and \textit{hybrid} diffusion channels respectively and $\sum_{i=1}^{k^{(on)}} \alpha_i + \sum_{i=1}^{k^{(off)}} \beta_i + \sum_{i=1}^{k^{(hyb)}} \gamma_i + \theta_0= 1$. Depending of roles of different diffusion channels, the weights can be

\begin{itemize}
\item $>\ 0$, if positive information in the channel will increase employees' activation probability;
\item $=\ 0$, if positive information in the channel will not change employees' activation probability;
\item $<\ 0$, if positive information in the channel will decrease employees' activation probability.
\end{itemize}

In {\muse}, weights of certain diffusion channels can be negative. As a result, the likelihood for a node to become active will no longer grow monotonically in the {\muse} diffusion model. The optimal weights of different diffusion channels can be learned from the group participation log data (i.e., the target social activity diffusing at workplace). Different diffusion channels will be ranked according to their importance and top-$k$ diffusion channels which can increase individuals' activation probabilities will be selected in the next subsection.

\subsubsection{Channel Weighting and Selection}\label{subsec:chap9_sec6_weighting}

In Yammer, users can create and join groups of their interests, which can be about very diverse topics, e.g., products (e.g., iPhone, Windows, Android, etc.), people (e.g., Bill Gates, Leslie Lamport, etc.), projects (e.g., Project Complete, Meeting, ect.) and personal life issues (e.g., Diablo Games, Work Life Balance, etc.). The users' participation in groups log data can be represented as a set of tuples $\{(u, t)\}_{u, t}$, where tuple $(u, t)$ represents that user $u$ gets activated by topic $t$ (of groups). Such a tuple set can be split into three parts according to ratio 3:1:1 in the order of the timestamps, where 3 folds are used as the training set, 1 fold is used as the validation set and 1 fold as the test set. We will use the training set data to calculate the activation probabilities of individuals getting activated by topics in both the validation set and test set, while validation set is used to learn the weights of different diffusion channels and test set is used to evaluate the learned model. 

Let $\mathcal{V} = \{(u, t)\}_{u, t}$ be the validation set. Based on the amount of information propagating among employees in the workplace calculated with the training set, we can infer the probability of user $u$'s (who has not been activated yet) get activated by topic $t$, for $\forall (u, t) \in \mathcal{V}$, which can be represented with matrix $\mb{F} \in \mathbb{R}^{|\mathcal{U}| \times |\mathcal{T}|}$, where $\mb{F}(i, j)$ denotes the inferred activation probability of tuple $(u_i, t_j)$ in the validation set. Meanwhile, based on the validation set itself, we can get the ground-truth of users' group participation activities, which can be represented as a binary matrix $\mb{H} \in \{0, 1\}^{|\mathcal{U}| \times |\mathcal{T}|}$. In matrix $\mb{H}$, only entries corresponding tuples in the validation set are filed with value $1$ and the remaining entries are all filled with $0$. The optimal weights of information delivered in different diffusion channels (i.e., $\mb{\alpha}^*$,  $\mb{\beta}^*$, $\mb{\gamma}^*$, $\theta_0^*$) can be obtained by solving the following objective function \begingroup\makeatletter\def\f@size{7}\check@mathfonts
\begin{align}
&\mb{\alpha}^*,  \mb{\beta}^*, \mb{\gamma}^*, \theta_0^* = \arg \min_{\mb{\alpha}, \mb{\beta}, \mb{\gamma}, \mb{\theta}_0} \left \| \mb{F} - \mb{H} \right \|_F^2\\
&s.t. \sum_{i=1}^{k^{(on)}} \alpha_i + \sum_{i=1}^{k^{(off)}} \beta_i + \sum_{i=1}^{k^{(hyb)}} \gamma_i + \theta_0 = 1.
\end{align}\endgroup

The final objective function is not convex and can have multiple local optima, as the aggregation function (i.e., the logistic function) is not convex actually. {\muse} proposes to solve the objective function and handle the non-convex issue by using a two-stage process to ensure the robust of the learning process as much as possible.

(1) Firstly, the above objective function can be solved by using the method of Lagrange multipliers \cite{B96}, where the corresponding Lagrangian function of the objective function can be represented as \begingroup\makeatletter\def\f@size{7}\check@mathfonts
\begin{align}
&\mathcal{L}(\mb{\alpha}, \mb{\beta}, \mb{\gamma}, \theta_0, \eta) \\
&= \left \| \mb{F} - \mb{H} \right \|_F^2 + \eta (\sum_{i=1}^{k^{(on)}} \alpha_i + \sum_{i=1}^{k^{(off)}} \beta_i + \sum_{i=1}^{k^{(hyb)}} \gamma_i + \theta_0 - 1),\\
&=\mbox{Tr}(\mb{F}\mb{F}^\top - \mb{F}\mb{H}^\top - \mb{H}\mb{F}^\top + \mb{H}\mb{H}^\top)\\
&+ \eta (\sum_{i=1}^{k^{(on)}} \alpha_i + \sum_{i=1}^{k^{(off)}} \beta_i + \sum_{i=1}^{k^{(hyb)}} \gamma_i + \theta_0 - 1).
\end{align}\endgroup

By taking the partial derivatives of the Lagrange function with regards to variable $\alpha_i, i \in \{1, 2, \cdots, k^{(on)}\}$, we can get \begingroup\makeatletter\def\f@size{7}\check@mathfonts
\begin{align}
&\frac{\partial \mathcal{L}(\mb{\alpha}, \mb{\beta}, \mb{\gamma}, \theta_0, \eta)}{\partial \alpha_i}\\
&= \frac{\partial \mbox{Tr}(\mb{F}\mb{F}^\top)}{\partial \alpha_i} - \frac{\partial \mbox{Tr}(\mb{F}\mb{H}^\top)}{\partial \alpha_i} - \frac{\partial \mbox{Tr}(\mb{H}\mb{F}^\top)}{\partial \alpha_i} + \frac{\partial \mbox{Tr}(\mb{H}\mb{H}^\top)}{\partial \alpha_i} \\
&+ \frac{\partial \eta (\sum_{i=1}^{k^{(on)}} \alpha_i + \sum_{i=1}^{k^{(off)}} \beta_i + \sum_{i=1}^{k^{(hyb)}} \gamma_i + \theta_0 - 1)}{\partial \alpha_i}.
\end{align}\endgroup
Term \begingroup\makeatletter\def\f@size{7}\check@mathfonts
\begin{align}
&\frac{\partial \eta (\sum_{i=1}^{k^{(on)}} \alpha_i + \sum_{i=1}^{k^{(off)}} \beta_i + \sum_{i=1}^{k^{(hyb)}} \gamma_i + \theta_0 - 1)}{\partial \alpha_i} = \eta \\
&\frac{\partial \mbox{Tr}(\mb{F}\mb{F}^\top)}{\partial \alpha_i} = \sum_{j=1}^{\left | \mathcal{U} \right |} \sum_{l=1}^{\left | \mathcal{T} \right |} \frac{\partial \mb{F}^2(j,l)}{\partial \alpha_i} \sum_{j=1}^{\left | \mathcal{U} \right |} \sum_{l=1}^{\left | \mathcal{T} \right |} \Big( 2 f \big(\mb{w}^{(on)}(\cdot, u_j, t_l), \\
&\mb{w}^{(off)}(\cdot, u_j, t_l), \mb{w}^{(hyb)}(\cdot, u_j, t_l) \big) \Big) \cdot \Big(\frac{e^y}{(1+e^y)^2} \cdot \frac{\partial y}{\partial \alpha_i} \Big),
\end{align}\endgroup
where the introduced term $y$ denotes $y = g(\mb{w}^{(on)}(\cdot, u_j, t_l)) + g(\mb{w}^{(off)}(\cdot, u_j, t_l)) + g(\mb{w}^{(hyb)}(\cdot, u_j, t_l)) + \theta_0$ and its derivative is $\frac{\partial y}{\partial \alpha_i} = \frac{\partial g(\mb{w}^{(on)}(\cdot, u_j, t_l))}{\partial \alpha_i} = \sum_{v \in \Gamma_{out}^{(on), i}(u)} {w}^{(on), i}(v, u_j, t_k)$.
Similarly, we can obtain terms $\frac{\partial \mbox{Tr}(\mb{F}\mb{H}^\top)}{\partial \alpha_i}$, $\frac{\partial \mbox{Tr}(\mb{H}\mb{F}^\top)}{\partial \alpha_i}$, and $\frac{\partial \mbox{Tr}(\mb{H}\mb{H}^\top)}{\partial \alpha_i}$. By making $\frac{\partial \mathcal{L}(\mb{\alpha}, \mb{\beta}, \mb{\gamma}, \theta_0, \eta)}{\partial \alpha_i} = 0$, we can obtain an equation involving variables $\alpha_i$, $\beta_i$, $\gamma_i$, $\theta_0$ and $\eta$. Furthermore, we can calculate the partial derivatives of the Lagrange function with regards to variable $\beta_i$, $\gamma_i$, $\theta_0$ and $\eta$ respectively and make the equation equal to $0$, which will lead to an equation group about variables $\alpha_i$, $\beta_i$, $\gamma_i$, $\theta_0$ and $\eta$. The equation group can be solved with open source toolkits, e.g., SciPy Nonlinear Solver\footnote{http://docs.scipy.org/doc/scipy-0.14.0/reference/optimize.nonlin.html}, effectively. By giving the variables with different initial values, multiple solutions (i.e., multiple local optimal points) can be obtained by resolving the objective function.

(2) Secondly, the local optimal points obtained are further applied to the objective function and the one achieving the lowest objective function value is selected as the final results (i.e., the weights of different channels).

According to the learned weights, different diffusion channels can be ranked according to their importance in delivering information to activate employees in the workplace. Considering that, some diffusion channels may not perform very well in information propagation (e.g., those with negative or zero learned weights), top-$k$ channels that can increase employees' activation probabilities are selected as the effective channels used in {\muse} model finally. In other words, $k$ equals to the number of diffusion channels with positive weights learnt from the above objective function. Such a process is formally called diffusion channel weighting and selection in this section. The rational of channel weighting and selection is that: among all the diffusion channels, some channels can be useful but some may be not. $3$ different sets of diffusion channels are introduced in previous sections and we want to select the good ones.


\section{Network Embedding}\label{sec:embedding}

In the era of big data, information from diverse disciplines is generated at an extremely fast pace, lots of which are highly structured and can be represented as massive and complex networks. The representative examples include online social networks, like Facebook and Twitter, academic retrieval sites, like DBLP and Google Scholar, as well as bio-medical data, e.g., human brain networks. These networks/graphs are usually very challenging to handle due to their extremely large scale (involving millions even billions of nodes), complex structures (containing heterogeneous links) as well as the diverse attributes (attached to the nodes or links). For instance, the Facebook social network involves more than 1 billion active users; DBLP contains about 2.8 billions of papers; and human brain has more than 16 billion of neurons. 

Great challenges exist when handling these network structured data with traditional machine learning algorithms, which usually take feature vector representation data as the input. A general representation of heterogeneous networks as feature vectors is desired for knowledge discovery from such complex network structured data. In recent years, many research works propose to embed the online social network data into a lower-dimensional feature space, in which the user node is represented as a unique feature vector, and the network structure can be reconstructed from these feature vectors. With the embedded feature vectors, classic machine learning models can be applied to deal with the social network data directly, and the storage space can be saved greatly.

In this section, we will talk about the \textit{network embedding} problem, aiming at projecting the nodes and links in the network data in low-dimensional feature spaces. Depending on the application setting, exist graph embedding works can be categorized into the embedding of \textit{homogeneous networks}, \textit{heterogeneous networks}, and \textit{multiple aligned heterogeneous networks}. Meanwhile, depending on the models being applied, current embedding works can be divided into the \textit{matrix factorization based embedding}, \textit{translation based embedding}, and \textit{deep learning architecture based embedding}.

In the following parts in this section, we will first introduce the \textit{translation based graph embedding} models in Section~\ref{sec:chap11_sec2_translation}, which are mainly proposed for the multi-relational knowledge graphs, including {TransE} \cite{BUGWY13}, {TransH} \cite{WZFC14} and {TransR} \cite{LLSLZ15}. After that, in Section~\ref{sec:chap11_sec3_deep}, we will introduce three homogeneous network embedding models, including {DeepWalk} \cite{PAS14}, {LINE} \cite{TQWZYM15} and {node2vec} \cite{GL16}. Two embedding models for the heterogeneous networks will be introduced in Section~\ref{sec:chap11_sec4_hin}, which projects the nodes to feature vectors based on the heterogeneous information inside the networks \cite{CHTQAH15, CS16}. Finally, we will talk about the model proposed for the multiple aligned heterogeneous network \cite{icdm17} in Section~\ref{sec:chap11_sec5_aligned_hin}, where the anchor links are utilized to transfer information across different sites for mutual refinement of the embedding results synergistically. 

\subsection{Relation Translation based Graph Entity Embedding}\label{sec:chap11_sec2_translation}

Multi-relational data refers to the directed graphs whose nodes correspond to entities and links denote the relationships. The multi-relational data can be represented as a graph $G = (\mathcal{V}, \mathcal{E})$, where $\mathcal{V}$ denotes the node set and $\mathcal{E}$ represents the link set. For the link in the graph, e.g., $r = (h, t) \in \mathcal{E}$, the corresponding entity-relation can be represented as a triple $(h, r, t)$, where $h$ denotes the link initiator entity, $t$ denotes the link recipient entity and $r$ represents the link. The embedding problem studied in this section is to learn a feature representation of both entities and relations in the triples, i.e., $h$, $r$ and $t$.

Model {TransE} is the initial translation based embedding work, which projects the entity and relation into a common feature space. {TransH} improves {TransE} by considering the link cardinality constraint in the embedding process, and can achieve comparable time complexity. In the real-world multi-relational networks, the entities can have multiple aspects, and the different relations can express different aspects of the entity. Model {TransR} proposes to build the entity and relation embeddings in separate entity and relation spaces instead. Next, we will introduce the embedding models {TransE}, {TransH} and {TransR} one by one as follows, where the relation is more like a translation of entities in the embedding space. It is the reason why these models are called the \textit{translation based embedding models}.

\subsubsection{TransE}

The {TransE} \cite{BUGWY13} model is an energy-based model for learning low-dimensional embeddings of entities and relations, where the relations are represented as the \textit{translations} of entities in the embedding space. Given a entity-relation triple $(h, r, t)$, the embedding feature representation of the entities and relations can be represented as vectors $\mb{h} \in \mathbb{R}^k$, $\mb{r} \in \mathbb{R}^k$ and $\mb{t} \in \mathbb{R}^k$ ($k$ denotes the objective vector dimension). If the triple $(h, r, t)$ holds, i.e., there exists a link $r$ starting from $h$ to $t$ in the network, the corresponding embedding vectors $\mb{h} + \mb{r}$ should be as close to vector $\mb{t}$ as possible. 

Let $\mathcal{S}^+ = \{(h, r, t)\}_{r = (h, t) \in \mathcal{E}}$ represents the set of positive training data, which contains the triples existing in the networks. The {TransE} model aims at learning the embedding features vectors of the entities $h$, $t$ and the relation $r$, i.e., $\mb{h}$, $\mb{r}$ and $\mb{t}$. For the triples in the positive training set, we want to ensure the learnt embedding vectors $\mb{h} + \mb{r}$ is very close to $\mb{t}$. Let $d(\mb{h} + \mb{r}, \mb{t})$ denotes the distance between vectors $\mb{h} + \mb{r}$ and $\mb{t}$. The loss introduced for the triples in the positive training set can be represented as
\begin{equation}
\mathcal{L}(\mathcal{S}^+) = \sum_{(h, r, t) \in \mathcal{S}^+} d(\mb{h} + \mb{r}, \mb{t}).
\end{equation}

Here the distance function can be defined in different ways, like the $L_2$ norm of the difference between vectors $\mb{h} + \mb{r}$ and $\mb{t}$, i.e.,
\begin{equation}
d(\mb{h} + \mb{r}, \mb{t}) = \left\| \mb{h} + \mb{r} - \mb{t} \right\|_2.
\end{equation}

By minimizing the above loss function, the optimal feature representations of the entities and relations can be learnt. To avoid trivial solutions, like $\mb{0}$s for $\mb{h}$, $\mb{r}$ and $\mb{t}$, additional constraints that the $L_2$-norm of the embedding vectors of the entities should be $1$ will be added in the function. Furthermore, a negative training set is also sampled to differentiate the learnt embedding vectors. For a triple $(h, r, t) \in \mathcal{S}^+$, the corresponding sampled negative training set can be denoted as $\mathcal{S}^-_{(h, r, t)}$, which contains the triples formed by replacing the initiator entity $h$ or the recipient entity $t$ with random entities. In other words, the negative training set $\mathcal{S}^-_{(h, r, t)}$ can be represented as
\begin{equation}
\mathcal{S}^-_{(h, r, t)} = \{(h', r, t) | h' \in \mathcal{V}\} \cup \{(h, r, t')  | t' \in \mathcal{V}\}.
\end{equation}

The loss function involving both the positive and negative training set can be represented as \begingroup\makeatletter\def\f@size{6}\check@mathfonts
\begin{align}
&\mathcal{L}(\mathcal{S}^+, \mathcal{S}^-) = \\
&\sum_{(h, r, t) \in \mathcal{S}^+}  \sum_{(h', r, t') \in \mathcal{S}^-_{(h, r, t)}} \hspace{-10pt} \max \left(\gamma + d(\mb{h} + \mb{r}, \mb{t}) - d(\mb{h}' + \mb{r}, \mb{t}'), 0 \right),
\end{align}\endgroup
where $\gamma$ is a margin hyperparameter and $\max (\cdot, 0)$ will count the positive loss only.

The optimization is carried out by stochastic gradient descent (in minibatch mode). The embedding vectors of entities and relationships are initialized with a random procedure. At each iteration of the algorithm, the embedding vectors of the entities are normalized and a small set of triplets is sampled from the training set, which will serve as the training triplets of the minibatch. The parameters are then updated by taking a gradient step with constant learning rate.

\subsubsection{TransH}

{TransE} is a promising method proposed recently, which is very efficient while achieving state-of-the-art predictive performance. However, in the embedding process, {TransE} fail to consider the \textit{cardinality constraint} on the relations, like \textit{one-to-one}, \textit{one-to-many} and \textit{many-to-many}. The {TransH} model \cite{WZFC14} to be introduced in this part considers such properties on relations in the embedding process. Furthermore, different from the other complex models, which can handle these properties but sacrifice efficiency, {TransH} achieves comparable time complexity as {TransE}. {TransH} models the relation as a hyperplane together with a translation operation on it, where the correlation among the entities can be effectively preserved. 

In {TransH}, different from the embedding space of entities, the relations, e.g., $r$, is denoted as a transition vector $\mb{d}_r$ in the hyperplane $\mb{w}_r$ (a normal vector). For each of the triple $(h, r, t)$, the embedding vector $\mb{h}$, $\mb{t}$ are fist projected to the hyperplane $\mb{w}_r$, whose corresponding projected vectors can be represented as $\mb{h}_\perp$ and $\mb{t}_\perp$ respectively. The vectors $\mb{h}_\perp$ and $\mb{t}_\perp$ can be connected by the translation vector $\mb{d}_r$ on the hyperplane. Depending on whether the triple appears in the positive or negative training set, the distance $d(\mb{h}_\perp + \mb{d}_r, \mb{t}_\perp)$ should be either minimized or maximized.

Formally, given the hyperplane $\mb{w}_r$, the projection vectors $\mb{h}_{\perp}$ and $\mb{t}_{\perp}$ can be represented as
\begin{align}
\mb{h}_\perp = \mb{h} - \mb{w}_r^\top \mb{h} \mb{w}_r,\\
\mb{t}_\perp = \mb{t} - \mb{w}_r^\top \mb{t} \mb{w}_r.
\end{align}
Furthermore, the $L_2$ norm based distance function can be represented as \begingroup\makeatletter\def\f@size{8}\check@mathfonts
\begin{equation}
d(\mb{h}_\perp + \mb{d}_r, \mb{t}_\perp) = \left \| (\mb{h} - \mb{w}_r \mb{h} \mb{w}_r) + \mb{d}_r - (\mb{t} - \mb{w}_r \mb{t}\mb{w}_r) \right \|_2^2.
\end{equation}\endgroup

The variables to be learnt in the {TransH} model include the embedding vectors of all the entities, the hyperplane and translation vectors for each of the relations. To learn these variables simultaneously, the objective function of {TransH} can be represented as \begingroup\makeatletter\def\f@size{6}\check@mathfonts
\begin{align}
&\mathcal{L} (\mathcal{S}^+, \mathcal{S}^-) =\\
& \sum_{(h, r, t) \in \mathcal{S}^+}  \sum_{(h', r', t') \in \mathcal{S}^-_{(h, r, t)}} \hspace{-15pt} \max \left(\gamma + d(\mb{h}_\perp + \mb{d}_r, \mb{t}_\perp) - d(\mb{h}'_\perp + \mb{d}_r', \mb{t}'_\perp), 0 \right),
\end{align}\endgroup
where $\mathcal{S}^-_{(h, r, t)}$ denotes the negative set constructed for triple $(h, r, t)$. Different from {TransE}, {TransH} applies a different to sample the negative training triples with considerations of the relation \textit{cardinality constraint}. For the relations with \textit{one-to-many}, {TransH} will give more chance to replace the initiator node; and for the \textit{many-to-one} relations, {TransH} will give more chance to replace the recipient node instead. 

Besides the loss function, the variables to be learnt are subject to some constraints, like the embedding vector for entities is a normal vector; $\mb{w}_r$ and $\mb{d}_r$ should be orthogonal, and $\mb{w}_r$ is also a normal vector. We summarize the constraints of the {TransH} model as follows
\begin{align}
&\left \| \mb{h} \right \|_2 \le 1, \left \| \mb{t} \right \|_2 \le 1, \forall h, t \in \mathcal{V},\\
&\frac{|\mb{w}_r^\top \mb{d}_r |}{\left \| \mb{d}_r \right \|_2} \le \epsilon, \forall r \in \mathcal{E},\\
&\left \| \mb{w}_r \right\|_2 \le 1, \forall r \in \mathcal{E}.
\end{align}

The constraints can be relaxed as some penalty terms, which can be added to the objective function with a relatively large weight. The final objective function can be learnt with the stochastic gradient descent, and by minimizing the loss function, the model variables can be learned and we will get the final embedding results.

\subsubsection{TransR}

Both {TransE} and {TransH} introduced in the previous subsections assume embeddings of entities and relations within the same space $\mathbb{R}^k$. However, entities and relations are actually totally different objects, and they may be not capable to be represented in a common semantic space. To address such a problem, {TransR} \cite{LLSLZ15} is proposed, which models the entities and relations in distinct spaces, i.e., the entity space and relation space, and performs the translation in relation space.

In {TransR}, given a triple $(h, r, t)$, the entities $h$ and $t$ are embedded as vectors $\mb{h}, \mb{t} \in \mathbb{R}^{k_e}$, and the relation $r$ is embedded as vector $\mb{r} \in \mathbb{R}^{k_r}$, where the dimension of the entity space and relation space are not the same, i.e., $k_e \neq k_r$. To project the entities from the entity space to the relation space, a projection matrix $\mb{M}_r \in \mathbb{R}^{k_e \times k_r}$ is defined in {TransR}. With the projection matrix, the projected entity embedding vectors can be defined as
\begin{align}
&\mb{h}_r = \mb{h} \mb{M}_r, \\
&\mb{t}_r = \mb{t} \mb{M}_r.
\end{align}

The loss function is defined as
\begin{equation}
d(\mb{h}_r + \mb{r}, \mb{t}_r) = \left\| \mb{h}_r + \mb{r} - \mb{t}_r \right\|_2^2.
\end{equation}

The constraints involved in {TransR} include
\begin{align}
&\left \| \mb{h} \right \|_2 = 1, \left \| \mb{t} \right \|_2 = 1, \forall h, t \in \mathcal{V},\\
&\left \| \mb{h} \mb{M}_r \right \|_2 = 1, \left \| \mb{t} \mb{M}_r \right \|_2 = 1, \forall h, t \in \mathcal{V},\\
&\left \| \mb{w}_r \right\|_2 \le 1, \forall r \in \mathcal{E}.
\end{align}

The negative training set $\mathcal{S}^-$ in {TransR} can be obtained in a similar way as {TransH}, where the variables can be learnt with the stochastic gradient descent. We will not introduce the information here to avoid content duplication.

\subsection{Homogeneous Network Embedding}\label{sec:chap11_sec3_deep}

Besides the translation based network embedding models, in this section, we will introduce three embedding models for network data, including {DeepWalk}, {LINE} and {node2vec}. Formally, the networks studied in this part are all homogeneous networks, which is represented as $G = (\mathcal{V}, \mathcal{E})$. Set $\mathcal{V}$ denotes the set of nodes in the homogeneous network, and $\mathcal{E}$ represents the set of links among the nodes inside the network.

\subsubsection{DeepWalk}

\setlength{\textfloatsep}{0pt}
\begin{algorithm}[t]
\small
\caption{DeepWalk}
\label{alg:chap11_sec3_deepwalk}
\begin{algorithmic}[1]
	\REQUIRE Input homogeneous network $G =(\mathcal{V}, \mathcal{E})$\\
	\qquad	Window size $s$; Embedding size $d$\\
	\qquad	Walk length $l$; Walks per node $\gamma$
\ENSURE  Matrix of node representations $\mb{X} \in \mathbb{R}^{|\mathcal{V}| \times d}$

\STATE	{Initialize $\mb{X}$ with random values following the uniform distribution}
\STATE	{Build a binary tree $T$ from node set $\mathcal{V}$}
\FOR	{Round $i = 1$ to $\gamma$}
\STATE	{$\mathcal{O} = \mbox{shuffle}(\mathcal{V})$}
\FOR	{Node $u \in \mathcal{O}$}
\STATE	{$W_u  = \mbox{WalkGenerator}(G, u, l)$}
\STATE	{SkipGram($\mb{X}$, $W_u$, $w$)}
\ENDFOR
\ENDFOR
\STATE	{Return $\mb{X}$}

\end{algorithmic}
\end{algorithm}

The {DeepWalk} \cite{PAS14} algorithm consists of two main components: (1) a random walk generator, and (2) an update procedure. In the first step, the DeepWalk model randomly selects a node, e.g., $u \in \mathcal{V}$, as the root of a random walk $W_u$ from the nodes in the network. Random walk $W_u$ will sample the neighbors of the node last visited uniformly until the maximum length $l$ is met. In the second step, the sampled neighbors are used to update the representations of the nodes inside the graph, where \textit{SkipGram} \cite{MSCCD13} is applied here. 

The pseudo code of the {DeepWalk} algorithm is available in Algorithm~\ref{alg:chap11_sec3_deepwalk}, which illustrates the general architecture of the algorithm. In the algorithm, line 1 initializes the representation matrix $\mb{X}$ for all the nodes, and line 2 builds a binary tree involving all the nodes in the network as the leaves, which will be introduced in more detail in Section~\ref{subsec:chap11_sec3_softmax}. Lines 3-9 denote the main part of the {DeepWalk} algorithm, where the random walk starting randomly at each node is generated for $\gamma$ times by calling function \textit{WalkGenerator}. For each node $u$, a random walk $W_u$ is generated whose length is bounded by parameter $l$. The random walk will be applied to update the node representation with the \textit{SkipGram} function to be introduced in Section~\ref{subsec:chap11_sec3_skipgram}.

\noindent \textbf{Random Walk Generator}

The {random walk} model has been introduced in Section~\ref{subsec:closeness_measures}. Formally, the random walk starting at node $u \in \mathcal{V}$ can be represented as $W_u$, which actually denotes a stochastic process with random status $W_u^0$, $W_u^1$, $\cdots$, $W_u^k$. Formally, at the very beginning, i.e., step $0$, the random walk is at the initial node, i.e., $W_u^0 = u$. The status variable $W_u^k$ denotes the node where the node is at step $k$.

Random walk can capture the local network structures effectively, where the neighborhood and social connection closeness can affect the next nodes that the random walk will move to in the next step. Therefore, in the {DeepWalk}, random walk is applied to sample a stream of short random walks as the tool for extracting information from a network. Random walk can provide two very desirable properties, besides the ability to capture the local community structures. Firstly, the random walk based local exploration is easy to parallelize. Several random walks can simultaneously explore different parts of the same network in different threads, processes and machines. Secondly, with the information obtained from short random walks, it is possible to accommodate small changes in the network structure without the need for global recomputation.

\noindent \textbf{SkipGram Technique}\label{subsec:chap11_sec3_skipgram}

The updating procedure used in {DeepWalk} is very similar to the word appearance prediction in language modeling. In this part, we will first provide some basic knowledge about language modeling problem first, and then introduce the \textit{SkipGram} technique.

Formally, the objective of language modeling is to estimate the likelihood of a specific sequence of words appearing in a corpus. More specifically, given a sequence of words $(w_1, w_2, \\ \cdots, w_{n-1})$ where word $w_i \in \mathcal{V}$ ($\mathcal{V}$ denotes the vocabulary), the word appearing prediction problem aims at inferring the word $w_{n}$ that will appear next. An intuitive idea to model the problem is to maximize the estimation likelihood for the next word $w_{n}$ given $w_1, w_2, \cdots, w_{n-1}$, and the problem can be formally represented as
\begin{equation}
w^*_{n} = \arg_{w_{n} \in \mathcal{V}} P(w_{n} | w_1, w_2, \cdots, w_{n-1}).
\end{equation}
where term $P(w_{n} | w_1, w_2, \cdots, w_{n-1})$ denotes the conditional probability of having $w_{n}$ attached to the observed word sequence $w_1, w_2, \cdots, w_{n-1}$.

Meanwhile, in neural networks, the words will have a latent representation denoted as vector, like $\mb{x}_{w_i} \in \mathbb{R}^{d \times 1}$ for word $w_i \in \mathcal{V}$. Furthermore, computation of the above conditional probability is very challenging, especially as the observed word sequence goes longer, i.e., $n$ is large. Therefore, a window is proposed to limit the length of word sequence in probability computation. Term ${s}$ is denoted as the size of the window. Therefore, the above objective function can be rewritten as
\begin{equation}
w^*_{n} = \arg_{w_{n} \in \mathcal{V}} P(w_{n} | \mb{x}_{w_{n-s}}, \mb{x}_{w_{n-s+1}}, \cdots, \mb{x}_{w_{n-1}}).
\end{equation}

A recent relaxation to the above problem in language modeling turns the prediction problem on its head. Three big changes are applied to the model: (1) instead of predicting the objective word with the context, the relaxation predicts the context with the objective word instead; (2) the context denotes the words appearing before and after the objective word limited by the window size $s$, and (3) the order of words is removed and the context denotes a set of words instead. Formally, the objective function can be rewritten as  \begingroup\makeatletter\def\f@size{7}\check@mathfonts
\begin{equation}
w^*_{n} = \arg_{w_{n} \in \mathcal{V}} P(\{w_{n-s}, w_{n-s+1}, \cdots, w_{n+s}\} \setminus \{w_{n}\}|\mb{x}_{w_{n}}).
\end{equation}\endgroup

SkipGram is a language model that maximize the co-occurrence probability of words appearing in the time window $s$ in a sentence. Here, when applying the \textit{SkipGram} technique to the {DeepWalk} model, the nodes $u \in \mathcal{V}$ in the network can be regarded as the words $w$ denoted in the equations aforementioned. Meanwhile, for the nodes sampled by the random walk model within the window size $s$ before and after node $v$, they will be treated as the words appearing ahead of and after node $v$. Furthermore, SkipGram assumes the appearance of the words (or nodes for networks) to be independent, and the above probability equations can be rewritten as follows:  \begingroup\makeatletter\def\f@size{6}\check@mathfonts
\begin{equation}
\hspace{-1pt} P(\{u_{n-s}, u_{n-s+1}, \cdots, u_{n+s}\} \setminus \{u_{n}\}|\mb{x}_{u_{n}}) = \hspace{-10pt} \prod_{i = n-s, i \neq n}^{n+s} P(u_i | \mb{x}_{u_n}),
\end{equation}\endgroup
where $u_{n-s}, u_{n-s+1}, \cdots, u_{n+s}$ denotes the sequence of nodes sampled by the random walk model. 

The learning process of the SkipGram algorithm is provided in Algorithm~\ref{alg:chap11_sec3_skipgram}, where we will enumerate all the co-locations of nodes in the sampled node series $u_{n-s}, u_{n-s+1},\\ \cdots, u_{n+s}$ by a random walk $W_u$ (starting from node $u$ in the network). With gradient descent, the representation of nodes with their neighbors representations can be updated with stochastic gradient descent. The derivatives are estimated with the back-propagation algorithm. However, in the equation, we need to have the conditional probabilities of the nodes and their representations. A concrete representation of the probability can be a great challenging problem. As proposed in \cite{MSCCD13}, such a distribution can be learnt with some existing models, like logistic regression. However, since the labels used here denote the nodes in the network, it will lead to a very large label space with $|\mathcal{V}|$ different labels, which renders the learning process extremely time consuming. To solve such a problem, some techniques, like Hierarchical Softmax, have been proposed which represents the nodes in the network as a binary tree and can lower done the probability computation time complexity from $O(|\mathcal{V}|)$ to $O(\log |\mathcal{V}|)$.

\noindent \textbf{Hierarchical Softmax}\label{subsec:chap11_sec3_softmax}

\setlength{\textfloatsep}{0pt}
\begin{algorithm}[t]
\small
\caption{SkipGram}
\label{alg:chap11_sec3_skipgram}
\begin{algorithmic}[1]
	\REQUIRE Representations of nodes: $\mb{X}$\\
	\qquad	Random walk starting from node $u$: $W_u$\\
	\qquad	Window size $s$\\
\ENSURE  Updated matrix of node representations $\mb{X}$

\FOR	{Each node $u_i \in W_u$}
\STATE	{$W_u$ will generate a sampled sequence before and after $u_j$ bounded by window size $s$: $(u_{i - s}, \cdots, u_{i+s})$}
\FOR	{Each node $u_j \in (u_{i - s}, \cdots, u_{i+s})$}
\STATE	{$J(\mb{X}) = - \log P(u_j | \mb{x}_{u_i})$}
\STATE	{$\mb{X} = \mb{X} - \alpha \frac{J(\mb{X})}{\partial \mb{X}}$}
\ENDFOR
\ENDFOR
\STATE	{Return $\mb{X}$}
\end{algorithmic}
\end{algorithm}

In the SkipGram algorithm, calculating probability $P(u_i | \mb{x}_{u_n})$ is infeasible. Therefore, in the {DeepWalk} model, \textit{hierarchical softmax} is used to factorize the conditional probability. In \textit{hierarchical softmax}, a binary tree is constructed, where the number of leaves equals to the network node set size, and each network node is assigned to a leaf node. The prediction problem is turned into a path probability maximization problem. If a path $(b_0, b_1, \cdots, b_{\left \lceil \log |\mathcal{V}| \right \rceil})$ is identified from the tree root to the node $u_k$, i.e., $b_0 = \mbox{root}$ and $b_{\left \lceil \log |\mathcal{V}| \right \rceil} = u_k$, then the probability can be rewritten as
\begin{equation}
P(u_i | \mb{x}_{u_n}) = \prod_{l = 1}^{\left \lceil \log |\mathcal{V}| \right \rceil} P(b_l | \mb{x}_{u_n}),
\end{equation}
where $P(b_l | \mb{x}_{u_n})$ can be modeled by a binary classifier denoted as
\begin{equation}
P(b_l | \mb{x}_{u_n}) = \frac{1}{1+e^{- \mb{x}_{b_l} \cdot \mb{x}_{u_n}}}.
\end{equation} 
Here the parameters involved in the learning process include the representations for both the nodes in the network as well as the nodes in the constructed binary trees. 

\subsubsection{LINE}\label{subsec:chap11_sec3_line}

To handle the real-world information networks, the embedding models need to have several requirements: (1) preserve the \textit{first-order} and \textit{second-order} proximity between the nodes, (2) scalable to large sized networks, and (3) able to handle networks with different links: \textit{directed} and \textit{undirected}, \textit{weighted} and \textit{unweighted}. In this part, we will introduce another homogeneous network embedding model, named {LINE} \cite{TQWZYM15}.

\noindent \textbf{First-order Proximity}

In the network embedding process, the network structure should be effectively preserved, where the node closeness is defined as the node \textit{proximity} concept in {LINE}. The \textit{first-order proximity} in a network denotes the \textit{local} pairwise proximity between nodes. For a link $(u, v) \in \mathcal{E}$ in the network, the \textit{first-order proximity} denotes the weight of link $(u, v)$ in the network (or $1$ if the network is unweighted). Meanwhile, if link $(u, v)$ doesn't exist in the network, the \textit{first-order proximity} between them will be $0$ instead. To model the \textit{first-order proximity}, for a given link $(u, v) \in \mathcal{E}$ in the network $G$, {LINE} defines the joint probability between nodes $u$ and $v$ as
\begin{equation}
p_1(u, v) = \frac{1}{1+ e^{- \mb{x}_{u} \cdot \mb{x}_v}},
\end{equation} 
where $\mb{x}_{u}, \mb{x}_{v} \in \mathbb{R}^d$ denote the vector representations of nodes $u$ and $v$ respectively. 

Function $p_1(\cdot, \cdot)$ defines the proximity distribution in the space of $\mathcal{V} \times \mathcal{V}$. Meanwhile, given a network $G$, the \textit{empirical proximity} between nodes $u$ and $v$ can be denoted as
\begin{equation}
\hat{p_1}(u, v) = \frac{w_{(u,v)}}{\sum_{(u,v) \in \mathcal{E}} w_{(u, v)}}.
\end{equation}
To preserve the \textit{first-order proximity}, {LINE} defines the objective function for the network embedding as
\begin{equation}
J_1 = d(p_1(\cdot, \cdot), \hat{p_1}(\cdot, \cdot)),
\end{equation}
where function $d(\cdot, \cdot)$ denotes the distance between between the introduced proximity distribution and the empirical proximity distribution. By replacing the distance function $d(\cdot, \cdot)$ with the KL-divergence and omitting some constants, the objective function can be rewritten as
\begin{equation}\label{equ:chap11_sec3_first_order}
J_1 = - \sum_{(u, v) \in \mathcal{E}} w_{(u,v)} \log p_1(u, v).
\end{equation}
By minimizing the objective function, {LINE} can learn the feature representation $\mb{x}_u$ for each node $u \in \mathcal{V}$ in the network.

\noindent \textbf{Second-order Proximity}

In the real-world social networks, the links among the nodes can be very sparse, where the \textit{first-order proximity} can hardly preserve the complete structure information of the network. {LINE} introduce the concept of \textit{second-order proximity}, where denotes the similarity between the neighborhood structure of nodes. Given a user pair $(u, v)$ in the network, the more common neighbors shared by them, the closer users $u$ and $v$ are in the network. Besides the original representation $\mb{x}_u$ for node $u\in \mathcal{V}$, the nodes are also associated with a feature vector representing its context in the network, which is denoted as $\mb{y}_u \in \mathbb{R}^d$.

Formally, for a given link $(u, v) \in \mathcal{E}$, the probability of context $\mb{y}_v$ generated by node $u$ can be represented as
\begin{equation}
p_2(v | u) = \frac{e^{\mb{x}_u^\top \cdot \mb{y}_v}}{\sum_{v' \in \mathcal{V}} e^{\mb{x}_u^\top \cdot \mb{y}_{v'}}}.
\end{equation}
Slightly different from \textit{first-order proximity}, the \textit{second-order} empirical proximity is denoted as 
\begin{equation}
\hat{p_2}(v | u) = \frac{w_{(u, v)}}{D(u)}.
\end{equation}
By minimizing the difference between the introduced proximity distribution and the empirical proximity distribution, the objective function for the \textit{second-order} proximity can be represented as
\begin{equation}
J_2 = \sum_{u \in \mathcal{V}} \lambda_u d(p_2(\cdot | u), \hat{p_2}(\cdot | u)),
\end{equation}
where $\lambda_u$ denotes the prestige of node $u$ in the network. Here, by replacing the distance function $d(\cdot | \cdot)$ with the KL-divergence and setting $\lambda_u = D(u)$, the \textit{second-order proximity} based objective function can be represented as
\begin{equation}\label{equ:chap11_sec3_second_order}
J_2 = - \sum_{(u, v) \in \mathcal{E}} w_{(u, v)} \log p_2(v | u).
\end{equation}

\noindent \textbf{Model Optimization}

Instead of combining the \textit{first-order proximity} and \textit{second-order proximity} into a joint optimization function, {LINE} learns the embedding vectors based on Equations~\ref{equ:chap11_sec3_first_order} and \ref{equ:chap11_sec3_second_order} respectively, which will be further concatenated together to obtain the final embedding vectors. 

In optimizing objective function \ref{equ:chap11_sec3_second_order}, {LINE} needs to calculate the conditional probability $P(\cdot | u)$ for all nodes $u \in \mathcal{V}$ in the network, which is computational infeasible. To solve the problem, {LINE} uses the negative sampling approach instead. For each link $(u, v) \in \mathcal{E}$, {LINE} samples a set of negative links according to some noisy distribution. 

Formally, for link $(u, v) \in \mathcal{E}$, the set of negative links sampled for it can be represented as $\mathcal{L}^-_{(u, v)} \subset \mathcal{V} \times \mathcal{V}$. The objective function defined for link $(u, v)$ can be represented as
\begin{equation}\label{equ:chap11_sec3_negative_sampling}
\log \sigma(\mb{y}_v^\top \cdot \mb{x}_u) + \sum_{(u, v') \in \mathcal{L}^-_{(u, v)}} \log \sigma (-\mb{y}_{v'}^\top \cdot \mb{x}_u),
\end{equation}
where $\sigma(\cdot)$ is the sigmoid function. The first term in the above equation denotes the observed links, and the second term represents the negative links drawn from the noisy distribution. Similar approach can also be applied to solve the objective function in Equation~\ref{equ:chap11_sec3_first_order} as well. The new objective function can be solved with the asynchronous stochastic gradient algorithm (ASGD), which samples a mini-batch of links and then update the parameters. 

\subsubsection{node2vec}\label{subsec:chap11_sec3_node2vec}

In {LINE}, the closeness among nodes in the networks is preserved based on either the \textit{first-order proximity} or the \textit{second-order proximity}. In a recent work, {node2vec} \cite{GL16}, the authors propose to preserve the proximity between nodes with a sampled set of nodes in the network. 

\noindent \textbf{node2vec Framework}

Model {node2vec} is based on the \textit{SkipGram} in language modeling, and the objective function of {node2vec} can be formally represented as
\begin{equation}
\max \sum_{u \in \mathcal{V}} \log P(\Gamma(u) | \mb{x}_u).
\end{equation}
where $\mb{x}_u$ denotes the latent feature vector learnt for node $u$ and $\Gamma(u)$ represents the neighbor set of node $u$ in the network. 

To simplify the problem and make the problem solvable, some assumptions are made to approximate the objective function into a simpler form. 
\begin{itemize}
\item \textit{Conditional Independence Assumption}: Given the latent feature vector $\mb{x}_u$ of node $u$, by assuming the observation of node in set $\Gamma(u)$ to be independent, the probability equation can be rewritten as
\begin{equation}
P(\Gamma(u) | \mb{x}_u) = \prod_{v \in \Gamma(u)} P(v | \mb{x}_u).
\end{equation} 

\item \textit{Symmetric Node Effect}: Furthermore, by assuming the source and neighbor nodes have a symmetric effect on each other in the feature space, the conditional probability $P(v | \mb{x}_u)$ can be rewritten as
\begin{equation}
P(v | \mb{x}_u) = \frac{e^{\mb{x}_v^\top \cdot \mb{x}_u}}{\sum_{v' \in \mathcal{V}} e^{\mb{x}_{v'}^\top} \cdot \mb{x}_u}. 
\end{equation}
\end{itemize}

Therefore, the objective function can be simplified as
\begin{equation}
\max_{\mb{X}} \sum_{u \in \mathcal{V}} [-\log Z_u + \sum_{v' \in \Gamma(u)} \mb{x}_{v'}^\top \cdot \mb{x}_u],
\end{equation}
where $Z_u = \sum_{v' \in \mathcal{V}} e^{\mb{x}_{v'}^\top \cdot \mb{x}_u}$. Term $Z_u$ will be different for different nodes $u \in \mathcal{V}$, which is expensive to compute for large networks, and {node2vec} proposes to apply the negative sampling technique instead. The main issue discussed in {node2vec} is about sampling the neighborhood set $\Gamma(u)$ from the network.

\noindent \textbf{BFS and DFS}

In the \textit{SkipGram}, neighborhood set $\Gamma(u)$ denotes the direct neighbors of $u$ in the network, i.e., the \textit{first-order proximity} of network local structures. Besides the local structure, {node2vec} can also capture other network structures with set $\Gamma(u)$ depending on the sampling strategy being applied. To fairly compared different sampling strategies, the neighborhood set $\Gamma(u)$ is usually limited with size $k$, i.e., $|\Gamma(u)| = k$. Two extreme sampling strategies for the neighborhood set $\Gamma(u)$ are
\begin{itemize}

\item \textit{BFS}: BFS samples the nodes directly connected to node $u$ and involve them in the neighborhood set $\Gamma(u)$ first, and then go to the second layer, where the nodes are two hopes away from $u$ in the network, until the size $k$ is met. Generally, the $\Gamma(u)$ sampled via BFS can sufficiently characterize the local neighborhood structure of the network. The {node2vec} model learnt based on BFS sampling strategy provides a micro-view of the network structure.

\item \textit{DFS}: DFS samples the nodes which are sequentially reachable from $u$ at an increasing distance and involve them into the neighborhood set $\Gamma(u)$ first. In DFS, the sampled nodes reflect a more global neighborhood of the network. The {node2vec} model learnt based on BFS sampling strategy provides a macro-view of the network neighborhood structure of the network, which can be essential for inferring the communities based on homophily. 

\end{itemize}

However, the BFS and DFS sampling strategy may also suffer from some shortcomings. For BFS, only a small proportion of the network is explored surrounding node $u$ in the sampling. Meanwhile, for DFS, the sampled nodes far away from the source node $u$ tend to involve complex dependencies relationships.

\noindent \textbf{Random Walk based Search}

To overcome the shortcomings of BFS and DFS, {node2vec} proposes to apply random walk to sample the neighborhood set $\Gamma(u)$ instead. Given a random walk $W$, the node $W$ resides at in step $i$ can be represented as variable $s_i \in \mathcal{V}$. The complete sequence of nodes that $W$ has resides at can be represented as $s_0, s_1, \cdots, s_k$, where $s_0$ denotes the initial node starting the walk. The transitional probability from node $u$ to $v$ in $W$ in the $i_{th}$ step can be represented as
\begin{equation}
P(s_i = v | s_{i-1} = u) = 
\begin{cases}
w_{(u,v)} & \mbox{ if } (u, v) \in \mathcal{E},\\
0, & \mbox{ otherwise},
\end{cases}
\end{equation}
where $w_{(u,v)}$ denotes the normalized weight of link $(u,v)$ in the network ($w_{(u, v)} = 1$ if the network is unweighted). 

Traditional random walk model doesn't take account for the network structure and can hardly explore different network neighborhoods. {node2vec} adapts the random walk model and introduce the $2_{nd}$ order random walk model with parameters $p$ and $q$, which will help guide the walk. In {node2vec}, let's assume the walk just traversed link $(t, u)$ and can go to node $v$ in the next step. Formally, the transitional probability of link $(u, v)$ is adjusted with parameter $\alpha_{p,q}(t, v)$ (i.e., $w_{(u,v)} = \alpha_{p,q}(t, v) \cdot w_{(u,v)}$), where
\begin{equation}
\alpha_{p,q}(t, v) = \begin{cases}
\frac{1}{p}, & \mbox{ if } d_{t,v} = 0,\\
1, & \mbox{ if } d_{t,v} = 1,\\
\frac{1}{q}, & \mbox{ if } d_{t,v} = 2,
\end{cases}
\end{equation}
where $d_{t,v}$ denotes the shortest distance between nodes $t$ and $v$ in the network. Since the walk can go from $t$ to $u$, and then from $u$ to $v$, the distance from $t$ to $v$ will be at most $2$. 

Parameters $p$ and $q$ control the walk transition sequence effectively, where parameter $p$ is also called the \textit{return parameter} and $q$ is called the \textit{in-out parameter} in {node2vec}. 
\begin{itemize}
\item \textit{Return Parameter $p$}: In the case that $d_{t,v} = 0$, i.e., $t = v$, the probability adjusting parameter $\frac{1}{p}$ controls the chance to returning to the node $t$. By assigning $p$ with a large value, the random walk model will have a lower chance to go back to node $t$ that the model has just visited. Meanwhile, by assigning $p$ with a small value, the random walk model will backtrack a step and keep exploring the local nodes that it has visited already.

\item \textit{In-out Parameter $q$}: In the case that $d_{t,v} = 2$, nodes $t$ and $v$ are not directly connected but are reachable via the intermediate node $u$. Therefore, parameter $q$ controls the chance of exploring the structure that are far away from the visited nodes. If $q > 1$, the random walk model is biased to explore nodes that are closer to $t$, since $\frac{1}{q}$ is smaller than the probability of visiting nodes in case that $d_{t,v} = 1$. Meanwhile, if $q < 1$, the random walk will be inclined to visit nodes that are far away from $t$ in the network instead.
\end{itemize}

\subsection{Heterogeneous Network Embedding}\label{sec:chap11_sec4_hin}

The embedding modes introduced in the previous section are proposed for homogeneous networks, which will encounter great challenges when applied to the heterogeneous networks. In this section, we will introduce the recent development of embedding problems for heterogeneous networks, including HNE (Heterogeneous Information Network Embedding) \cite{CHTQAH15}, Path-Augmented Heterogeneous Network Embedding \cite{CS16}, and HEBE (HyperEdge Based Embedding) \cite{GLTJNH16}.

\subsubsection{HNE: Heterogeneous Information Network Embedding}

Generally, the data available in the online social networks doesn't exist in isolation, and different types of data may co-exist simultaneously. For instances, in the posts and articles written by users online, there may exist both text and image. The co-existence interactions of text and image in the same articles can be formed either explicitly or implicitly with the linkages between text and images. Meanwhile, there also exist correlations between the text data as well as image data due to the hyperlinks among the text and common tags/categories shared by different images. The {HNE} \cite{CHTQAH15} model is proposed a heterogeneous information network involving text and image.

\noindent \textbf{Terminology Definition and Problem Formulation}

The network studied in HNE involves both text and images, which can be represented as the {Text-Image Heterogeneous Information Network} as follows:

\begin{defn}
(Text-Image Heterogeneous Information Network): Let $G = (\mathcal{V}, \mathcal{E})$ denote the heterogeneous information network involving text and image as the nodes, as well as diverse categories of links among them. Formally, the node set $\mathcal{V}$ can be decomposed into two disjoint subsets $\mathcal{V} = \mathcal{V}_T \cup \mathcal{I}$, where $\mathcal{T}$ denotes the text set and $\mathcal{I}$ represents the image set. Meanwhile, among the text, image as well as between text and images, there may exist different kinds of connections, which can be denoted as sets $\mathcal{E}_{T,T}$, $\mathcal{E}_{I,I}$, and $\mathcal{E}_{T,I}$ respectively in the link set $\mathcal{E}$.
\end{defn}

Furthermore, the text and image nodes are also summarized by unique content information. For instance, for each image $i_k \in \mathcal{I}$, it can be represented as a tensor $\mb{X}_k \in \mathbb{R}^{d_I \times d_I \times 3}$, where $d_I$ denotes the dimension of the image in RGB color space. Meanwhile, for each text $t_k \in \mathcal{T}$, it can be represented as a raw feature vector $\mb{z}_k \in \mathbb{R}^{d_T}$, where $d_T$ denotes the dimension of the text represented with the bag-of-words vectors normalized by TF-IDF. For the images involved in set $\mathcal{I}$, the connections among them can be represented as matrix $\mb{A}_{I,I} \in \{+1, -1\}^{|\mathcal{I}| \times |\mathcal{I}|}$, where entry $A_{I,I}(j,k) = +1$ if there exist a link connecting nodes $i_j$ and $i_k$ in the network; and  $A(i,j) = -1$ otherwise. In a similar way, the adjacency matrices $A_{T,T}$ and $A_{I,T}$ can be defined to represent the connections among texts as well as those between images and texts.

For all the connections among nodes in set $\mathcal{V}$, they can be represented with matrix $\mb{A} \in \{+1, -1\}^{|\mathcal{V}| \times |\mathcal{V}|}$, where entry $A(i,j) = +1$ if the corresponding nodes are connected by a link in the network; and  $A(i,j) = -1$ otherwise.

To handle the diverse information in the {Text-Image Heterogeneous Information Network}, a good way is to learn the feature vector representations of nodes inside the network. Formally, the network embedding problem studied here includes the learning of mappings $\mb{U}: \mb{X} \to \mathbb{R}^r$ and $\mb{V}: \mb{z} \to \mathbb{R}^r$ which will project the images and texts into a shared feature space of dimension $r$. Furthermore, the network structure can be preserved in the embedding process, where connected nodes will be projected to a close region. 

\noindent \textbf{HNE Model}

For each image $i_k \in \mathcal{I}$, {HNE} proposes to transform its representation from 3-way tensor $\mb{X}_k$ into a column vector $\mb{x}_k \in \mathbb{R}^{d_I'}$, where $d_I'$ denotes the dimension of the feature vector space. Different methods can be applied in the transformation. For instance, a simple way to do the transformation is to stack the column vectors of the image and append them together, in which case $d_I'$ will be equal to $d_I \times d_I \times 3$. Some other advanced techniques have also been proposed, like feature extraction of the images as well as pre-embedding of images, which will not be introduced here since they are not part of the network embedding problem studied in this section.

Formally, the linear mapping functions for the image and text data are denoted as matrices $\mb{U}: \mb{x} \to \mathbb{R}^r$ and $\mb{V}: \mb{z} \to \mathbb{R}^r$, which projects the data into a feature space of dimension $r$. The embedding process of image $i_j \in \mathcal{I}$ and text $t_k \in \mathcal{T}$ can be denoted as
\begin{align}
\tilde{\mb{x}}_j &= \mb{U}^\top \mb{x}_j,\\
\tilde{\mb{z}}_k &= \mb{V}^\top \mb{z}_k,
\end{align}
where vectors $\tilde{\mb{x}}_k$ and $\tilde{\mb{z}}_k$ denotes the embedded feature representation of image $i_k$ and text $t_k$ respectively.

The similarity between the embedded feature representation of images and texts can be defined as
\begin{align}
s(\mb{x}_j, \mb{x}_k) &= \tilde{\mb{x}}_j^\top \tilde{\mb{z}}_k  = \mb{x}_j^\top (\mb{U} \mb{U}) \mb{x}_k = \mb{x}_j^\top \mb{M}_{I,I} \mb{x}_k,\\
s(\mb{z}_j, \mb{z}_k) &= \tilde{\mb{z}}_j^\top \tilde{\mb{z}}_k  = \mb{z}_j^\top (\mb{V} \mb{V}) \mb{z}_k = \mb{z}_j^\top \mb{M}_{T,T} \mb{z}_k.
\end{align}
respectively. Furthermore, since the images and texts are embedded into a common feature space, the similarity between the nodes of different categories can be represented as
\begin{equation}
s(\mb{x}_j, \mb{z}_k) = \tilde{\mb{x}}_j^\top \tilde{\mb{z}}_k  = \mb{x}_j^\top (\mb{U} \mb{V}) \mb{z}_k = \mb{x}_j^\top \mb{M}_{I,T} \mb{z}_k.
\end{equation}
In the above equations, via the positive semi-definite matrices $\mb{M}_{I,I}$, $\mb{M}_{T,T}$, $\mb{M}_{I,T}$ the similarity of the texts and images can be effectively captured. 

Meanwhile, based on the network structure, the empirical similarities of the nodes in the networks can be denoted by their structures. For instance, the empirical similarity between images $i_j, i_k \in \mathcal{I}$ can be denoted as
\begin{equation}
\hat{s}(\mb{x}_j, \mb{x}_k) = A_{I,I}(j,k).
\end{equation}
The loss function introduced by the image pair $i_j, i_k$ is defined as
\begin{equation}
L(\mb{x}_j, \mb{x}_k) = \log \left(1 + e^{(- A_{I,I}(j,k) s(\mb{x}_j, \mb{x}_k))} \right).
\end{equation}

In a similar way, the loss functions for the text pairs, and image-text pairs can be defined. By combining the loss functions together, the objective function of {HNE} can be represented as \begingroup\makeatletter\def\f@size{8}\check@mathfonts
\begin{align}
&\min_{\mb{U}, \mb{V}} \ \ \frac{1}{N_{I,I}} \sum_{i_j, i_k \in \mathcal{I}} L(\mb{x}_j, \mb{x}_k) + \frac{\lambda_1}{N_{T,T}} \sum_{t_j, t_k \in \mathcal{T}} L(\mb{z}_j, \mb{z}_k) \\
&+ \frac{\lambda_2}{N_{I,T}} \sum_{i_j \in \mathcal{I}, t_k \in \mathcal{T}} L(\mb{x}_j, \mb{z}_k) + \lambda_3 (\left\| \mb{U} \right\|_F^2 + \left\| \mb{V} \right\|_F^2),
\end{align}\endgroup
where ${N_{I,I}} = | \mathcal{I} \times \mathcal{I} \setminus \{(i_j, i_j)\}_{i_j \in \mathcal{I}}|$ denotes the number of image pairs, and $\lambda_1$, $\lambda_2$, $\lambda_3$ denote the weights of the loss terms introduced by texts, image-text, and the regularization term respectively. The function can be solved alternatively with coordinate descent by fixing one variable and updating the other variable. More detailed information about the solution is available in \cite{CHTQAH15}.

\subsubsection{Path-Augmented Heterogeneous Network Embedding}

For most of the embedding models, they are based on the assumptions that the node feature representations can be learnt with the neighborhood. Here, the neighborhood denotes either the set of nodes directed connected to the target node or the nodes accessible to the target node via random walk. In \cite{CS16}, a new heterogeneous network embedding model has been introduced, which uses the meta path to exploit the rich information information in heterogeneous networks.

In the path augmented network embedding model, a set of meta paths are defined based on the heterogeneous network schema. For the node pairs in the network which are connected based on each of the meta paths, their correlation is represented with a meta path augmented adjacency matrix. For instance, based on the $r_{th}$ type of meta path, the corresponding adjacency matrix can be denoted as $\mb{M}^r$. In heterogeneous networks, some of the meta paths will lots of concrete meta path instances connecting nodes. For instance, in the online social networks, the meta path ``User $\xrightarrow{write}$ Post $\xrightarrow{contain}$ Word $\xleftarrow{contain}$ Post $\xleftarrow{write}$ User'' will have lots of instances, since users write lots of posts and each post will contain many words. Therefore, matrix $\mb{M}^r$ is usually normalized to ensure $\sum_{i,j} M^r(i,j) = 1$. 

The learning framework used here is very similar to those introduced {LINE} and {node2vec} in Sections~\ref{subsec:chap11_sec3_line} and \ref{subsec:chap11_sec3_node2vec}. The proximity between nodes $n_i, n_j \in \mathcal{V}$ based on the $r_{th}$ meta path can be denoted as
\begin{equation}
P(n_j | n_i; r) = \frac{e^{\mb{x}_i^\top \mb{x}_j}}{\sum_{j' \in DST(r)} e^{\mb{x}_i^\top \mb{x}_j}},
\end{equation}
where $\mb{x}_i$ and $\mb{x}_j$ denote the embedding vectors of nodes $n_i$ and $n_j$ respectively, and $DST(r)$ denotes the set of all possible nodes that are in the destination side of path $r$.

In the real world, set $DST(r)$ is usually very large, which renders the above conditional probability very expensive to compute. In \cite{CS16}, the authors propose to follow the techniques proposed in the existing works, and applies negative sampling to reduce the computation costs. Formally, the approximated objective function can be represented as \begingroup\makeatletter\def\f@size{8}\check@mathfonts
\begin{align}
&\log \tilde{P}(n_j | n_i; r) \\
&\approx log \sigma({\mb{x}_i^\top \mb{x}_j}) + \sum_{l = 1}^k \mathbb{E}_{n_{j'} \sim P^r_n(n_{j'})}[\log \sigma (-{\mb{x}_i^\top \mb{x}_{j'}} - b_r)],
\end{align}\endgroup
where $j'$ denotes the negative node sampled from the pre-defined noise distribution, $k$ denotes the number of sampled nodes, and $b_r$ is the bias term added for the $r_{th}$ meta path. The embedding vectors $\mb{x}_{n_i}$ for node $n_i$ in the network as well as the bias terms $b_r$ for the $r_{th}$ meta path can be learnt with the stochastic gradient descent method

\subsubsection{HEBE: HyperEdge Based Embedding}

The embedding models proposed so far mostly only consider the \textit{single typed} objective interactions, while the \textit{strongly typed} objects involving multiple kinds of interactions among different objectives has achieved an increasing interest in recent years. In this part, we will introduce a new embedding framework HEBE (HyperEdge Based Embedding) which captures strongly-typed objective interactions as a whole in the embedding process \cite{GLTJNH16}.

\noindent \textbf{Terminology Definition and Problem Formulation}

In HEBE, the subgraph centered with one certain type of target object in the whole network is defined as an \textit{event}. Depending on the number of node types involved in the \textit{event}, they can be further categorized into \textit{homogeneous event} and \textit{heterogeneous event}

\begin{defn}
(Event): Formally, the objects involved in the network can be represented as set $\mathcal{X} = \{\mathcal{X}_t\}_{t=1}^T$, where $\mathcal{X}_t$ denotes the set of objects belonging to the $t_{th}$ type. An event $Q_i$ is denoted as a subset of nodes involved in it and can be represented as $(\mathcal{V}_i, w_i)$, where $\mathcal{V}_i$ denotes the set of involved objects and $w_i$ is the occurrence number of event $Q_i$ in the network. The object set $\mathcal{V}_i$ can be further divided into several subsets $\mathcal{V}_i = \bigcup_{t = 1}^T \mathcal{V}_i^t$ depending on the object categories.
\end{defn}
In the above event definition, links connecting the nodes in the network are involved by default, which are not mentioned here for simplicity reasons. For event $Q_i = (\mathcal{V}_i, w_i)$, if more than one type of nodes are covered, it will be called a homogeneous event; otherwise, it is a heterogeneous event. 

Formally, the set of events involved in the network can be represented as \textit{event data} $\mathcal{D} = \{Q_i\}_i^N$. In the embedding problem, the objective is to learn a function $f: \mathcal{X} \to \mathbb{R}^d$ to project the different types of objects involved in the \textit{event data} $\mathcal{D}$ into a shared feature space of dimension $d$. Meanwhile, the \textit{proximity} of each event should be preserved. Here, the \textit{proximity} of an event is defined as the likelihood of observing a target object given all other participating objects in the same event.

\noindent \textbf{Objective Function Introduction}

Given an event $Q_i = (\mathcal{V}_i, w_i)$, let $u \in \mathcal{V}_i$ denote an object involved in the event. The remaining nodes in the event can be denoted as the context of $u$, i.e., $\mathcal{C} = \mathcal{V}_i \setminus \{u\}$. Let's assume object $u$ belongs to category $\mathcal{X}_1$ (i.e., $u \in \mathcal{X}_1$), the probability of predicting the target object $u$ given its context $\mathcal{C}$ is defined as
\begin{equation}
P(u | \mathcal{C}) = \frac{e^{S(u, \mathcal{C})}}{\sum_{v \in \mathcal{X}_1} e^{S(v, \mathcal{C})}},
\end{equation}
where $S(u, \mathcal{C})$ denotes the similarity between $u$ and context $\mathcal{C}$ and can be calculated by summing the inner products of object pairs in $\{u\} \times \mathcal{C}$. 

The loss function defined in HEBE is based on the Kullback-Leibler (KL) divergence between the conditional probability $P(\cdot | \mathcal{C})$ and the emperical probability $\hat{P}(\cdot | \mathcal{C})$, which can be defined as
\begin{equation}
\mathcal{L} = - \sum_{t=1}^T \sum_{\mathcal{C}_t \in \mathcal{P}_t} \lambda_{\mathcal{C}_t} KL(P(\cdot | \mathcal{C}), \hat{P}(\cdot | \mathcal{C})),
\end{equation}
where $\lambda_{\mathcal{C}_t}$ denotes the weight of context $\mathcal{C}_t$ and is defined as the occurrence of it in the event data $\mathcal{D}$
\begin{equation}
\lambda_{\mathcal{C}_t} = \sum_{i = 1}^N \frac{w_i \mb{I}(\mathcal{C}_t \in \mathcal{V}_i)}{|\mathcal{P}_{i,t}|}.
\end{equation}
In the above equation, $\mathcal{P}_t$ denotes the sample space of context $\mathcal{C}_t$ and $\mathcal{P}_{i,t}$ is the constraint sample space by object set $\mathcal{V}_i$. Function $\mb{I}(\cdot)$ is a binary function which takes value $1$ if the condition holds. By replacing $\lambda_{\mathcal{C}_t}$, the loss function can be rewritten as follows
\begin{equation}
\mathcal{L} = - \sum_{i = 1}^N w_i \sum_{t=1}^T \frac{1}{|\mathcal{P}_{i,t}|} \sum_{\mathcal{C}_t \in \mathcal{P}_t} P(\cdot | \mathcal{C}),
\end{equation}

\noindent \textbf{Learning Algorithm Description}

The conditional probability involved in the loss function is very hard to calculate especially in the case that the object set $\mathcal{X}_1$ that $u$ belongs to is very big. To address the problem, HEBE proposes to use the \textit{noise pairwise ranking} (NPR) to approximate the probability calculation instead. 

Formally, the conditional probability function can be rewritten as
\begin{equation}
P(u | \mathcal{C}) = \left(1 + \sum_{v \in \mathcal{X}_1 \setminus \{u\}} e^{S(v, \mathcal{C}) - S(u, \mathcal{C})} \right)^{-1}.
\end{equation}
Instead of enumerating all the nodes $v \in \mathcal{X}_1 \setminus \{u\}$, a small set of noise samples are selected from $\mathcal{X}_1 \setminus \{u\}$, where an individual noise sample can be denoted as $v_n$. HEBE propose to maximize the following probability instead 
\begin{equation}
P(u > u_n | \mathcal{C}) = \sigma(-S(v_n, \mathcal{C}) + S(u, \mathcal{C})).
\end{equation}
It is shown that
\begin{equation}
P(u | \mathcal{C}) > \prod_{v_n \neq u} P(u > v_n | \mathcal{C}).
\end{equation}
And the conditional probability can be approximated as follows
\begin{equation}
P(u | \mathcal{C}) \propto \mathbb{E}_{v_n \sim P_n} \log P(u > v_n | \mathcal{C}),
\end{equation}
where $P_n$ denotes the noise distribution and it is set as $P_n \propto D(u)^{\frac{3}{4}}$ with regarding to the degree of $u$.
By replacing the probability into the loss function, the loss function will be  \begingroup\makeatletter\def\f@size{8}\check@mathfonts
\begin{equation}
\tilde{\mathcal{L}} = - \sum_{i = 1}^N w_i \sum_{t=1}^T \frac{1}{|\mathcal{P}_{i,t}|} \sum_{\mathcal{C}_t \in \mathcal{P}_t} \mathbb{E}_{v_n \sim P_n} \log P(u > v_n | \mathcal{C}).
\end{equation}\endgroup
The objective function can be solved with the asynchronous stochastic gradient descent (ASGD) algorithm.

\subsection{Emerging Network Embedding across Networks}\label{sec:chap11_sec5_aligned_hin}

We have introduce several network embedding models in the previous sections already. However, when applied to handle real-world social network data, these existing embedding models can hardly work well. The main reason is that the network internal social links are usually very sparse in online soical networks \cite{TQWZYM15}, which can hardly preserve the complete network structure. For a pair of users who are not directed connected, these models will not be able determine the closeness of these users' feature vectors in the embedding space. Such a problem will be more severe when it comes to the \textit{emerging social networks} \cite{sdm15}, which denote the newly created online social networks containing very few social connections.

In this section, we will study the emerging network embedding problem across multiple aligned heterogeneous social networks simultaneously. In the concurrent embedding process, the emerging network embedding problem aims at distilling relevant information from both the emerging and other aligned mature networks to derive compliment knowledge and learn a good vector representation for user nodes in the emerging network. Formally, the studied problem can be formulated as follows.

Given two aligned networks $\mathcal{G} = ((G^{(1)}, G^{(2)}), (\mathcal{A}^{(1,2)}))$, where $G^{(1)}$ is an emerging network and $G^{(2)}$ is a mature network. In the {emerging network embedding} problem, we aim at learning a mapping function $f^{(i)}: \mathcal{U}^{(i)} \to \mathbb{R}^{d^{(i)}}$ to project the user node in $G^{(i)}$ to a feature space of dimension $d^{(i)}$ ($d^{(i)} \ll |\mathcal{U}|^{(i)}$). The objective of mapping functions $f^{(i)}$ is to ensure the embedding results can preserve the network structural information, where similar user nodes will be projected to close regions. Furthermore, in the embedding process, {emerging network embedding} also wants to transfer information between $G^{(2)}$ and $G^{(1)}$ to overcome the information sparsity problem in $G^{(1)}$.

To solve the problem, in this section, we will introduce a novel multiple aligned heterogeneous social network embedding framework, named {\dime} proposed in \cite{icdm17}. To handle the heterogeneous link and attribute information in the networks in a unified analytic, {\dime} introduces the \textit{aligned attribute augmented heterogeneous network} concept. From these networks a set of meta paths are introduced to represent the diverse connections among users in online social networks,  and a set of \textit{meta proximity} measures are defined for each of the meta paths denoting the closeness among users. These meta proximity information will be fed into a deep learning framework, which takes the input information from multiple aligned heterogeneous social networks simultaneously, to achieve the embedding feature vectors for all the users in these aligned networks. Based on the connection among users, framework {\dime} aims at embedding close user nodes to a close area in the lower-dimensional feature space for each of the social network respectively. Meanwhile, framework {\dime} also poses constraints on the feature vectors corresponding to the shared users across networks to map them to a relatively close region as well. In this way, information can be transferred from the mature networks to the emerging network and solve the \textit{information sparsity} problem.

\subsubsection{Proposed Methods}

For each attributed heterogeneous social network, the closeness among users can be denoted by the friendship links among them, where friends tend to be closer compared with user pairs without connections. Meanwhile, for the users who are not directly connected by the friendship links, few existing embedding methods can figure out their closeness, as these methods are mostly built based on the direct friendship link only. In this section, the potential closeness scores among the users can be computed with the heterogeneous information in the networks based on meta path concept \cite{SAH12}, which are formally called the \textit{meta proximity} in \cite{icdm17}.

\noindent \textbf{Friendship based Meta Proximity}\label{subsec:chap11_sec5_friendship_proximity}

In online social networks, the friendship links are the most obvious indicator of the social closeness among users. Online friends tend to be closer with each other compared with the user pairs who are not friends. Users' friendship links also carry important information about the local network structure information, which should be preserved in the embedding results. Based on such an intuition, the \textit{friendship based meta proximity} concept can be represented as follows.

\begin{defn}
(Friendship based Meta Proximity): For any two user nodes $u^{(1)}_i, u^{(1)}_j$ in an online social network (e.g., $G^{(1)}$), if $u^{(1)}_i$ and $u^{(1)}_j$ are friends in $G^{(1)}$, the \textit{friendship based meta proximity} between $u^{(1)}_i$ and $u^{(1)}_j$ in the network is $1$, otherwise the \textit{friendship based meta proximity} score between them will be $0$ instead. To be more specific, the \textit{friendship based meta proximity} score between users $u^{(1)}_i, u^{(1)}_j$ can be represented as $p^{(1)}(u^{(1)}_i,u^{(1)}_j) \in \{0, 1\}$, where term $p^{(1)}(u^{(1)}_i,u^{(1)}_j) = 1$ iff $(u^{(1)}_i, u^{(1)}_j) \in \mathcal{E}^{(1)}_{u,u}$.
\end{defn}

Based on the above definition, the \textit{friendship based meta proximity} scores among all the users in network $G^{(1)}$ can be represented as matrix $\mb{P}^{(1)}_{\Phi_0} \in \mathbb{R}^{|\mathcal{U}^{(1)}| \times |\mathcal{U}^{(1)}|}$, where entry ${P}^{(1)}_{\Phi_0}(i,j)$ equals to $p^{(1)}(u^{(1)}_i, u^{(1)}_j)$. Here $\Phi_0$ denotes the simplest meta path of length $1$ in the form $\mbox{U} \xrightarrow{\mbox{follow}} \mbox{U}$, and its formal definition will be introduced in the following subsection.

When network $G^{(1)}$ is an emerging online social network which has just started to provide services for a very short time, the friendship links among users in $G^{(1)}$ tend to be very limited (majority of the users are isolated in the network with few social connections). In other words, the \textit{friendship based meta proximity} matrix $\mb{P}^{(1)}_{\Phi_0}$ will be extremely sparse, where very few entries will have value $1$ and most of the entries are $0$s. With such a sparse matrix, most existing embedding models will fail to work. The reason is that the sparse friendship information available in the network can hardly categorize the relative closeness relationships among the users (especially for those who are even not connected by friendship links), which renders these existing embedding models may project all the nodes to random regions.

To overcome such a problem, besides the social links, {\dime} proposes to calculate the potential proximity scores for the users with the diverse link and attribute information available in the heterogeneous networks. To handle the diverse links and attributes simultaneously in a unified analytic, {\dime} will treat the attributes as nodes as well and introduce the \textit{attribute augmented network}. If a node has certain attributes, a new type of link ``\textit{have}'' will be added to connected the node and the newly added attribute node. By extending the meta path definition introduced in Section~\ref{sec:meta_path} to incorporate the attribute information, set of different \textit{social meta path} $\{\Phi_0, \Phi_1, \Phi_2, \cdots, \Phi_7\}$ can be extracted from the network, whose notations, concrete representations and the physical meanings are illustrated in Table~\ref{tab:chap8_sec4_meta_path}. Here, meta paths $\Phi_0-\Phi_4$ are all based on the user node type and follow link type; meta paths $\Phi_5-\Phi_7$ involve the user, post node type, attribute node type, as well as the \textit{write} and \textit{have} link type. Based on each of the meta paths, there will exist a set of concrete meta path instances connecting users in the networks. For instance, given a user pair $u$ and $v$, they may have been checked-in at 5 different common locations, which will introduce $5$ concrete meta path instance of meta path $\Phi_7$ connecting $u$ and $v$ indicating their strong closeness (in location check-ins). In the next subsection, we will introduce how to calculate the proximity score for the users based on these extracted meta paths.

\noindent \textbf{Heterogeneous Network Meta Proximity}\label{subsec:chap11_sec5_meta_path_proximity}

The set of \textit{attribute augmented social meta paths} $\{\Phi_0, \Phi_1, \Phi_2, \\ \cdots, \Phi_7\}$ extracted in the previous subsection create different kinds of correlations among users (especially for those who are not directed connected by friendship links). With these \textit{social meta paths}, different types of proximity scores among the users can be captured. For instance, for the users who are not friends but share lots of common friends, they may also know each other and can be close to each other; for the users who frequently checked-in at the same places, they tend to be more close to each other compared with those isolated ones with nothing in common. Therefore, these meta paths can help capture much broader network structures compared with the local structure captured by the \textit{friendship based meta proximity} talked about in subsection~\ref{subsec:chap11_sec5_friendship_proximity}. In this part, we will introduce the method to calculate the proximity scores among users based on these \textit{social meta paths}.

Similar to the meta paths shown in Table~\ref{subsec:metapath_proximity}, all the social meta paths extracted from the networks can be represented as set $\{\Phi_1, \Phi_2, \cdots, \Phi_7\}$. Given a pair of users, e.g., $u^{(1)}_i$ and $u^{(1)}_j$, based on meta path $\Phi_k \in \{\Phi_1, \Phi_2, \cdots, \Phi_7\}$, the set of meta path instances connecting $u^{(1)}_i$ and $u^{(1)}_j$ can be represented as $\mathcal{P}_{\Phi_k}^{(1)}(u^{(1)}_i, u^{(1)}_j)$. Users $u^{(1)}_i$ and $u^{(1)}_j$ can have multiple meta path instances going into/out from them. Formally, all the meta path instances going out from user $u^{(1)}_i$ (or going into $u^{(1)}_j$), based on meta path $\Phi_k$, can be represented as set $\mathcal{P}_{\Phi_k}^{(1)}(u^{(1)}_i, \cdot)$ (or $\mathcal{P}_{\Phi_k}^{(1)}(\cdot, u^{(1)}_j)$). The proximity score between $u^{(1)}_i$ and $u^{(1)}_j$ based on meta path $\Phi_k$ can be represented as the following \textit{meta proximity} concept formally.

\begin{defn}(Meta Proximity): Based on social meta path $\Phi_k$, the meta proximity between users $u^{(1)}_i$ and $u^{(1)}_j$ in network $G^{(1)}$ can be represented as 
\begin{equation}
p^{(1)}_{\Phi_k}(u^{(1)}_i, u^{(1)}_j) = \frac{2|\mathcal{P}_{\Phi_k}^{(1)}(u^{(1)}_i, u^{(1)}_j)|}{|\mathcal{P}_{\Phi_k}^{(1)}(u^{(1)}_i, \cdot)| + |\mathcal{P}_{\Phi_k}^{(1)}(\cdot, u^{(1)}_j)|}.
\end{equation}
\end{defn}

\textit{Meta proximity} considers not only the meta path instances between users but also penalizes the number of meta path instances going out from/into $u^{(1)}_i$ and $u^{(1)}_j$ at the same time. It is also reasonable. For instance, sharing some common location check-ins with some extremely active users (who have tens thousand checkins) may not necessarily indicate closeness with them, since they may have common check-ins with so many other users due to his very large check-in record volume.

With the above meta proximity definition, the meta proximity scores among all users in the network $G^{(1)}$ based on meta path $\Phi_k$ can be denoted as matrix $\mb{P}^{(1)}_{\Phi_k} \in \mathbb{R}^{|\mathcal{U}^{(1)}| \times |\mathcal{U}^{(1)}|}$, where entry ${P}^{(1)}_{\Phi_k}(i,j) = p^{(1)}_{\Phi_k}(u^{(1)}_i, u^{(1)}_j)$. All the meta proximity matrices defined for network $G^{(1)}$ can be represented as $\{\mb{P}^{(1)}_{\Phi_k}\}_{\Phi_k}$. Based on the meta paths extracted for network $G^{(2)}$, similar matrices can be defined as well, which can be denoted as $\{\mb{P}^{(2)}_{\Phi_k}\}_{\Phi_k}$.

\begin{figure*}
	\centering
	\includegraphics[width=0.8\textwidth]{./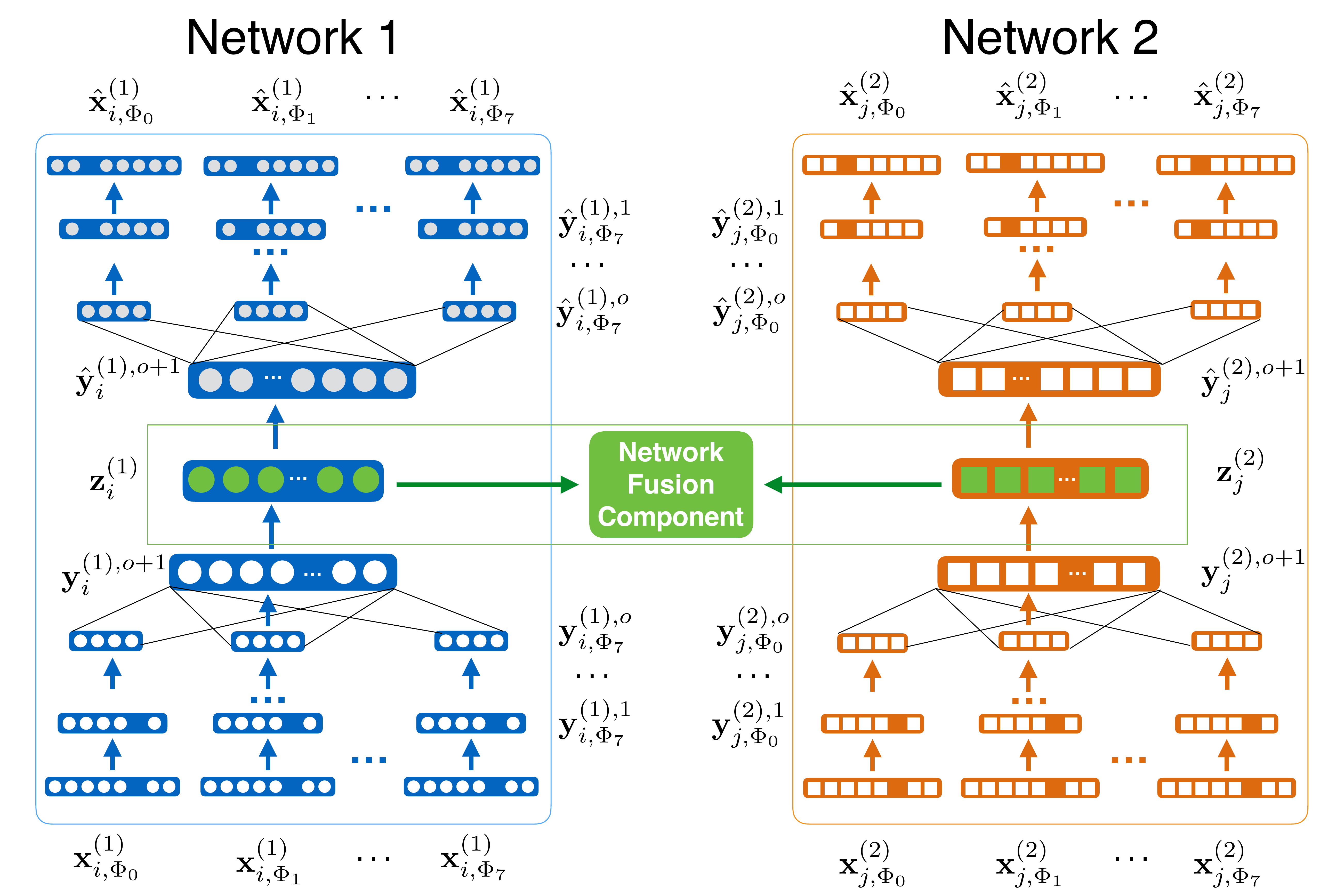}
	\caption{The {\dime} Framework.}
	\label{fig:chap11_sec5_deep}
\end{figure*}

\noindent \textbf{Deep {\dimesh} Model}\label{subsec:chap11_sec5_single}

With these calculated \textit{meta proximity} introduced in the previous section, we will introduce the embedding framework {\dime} next. {\dime} is based on the \textit{aligned auto-encoder model}, which extends the traditional \textit{deep auto-encoder model} to the \textit{multiple aligned heterogeneous networks} scenario. In this part, we will talk about the embedding model component for one heterogeneous information network in Section~\ref{subsec:chap11_sec5_single}, which takes the various meta proximity matrices as the input. {\dime} effectively couples the embedding process of the emerging network with other aligned mature networks, where cross-network information exchange and result refinement is achieved via the loss term defined based on the anchor links, which will be introduced in the next part.

When applying the auto-encoder model for one single homogeneous network node embedding, e.g., for $G^{(1)}$, the model can be learned with the node meta proximity feature vectors, i.e., rows corresponding to users in matrix $\mb{P}^{(1)}_{\Phi_0}$ (introduced in Section~\ref{subsec:chap11_sec5_friendship_proximity}). In the case that $G^{(1)}$ is heterogeneous, multiple node \textit{meta proximity} matrices have been defined before (i.e., $\{\mb{P}^{(1)}_{\Phi_0}, \mb{P}^{(1)}_{\Phi_1}, \cdots, \mb{P}^{(1)}_{\Phi_7}\}$), how to fit these matrices simultaneously to the auto-encoder models is an open problem. In this part, we will introduce the single-heterogeneous-network version of framework {\dime}, namely {\dimesh}, which will be used as an important component of framework {\dime} as well. For each user node in the network, {\dimesh} computes the embedding vector based on each of the proximity matrix independently first, which will be further fused to compute the final latent feature vector in the output hidden layer.

As shown in the architecture in Figure~\ref{fig:chap11_sec5_deep} (either the left component for network 1 or the right component for network 2), about the same instance, {\dimesh} takes different feature vectors extracted from the meta paths $\{\Phi_0, \Phi_1, \cdots, \Phi_7\}$ as the input. For each meta path, a series of separated encoder and decoder steps are carried out simultaneously, whose latent vectors are fused together to calculate the final embedding vector $\mb{z}^{(1)}_i \in \mathbb{R}^{d^{(1)}}$ for user $u_i^{(1)} \in \mathcal{V}^{(1)}$. In the {\dimesh} model, the input feature vectors (based on meta path $\Phi_k \in \{\Phi_0, \Phi_1, \cdots, \Phi_7\}$) of user $u_i$ can be represented as $\mb{x}^{(1)}_{i, \Phi_k}$, which denotes the row corresponding to users $u_i^{(1)}$ in matrix $\mb{P}^{(1)}_{\Phi_k}$ defined before. Meanwhile, the latent representation of the instance based on the feature vector extracted via meta path $\Phi_k$ at different hidden layers can be represented as $\{\mb{y}^{(1),1}_{i, \Phi_k}, \mb{y}^{(1),2}_{i, \Phi_k}, \cdots, \mb{y}^{(1),o}_{i, \Phi_k}\}$. 


One of the significant difference of model {\dimesh} from traditional auto-encoder model lies in the (1) combination of various hidden vectors $\{\mb{y}^{(1),o}_{i, \Phi_0}, \mb{y}^{(1),o}_{i, \Phi_1}, \cdots, \mb{y}^{(1),o}_{i, \Phi_7}\}$ to obtain the final embedding vector $\mb{z}^{(1)}_i$ in the encoder step, and (2) the dispatch of the embedding vector $\mb{z}^{(1)}_i$ back to the hidden vectors in the decoder step. As shown in the architecture, formally, these extra steps can be represented as \begingroup\makeatletter\def\f@size{7}\check@mathfonts
\begin{align}\hspace{-5pt}
\begin{cases}
& \hspace{-7pt} \mbox{\# extra encoder steps}\\
& \hspace{-7pt} \mb{y}^{(1), o+1}_i =  \sigma(\sum_{\Phi_k \in \{\Phi_0, \cdots, \Phi_7\}} \mb{W}^{(1),o+1}_{\Phi_k} \mb{y}^{(1),o}_{i, \Phi_k} + \mb{b}^{(1),o+1}_{\Phi_k}),\\
& \hspace{-7pt} \mb{z}^{(1)}_i = \sigma(\mb{W}^{(1),o+2} \mb{y}^{(1), o+1}_i + \mb{b}^{(1),o+2}).\\
& \hspace{-7pt} \mbox{\# extra decoder steps }\\
& \hspace{-7pt} \hat{\mb{y}}^{(1), o+1}_i =  \sigma(\hat{\mb{W}}^{(1),o+2} \mb{z}^{(1)}_i + \hat{\mb{b}}^{(1),o+2}),\\
& \hspace{-7pt} \hat{\mb{y}}^{(1),o}_{i, \Phi_k} = \sigma(\hat{\mb{W}}^{(1),o+1}_{\Phi_k} \hat{\mb{y}}^{(1), o+1}_i + \hat{\mb{b}}^{(1),o+1}_{\Phi_k}).
\end{cases}
\end{align}\endgroup

What's more, since the input feature vectors are extremely sparse (lots of the entries have value $0$s), simply feeding them to the model may lead to some trivial solutions, like $\mb{0}$ vectors for both $\mb{z}^{(1)}_i$ and the decoded vectors $\hat{\mb{x}}_{i, \Phi_k}^{(1)}$. To overcome such a problem, another significant difference of model {\dimesh} from traditional auto-encoder model lies in the loss function definition, where the loss introduced by the non-zero features will be assigned with a larger weight. In addition, by adding the loss function for each of the meta paths, the final loss function in {\dimesh} can be formally represented as \begingroup\makeatletter\def\f@size{8}\check@mathfonts
\begin{equation}
\mathcal{L}^{(1)} = \sum_{\Phi_k \in \{\Phi_0, \cdots, \Phi_7\}}\sum_{u_i \in \mathcal{V}} \left \| \left( \mb{x}^{(1)}_{i, \Phi_k} - \hat{\mb{x}}^{(1)}_{i,\Phi_k} \right) \odot \mb{b}^{(1)}_{i,\Phi_k} \right\|_2^2,
\end{equation}\endgroup
where vector $\mb{b}^{(1)}_{i,\Phi_k}$ is the weight vector corresponding to feature vector $\mb{x}^{(1)}_{i,\Phi_k}$. Entries in vector $\mb{b}^{(1)}_{i,\Phi_k}$ are filled with value $1$s except the entries corresponding to non-zero element in $\mb{x}^{(1)}_{i, \Phi_k}$, which will be assigned with value $\gamma$ ($\gamma > 1$ denoting a larger weight to fit these features). In a similar way, the loss function for the embedding result in network $G^{(2)}$ can be formally represented as $\mathcal{L}^{(2)}$.

\noindent \textbf{Deep {\dime} Framework}\label{subsec:chap11_sec5_framework}

Even through {\dimesh} has incorporate all these heterogeneous information in the model building, the meta proximity calculated based on which can help differentiate the closeness among different users. However, for the emerging networks which just start to provide services, the information sparsity problem may affect the performance of {\dimesh} significantly. In this part, we will introduce {\dime}, which couples the embedding process of the emerging network with another mature aligned network. By accommodating the embedding between the aligned networks, information can be transferred from the aligned mature network to refine the embedding result in the emerging network  effectively. The complete architecture of {\dime} is shown in Figure~\ref{fig:chap11_sec5_deep}, which involve the {\dimesh} components for each of the aligned networks, where the information transfer component aligns these separated {\dimesh} models together.

To be more specific, given a pair of aligned heterogeneous networks $\mathcal{G} = ((G^{(1)}, G^{(2)}), \mathcal{A}^{(1,2)})$ ($G^{(1)}$ is an emerging network and $G^{(2)}$ is a mature network), the embedding results can be represented as matrices  $\mb{Z}^{(1)} \in \mathbb{R}^{|\mathcal{U}^{(1)}| \times d^{(1)}}$ and $\mb{Z}^{(2)} \in \mathbb{R}^{|\mathcal{U}^{(2)}| \times d^{(2)}}$ for all the user nodes in $G^{(1)}$ and $G^{(2)}$ respectively. The $i_{th}$ row of matrix $\mb{Z}^{(1)}$ (or the $j_{th}$ row of matrix $\mb{Z}^{(2)}$) denotes the encoded feature vector of user $u^{(1)}_i$ in $G^{(1)}$ (or $u^{(2)}_j$ in $G^{(2)}$). If $u^{(1)}_i$ and $u^{(2)}_j$ are the same user, i.e., $(u^{(1)}_i, u^{(2)}_j) \in \mathcal{A}^{(1,2)}$, by placing vectors $\mb{Z}^{(1)}(i,:)$ and $\mb{Z}^{(2)}(j,:)$ in a close region in the embedding space, the information from $G^{(2)}$ can be used to refine the embedding result in $G^{(1)}$.

Information transfer is achieved based on the anchor links, and we only care about the anchor users. To adjust the rows of matrices $\mb{Z}^{(1)}$ and $\mb{Z}^{(2)}$ to remove non-anchor users and make the same rows correspond to the same user, {\dime} introduces the binary inter-network transitional matrix $\mb{T}^{(1,2)} \in \mathbb{R}^{|\mathcal{U}^{(1)}| \times |\mathcal{U}^{(2)}|}$. Entry $T^{(1,2)}(i,j) = 1$ iff the corresponding users are connected by anchor links, i.e., $(u^{(1)}_i, u^{(2)}_j) \in \mathcal{A}^{(1,2)}$. Furthermore, the encoded feature vectors for users in these two networks can be of different dimensions, i.e., $d^{(1)} \neq d^{(2)}$, which can be accommodated via the projection $\mb{W}^{(1,2)} \in \mathbb{R}^{d^{(1)} \times d^{(2)}}$. 

Formally, the introduced \textit{information fusion loss} between networks $G^{(1)}$ and $G^{(2)}$ can be represented as
\begin{equation}
\mathcal{L}^{(1,2)} = \left\| (\mb{T}^{(1,2)})^\top \mb{Z}^{(1)} \mb{W}^{(1,2)} - \mb{Z}^{(2)}  \right\|_F^2.
\end{equation}
By minimizing the \textit{information fusion loss} function $\mathcal{L}^{(1,2)}$, the anchor users' embedding vectors from the mature network $G^{(2)}$ can be used to adjust his embedding vectors in the emerging network $G^{(1)}$. Even through in such a process the embedding vector in $G^{(2)}$ can be undermined by $G^{(1)}$, it will not be a problem since $G^{(1)}$ is the target network and {\dime} only care about the embedding result of the emerging network $G^{(1)}$ in \cite{icdm17}. 

The complete objective function of framework include the loss terms introduced by the component {\dimesh} for networks $G^{(1)}$, $G^{(2)}$, and the \textit{information fusion loss}, which can be denoted as
\begin{equation}
\mathcal{L}(G^{(1)}, G^{(2)}) = \mathcal{L}^{(1)} + \mathcal{L}^{(2)} + \alpha \cdot \mathcal{L}^{(1,2)}  + \beta \cdot \mathcal{L}_{reg}.
\end{equation}
Parameters $\alpha$ and $\beta$ denote the weights of the \textit{information fusion loss} term and the regularization term. In the objective function, term $\mathcal{L}_{reg}$ is added to the above objective function to avoid overfitting, which can be formally represented as \begingroup\makeatletter\def\f@size{6}\check@mathfonts
\begin{align}\hspace{-7pt}
\begin{cases}
&\hspace{-7pt} \mathcal{L}_{reg} = \mathcal{L}_{reg}^{(1)} + \mathcal{L}_{reg}^{(2)} + \mathcal{L}_{reg}^{(1,2)},\\
&\hspace{-7pt} \mathcal{L}_{reg}^{(1)} = \sum_{i}^{o^{(1)}+2} \sum_{\Phi_k \in \{\Phi_0, \cdots, \Phi_7\}} \left( \left\| \mb{W}^{(1),i}_{\Phi_k} \right\|_F^2 + \left\| \hat{\mb{W}}^{(1),i}_{\Phi_k} \right\|_F^2 \right),\\
&\hspace{-7pt} \mathcal{L}_{reg}^{(2)} = \sum_{i}^{o^{(2)}+2} \sum_{\Phi_k \in \{\Phi_0, \cdots, \Phi_7\}} \left( \left\| \mb{W}^{(2),i}_{\Phi_k} \right\|_F^2 + \left\| \hat{\mb{W}}^{(2),i}_{\Phi_k} \right\|_F^2 \right),\\
&\hspace{-7pt} \mathcal{L}_{reg}^{(1,2)} =  \left\| \mb{W}^{(1,2)} \right\|_2^2.
\end{cases}
\end{align}\endgroup

To optimize the above objective function, we utilize Stochastic Gradient Descent (SGD). To be more specific, the training process involves multiple epochs. In each epoch, the training data is shuffled and a minibatch of the instances are sampled to update the parameters with SGD. Such a process continues until either convergence or the training epochs have been finished.

\section{Conclusion and Future Developments}\label{sec:future_works}

In this paper, we have introduced the current research works on broad learning and its applications on social media studies. This paper has covered $5$ main research directions about broad learning based social media studies: (1) \textit{network alignment}, (2) \textit{link prediction}, (3) \textit{community detection}, (4) \textit{information diffusion} and (5) \textit{network embedding}. These problems introduced in this chapter are all very important for many concrete real-world social network applications and services. A number of nontrivial algorithms have been proposed to resolve these problems, which have been talked about in great detail in this paper respectively.

Both the \textit{broad learning} and \textit{social media mining} are very promising research directions, and some potential future development directions are illustrated as follows.

\begin{enumerate}

\item {\textbf{Scalable Broad Learning Algorithms}}: Data generated nowadays is usually of very large scale, and fusion of such big data from multiple sources together will render the problem more challenging. For instance, the online social networks (like Facebook) usually involve millions even billions of active users, and the social data generated by these users in each day will consume more than 600 TB storage space (in Facebook). One of the major future development about the \textit{broad learning based social media mining} is to develop scalable data fusion and mining algorithms that can handle such a \textbf{large volume} (of \textbf{big data}) challenge. One tentative approach is to develop information fusion algorithms based on distributed platforms, like Spark and Hadoop \cite{bigdata14}, and handle the data with a large distributed computing cluster. Another method to resolve the scalability challenge is from the model optimization perspective. Optimizing existing learning models and proposing new approximated learning algorithms with lower time complexity are desirable in the future research projects. In addition, applications of the latest deep learning models to fuse and mine the large-scale datasets can be another alternative approach for the scalable \textit{broad learning} on social networks.

\item {\textbf{Multiple Sources Fusion and Mining}}: Current research works on multiple source data fusion and mining mainly focus on aligning entities in one single pair of data sources (i.e., two sources), where information exchange between the sources mainly rely on the anchor links between these aligned entities. Meanwhile, when it comes to fusion and mining of multiple (more than two) sources, the problem setting will be quite different and become more challenging. For example, in the alignment of more networks, the transitivity property of the inferred anchor links needs to be preserved \cite{icdm15}. Meanwhile, in the information transfer from multiple external aligned sources to the target source, the information sources should be weighted differently according to their importance. Therefore, the \textbf{diverse variety} of the multiple sources will lead to more research challenges and opportunities, which is also a great challenge in \textbf{big data} studies. New information fusion and mining algorithms for the multi-source scenarios can be another great opportunity to explore broad learning in the future.

\item {\textbf{Broader Learning Applications}}: Besides the research works on social network datasets, the third potential future development of broad learning and mining lies its broader applications on various categories of datasets, like enterprise internal data \cite{kdd15, cikm15, cikm16, wsdm17}, geo-spatial data \cite{mdm16, iri16, sigspatial16}, knowledge base data, and pure text data. Some prior research works on fusing enterprise context information sources, like enterprise social networks, organizational chart and employee profile information have been done already \cite{kdd15, cikm15, cikm16, wsdm17}. Several interesting problems, like organizational chart inference \cite{kdd15}, enterprise link prediction \cite{cikm15}, information diffusion at workplace \cite{cikm16} and enterprise employee training \cite{wsdm17}, have been studied based on the fused enterprise internal information. In the future, these areas are still open for exploration. Applications of broad learning techniques in other application problems, such as employee training, expert location and project team formation, will be both interesting problems awaiting for further investigation. In addition, analysis of the correlation of different traveling modalities (like shared bicycles \cite{mdm16, iri16, sigspatial16}, bus and metro train) with the city zonings in smart city; and fusing multiple knowledge bases, like Douban and IMDB, for knowledge discovery and truth finding are both good application scenarios for broad learning research works.

\end{enumerate}
\balance
\bibliographystyle{plain}
\bibliography{reference}

\end{document}